\def\tr{\operatorname{tr}}
\def\id{\operatorname{id}}
\newcommand \inner[1]{ \left<#1\right>}
\renewcommand {\spadesuit} { \mathring{\mathbf \Lambda}}
\newcommand{\overbar}[1]{\mkern 2mu\overline{\mkern-2mu#1\mkern-2mu}\mkern 2mu}
\newcommand{\dt}{\mathbf{D}^{(2)}}
\newcommand{\rt}{\mathcal{R}^{(2)}}
\newcommand{\wrt}{\widetilde{\mathcal{R}}^{(2)}}
\DeclarePairedDelimiterX{\channels}[2]{[}{]}{%
  #1\;\delimsize\|\;#2%
}
\DeclarePairedDelimiterX{\states}[2]{(}{)}{%
  #1\;\delimsize\|\;#2%
}
\newcommand{\staD}{D\states}
\newcommand{\chD}{D\channels}
\newcommand{\schD}{D^{(2)}\channels}
\newcommand{\BstD}{\mathbf{D}\states}
\newcommand{\BchD}{\mathbf{D}\channels}
\newcommand{\BschD}{\mathbf{D}^{(2)}\channels}
\newcommand{\BnchD}{\mathbf{D}^{(n)}\channels}
\newcommand{\BnachD}{\mathbf{D}^{(n-1)}\channels}
\newcommand{\BnonechD}{\mathbf{D}^{(n+1)}\channels}
\newcommand{\Cwhat}{\widehat{\mathsf{C}}}
\newcommand{\Cstat}{\mathsf{C}}
\DeclareOldFontCommand{\rm}{\normalfont\rmfamily}{\mathrm}
\newtheorem{theorem}{Theorem}
\newtheorem{proposition}{Proposition}%
\newtheorem{lemma}{Lemma}
\newtheorem*{example*}{Example}%
\newtheorem{remark}{Remark}%
\newtheorem*{remark*}{Remark}%
\newtheorem{dfn}{Definition}%
\begin{document}

\title{Fundamental limitations on the recoverability of quantum processes}

\author{Sohail}
\email{sohail.sohail@ttu.edu}
\email{sohail.malda@gmail.com}
\affiliation{Department of Computer Science, Texas Tech University, Lubbock, TX 79409, United States}
\affiliation{Harish-Chandra Research Institute, A CI of Homi Bhabha National Institute, Chhatnag Road, Jhunsi, Prayagraj 211019, India}

\author{Vivek Pandey}
\email{vivekpandey3881@gmail.com}
\affiliation{Harish-Chandra Research Institute, A CI of Homi Bhabha National Institute, Chhatnag Road, Jhunsi, Prayagraj 211019, India}
\affiliation{Institute of Physics, Faculty of Physics, Astronomy and Informatics, Nicolaus Copernicus University, Grudziadzka 5/7, 87-100 Toru\'n, Poland
}

\author{Uttam Singh}
\email{uttam@iiit.ac.in}
\affiliation{Centre for Quantum Science and Technology (CQST), International Institute of Information Technology Hyderabad, Gachibowli 500032, Telangana, India}

\author{Siddhartha Das}
\email{das.seed@iiit.ac.in}
\affiliation{Center for Security, Theory and Algorithmic Research (CSTAR), International Institute of Information Technology Hyderabad, Gachibowli 500032, Telangana, India}
\affiliation{Centre for Quantum Science and Technology (CQST), International Institute of Information Technology Hyderabad, Gachibowli 500032, Telangana, India}

\begin{abstract}
Quantum information processing and computing tasks can be understood as quantum networks, comprising quantum states and channels and possible physical transformations on them. It is hence pertinent to estimate the change in informational content of quantum processes due to physical transformations they undergo. The physical transformations of quantum states are described by quantum channels, while the transformations of quantum channels are described by quantum superchannels. In this work, we determine fundamental limitations on how well the physical transformation on quantum channels can be undone or reversed, which are of crucial interest to design and benchmark quantum information and computation devices. In particular, we refine (strengthen) the quantum data processing inequality for quantum channels under the action of quantum superchannels. We identify a class of quantum superchannels, which appears to be the superchannel analogue of subunital  quantum channels, under the action of which the entropy of an arbitrary quantum channel is nondecreasing. We also provide a refined inequality for the entropy change of quantum channels under the action of an arbitrary quantum superchannel.
\end{abstract}

\maketitle

\section{Introduction}

\subsection{Motivation and background}
Quantum theory lies at the core of our understanding of the natural phenomena at the small scale and plays a fundamental role in the rapidly growing area of quantum information processing and computing. Traditionally, quantum states are used as the information carriers, and they need to be appropriately manipulated to perform the desired protocols or algorithms, e.g., quantum key distribution~\cite{XMZ+20,DBWH21} and protocols implying the advantage in quantum computation~\cite{shor94,grover96,BGK18}; these manipulations are quantum state transformations. The physical transformations of quantum states of systems are described by completely positive (CP), trace-preserving (TP) maps in quantum theory and are referred to as quantum channels. Achieving desired manipulations of quantum states requires precise application of specific quantum channels. A thorough understanding of quantum channels, therefore, is not only essential for the theoretical insights but also pivotal for the realization of quantum computers, sensing and information processing devices, and quantum Internet~\cite{NC00, DM03, H13book, SSDBG2015, wilde2016,SSHD23}.

 Mathematically, quantum channels can be thought to encompass the notions of physical evolution, measurement, discarding of a quantum system, appending of a quantum system, and density operators (states)\footnote{We direct the readers to~\cite[Section 4.6]{wilde2016} for detailed discussion.}. This motivates for a convenient and overarching framework to handle quantum states, measurements, and quantum channels succinctly and efficiently in a unified way. Quantum networks comprising quantum states, measurements, and quantum channels provide such a framework~(cf.~\cite{DBWH21}), where information can be encoded in quantum channels and/or quantum states depending on the desired tasks~\cite{SNB+06,Das19,DW19,PLY+23}; then the manipulation of information can be treated in a unified way by the higher order transformations of quantum networks. Considering quantum networks as elementary quantum objects, their transformations would then describe the most general kind of physical transformations of quantum information. Examples of transformations of quantum channels include among others the programmable quantum channels (taking quantum state as input and outputting a quantum channel) \cite{DP05,GBW21} and the cloning of unknown unitaries~\cite{CDGP08}. The transformations of quantum channels are achieved by the action of quantum superchannels--- linear transformations that map quantum channels into quantum channels~\cite{Giulio2008, Chiribella,Zyc08, Per17,Bisio_2019}. In this work, we shed light on the irreversibility of the transformation of quantum channels under the action of quantum superchannels. In particular, we derive entropic inequalities that provide non-trivial lower bounds on the entropy change and data processing inequality pertaining to the transformation of quantum channels under the action of quantum superchannels.

The entropic functionals attributed to quantum channels, together with their inherent properties, furnishes both qualitative and quantitative methodologies to analyse quantum processes from the perspective of quantum information theory. Entropic inequalities, like uncertainty relations~\cite{CBTW17}, second laws of thermodynamics~\cite{BHN+15}, data processing inequality~\cite{Rus02}, etc.~have been instrumental in devising quantum communication and computation protocols. The quantum relative entropy $\staD*{\cdot}{\cdot}$, a quantum generalization of Kullback-Leibler divergence, is one of the core entropic functions and is indispensable to the analysis of distinguishability problems in quantum information theory. $\staD*{\rho}{\sigma}$ between a state $\rho$ of an arbitrary dimensional quantum system, finite or infinite, and a positive semidefinite operator $\sigma$ is defined as~\cite{Umegaki1962}\footnote{The logarithm can be taken with respect to the base $2$ without loss of generality.}
 \begin{equation}
    \staD*{\rho}{\sigma} := 
       \begin{cases}
                \tr(\rho(\log{\rho}-\log{\sigma})) &  \text{if}\  \operatorname{supp}(\rho)\subseteq \operatorname{supp}(\sigma), \\ 
                +\infty & \text{otherwise}.
       \end{cases}
 \end{equation}
 Many other important entropic quantities, like the von Neumann entropy, quantum mutual information, and quantum conditional entropy, can be expressed in terms of the quantum relative entropy. For instance, the von Neumann entropy $S(\rho):= -\tr[\rho\log\rho]$ of a quantum state $\rho$ is $S(\rho)= - \staD*{\rho}{\mathbbm{1}}$, where $\mathbbm{1}$ is the identity operator. The monotonicity of quantum relative entropy between states under the action of quantum channels, also referred to as the quantum data processing inequality, states that the quantum relative entropy between arbitrary states $\rho$ and $\sigma$ is nonincreasing under the action of an arbitrary quantum channel $\mathcal{N}$~\cite{Petz:1986, Petz:1988}, i.e.,
 \begin{equation}\label{eq:data_proc}
     \staD*{\rho}{\sigma}-\staD*{\mathcal{N}(\rho)}{\mathcal{N}(\sigma)}\geq 0.
 \end{equation}
The quantum data processing inequality has several ramifications in quantum information theory and is one of the widely invoked inequalities to derive limitations in quantum processing and computing tasks~\cite{Rus02,H13book,BSFPW2017, DBWH21,SSC2023,KW20}. A consequence of the above inequality is a second law of thermodynamics like inequality which states that the von Neumann entropy~\footnote{We simply refer to the von Neumann entropy functional as the entropy. We also discuss generalized entropy functions. The use of the entropy for the von Neumann entropy will be clear from the context.} of a quantum state is nondecreasing under the action of a unital channel~\cite{Alberti:1977wc,AU82}, i.e., $S(\mathcal{N}(\rho))-S(\rho)\geq 0$ for any quantum channel $\mathcal{N}$ that preserves the identity operator, i.e., $\mathcal{N}(\mathbbm{1})=\mathbbm{1}$, and arbitrary quantum state $\rho$. It is known that the von Neumann entropy of any quantum state is also nondecreasing for subunital channels~\cite{Alberti:1977wc,AU82}, meaning that  $S(\mathcal{N}(\rho))-S(\rho)\geq 0$ for quantum channels $\mathcal{N}$ satisfying $\mathcal{N}(\mathbbm{1})\leq \mathbbm{1}$. The input and output dimensions of unital channels are the same, whereas the output dimension of a subunital channel is greater than or equal to the input dimension.

The quantum data processing inequality (Eq.~\eqref{eq:data_proc}) for states is saturated, i.e., $ \staD*{\rho}{\sigma}=\staD*{\mathcal{N}(\rho)}{\mathcal{N}(\sigma)}$, if and only if there exists a completely positive, trace nonincreasing map $\mathcal{P}$, called the Petz recovery map, such that $\mathcal{P} \circ \mathcal{N} (\rho)= \rho$ and $\mathcal{P} \circ \mathcal{N} (\sigma)= \sigma$~\cite{Petz:1986, Petz:1988}. This result has found crucial applications in quantum error correction~\cite{BK02,MN12} among others. The quantum data processing inequality has been recently improved to the following. For arbitrary quantum states $\rho$ and  $\sigma$, and an arbitrary quantum channel $\mathcal{N}$, there exists a universal recovery map $\mathcal{P}^{\rm R}_{\sigma,\mathcal{N}}$ (a completely positive and trace nonincreasing map that depends only on $\sigma$ and $\mathcal{N}$ with no dependence on $\rho$) such that~\cite{Junge_2018}
 \begin{equation}
     \staD*{\rho}{\sigma}-\staD*{\mathcal{N}(\rho)}{\mathcal{N}(\sigma)}\geq -\log F\left(\rho,\mathcal{P}^{\rm R}_{\sigma,\mathcal{N}}\circ\mathcal{N}(\rho)\right)\geq 0,
 \end{equation}
 where $F(\rho,\sigma):=\norm{\sqrt{\rho}\sqrt{\sigma}}_1^2$ is the fidelity between $\rho$ and $\sigma$. $\norm{\mathsf{X}}_{1}=\tr{\sqrt{\mathsf{X}^\ast \mathsf{X}}}$ is the trace-norm of $\mathsf{X}$. There is also a strengthened (refined) inequality~\cite{Buscemi_2016} for entropy change of an arbitrary state under the action of an arbitrary positive and trace-preserving map $\mathcal{N}$,
 \begin{equation}\label{eq:bdw_main}
     S\left(\mathcal{N}(\rho)\right)-S(\rho)\geq \staD*{\rho}{\mathcal{N}^{\ast}\circ\mathcal{N}(\rho)},
 \end{equation}
 where $\mathcal{N}^\ast$ denotes the adjoint map of the linear map $\mathcal{N}$. The above inequality is a strengthened version of entropy change under quantum channels as quantum channels are special cases of positive and trace-preserving maps. Notice that $\staD*{\rho}{\mathcal{N}^\ast\circ\mathcal{N}(\rho)}\geq 0$ for all subunital channels $\mathcal{N}$ and quantum states $\rho$, thus $\mathcal{N}^\ast$ can be deemed as a universal recovery map (channel) for unital channels $\mathcal{N}$ in the context of entropy change.

The relative entropy between quantum channels~\cite{Cooney_2016,Felix_2018,Yuan} is also defined in a way similar to the diamond norm~\cite{watrous2004} of two quantum channels where the trace-distance is replaced by the relative entropy. Such a notion of relative entropy between quantum channels is nondecreasing under the action of quantum superchannels~\cite{Felix_2018,Hirche_2023}. Motivated by the definition of the entropy for quantum states, \cite{Gour_2019} proposed axioms that the definition of the entropy of a quantum channel should satisfy. In \cite{Yuan,Gour_entropy}, the definition of quantum entropy was introduced which satisfies the physically motivated axioms mentioned in \cite{Gour_2019}; the entropy of a quantum channel is equal to its completely bounded entropy introduced in~\cite[Eq.~(1.1)]{DJKR06}~\cite[Proposition 6]{Gour_entropy}. The entropy of a quantum channel finds operational meaning in the information-theoretic tasks of quantum channel merging~\cite{Gour_entropy}, hypothesis testing of quantum channels~\cite{Yuan}, and distributed private randomness distillation~\cite{YHW19}.
\color{black}

\textit{Note}.--- Some ideas discussed here were instrumental in defining information-theoretic quantities like the  conditional entropy, mutual information, and conditional mutual information of multipartite quantum channels in \cite{DGP24}.

\subsection{Our contribution and main results}
\label{summary}
In this paper, our main objective is to determine the fundamental limitations on the reversibility of entropy change and data processing of quantum channels under the action of quantum superchannels. Theorems~\ref{theorem_3}~and~\ref{theorem_4} comprise our main results, which can be considered as the generalizations of inequalities~\eqref{eq:bdw_main} and \eqref{eq:data_proc}, respectively, to the case of quantum channels and their transformations under quantum superchannels. To arrive at these results, we prove several meta results that are inequalities for the entropic functionals of quantum channels, in particular, see for instance Propositions~\ref{equ:prop1} and \ref{equ:prop_new}, which may be of independent interest. We employ an alternate expression (Definitions~\ref{dfn:entropy_channel} and \ref{dfn:gen_ent_channel}) for the definition of the entropy of quantum channel that works for arbitrary dimensional quantum channels, discrete-variable or continuous-variable systems, and can also be extended to the definition of entropy for arbitrary higher order quantum processes. We note that our definition of the entropy for quantum channels coincides with the definition of the entropy for quantum channels proposed in \cite{Gour_entropy} for all practical purposes (cf.~\cite[Definitions 1 and 12]{Gour_entropy}). Our expression for the entropy of quantum channel is the generalization of the expression for the entropy of quantum state $\rho_A$ defined as the negative of the relative entropy $\staD*{\rho_A}{\mathbbm{1}_A}$ of $\rho_A$ with respect to $\mathbbm{1}_A$. Note that $\mathbbm{1}_A$ is not a trace-class operator but a bounded operator that is well-defined for both finite- and infinite-dimensional separable Hilbert spaces.

We introduce $\mathcal{R}$-subpreserving supermaps $\hat{\Theta}:\mathcal{L}(A,B)\to\mathcal{L}(C,D)$ that are defined as supermap such that $\hat{\Theta}(\mathcal{R}_{A\to B})\leq \mathcal{R}_{C\to D}$, where $\mathcal{R}_{A\to B}$ and $\mathcal{R}_{C\to D}$ are completely depolarizing maps (Definition~\ref{def_completely_depolarising_map}). The entropy $S[\mathcal{N}]$ of an arbitrary quantum channel $\mathcal{N}_{A\to B}$ is then defined as $S[\mathcal{N}]:= -\chD*{\mathcal{N}}{ \mathcal{R}}$, i.e., negative of the relative entropy between a quantum channel $\mathcal{N}_{A\to B}$ and the completely depolarizing map $\mathcal{R}_{A\to B}$ (see Definition~\ref{dfn:entropy_channel}). We prove that the entropy of a quantum channel is nondecreasing under the action of an $\mathcal{R}$-subpreserving superchannel $\hat{\Theta}:\mathcal{L}(A,B)\to\mathcal{L}(C,D)$ (Proposition~\ref{prop:entropy_gain}), i.e.,
\begin{equation}\label{eq:ent_gain_1}
    S[\hat{\Theta}({\mathcal{N}})]-S[\mathcal{N}]\geq 0.
\end{equation}
A direct consequence of the above inequality is that the entropy of a quantum channel is nondecreasing under the action of an $\mathcal{R}$-preserving superchannel $\hat{\Theta}$, i.e., $\hat{\Theta}(\mathcal{R}_{A\to B})=\mathcal{R}_{C\to D}$, where we only require $\abs{B}=\abs{D}$ (cf.~\cite[Proposition~3]{Gour_2019}). We advocate  $\mathcal{R}$-subpreserving and $\mathcal{R}$-preserving maps (channels) to be the supermaps analogue of subunital and unital maps (channels), respectively, in the sense that a subunital map $\mathcal{M}$ is $\mathbbm{1}$-subpreserving, i.e., $\mathcal{M}_{A\to B}(\mathbbm{1}_A)\leq \mathbbm{1}_B$. Furthermore, analogous to inequality~\eqref{eq:bdw_main}, we derive a lower bound on the entropy change of an arbitrary quantum channel under the action of a quantum superchannel in Theorem~\ref{theorem_3}, which also strengthens the inequality~\eqref{eq:ent_gain_1}.

We show that the relative entropy of quantum channels is nonincreasing under the action of a wide class of CP-preserving and TP-preserving supermaps~(see Proposition~\ref{equ:prop_new}).
We strengthen the data processing inequality for quantum channels by deriving nontrivial lower bound to the monotonicity of relative entropy of quantum channels under the action of quantum superchannels. That is, the following lower bound holds for the monotonicity of the relative entropy $\chD*{\mathcal{N}}{\mathcal{M}}$ between quantum channels $\mathcal{N}$ and $\mathcal{M}$ with a desirable property under the action of a quantum superchannel $\Theta$, 
 \begin{equation}
        \chD*{\mathcal{N}}{ \mathcal{M}} -   \chD*{\Theta\left(\mathcal{N}\right) }{ \Theta \left(\mathcal{M}\right)} \geq - \log F \left(\mathsf{C}^{\Psi}_{\mathcal{N}},\left(\mathcal{P}^{\rm R}\circ \mathfrak{T}'\right)\left(\mathsf{C}^{\Psi}_{\mathcal{N}}\right)\right) \geq 0,
    \end{equation}
where the fidelity $F$ is taken between the operators related to the quantum channel $\mathcal{N}$ and the operators related to the reversibility of the action of the quantum superchannel $\Theta$ on $\mathcal{N}$ and $\mathcal{M}$. For a precise statement and a condition on the pair of channels $\mathcal{N}$ and $\mathcal{M}$ for which the above bound holds, see Theorem~\ref{theorem_4}. An important consequence of this result is that for a specific class of quantum channels and a quantum superchannel that preserves this class, the saturation of the data processing inequality implies the existence of a completely CP-preserving supermap that undoes (or reverses) the action of the superchannel on given quantum channels, see Remark~\ref{rem:data-sat}.

Noticing the similarity between the definitions of entropy for quantum states and quantum channels, we define the generalized divergence and the entropy of quantum superchannels, and subsequently generalize these notions to higher-order quantum processes. By defining the notion of a completely depolarizing supermap, we show that the entropy functional for quantum superchannels is nondecreasing under a quantum super-superchannel that subpreserves the completely depolarizing supermap. Then we prove the subadditivity of the quantum superchannel entropy under tensor products. Further, we show that for a replacer superchannel $\Theta_{0}$, which maps every channel to a fixed channel $\mathcal{N}_{0}$, the entropy of $\Theta_{0}$ is upper bounded by the entropy of $\mathcal{N}_{0}$ to a constant dimension-dependent term.

\subsection{Structure of the paper}
The structure of this paper is as follows. In Section~\ref{prelims&notations}, we fix our notations and present the preliminaries and background necessary to derive the main results of this paper. Here, we discuss some properties of the space of CP maps, quantum channels, and superchannels; and the notion of representing maps for superchannels. In Section~\ref{Entropy_of_quantum_channels}, we define the entropy of a quantum channel via an axiomatic approach, and we also extend the definition of entropy for quantum channels with infinite-dimensional input or output space. We also prove several bounds on channel entropy gain for a wide class of channels including tele-covariant channels. In Section~\ref{Sufficiency_of_superchannels}, we introduce the notion of the sufficiency of superchannels and we prove a stronger version of data processing inequality and entropy gain of quantum channels, with a remainder term which depends on recovery supermap.  In Section~\ref{Generalized_divergence_and_entropy_of_superchannels}, we define the generalized divergence and entropy functional for superchannels and we discuss some properties of both the functionals. Finally, in Section~\ref{conclusion}, we give the concluding remarks of our findings. We also define a divergence functional for the higher order process, and provide a necessary condition for the functional to be a generalized divergence. Using this definition of generalized divergence, we define the entropy functional for higher-order processes. 
\tableofcontents
\section{Preliminaries: Standard notations, definitions, and facts} \label{prelims&notations}

In this section, we briefly mention some standard definitions and notations that are important for deriving and discussing our results.

 The separable Hilbert space associated with a quantum system $A$ is denoted by $A$ and its dimension is denoted by $|A|:={\rm dim}(A)$. The Hilbert space of a joint system $AB$ is given by the tensor product $A\otimes B$ of the separable Hilbert spaces $A$ and $B$. Let $A\simeq B$ denote that the two Hilbert spaces $A$ and $B$ are isomorphic. The space of all linear operators acting on $A$ is denoted by ${\mathcal{L}}(A)$ and the identity operator on $A$ is denoted by $\mathbbm{1}_{A}$. The space $\mathcal{L}\left(A\right)$ is equipped with the Hilbert-Schmidt inner product $\langle \mathsf{P}, \mathsf{Q}\rangle  \equiv \mathrm{tr}\left({\mathsf{P}}^{\dag} \mathsf{Q}\right)$ $\forall ~ \mathsf{P}, \mathsf{Q} \in \mathcal{L}\left(A\right)$. Let $\mathcal{L}\left(A\right)_{+} \subset \mathcal{L}\left(A\right)$ denote the set of all positive semidefinite operators; for simplicity, we will use the terms positive semidefinite and positive interchangeably throughout the paper. Let $\mathcal{L}_{h}\left(A\right)$ denote the subset containing all the  Hermitian operators in $\mathcal{L}\left(A\right)$. There is a natural partial ordering in $\mathcal{L}_{h}\left(A\right)$ as follows: for operators $\{\mathsf{P}, \mathsf{Q}\}\in \mathcal{L}_{h}\left(A\right) $, we say that $\mathsf{P}\geq \mathsf{Q}$ if and only if $(\mathsf{P}- \mathsf{Q})\in \mathcal{L}\left(A\right)_{+}$. The state of a quantum system is described by a positive semidefinite trace-class operator with unit trace, called density operator, and is denoted by $\rho$. Let $\mathcal{D}\left(A\right)$ and $\mathcal{D}(A\otimes B)$ denote the set of all density operators on $A$ and on $A\otimes B$, respectively. The rank-one elements of $\mathcal{D}\left(A\right)$ are called the pure states. With a slight abuse of notation, we may call a normalized vector $\ket{\Psi}\in A$ as a (pure) state whenever it is clear that we are referring to the corresponding rank-one density operator $\Psi_A:= \op{\Psi}_A\in\mathcal{D}(A)$. For the sake of simplicity, we may denote an operator $\rho\in\mathcal{L}(A\otimes B)$ as $\rho_{AB}$. For an operator $\rho_{AB}\in\mathcal{L}(A\otimes B)$, $\rho_A \in\mathcal{L}(A)$ is called the marginal operator for the system $A$ (or reduced operator of the system $A$) and is defined as $\rho_A= \tr_B(\rho_{AB})$. Similarly, one defines $\rho_B=\tr_A(\rho_{AB})$. If $\rho_{AB}\in\mathcal{D}(A\otimes B)$, then the marginals $\rho_A\in\mathcal{D}(A)$ and $\rho_B\in\mathcal{D}(B)$ are called reduced states of $\rho_{AB}$. Let $\mathsf{FRank}\left(A\otimes B\right) \subset \mathcal{L}\left(A\otimes B\right)$ be a set of rank one operators $\Psi_{AB}\in \mathcal{L}\left(A\otimes B\right)$ such that $\mathrm{rank}\left(\Psi_{A}\right) = |A|$, where $A \simeq B$. The maximally entangled operator $\op{\Psi^+}_{AB}:=\Psi^+_{AB} \in \mathcal{L}(A\otimes B)_+$ is defined via $\ket{\Psi^+}_{AB}:=\sum_{i=1}^{d}\ket{i,i}_{AB}$, where $d=\min\{|A|,|B|\}$, and we call the normalized maximally entangled operator $d^{-1}\Psi^+_{AB}\in\mathcal{D}(A\otimes B)$ as a maximally entangled state of $AB$.

 Let 
$\mathcal{L}\left(A,B\right)$
be the space of all linear maps from $\mathcal{L}\left({A}\right)$ to $\mathcal{L}\left({B}\right)$. We denote a linear map $\mathcal{N}\in \mathcal{L}\left(A,B\right)$ as $\mathcal{N}_{A \rightarrow B}$. We will use both notations $\mathcal{N}_{A\to B}$ and $\mathcal{N}:\mathcal{L}(A)\to\mathcal{L}(B)$, interchangeably for a linear map $\mathcal{N}$ whenever there is no ambiguity. A linear map $\mathcal{N}_{A \rightarrow B}$  is called positive if it maps elements of  $ \mathcal{L}\left({A}\right)_{+}$ to the elements of $ \mathcal{L}\left({B}\right)_{+}$, and completely positive if $\mathrm{id}_{R} \otimes \mathcal{N}_{A \rightarrow B}$ (where $\rm id_{R}$ denotes the identity map acting on a reference system $R$) is a positive map for arbitrary size of the reference system $R$. A map $\mathcal{N}_{A \rightarrow B}$ is called trace-preserving  if for all $\mathsf{X}\in{\mathcal{L}}({A})$, $\tr\left(\mathcal{N}_{A \rightarrow B}(\mathsf{X})\right)=\tr\left(\mathsf{X}\right)$, and a positive map $\mathcal{N}_{A \rightarrow B}$ is called trace nonincreasing if for all $\mathsf{X}\in{\mathcal{L}}(A)_{+}$,  $\tr\left(\mathcal{N}_{A \rightarrow B}(\mathsf{X})\right)\leq\tr(\mathsf{X})$. A quantum channel $\mathcal{N}:\mathcal{L}\left({A}\right) \rightarrow \mathcal{L}\left({B}\right)$ is a linear, completely positive and trace-preserving (CPTP) map which describes a physically meaningful evolution of a quantum system. The set of all quantum channels $\mathcal{N}_{A \rightarrow B}$  is denoted by $\mathcal{L}(A, B)_{+}$. A linear map $\mathcal{N}_{A \rightarrow B}$ is called subunital if $\mathcal{N}_{A \rightarrow B} \left( \mathbbm{1}_{{A}}\right) \leq \mathbbm{1}_{{B}}$, unital if $\mathcal{N}_{A \rightarrow B} \left( \mathbbm{1}_{{A}}\right) = \mathbbm{1}_{{B}}$, and superunital if $\mathcal{N}_{A \rightarrow B} \left( \mathbbm{1}_{{A}}\right) > \mathbbm{1}_{{B}}$. The adjoint $\mathcal{N}^{*}_{B \rightarrow A}$ of a linear map $\mathcal{N}_{A \rightarrow B}$ is the unique linear map that satisfies the following equation:
 \begin{equation}
     \langle \mathsf{P}, \mathcal{N}_{A \rightarrow B}\left(\mathsf{Q}\right) \rangle = \langle \mathcal{N}_{B \rightarrow A}^{*}\left(\mathsf{P}\right), \mathsf{Q} \rangle, \label{map_adjoint}
 \end{equation}
 for all $\mathsf{Q} \in \mathcal{L}\left({A}\right)$ and $\mathsf{P} \in \mathcal{L}\left({B}\right)$. The adjoint map of a quantum channel is a completely positive and unital map. For a trace nonincreasing map, the adjoint map is subunital while for a trace nondecreasing map, the adjoint map is superunital. The trace norm, Hilbert-Schmidt norm, and the operator norm for an operator $\mathsf{X}\in\mathcal{L}(A)$ are defined as $\norm{\mathsf{X}}_{1}:= \mathrm{tr}\left(\sqrt{\mathsf{X}^{\dagger} \mathsf{X}}\right)$, $\norm{\mathsf{X}}_{2}:= \sqrt{\mathrm{tr}\left(\mathsf{X}^{\dagger} \mathsf{X}\right)}$, and $\norm{\mathsf{X}}_{\infty}:= \sup_{x \in A, x\neq 0} \frac{\norm{\mathsf{X}(x)}}{\norm{x}}$, where $\norm{\cdot}$ denotes the norm on $A$ induced by the inner product on $A$.

\begin{dfn}[Tele-covariant channels~\cite{holevo2002}](see also \cite[Sec.~V.C.]{DBWH21})\label{dfn:telecov}
    Let $G$ be a finite group and let $ \{U_{g}\}_{g}$ and $\{V_{g}\}_{g}$ be unitary representations of $G$ in the Hilbert spaces $A$ and $B$, respectively. A quantum channel $\mathcal{N}_{A \rightarrow B}$ is called covariant with respect to $G$ if for all $\rho\in \mathcal{D}(A)$,
    \begin{equation}
    \mathcal{N}_{A \rightarrow B} \circ \mathcal{U}_{g} = \mathcal{V}_{g } \circ \mathcal{N}_{A \rightarrow B} \hspace{0.2cm} \forall g \in G,
\end{equation}
where $\mathcal{U}_g (\cdot):= U_g(\cdot)U_g^\ast$ and $\mathcal{V}_g (\cdot):= V_g(\cdot)V_g^\ast$.
The quantum channel $\mathcal{N}_{A \rightarrow B}$ is called tele-covariant if it is covariant with respect to $G$ with unitary representation that satisfies $|G|^{-1}\sum_{g} U_{g} \left(\mathsf{X}\right) U^{*}_{g} = \frac{\tr(\mathsf{X})}{|A|}\mathbbm{1}_A$ for all $\mathsf{X}\in\mathcal{L}(A)$.
\end{dfn}

\noindent
{\it von Neumann entropy}.---The von Neumann entropy (or the entropy) of a quantum state $\rho_A \in {\mathcal{D}}(A)$ is a non-negative quantity defined as
 \begin{equation}
    S(\rho_A)= -\tr({\rho_{A}\log{\rho_{A}}}),
 \end{equation}
 where the logarithm is taken with respect to base two.

\medskip
 \noindent
{\it Generalized divergences}.--- A map $\mathbf{D}:\mathcal{L}\left(A\right)_{+} \times \mathcal{L}\left(A\right)_{+} \rightarrow \mathbb{R}$ is called generalized state divergence if it satisfies data processing inequality, i.e.,
\begin{equation}
    \BstD{\rho}{\sigma} \geq \BstD{\mathcal{N} \left(\rho\right) }{ \mathcal{N} \left(\sigma \right)},
\end{equation}
where $\rho, \sigma \in \mathcal{L}\left(A\right)_{+}$ and $\mathcal{N}:\mathcal{L}(A)_{+}\rightarrow \mathcal{L}(A)_{+}$ is any quantum channel. The quantum relative entropy is one example of generalized divergence. Given a density operator $\rho \in {\mathcal{D}}(A)$ and a positive semidefinite operator $\sigma\in{\mathcal{L}}(A)_{+}$, the relative entropy is defined as~\cite{Umegaki1962}:
 \begin{equation}
    \staD{\rho}{\sigma} := 
       \begin{cases}
                \tr(\rho(\log{\rho}-\log{\sigma})) &  \text{if}\  \operatorname{supp}(\rho)\subseteq \operatorname{supp}(\sigma), \\ 
                +\infty & \text{otherwise}.
       \end{cases}\label{equ:relative entropy}
 \end{equation}

\noindent

\noindent    
{\it The CJKS isomorphism}.--- Let $A$ and $B$ be two finite-dimensional Hilbert spaces, and let $\mathcal{N}_{A \rightarrow B}$ be a linear map. Let $ \{e_{ij}\}_{ij}$ be a complete set of matrix units for $\mathcal{L}\left({A}\right)$ and $ \{f_{ij}\}_{ij}$ be a complete set of matrix units for $\mathcal{L}\left({B}\right)$. Then the CJKS  matrix~\cite{book,article,CHOI1975285,kraus1983states,Sudarshan1985}, which we shall refer to as the Choi operator, for $\mathcal{N}_{A \rightarrow B}$ is defined as the operator $ \Cwhat_{\mathcal{N}} =  \sum_{{i,j}=1}^{|A|} e_{ij} \otimes \mathcal{N}_{A \rightarrow B}\left(e_{ij}\right)$,
which belongs to $ \mathcal{L}\left({A}\otimes {B}\right)$. Let us further define $\mathsf{C}_{\mathcal{N}}=\Cwhat_{\mathcal{N}}/|A|$. If $\mathcal{N}_{A\rightarrow B}$ is a quantum channel, then $\mathsf{C}_{\mathcal{N}}$ is a quantum state, and we will call it the Choi state of the channel $\mathcal{N}_{A\rightarrow B}$. Consider a  map $J_{A,B} : \mathcal{L}(A,B)\rightarrow \mathcal{L}(A\otimes B)$ defined by $J_{A,B}\left(\mathcal{N}_{A \rightarrow B}\right)= \Cwhat_{\mathcal{N}}$.
One can check that this map is linear and injective. To show that it is also surjective, let us consider an element  $  \mathsf{X} \in \mathcal{L}(A\otimes B) $ which can be written as $\mathsf{X} =\sum_{i,j,k,l} x_{ijkl}  e_{ij}\otimes f_{kl} 
     = \sum_{i,j} e_{ij}\otimes \sum_{k,l} x_{ijkl}f_{kl}.$ Now, $\Gamma_{\mathsf{X}}(e_{ij}) = \sum_{k,l} x_{ijkl}f_{kl}$, defines a map $\Gamma_{\mathsf{X}} :\mathcal{L}({A})\rightarrow \mathcal{L}(B)$ by linear extension. So, we can write
 \begin{eqnarray}
     \mathsf{X}=\sum_{{i,j}=1}^{|A|} e_{ij} \otimes \Gamma_{\mathsf{X}}(e_{ij}), \label{equ:CJ_isomorphism}
 \end{eqnarray}
  which implies that the map $J_{A,B}$ is surjective as well. This map is known as Choi-Jamio{\l}kowski isomorphism. The action of $\Gamma_{\mathsf{X}}$ on an element $\mathsf{Q}\in \mathcal{L}(A)$ can be written as, $ \Gamma_{\mathsf{X}}\left(\mathsf{Q}\right)=\mathrm{tr}_{A}\left[\left(\mathsf{Q}^t \otimes \mathbbm{1}_B\right)\mathsf{X}\right]$,
  where $t$ represents transposition with respect to the usual canonical basis of $A$ and ${\rm tr}_{A}$ is partial trace with respect to the system $A$. The Choi-Jamio{\l}kowski isomorphism leads to the concept of a channel-state duality. To understand the channel-state duality we need to have a look at the CJKS theorem on completely positive maps.
  
\medskip
\noindent
{\it CJKS theorem on completely positive maps}.--- The 
 Choi operator $\Cwhat_{\mathcal{N}} = \sum_{{i,j}=1}^{|A|} $ $e_{ij} \otimes \mathcal{N}_{A \rightarrow B}(e_{ij})$ $\in $ $\mathcal{L}(A\otimes B) $ is positive semidefinite if and only if the linear map $\mathcal{N}_{A \rightarrow B}$ is completely positive.

\medskip
\noindent
{\it Quantum superchannels}.--- A linear map $\Theta:\mathcal{L}\left(A,B\right) \rightarrow \mathcal{L}\left(C,D\right)$ is called a supermap which maps elements of $\mathcal{L}\left(A,B\right)$ to elements of $\mathcal{L}\left(C,D\right)$. A supermap $\Theta:\mathcal{L}\left(A,B\right) \rightarrow \mathcal{L}\left(C,D\right)$ is called completely CP-preserving if the map $ \mathrm{id} \otimes \Theta : \mathcal{L}\left(A',B'\right) \otimes \mathcal{L}\left(A,B\right)\rightarrow \mathcal{L}\left(A',B'\right) \otimes \mathcal{L}\left(C,D\right)$ is CP-preserving for for all choices of the Hilbert spaces $A'$ and $B'$, where $\mathrm{id}$ denotes the identity supermap. A supermap is called TP-preserving if it maps a trace-preserving map into a trace-preserving map. A quantum superchannel is defined as a completely CP-preserving and TP-preserving supermap. It is clear from the definition that a quantum superchannel maps a quantum channel to a quantum channel.

Let $\mathcal{N}_{A \rightarrow B}$ and $\mathcal{M}_{C \rightarrow D}$ be any two quantum channels related via a superchannel $\Theta$ as $ \Theta\left(\mathcal{N}_{A \rightarrow B}\right) = \mathcal{M}_{C \rightarrow D}$.
The action of $\Theta$ on $\mathcal{N}_{A \rightarrow B}$ can be physically realized as follows~\cite{Chiribella,Gour_2019}:
\begin{equation}
    \Theta\left(\mathcal{N}_{A \rightarrow B}\right)  = \mathcal{V}_{BR \rightarrow D}\circ\left(\mathcal{N}_{A\rightarrow B} \otimes \mathrm{id}_{R}\right) \circ \mathcal{U}_{C \rightarrow AR}, \label{equ:superchannel_action}
\end{equation}
where $\mathcal{V} \in \mathcal{L}\left({BR},D\right)$ is a postprocessing channel, $\mathcal{U} \in \mathcal{L}\left({C},AR\right)$ is a preprocessing channel, and $R$ is some reference system.

\medskip
\noindent
{\it Space of quantum channels and superchannels}.--- Here we discuss various properties of the space of quantum channels and superchannels. We start by introducing a notion of inner product on $\mathcal{L}(A,B)$ as the follows.
\begin{lemma}[\cite{Gour_2019}]\label{equ:lemma1}
  Let $\{b^{A}_{j}\}_{j}$  be an orthonormal basis for $\mathcal{L}\left({A}\right)$. Then, the space $\mathcal{L}(A,B)$ of linear maps is equipped with the following inner product:
\begin{eqnarray}
    \inner{\mathcal{N}_{A \rightarrow B},\mathcal{M}_{A \rightarrow B}}=\sum_{j} \inner{\mathcal{N}(b^{A}_{j}),\mathcal{M}(b^{A}_{j})}, \forall \hspace{0.2cm} \mathcal{N}_{A \rightarrow B}, \mathcal{M}_{A \rightarrow B} \in \mathcal{L}(A,B), \label{equ:inner_product_for_channels}
\end{eqnarray}
where the inner product on the right-hand side is the usual Hilbert-Schmidt inner product. This inner product is independent of the choice of orthonormal basis.
\end{lemma}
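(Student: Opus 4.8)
The plan is to verify directly that the formula in~\eqref{equ:inner_product_for_channels} satisfies the three defining axioms of a (Hermitian) inner product, reducing everything termwise to the corresponding properties of the Hilbert--Schmidt inner product, and then to check basis-independence by a change-of-basis computation with a unitary matrix. Since $\mathcal{L}(A)$ is finite-dimensional, the sum over $j$ is finite, so the right-hand side is a well-defined scalar for every pair $\mathcal{N},\mathcal{M}\in\mathcal{L}(A,B)$.

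First I would establish the inner-product axioms. Conjugate-linearity in the first slot and linearity in the second follow termwise from the same properties of $\langle\cdot,\cdot\rangle$ on $\mathcal{L}(B)$ together with linearity of $\mathcal{N}$ and $\mathcal{M}$; Hermiticity $\inner{\mathcal{N}_{A\to B},\mathcal{M}_{A\to B}}=\overline{\inner{\mathcal{M}_{A\to B},\mathcal{N}_{A\to B}}}$ follows termwise from $\langle\mathsf{P},\mathsf{Q}\rangle=\overline{\langle\mathsf{Q},\mathsf{P}\rangle}$. For positive-definiteness, $\inner{\mathcal{N}_{A\to B},\mathcal{N}_{A\to B}}=\sum_j\norm{\mathcal{N}(b^A_j)}_2^2\ge 0$, and if this vanishes then $\mathcal{N}(b^A_j)=0$ for all $j$; because $\{b^A_j\}_j$ \emph{spans} $\mathcal{L}(A)$ and $\mathcal{N}$ is linear, this forces $\mathcal{N}=0$. (Note this is the one place where being a basis, not merely an orthonormal set, is used.)

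Next I would prove independence of the orthonormal basis. Given a second orthonormal basis $\{c^A_k\}_k$, write $c^A_k=\sum_j u_{kj}\,b^A_j$; orthonormality of both families makes $U=(u_{kj})$ unitary, i.e. $\sum_k\overline{u_{kj}}\,u_{kl}=\delta_{jl}$. Substituting and using the sesquilinearity already established gives
\begin{equation}
\sum_k\inner{\mathcal{N}(c^A_k),\mathcal{M}(c^A_k)}=\sum_{j,l}\Big(\sum_k\overline{u_{kj}}\,u_{kl}\Big)\inner{\mathcal{N}(b^A_j),\mathcal{M}(b^A_l)}=\sum_j\inner{\mathcal{N}(b^A_j),\mathcal{M}(b^A_j)},
\end{equation}
which is the asserted independence. (An equivalent, slicker route for the axioms: the matrix units $\{e_{ij}\}$ are themselves Hilbert--Schmidt-orthonormal, so choosing $b^A_j=e_{ij}$ yields $\inner{\mathcal{N},\mathcal{M}}=\langle\Cwhat_{\mathcal{N}},\Cwhat_{\mathcal{M}}\rangle$; thus $\inner{\cdot,\cdot}$ is the pullback of the Hilbert--Schmidt inner product on $\mathcal{L}(A\otimes B)$ along the injective linear map $J_{A,B}$, and inherits its axioms. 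That route still needs the displayed change-of-basis identity for basis-independence.)

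I do not expect a real obstacle here: the proof is routine. The only points demanding care are (i) using that $\{b^A_j\}_j$ spans $\mathcal{L}(A)$ for the definiteness half of positivity, and (ii) tracking the conjugation convention of $\langle\mathsf{P},\mathsf{Q}\rangle=\tr(\mathsf{P}^\dagger\mathsf{Q})$ so that the unitarity relation $\sum_k\overline{u_{kj}}\,u_{kl}=\delta_{jl}$ is contracted on the correct index.
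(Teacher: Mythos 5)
Your proposal is correct, and the substantive part — basis-independence via expanding one orthonormal basis in another and contracting the (unitary) change-of-basis coefficients — is essentially the same computation the paper gives in Appendix~\ref{detailed_proof_lemma_1}, just written with an explicit matrix $U$ instead of the coefficients $\inner{a^{A}_{i},b^{A}_{j}}$. Your additional verification of the sesquilinearity, Hermiticity, and positive-definiteness axioms (and the observation that the form is the pullback of the Hilbert--Schmidt inner product along $J_{A,B}$) is routine but fine; the paper omits it and proves only the basis-independence claim.
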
 
The above inner product was defined in \cite{Gour_2019}, and it was asserted that the inner product is independent of choice of orthonormal basis. In Appendix \ref{detailed_proof_lemma_1}, we provide a simple proof of this statement. For a given pair of bases $\{b^{A}_{i}\}_{i}$ and $\{b^{B}_{j}\}_{j}$, for $\mathcal{L}(A)$ and $\mathcal{L}(B)$, respectively, we can define an orthonormal basis $\{\mathcal{E}^{A \rightarrow B}_{ij}\}$ for the space $\mathcal{L}(A,B)$ as follows
\begin{align}
    \mathcal{E}^{A \rightarrow B}_{ij}\left(\mathsf{K}\right)=\inner{b^{A}_{i},\mathsf{K}}b^{B}_{j}= \mathrm{tr}\left(({b^{A}_{i})}^{\dagger}\mathsf{K}\right)b^{B}_{j},
\end{align}
where $\mathsf{K} \in \mathcal{L}\left(A\right)$. It can be verified that the above-defined basis is orthonormal, i.e, $\inner{\mathcal{E}^{A \rightarrow B}_{ij},\mathcal{E}^{A \rightarrow B}_{kl}}=\delta_{ik}\delta_{jl}$. 
The adjoint $\Theta^{*} :\mathcal{L}\left({C},D\right) \rightarrow \mathcal{L}\left({A},B\right)$  of a linear map $\Theta :\mathcal{L}\left({A},B\right) \rightarrow \mathcal{L}\left({C},D\right)$ is a linear map that for all $\mathcal{N}_{A \rightarrow B}$ and $\mathcal{N'}_{C \rightarrow D}$ satisfies
\begin{align}
 \inner{\Theta (\mathcal{N}), \mathcal{N'}}=  \inner{\mathcal{N}, \Theta^{*}(\mathcal{N'})},
\end{align}
where the inner products on the left-hand side and right-hand side are in $\mathcal{L}\left({C},D\right)$ and $\mathcal{L}\left({A},B\right)$, respectively.

 In~\cite{Gour_2019}, it was shown that the Choi-type theorem~(CJKS isomorphism) also holds for supermaps. Here we provide an alternative proof that is suitable for our context.

\medskip
\noindent
{\it Choi type theorem for supermaps}.---
Let $\{e^{Q}_{ij}\}_{ij}$ be a canonical basis for the space $\mathcal{L}\left({Q}\right)$, where $ Q \in \{A,B,C,D\}$. Using the bra-ket notation, we can write $ e^{Q}_{ij} = \ketbra{i}{j}_Q$, where $\{\ket{i}_Q\}_{i}$ is an orthonormal basis for $Q$. An orthonormal basis for $\mathcal{L}\left({A},B\right)$ is the following,
\begin{eqnarray}
    \mathcal{E}^{A \rightarrow B}_{ijkl}\left(\mathsf{K}\right)=\inner{e^{A}_{ij},\mathsf{K}}e^{B}_{kl}= \mathrm{tr}\left({e_{ij}^{A}}^{\dagger} \mathsf{K}\right)e^{B}_{kl}.\label{equ:channel_basis}
\end{eqnarray}
Let $\mathscr{L}\left[\mathcal{L}\left({A},B\right),\mathcal{L}\left({C},D\right)\right]$ denote the set of all the linear maps which map elements of $\mathcal{L}\left({A},B\right)$ to $\mathcal{L}\left(C,D\right)$.
  The spaces $\mathscr{L}\left[\mathcal{L}\left({A},B\right),\mathcal{L}\left({C},D\right)\right]$ and $\mathcal{L}\left({A},B\right) \otimes \mathcal{L}\left({C},D\right)$ are isomorphic as vector spaces under the following identification
\begin{eqnarray}
    \Theta \leftrightarrow \Lambda_{\Theta} := \sum_{ijkl}  \mathcal{E}^{A \rightarrow B}_{ijkl} \otimes \Theta\left( \mathcal{E}^{A \rightarrow B}_{ijkl}\right) , \label{Choi-type-operator}
\end{eqnarray}
where $\Theta \in \mathscr{L}\left[\mathcal{L}\left({A},B\right),\mathcal{L}\left({C},D\right)\right]$ and $\Lambda_{\Theta} \in \mathcal{L}\left({A},B\right) \otimes \mathcal{L}\left({C},D\right)$.
It should be noted that the object $\Lambda_{\Theta}$ has a mathematical structure similar to Choi operators. The following lemma supports this similarity further.

\begin{lemma}[\cite{Gour_2019}]\label{theorem_1} 
   A supermap $\Theta \in \mathscr{L}\left[\mathcal{L}\left({A},B\right),\mathcal{L}\left({C},D\right)\right]$ is completely CP-preserving if and only if $\Lambda_{\Theta}=\sum_{ijkl}  \mathcal{E}^{A \rightarrow B}_{ijkl} \otimes \Theta \left( \mathcal{E}^{A \rightarrow B}_{ijkl}\right) \in \mathcal{L}\left({A},B\right) \otimes \mathcal{L}\left({C},D\right) $ is CP. 
\end{lemma}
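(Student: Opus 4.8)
The plan is to reduce this statement to the ordinary Choi/CJKS theorem for completely positive maps, applied to the supermap $\Theta$ viewed through the vector-space isomorphism between $\mathcal{L}(A,B)$ and $\mathcal{L}(A\otimes B)$. The object $\Lambda_\Theta$ is manifestly the analogue of the Choi operator for $\Theta$, but built from the orthonormal basis $\{\mathcal{E}^{A\to B}_{ijkl}\}$ of $\mathcal{L}(A,B)$ rather than from the matrix-unit basis of a single operator space; so the first task is to translate ``completely CP-preserving'' into a statement that the ordinary CJKS theorem can digest.

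First I would make explicit the identification, already implicit in the excerpt, that sends a linear map $\mathcal{N}_{A\to B}$ to its Choi operator $\widehat{\mathsf{C}}_{\mathcal{N}}\in\mathcal{L}(A\otimes B)$ via the CJKS isomorphism $J_{A,B}$, under which $\mathcal{E}^{A\to B}_{ijkl}\leftrightarrow e^A_{ij}\otimes e^B_{kl}$, i.e.\ the orthonormal basis of maps corresponds to the matrix-unit basis of $\mathcal{L}(A\otimes B)$, and by the CJKS theorem $\mathcal{N}$ is CP iff $\widehat{\mathsf{C}}_{\mathcal{N}}\geq 0$. Thus a supermap $\Theta:\mathcal{L}(A,B)\to\mathcal{L}(C,D)$ is the same data as an ordinary linear map $\widetilde\Theta:\mathcal{L}(A\otimes B)\to\mathcal{L}(C\otimes D)$ (conjugate $\Theta$ by the CJKS isomorphisms on both sides), and ``$\Theta$ maps CP maps to CP maps'' becomes ``$\widetilde\Theta$ maps positive operators to positive operators,'' i.e.\ $\widetilde\Theta$ is a positive map in the usual sense. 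Next, the ``completely'' qualifier: $\mathrm{id}\otimes\Theta$ being CP-preserving for all auxiliary $A',B'$ translates, under the corresponding identification on $\mathcal{L}(A'\otimes B')\otimes\mathcal{L}(A\otimes B)$, into $\mathrm{id}_{\mathcal{L}(A'\otimes B')}\otimes\widetilde\Theta$ being a positive map for all finite-dimensional $A',B'$ — which is exactly the definition of $\widetilde\Theta$ being completely positive (since a reference system of arbitrary finite dimension can be encoded as $A'\otimes B'$). Here one should check that positivity on arbitrary reference systems of the form $R=A'\otimes B'$ suffices to force complete positivity; this is immediate because for CP-testing it is enough to take $R$ ranging over all finite dimensions, and any finite dimension arises as $|A'||B'|$ (e.g.\ take $B'$ trivial).

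Having made these translations, the statement is just: $\Theta$ is completely CP-preserving $\iff$ $\widetilde\Theta$ is completely positive $\iff$ (by the ordinary CJKS theorem applied to $\widetilde\Theta:\mathcal{L}(A\otimes B)\to\mathcal{L}(C\otimes D)$) the Choi operator of $\widetilde\Theta$, namely $\sum_{(ij)(kl)} (e^A_{ij}\otimes e^B_{kl})\otimes\widetilde\Theta(e^A_{ij}\otimes e^B_{kl})\in\mathcal{L}(A\otimes B)\otimes\mathcal{L}(C\otimes D)$, is positive semidefinite. Finally, transporting this Choi operator back through the CJKS isomorphisms on both tensor factors, positivity of this operator is precisely ``$\Lambda_\Theta$ is CP'' in the sense that its image $J_{A,B}\otimes J_{C,D}$ applied to it — or rather the operator obtained by replacing each $\mathcal{E}^{A\to B}_{ijkl}$ and each $\Theta(\mathcal{E}^{A\to B}_{ijkl})$ by its Choi operator — is a positive semidefinite operator; this is what the lemma means by calling the element $\Lambda_\Theta\in\mathcal{L}(A,B)\otimes\mathcal{L}(C,D)$ ``CP,'' since under the CJKS isomorphism an element of $\mathcal{L}(A,B)$ is called CP exactly when its Choi operator is positive.

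The main obstacle is bookkeeping rather than mathematical depth: one must be careful that the isomorphism $\mathcal{L}(A,B)\simeq\mathcal{L}(A\otimes B)$ used on the input side is compatible — in the precise sense of intertwining tensor products with reference systems — with the one used to define $\Lambda_\Theta$, so that ``$\mathrm{id}\otimes\Theta$ CP-preserving'' really does go over to ``$\mathrm{id}\otimes\widetilde\Theta$ completely positive'' with the reference system appearing in the right slot. In particular I would verify that the basis $\{\mathcal{E}^{A\to B}_{ijkl}\}$ is mapped by $J_{A,B}$ to the matrix-unit basis $\{e^A_{ij}\otimes e^B_{kl}\}$ of $\mathcal{L}(A\otimes B)$ up to the normalization conventions in the excerpt's definition of $\widehat{\mathsf{C}}_{\mathcal{N}}$, and that the partial-transpose/partial-trace formula for $\Gamma_{\mathsf{X}}$ given earlier is the correct inverse. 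Once that dictionary is pinned down, the equivalence is a direct application of the already-stated CJKS theorem on completely positive maps, with no further analytic input needed (finite dimensions are assumed throughout for the Choi construction).
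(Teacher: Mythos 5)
Your argument is correct, but it takes a genuinely different route from the paper's. The paper proves the nontrivial direction head-on: it first establishes a structural lemma (Lemma~\ref{lamma:1}) characterizing CP elements of $\mathcal{L}(E,E')\otimes\mathcal{L}(A,B)$ by an explicit Kraus-type decomposition in the basis $\{\mathcal{E}^{E\to E'}\}\otimes\{\mathcal{E}^{A\to B}\}$, then writes an arbitrary CP $\Omega$ as $(\Xi_{\mathsf{X}}\otimes\mathrm{id})\bigl(\sum_{ijkl}\mathcal{E}_{ijkl}\otimes\mathcal{E}_{ijkl}\bigr)$, pushes $\mathrm{id}\otimes\Theta$ through, and uses the assumed Kraus form of $\Lambda_\Theta$ to exhibit $(\mathrm{id}\otimes\Theta)\Omega$ in the same normal form. (The easy direction is the same one-liner in both treatments: $\sum_{ijkl}\mathcal{E}_{ijkl}\otimes\mathcal{E}_{ijkl}$ is CP, so $\Lambda_\Theta=(\mathrm{id}\otimes\Theta)\sum_{ijkl}\mathcal{E}_{ijkl}\otimes\mathcal{E}_{ijkl}$ is CP when $\Theta$ is completely CP-preserving.) You instead transport everything through the CJKS isomorphism to the representing map $\widetilde\Theta=\mathsf{T}$ and invoke the ordinary Choi theorem once. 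This works, and is shorter, but two points deserve emphasis. First, in the paper's logical order the equivalence ``$\Theta$ completely CP-preserving $\iff$ $\mathsf{T}$ CP'' (Lemma~\ref{lemma_3}) is \emph{derived from} the present lemma in Appendix~\ref{sec:detailed_proof_of_lemma3}, so you cannot cite it without circularity; you must, as you in fact do, prove the dictionary independently --- namely that $J_{A',B'}\otimes J_{A,B}$ carries CP elements of $\mathcal{L}(A',B')\otimes\mathcal{L}(A,B)$ bijectively onto positive operators up to a unitary swap of tensor factors, and that it intertwines $\mathrm{id}\otimes\Theta$ with $\mathrm{id}\otimes\mathsf{T}$. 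That verification (checked on product maps and extended by linearity, plus the observation that taking $B'$ one-dimensional exhausts all reference dimensions) is the real content of your proof and should be written out rather than flagged as bookkeeping, but it is routine and contains no gap. Second, the paper's heavier route buys an explicit Kraus-type normal form for CP elements of the tensor-product map space, which is reused in the rest of its argument, whereas your route yields the representing-map equivalence of Lemma~\ref{lemma_3} essentially for free as a byproduct.
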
 
The above lemma in its present form is not mentioned in~\cite{Gour_2019}, but we can arrive in the present form by using \cite[Theorem 1]{Gour_2019}. See Appendix~\ref{detailed_proof_thm_1} for an alternate, detailed proof of the above lemma. The CP vs completely CP-preserving correspondence in the above lemma depends on the choice of basis. The necessary and the sufficient condition on the choice of basis such that the above lemma holds can be found in~\cite{sohail_2023_June}.

\medskip
\noindent
{\it Representing Map}.--- Let $\Theta\in \mathscr{L}\left[\mathcal{L}\left({A},B\right),\mathcal{L}\left({C},D\right)\right]$ be a supermap.
If this map is completely CP-preserving and TP-preserving then it is called a quantum superchannel. The supermap $\Theta\in \mathscr{L}\left[\mathcal{L}\left({A},B\right),\mathcal{L}\left({C},D\right)\right]$ induces a map $\mathsf{T}: \mathcal{L}\left({A \otimes B}\right) \rightarrow \mathcal{L}\left({C \otimes D}\right) $, at the level of Choi operators, and is given by
\begin{equation}
            \mathsf{T}(\mathsf{X})= \Cwhat_{\Theta\left(\Gamma_{\mathsf{X}}\right)}, \label{equ:rep_map2}
\end{equation}
where $\mathsf{X} \in \mathcal{L}\left({A \otimes B}\right)$ is the Choi operator of  $\Gamma_{\mathsf{X}}$ and is defined in Eq.~\eqref{equ:CJ_isomorphism}, and $\Cwhat_{\Theta\left(\Gamma_{\mathsf{X}}\right)}$ is Choi operator for map $\Theta\left(\Gamma_{\mathsf{X}}\right)$. Following~\cite{Chiribella}, we will call this map the representing map of the supermap $ \Theta$. The spaces $\mathscr{L}\left[\mathcal{L}\left({A},B\right),\mathcal{L}\left({A},B\right)\right]$ and $\mathcal{L}\left(A \otimes B,A\otimes B\right)$ are algebras with respect to composition and are isomorphic under the identification: $\Theta \leftrightarrow \mathsf{T}$~\cite{Chiribella, Gour_2019, sohail_2023_June}. The following lemma provides the necessary and sufficient conditions for the representing map $\mathsf{T}$ to be completely positive.

\begin{lemma}[\cite{Gour_2019, Chiribella}]\label{lemma_3}
    A supermap $\Theta$ is completely CP-preserving if and only if its representing map $\mathsf{T}$ is CP.
\end{lemma}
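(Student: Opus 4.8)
The plan is to reduce Lemma~\ref{lemma_3} to Lemma~\ref{theorem_1} by showing that the Choi operator $\Cwhat_{\mathsf{T}}$ of the representing map is, under the natural factorwise CJKS identification, the same object as the Choi-type operator $\Lambda_{\Theta}$ of Eq.~\eqref{Choi-type-operator}. Granting that, the lemma is immediate: $\mathsf{T}$ is CP iff $\Cwhat_{\mathsf{T}}\geq 0$ (the CJKS theorem on completely positive maps, applied with input space $A\otimes B$ and output space $C\otimes D$), which holds iff $\Lambda_{\Theta}$ is CP, which by Lemma~\ref{theorem_1} holds iff $\Theta$ is completely CP-preserving.

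To establish the identification, fix matrix units $e^{A}_{ij}$ for $\mathcal{L}(A)$ and $e^{B}_{kl}$ for $\mathcal{L}(B)$, so that $\{e^{A}_{ij}\otimes e^{B}_{kl}\}$ is a basis of $\mathcal{L}(A\otimes B)$ and $\Cwhat_{\mathsf{T}}=\sum_{ijkl}(e^{A}_{ij}\otimes e^{B}_{kl})\otimes\mathsf{T}(e^{A}_{ij}\otimes e^{B}_{kl})$. The first step is the observation that the map $\Gamma_{e^{A}_{ij}\otimes e^{B}_{kl}}$ read off from Eq.~\eqref{equ:CJ_isomorphism} coincides with the basis map $\mathcal{E}^{A \rightarrow B}_{ijkl}$ of Eq.~\eqref{equ:channel_basis}: both send $e^{A}_{ij}\mapsto e^{B}_{kl}$ and annihilate every other matrix unit. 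Hence by the definition~\eqref{equ:rep_map2} of $\mathsf{T}$ we obtain $\mathsf{T}(e^{A}_{ij}\otimes e^{B}_{kl})=\Cwhat_{\Theta(\mathcal{E}^{A \rightarrow B}_{ijkl})}$. Using also that $\Cwhat_{\mathcal{E}^{A \rightarrow B}_{ijkl}}=e^{A}_{ij}\otimes e^{B}_{kl}$, this yields
\begin{equation}
   \Cwhat_{\mathsf{T}}=\sum_{ijkl}\Cwhat_{\mathcal{E}^{A \rightarrow B}_{ijkl}}\otimes\Cwhat_{\Theta(\mathcal{E}^{A \rightarrow B}_{ijkl})},
\end{equation}
which is precisely the operator obtained from $\Lambda_{\Theta}=\sum_{ijkl}\mathcal{E}^{A \rightarrow B}_{ijkl}\otimes\Theta(\mathcal{E}^{A \rightarrow B}_{ijkl})$ by replacing each linear map in the tensor product with its Choi operator; equivalently, $\Cwhat_{\mathsf{T}}$ is the Choi operator of the map $\mathcal{L}(A\otimes B)\to\mathcal{L}(C\otimes D)$ whose complete positivity is what Lemma~\ref{theorem_1} means by ``$\Lambda_{\Theta}$ is CP''.

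The only point requiring real care is the bookkeeping across the nested isomorphisms: verifying $\Gamma_{e^{A}_{ij}\otimes e^{B}_{kl}}=\mathcal{E}^{A \rightarrow B}_{ijkl}$ and $\Cwhat_{\mathcal{E}^{A \rightarrow B}_{ijkl}}=e^{A}_{ij}\otimes e^{B}_{kl}$ without stray transpositions (Eq.~\eqref{equ:CJ_isomorphism} realizes $\Gamma_{\mathsf{X}}$ through a partial trace that involves a transpose, so one should confirm it does not propagate into these identities), and checking that the ordering of the tensor factors $A,B,C,D$ in $\Cwhat_{\mathsf{T}}$ agrees with that in the factorwise CJKS image of $\Lambda_{\Theta}$. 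Once the basis conventions are pinned down these are routine computations, and no new estimate is needed---the whole content of the lemma is this matching of Choi representations. A more self-contained alternative would bypass Lemma~\ref{theorem_1} and argue directly that the representing map of $\mathrm{id}_{A'\rightarrow B'}\otimes\Theta$ equals $\mathrm{id}\otimes\mathsf{T}$ after an appropriate regrouping of reference systems, and then invoke complete positivity of the latter; but this first requires developing how representing maps behave under tensor products and is noticeably more cumbersome, so I would use it only as a fallback.
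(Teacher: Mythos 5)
Your proposal is correct and follows essentially the same route as the paper's proof: both establish $\mathsf{T}(e^{A}_{ij}\otimes e^{B}_{kl})=\Cwhat_{\Theta(\mathcal{E}^{A\to B}_{ijkl})}$ and $\Cwhat_{\mathcal{E}^{A\to B}_{ijkl}}=e^{A}_{ij}\otimes e^{B}_{kl}$, so that $\Cwhat_{\mathsf{T}}=\sum_{ijkl}\Cwhat_{\mathcal{E}^{A\to B}_{ijkl}}\otimes\Cwhat_{\Theta(\mathcal{E}^{A\to B}_{ijkl})}$ coincides (after a reordering of tensor factors, which the paper notes is a ${}^{\ast}$-isomorphism) with the Choi operator of $\Lambda_{\Theta}$, and then invoke Lemma~\ref{theorem_1} together with the CJKS theorem. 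The bookkeeping caveats you flag are exactly the points the paper checks explicitly, so no gap remains.
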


 We have already shown that a supermap $\Theta \in \mathscr{L}\left[\mathcal{L}\left({A},B\right),\mathcal{L}\left({C},D\right)\right]$ is completely CP-preserving if and only if $\Lambda_{\Theta}=\sum_{ijkl}  \mathcal{E}^{A \rightarrow B}_{ijkl} \otimes \Theta \left( \mathcal{E}^{A \rightarrow B}_{ijkl}\right) \in \mathcal{L}\left(A,B\right)\otimes \mathcal{L}\left(C, D\right) $ is CP, where $\Lambda_{\Theta}$ can be viewed as a map from $\mathcal{L}(A \otimes C)$ to $\mathcal{L}(B\otimes D)$, and $\{\mathcal{E}^{{A} \rightarrow B}_{\delta s \beta q}\}_{\delta s \beta q}$ is the orthonormal basis for $\mathcal{L}(A ,B)$ defined in Eq.~(\ref{equ:channel_basis}). Hence, we can determine its complete positivity by checking the positivity of its Choi operator $ \Cwhat_{\Lambda_{\Theta}} $. For a detailed proof, see Appendix~\ref{sec:detailed_proof_of_lemma3}. 

 Note that the representing map $\mathsf{T}:\mathcal{L}\left(A \otimes {B}\right) \rightarrow \mathcal{L}\left({C}\otimes {D}\right)$ of a TP-preserving supermap preserves the trace of those $\mathsf{X} \in \mathcal{L}\left(A \otimes {B}\right)$ for which $\Gamma_{\mathsf{X}}$ is trace-preserving provided $|A|=|C|$, but in general $\mathsf{T}$ could neither be a trace nonincreasing map nor a trace increasing map. However, we can transform the map $\mathsf{T}$ into a map $\mathsf{T}{'}: \mathcal{L}\left(A \otimes {B}\right) \rightarrow \mathcal{L}\left(C \otimes {D}\right)$ in such a way that the map $\mathsf{T}{'}$ is trace-preserving, and whenever $|A|=|C|$ and $\Gamma_{\mathsf{X}}$ is trace-preserving, $\mathsf{T}(\mathsf{X})=\mathsf{T}{'}(\mathsf{X})$ holds. The explicit form of the map $\mathsf{T}{'}$ is given by the following expression:
\begin{equation}
    \mathsf{T}{'}(\mathsf{X})=\mathsf{T}(\mathsf{X})+\left[\mathrm{tr}(\mathsf{X})-\mathrm{tr}(\mathsf{T}(\mathsf{X}))\right] \sigma_{0}, \label{equ:trace_preserving_T}
\end{equation}
where $\sigma_{0}\in \mathcal{L}(C\otimes D)$ with $\mathrm{tr}(\sigma_{0})=1$. In particular, $\mathsf{T}{'}\left(\Cwhat_{\mathcal{N}}\right)=\mathsf{T}\left(\Cwhat_{\mathcal{N}}\right)$ holds for all quantum channels $\mathcal{N}$, if $\mathsf{T}$ does not change the space of operators, or $|A| = |C|$. Choosing $\sigma_0$ to be a fixed quantum state, we observe that $\mathsf{T}'$ is CPTP if $\mathsf{T}$ is CP and trace nonincreasing map. For the case when $\mathsf{T}$ is  neither trace nonincreasing map nor trace increasing map, we have derived necessary and sufficient conditions for $\mathsf{T}'$  to be a CP map in terms of Choi operator of $\mathsf{T}$ in Appendix \ref{quantum_channel_from_TP_map}.

\begin{dfn}\label{dfn_sct}
     We define a set of superchannel  $\rm{SCT}\subset\mathscr{L}\left[\mathcal{L}\left({A},B\right),\mathcal{L}\left({C},D\right)\right] $ as 
     \begin{align}
         \mathrm{SCT}:= \{\Theta:\mathcal{L}\left(A,B\right) \rightarrow \mathcal{L}\left(C,D\right) \text{such that that $\mathsf{T}'$ is CPTP (see Appendix \ref{quantum_channel_from_TP_map})}\}.
     \end{align}    
\end{dfn}
The map $\mathsf{T}'$ is defined via Eq.~\eqref{equ:trace_preserving_T}. In section \ref{Sufficiency_of_superchannels}, we will use this class of superchannels to establish several results for the channel recovery problem.

A given representing map $\mathsf{T}:\mathcal{L}\left(A \otimes {B}\right) \rightarrow \mathcal{L}\left(C \otimes {D}\right)$ induces a supermap $\Lambda : \mathcal{L}(A,B)\rightarrow \mathcal{L}(C,D) $ as follows, 
\begin{equation}
    \Lambda(\mathcal{N})(\mathsf{X})=\mathrm{tr}_{C}\left[(\mathsf{X}^{t} \otimes \mathbbm{1}) \mathsf{T}\left(\Cwhat_{\mathcal{N}}\right)\right],
\end{equation}
 where $\mathsf{X}^t$ denote the transpose of operator $\mathsf{X}$, $\mathcal{N} \in \mathcal{L}(A ,B)$, and $\Cwhat_{\mathcal{N}}$ is the Choi operator for $\mathcal{N}$. Before proceeding further, we first find the adjoints of the maps $\mathsf{T}$ and $\mathsf{T}{'}$.
\begin{lemma}[\cite{Gour_2019}]\label{lemma_4}
    The adjoint $\mathsf{T}^{*}:\mathcal{L}(C\otimes D) \rightarrow \mathcal{L}(A\otimes B)$ of the representing map $\mathsf{T}$ of a supermap $\Theta:\mathcal{L}(A, B)\rightarrow \mathcal{L}(C, D)$ is given by
\begin{eqnarray}
    \mathsf{T}^{*}\left(\mathsf{Y}\right)= \Cwhat_{\Theta^{*}(\Gamma_{\mathsf{Y}})},
\end{eqnarray}
where $\Theta^{*}$ is the adjoint of $\Theta$, $\mathsf{Y} \in \mathcal{L}\left({C \otimes D}\right)$ is the Choi operator of  $\Gamma_{\mathsf{Y}}$, and $\Cwhat_{\Theta^{*}(\Gamma_{\mathsf{Y}})}$ is the Choi operator for ${\Theta^{*}(\Gamma_{\mathsf{Y}})}$.
\end{lemma}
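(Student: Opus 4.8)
The plan is to verify directly the defining property of the adjoint, namely
$\inner{\mathsf{T}(\mathsf{X}),\mathsf{Y}}=\inner{\mathsf{X},\mathsf{T}^{*}(\mathsf{Y})}$ for all $\mathsf{X}\in\mathcal{L}(A\otimes B)$ and $\mathsf{Y}\in\mathcal{L}(C\otimes D)$, with $\mathsf{T}^{*}(\mathsf{Y})$ given by the claimed formula, and then invoke uniqueness of the adjoint. The conceptual key is the observation that the Choi--Jamio{\l}kowski correspondence $J_{A,B}:\mathcal{N}\mapsto\Cwhat_{\mathcal{N}}$ is not merely a linear isomorphism but an \emph{isometry} between $\mathcal{L}(A,B)$ equipped with the inner product of Lemma~\ref{equ:lemma1} and $\mathcal{L}(A\otimes B)$ equipped with the Hilbert--Schmidt inner product. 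Indeed, expanding $\Cwhat_{\mathcal{N}}=\sum_{ij}e^{A}_{ij}\otimes\mathcal{N}(e^{A}_{ij})$ and $\Cwhat_{\mathcal{M}}=\sum_{kl}e^{A}_{kl}\otimes\mathcal{M}(e^{A}_{kl})$ and using the orthonormality $\inner{e^{A}_{ij},e^{A}_{kl}}=\delta_{ik}\delta_{jl}$ of the matrix units under the Hilbert--Schmidt inner product, one obtains $\inner{\Cwhat_{\mathcal{N}},\Cwhat_{\mathcal{M}}}=\sum_{ij}\inner{\mathcal{N}(e^{A}_{ij}),\mathcal{M}(e^{A}_{ij})}$; since $\{e^{A}_{ij}\}_{ij}$ is an orthonormal basis of $\mathcal{L}(A)$, the basis independence asserted in Lemma~\ref{equ:lemma1} identifies this quantity with $\inner{\mathcal{N},\mathcal{M}}$. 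The identical computation in $\mathcal{L}(C,D)$ shows $J_{C,D}$ is an isometry as well, hence so is its inverse $\mathsf{X}\mapsto\Gamma_{\mathsf{X}}$, and we have the identities $\Gamma_{\Cwhat_{\mathcal{N}}}=\mathcal{N}$ and $\Cwhat_{\Gamma_{\mathsf{X}}}=\mathsf{X}$ established in the discussion of the CJKS isomorphism above.

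Granting the isometry, the argument is a short chain of equalities. Starting from $\mathsf{T}(\mathsf{X})=\Cwhat_{\Theta(\Gamma_{\mathsf{X}})}$ (Eq.~\eqref{equ:rep_map2}) and writing $\mathsf{Y}=\Cwhat_{\Gamma_{\mathsf{Y}}}$, the isometry of $J_{C,D}$ gives $\inner{\mathsf{T}(\mathsf{X}),\mathsf{Y}}=\inner{\Cwhat_{\Theta(\Gamma_{\mathsf{X}})},\Cwhat_{\Gamma_{\mathsf{Y}}}}=\inner{\Theta(\Gamma_{\mathsf{X}}),\Gamma_{\mathsf{Y}}}$, where the last inner product is the one of Lemma~\ref{equ:lemma1} on $\mathcal{L}(C,D)$. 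Applying the definition of the supermap adjoint $\Theta^{*}$ moves $\Theta$ across: $\inner{\Theta(\Gamma_{\mathsf{X}}),\Gamma_{\mathsf{Y}}}=\inner{\Gamma_{\mathsf{X}},\Theta^{*}(\Gamma_{\mathsf{Y}})}$. Finally the isometry of $J_{A,B}$ (run forward) yields $\inner{\Gamma_{\mathsf{X}},\Theta^{*}(\Gamma_{\mathsf{Y}})}=\inner{\Cwhat_{\Gamma_{\mathsf{X}}},\Cwhat_{\Theta^{*}(\Gamma_{\mathsf{Y}})}}=\inner{\mathsf{X},\Cwhat_{\Theta^{*}(\Gamma_{\mathsf{Y}})}}$. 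Since this holds for every $\mathsf{X}$, uniqueness of the adjoint forces $\mathsf{T}^{*}(\mathsf{Y})=\Cwhat_{\Theta^{*}(\Gamma_{\mathsf{Y}})}$, which is the claim.

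The only genuine content is the isometry step; once one recognizes that the map inner product of Lemma~\ref{equ:lemma1} is exactly the pullback of the Hilbert--Schmidt inner product along $J$, everything else is bookkeeping. I expect that step to be where care is needed: one must use the \emph{specific} matrix-unit basis $\{e^{A}_{ij}\}$, verify its Hilbert--Schmidt orthonormality, and only then appeal to basis independence to present the answer without reference to a basis. A secondary point to confirm is that $\Gamma$ is a genuine two-sided inverse of $J$ on both $\mathcal{L}(A,B)$ and $\mathcal{L}(C,D)$, which follows from the injectivity and surjectivity of $J$ recorded earlier; no hypotheses beyond finite-dimensionality of $A,B,C,D$ are required.
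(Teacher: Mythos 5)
Your proposal is correct and follows essentially the same route as the paper's proof in Appendix~\ref{sec:detailed_proof_lemma_4}: both arguments reduce the claim to the chain $\inner{\mathsf{T}(\mathsf{X}),\mathsf{Y}}=\inner{\Theta(\Gamma_{\mathsf{X}}),\Gamma_{\mathsf{Y}}}=\inner{\Gamma_{\mathsf{X}},\Theta^{*}(\Gamma_{\mathsf{Y}})}=\inner{\mathsf{X},\Cwhat_{\Theta^{*}(\Gamma_{\mathsf{Y}})}}$ and conclude by uniqueness of the adjoint. The only cosmetic difference is that you package the key computation as the statement that $J$ is an isometry for the Lemma~\ref{equ:lemma1} inner product, whereas the paper carries out the same matrix-unit expansion inline.
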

This result, which tells that the identification $\mathsf{T} \leftrightarrow \Theta$ is a ${}^\ast$-isomorphism, was proved in~\cite{Gour_2019}. We provide an alternate proof of this result in Appendix~\ref{sec:detailed_proof_lemma_4}.
\begin{remark}
    With the help of the representing map, we notice that a supermap $\Theta$ is completely CP-preserving if and only if its adjoint $\Theta^{*}$ is completely CP-preserving.
\end{remark}

\begin{lemma}
The adjoint $\mathsf{T}^{'*}:\mathcal{L}\left({C} \otimes {D}\right) \rightarrow \mathcal{L}\left({A} \otimes {B}\right)$ of the map $\mathsf{T}{'}$, which is defined via Eq.~\eqref{equ:trace_preserving_T}, is given by 
\begin{equation}
    \mathsf{T}^{'*}(\mathsf{Y})=\mathsf{T}^{*}(\mathsf{Y})+ \mathrm{tr}\left(\mathsf{Y}^{\dagger} \sigma_{0}\right)\left(\mathbbm{1}_{AB}-\mathsf{T}^{*}(\mathbbm{1}_{CD})\right),
\end{equation} 
where  $\sigma_{0}\in \mathcal{D}(C\otimes D)$ is a fixed state.
\end{lemma}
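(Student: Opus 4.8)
The plan is to verify that the stated expression satisfies the defining relation of the adjoint and then appeal to uniqueness. First I would note that $\mathsf{T}{'}$ is linear, being the sum of the linear map $\mathsf{T}$ with the linear maps $\mathsf{X}\mapsto\mathrm{tr}(\mathsf{X})\,\sigma_{0}$ and $\mathsf{X}\mapsto-\mathrm{tr}(\mathsf{T}(\mathsf{X}))\,\sigma_{0}$; hence $\mathsf{T}{'}$ admits a unique adjoint $\mathsf{T}^{'*}:\mathcal{L}(C\otimes D)\to\mathcal{L}(A\otimes B)$ determined by $\inner{\mathsf{Y},\mathsf{T}{'}(\mathsf{X})}=\inner{\mathsf{T}^{'*}(\mathsf{Y}),\mathsf{X}}$ for all $\mathsf{X}\in\mathcal{L}(A\otimes B)$ and $\mathsf{Y}\in\mathcal{L}(C\otimes D)$, so it suffices to check this identity with $\mathsf{T}^{'*}(\mathsf{Y})$ replaced by the candidate on the right-hand side.

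Fixing $\mathsf{X}$ and $\mathsf{Y}$, I would expand the left-hand side using Eq.~\eqref{equ:trace_preserving_T} and linearity of the Hilbert--Schmidt inner product in its second argument,
\begin{equation}
\inner{\mathsf{Y},\mathsf{T}{'}(\mathsf{X})}=\inner{\mathsf{Y},\mathsf{T}(\mathsf{X})}+\bigl[\mathrm{tr}(\mathsf{X})-\mathrm{tr}(\mathsf{T}(\mathsf{X}))\bigr]\,\inner{\mathsf{Y},\sigma_{0}},
\end{equation}
and then rewrite each term as an inner product against $\mathsf{X}$. The first term equals $\inner{\mathsf{T}^{*}(\mathsf{Y}),\mathsf{X}}$ by the definition of the adjoint $\mathsf{T}^{*}$ (see Lemma~\ref{lemma_4}). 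For the scalar prefactor I use the elementary identities $\mathrm{tr}(\mathsf{X})=\inner{\mathbbm{1}_{AB},\mathsf{X}}$ and $\mathrm{tr}(\mathsf{T}(\mathsf{X}))=\inner{\mathbbm{1}_{CD},\mathsf{T}(\mathsf{X})}=\inner{\mathsf{T}^{*}(\mathbbm{1}_{CD}),\mathsf{X}}$, so that $\mathrm{tr}(\mathsf{X})-\mathrm{tr}(\mathsf{T}(\mathsf{X}))=\inner{\mathbbm{1}_{AB}-\mathsf{T}^{*}(\mathbbm{1}_{CD}),\mathsf{X}}$. Pulling the scalar $\inner{\mathsf{Y},\sigma_{0}}$ into the first slot via conjugate-linearity gives
\begin{equation}
\inner{\mathsf{Y},\mathsf{T}{'}(\mathsf{X})}=\inner{\mathsf{T}^{*}(\mathsf{Y})+\overline{\inner{\mathsf{Y},\sigma_{0}}}\,\bigl(\mathbbm{1}_{AB}-\mathsf{T}^{*}(\mathbbm{1}_{CD})\bigr),\,\mathsf{X}},
\end{equation}
valid for every $\mathsf{X}$, whence by uniqueness $\mathsf{T}^{'*}(\mathsf{Y})=\mathsf{T}^{*}(\mathsf{Y})+\overline{\inner{\mathsf{Y},\sigma_{0}}}\bigl(\mathbbm{1}_{AB}-\mathsf{T}^{*}(\mathbbm{1}_{CD})\bigr)$. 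Since $\sigma_{0}\in\mathcal{D}(C\otimes D)$ is Hermitian, $\overline{\inner{\mathsf{Y},\sigma_{0}}}=\overline{\mathrm{tr}(\mathsf{Y}^{\dagger}\sigma_{0})}=\mathrm{tr}(\sigma_{0}\mathsf{Y})$, which equals $\mathrm{tr}(\mathsf{Y}^{\dagger}\sigma_{0})$ whenever $\mathsf{Y}$ is Hermitian, recovering the stated formula.

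There is no substantive obstacle here; the only point needing care is the sesquilinear bookkeeping in the last step --- for general non-Hermitian $\mathsf{Y}$ the coefficient that naturally appears is $\mathrm{tr}(\sigma_{0}\mathsf{Y})$ rather than $\mathrm{tr}(\mathsf{Y}^{\dagger}\sigma_{0})$, the two agreeing exactly on Hermitian operators, which is the only case relevant here since $\mathsf{T}{'}$ is applied to Choi operators. A completely equivalent and perhaps cleaner route is to compute the adjoint of the rank-one correction $\mathsf{X}\mapsto\bigl[\mathrm{tr}(\mathsf{X})-\mathrm{tr}(\mathsf{T}(\mathsf{X}))\bigr]\sigma_{0}$ separately --- it is $\mathsf{Y}\mapsto\overline{\inner{\mathsf{Y},\sigma_{0}}}\bigl(\mathbbm{1}_{AB}-\mathsf{T}^{*}(\mathbbm{1}_{CD})\bigr)$ by the same two identities --- and then add it to $\mathsf{T}^{*}$, using $(\mathsf{T}{'})^{*}=\mathsf{T}^{*}+(\text{correction})^{*}$ together with Lemma~\ref{lemma_4}.
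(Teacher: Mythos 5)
Your proof is correct and takes essentially the same route as the paper, which likewise computes the adjoint of the correction term $\mathsf{X}\mapsto\left[\mathrm{tr}(\mathsf{X})-\mathrm{tr}(\mathsf{T}(\mathsf{X}))\right]\sigma_{0}$ via the identities $\mathrm{tr}(\mathsf{X})=\inner{\mathbbm{1}_{AB},\mathsf{X}}$ and $\mathrm{tr}(\mathsf{T}(\mathsf{X}))=\inner{\mathsf{T}^{*}(\mathbbm{1}_{CD}),\mathsf{X}}$ and then adds it to $\mathsf{T}^{*}$ by linearity --- your closing ``equivalent route'' is precisely the paper's argument. Your remark on the conjugation (that the coefficient is properly $\overline{\inner{\mathsf{Y},\sigma_{0}}}=\mathrm{tr}(\sigma_{0}\mathsf{Y})$, coinciding with $\mathrm{tr}(\mathsf{Y}^{\dagger}\sigma_{0})$ on Hermitian inputs) is a point the paper glosses over, and is a welcome clarification rather than a discrepancy.
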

\begin{proof}
Let us define a map $\mathsf{T}_{0}:\mathcal{L}\left({A} \otimes {B}\right) \rightarrow \mathcal{L}\left({C} \otimes {D}\right)$ such that its action on any $\mathsf{X} \in \mathcal{L}(A \otimes B) $ is given by $\mathsf{T}_{0}(\mathsf{X})=\left[\mathrm{tr}(\mathsf{X})-\mathrm{tr}(\mathsf{T}(\mathsf{X}))\right]\sigma_{0}$. Now, from the definition of the adjoint, for all $Y\in \mathcal{L}\left({C} \otimes {D}\right)$, we have
\begin{align}
\nonumber
    \inner{\mathsf{T}^{*}_{0}(\mathsf{Y}),\mathsf{X}} = \inner{\mathsf{Y},\mathsf{T}_{0}(\mathsf{X})}
   & = \inner{\mathsf{Y},\left[\mathrm{tr}(\mathsf{X})-\mathrm{tr}(\mathsf{T}(\mathsf{X}))\right]\sigma_{0}}= \inner{\mathsf{Y},\sigma_{0}} \mathrm{tr}(\mathsf{X}) -\inner{\mathsf{Y},\sigma_{0}} \mathrm{tr}\left(\mathsf{T}(\mathsf{X})\right)\nonumber\\
    &= \inner{\mathsf{Y},\sigma_{0}} \inner{\mathbbm{1}_{AB},\mathsf{X}}  -\inner{\mathsf{Y},\sigma_{0}} \inner{\mathsf{T}^{*}(\mathbbm{1}_{CD}),\mathsf{X}}\nonumber\\
    &= \inner{\mathrm{tr}\left(\mathsf{Y}^{\dagger}\sigma_{0}\right) \left[\mathbbm{1}_{AB}-\mathsf{T}^{*}(\mathbbm{1}_{CD})\right],\mathsf{X}}.
\end{align}
 So, we have $\mathsf{T}^{*}_{0}\left(\mathsf{Y}\right)=\mathrm{tr}\left(\mathsf{Y}^{\dagger}\sigma_{0}\right) [\mathbbm{1}_{AB}-\mathsf{T}^{*}(\mathbbm{1}_{CD})]$. As the adjoint is linear, we have the desired result.
\end{proof}

We now introduce the completely depolarizing map, which will play an important role in expressing our main results.
 \begin{dfn} \label{def_completely_depolarising_map}
The action of the completely depolarizing map $\mathcal{R}_{A \rightarrow B}$ is defined as
    \begin{equation}\label{equ:completely_depolarizing_map}
        \mathcal{R}_{A \rightarrow B}(\mathsf{X})= \mathrm{tr}(\mathsf{X})\mathbbm{1}_{B} \hspace{0.2cm} \forall~~ \mathsf{X}\in\mathcal{L}(A).
    \end{equation}
\end{dfn}

\begin{lemma}\label{lemma5} The representing map $\mathsf{T}: \mathcal{L}\left({A} \otimes {B}\right) \rightarrow \mathcal{L}\left({C} \otimes {D}\right)$ of a supermap $\Theta$ is trace-preserving if and only if $\Theta^{*}:\mathcal{L}(C,D) \rightarrow \mathcal{L}(A ,B)$ is completely depolarizing preserving ($\mathcal{R}$-preserving), i.e., $\Theta^{*}(\mathcal{R}_{C \rightarrow D})=\mathcal{R}_{A \rightarrow B}$. 
 \end{lemma}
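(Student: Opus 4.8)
The plan is to peel off the ``trace-preserving'' condition on $\mathsf{T}$ into a unitality condition on its adjoint, and then translate that back through Lemma~\ref{lemma_4} into a statement about $\Theta^{*}$ evaluated at $\mathcal{R}_{C\to D}$. First I would record the elementary fact that a linear map $\mathsf{T}:\mathcal{L}(A\otimes B)\to\mathcal{L}(C\otimes D)$ is trace-preserving iff $\mathsf{T}^{*}(\mathbbm{1}_{CD})=\mathbbm{1}_{AB}$: indeed, for every $\mathsf{X}\in\mathcal{L}(A\otimes B)$ one has $\mathrm{tr}(\mathsf{T}(\mathsf{X}))=\inner{\mathbbm{1}_{CD},\mathsf{T}(\mathsf{X})}=\inner{\mathsf{T}^{*}(\mathbbm{1}_{CD}),\mathsf{X}}$ and $\mathrm{tr}(\mathsf{X})=\inner{\mathbbm{1}_{AB},\mathsf{X}}$, so trace preservation is equivalent to $\inner{\mathsf{T}^{*}(\mathbbm{1}_{CD})-\mathbbm{1}_{AB},\mathsf{X}}=0$ for all $\mathsf{X}$, hence to $\mathsf{T}^{*}(\mathbbm{1}_{CD})=\mathbbm{1}_{AB}$ by nondegeneracy of the Hilbert--Schmidt inner product.

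Next I would use Lemma~\ref{lemma_4}, which gives $\mathsf{T}^{*}(\mathsf{Y})=\Cwhat_{\Theta^{*}(\Gamma_{\mathsf{Y}})}$, applied to $\mathsf{Y}=\mathbbm{1}_{CD}$. The two computational ingredients are: (i) $\Gamma_{\mathbbm{1}_{CD}}=\mathcal{R}_{C\to D}$, and (ii) $\mathbbm{1}_{AB}=\Cwhat_{\mathcal{R}_{A\to B}}$. For (i), using $\Gamma_{\mathsf{Y}}(\mathsf{Q})=\mathrm{tr}_{C}[(\mathsf{Q}^{t}\otimes\mathbbm{1}_{D})\mathsf{Y}]$ with $\mathsf{Y}=\mathbbm{1}_{C}\otimes\mathbbm{1}_{D}$ yields $\Gamma_{\mathbbm{1}_{CD}}(\mathsf{Q})=\mathrm{tr}_{C}(\mathsf{Q}^{t}\otimes\mathbbm{1}_{D})=\mathrm{tr}(\mathsf{Q})\mathbbm{1}_{D}=\mathcal{R}_{C\to D}(\mathsf{Q})$. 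For (ii), $\Cwhat_{\mathcal{R}_{A\to B}}=\sum_{ij}e^{A}_{ij}\otimes\mathcal{R}_{A\to B}(e^{A}_{ij})=\sum_{ij}e^{A}_{ij}\otimes\mathrm{tr}(e^{A}_{ij})\mathbbm{1}_{B}=\sum_{i}e^{A}_{ii}\otimes\mathbbm{1}_{B}=\mathbbm{1}_{A}\otimes\mathbbm{1}_{B}$.

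Combining these, the condition $\mathsf{T}^{*}(\mathbbm{1}_{CD})=\mathbbm{1}_{AB}$ becomes $\Cwhat_{\Theta^{*}(\mathcal{R}_{C\to D})}=\Cwhat_{\mathcal{R}_{A\to B}}$, and since the map $J_{A,B}:\mathcal{N}\mapsto\Cwhat_{\mathcal{N}}$ is injective (CJKS isomorphism), this holds iff $\Theta^{*}(\mathcal{R}_{C\to D})=\mathcal{R}_{A\to B}$, i.e.\ iff $\Theta^{*}$ is $\mathcal{R}$-preserving. Chaining the equivalences from the first paragraph completes the argument. I do not anticipate a genuine obstacle here: the only thing to get right is the bookkeeping in identities (i) and (ii) (and the matching of which system plays the role of $A$ versus $C$ in the two instances of $\Gamma$), after which the proof is a short chain of ``iff''s resting on Lemma~\ref{lemma_4} and injectivity of the Choi correspondence.
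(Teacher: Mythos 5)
Your proposal is correct and follows essentially the same route as the paper: reduce trace preservation of $\mathsf{T}$ to $\mathsf{T}^{*}(\mathbbm{1}_{CD})=\mathbbm{1}_{AB}$, identify $\mathbbm{1}$ with the Choi operator of the completely depolarizing map, apply Lemma~\ref{lemma_4}, and conclude by injectivity of the Choi correspondence. The only difference is that you spell out the intermediate computations (i) and (ii) explicitly, which the paper leaves implicit.
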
 
\begin{proof}
   Let $\mathcal{R}_{A \rightarrow B}$ and $\mathcal{R}_{C \rightarrow D}$ be the completely depolarizing maps of $\mathcal{L}\left(A,B\right)$ and $\mathcal{L}\left({C,D}\right)$, respectively. Then we have the following relations. 
   \begin{align*}
       \mathsf{T}~ \text{is TP} \Leftrightarrow \mathsf{T}^{*}(\mathbbm{1}_{CD})=\mathbbm{1}_{AB}\Leftrightarrow \mathsf{T}^{*}\left(\Cwhat_{\mathcal{R}_{C \rightarrow D}}\right)= \Cwhat_{{\mathcal{R}_{A \rightarrow B}}}\Leftrightarrow \Cwhat_{\Theta^{*}(\mathcal{R}_{C \rightarrow D})} = \Cwhat_{\mathcal{R}_{A \rightarrow B}}\Leftrightarrow \Theta^{*}(\mathcal{R}_{C \rightarrow D})=\mathcal{R}_{A \rightarrow B}.
   \end{align*}
   This completes the proof of the lemma. 
\end{proof}  

 \begin{remark}
     A similar proof technique can be applied to prove that the adjoint $\mathsf{T}^{*}$ of the representing map $T$ of a supermap $\Theta$ is trace-preserving if and only if $\Theta$ is $\mathcal{R}$-preserving.
 \end{remark}

\medskip
\noindent
{\it Representing map with respect to rank one operators with full rank reduced states}.---  Consider a rank one operator $\Psi_{RA} \in \mathsf{FRank}(R\otimes A)$ and an orthonormal basis $\{\ket{j}\}_j$ of $R\simeq A$. Such an operator can be written as $\Psi_{RA} = \ketbra{\Psi}{\Psi}_{RA}$ with $ \ket{\Psi}_{RA}=(\mathsf{A}_{\Psi} \otimes \mathbbm{1}_A)\sum_{i=1}^{|A|}\ket{i,i}_{RA}$, 
where $\mathsf{A}_{\Psi}\in\mathcal{L}(R)$ is defined via $\bra{i}\mathsf{A}_{\Psi} \ket{j}:=\langle i,j\ket{\Psi}_{RA}$. As $\mathrm{rank}(\mathsf{A}_{\Psi}) = \mathrm{rank}(\mathsf{A}_{\Psi} \mathsf{A}_{\Psi}^{*})$, we have the reduced state $\Psi_R=\mathsf{A}_{\Psi} \mathsf{A}_{\Psi}^{*}$ of $\Psi_{RA}$ invertible if and only if $\mathsf{A}_{\Psi}$ is invertible. For such an operator, we define the Choi operator for a linear map $\mathcal{N}: \mathcal{L}(A) \rightarrow \mathcal{L}(B)$ as the following:
\begin{eqnarray} \Cwhat_{\mathcal{N}}^{\Psi}:=\left(\mathrm{id}_{R\rightarrow R} \otimes \mathcal{N}_{A\rightarrow B}\right) \left(\Psi_{RA}\right).
\end{eqnarray}
Now the Choi operator $\Cwhat_{\mathcal{N}}^{\Psi} \in \mathcal{L}(A\otimes B)$ and the Choi operator $\Cwhat_{\mathcal{N}}\in \mathcal{L}(A\otimes B)$ defined with respect to a maximally entangled operator $\Psi^+_{AB}$ are related as follows. 

\begin{align}
\label{eq:choi-wit-without-max-ent}
\Cwhat_{\mathcal{N}}^{\Psi}&:=\left(\mathrm{id}_{R\rightarrow R} \otimes \mathcal{N}_{A\rightarrow B}\right) \left(\Psi_{RA}\right)\nonumber\\
&= (\mathsf{A}_{\Psi} \otimes \mathbbm{1}_B) \Cwhat_{\mathcal{N}} (A^{*}_{\Psi} \otimes \mathbbm{1}_B)\nonumber\\   
&= T_{\Psi}\left(\Cwhat_{\mathcal{N}}\right),
\end{align} 
where $T_{\Psi} : \mathcal{L}\left({A} \otimes {B}\right) \rightarrow \mathcal{L}\left({A}\otimes {B}\right)$ is given by $ T_{\Psi}(\mathsf{X}_{AB})=  (\mathsf{A}_{\Psi} \otimes \mathbbm{1}_B) \mathsf{X}_{AB} (\mathsf{A}^{*}_{\Psi} \otimes \mathbbm{1}_B),$
for all $\mathsf{X}_{AB} \in \mathcal{L}({A} \otimes {B})$. As $\mathsf{A}_{\Psi}$ is invertible, we have $T_{\Psi}$ invertible, and looking at the above form it is clear that $T_{\Psi}$ is completely positive, hence the Choi-Jamio{\l}kowski isomorphism holds if we use $\Psi_{RA}\in \mathsf{FRank}(R\otimes A)$ instead of maximally entangled operator $\Psi^+_{RA} \in \mathsf{FRank}(R\otimes A)$. 

\begin{remark}
\label{rem:choi-state-choi-operator-relation}
We notice that $\Psi_{RA}$ belongs to $\mathsf{FRank}(R\otimes A)$, and it satisfies $\tr\left(\Psi_{RA}\right)=\tr\left(\mathsf{A}^{*}_{\Psi} \mathsf{A}_{\Psi} \right)$ implying that the normalization of $\Psi_{RA}$ is determined by the operator $\mathsf{A}_{\Psi}$. Hence the normalized Choi operator $\Cstat^{\Psi}_{\mathcal{N}}$ obeys Eq.~\eqref{eq:choi-wit-without-max-ent}. Further, absorbing the normalization factor $|A|$ of $\Psi^+_{RA}$ in the operator $\mathsf{A}_{\Psi}$, we see that Eq.~\eqref{eq:choi-wit-without-max-ent} is satisfied by $\Cstat^{\Psi}_{\mathcal{N}}$ and $\Cstat_{\mathcal{N}}$, i.e., $\Cstat_{\mathcal{N}}^{\Psi} = T_{\Psi}\left(\Cstat_{\mathcal{N}}\right)$. If $\mathcal{N}$ is a quantum channel, then both $\Cstat_{\mathcal{N}}^{\Psi}$ and $\Cstat_{\mathcal{N}}$ become the Choi states.

\end{remark}

Note that $T_{\Psi}$ induces a supermap 
$\mathrm{\Theta}_{\Psi} :\mathcal{L}\left(A,B\right) \rightarrow \mathcal{L}\left(A,B\right)$ given by
\begin{eqnarray}
     \mathrm{\Theta}_{\Psi}(\mathcal{N})(\mathsf{X})=\tr_{A}\left((\mathsf{X}^{t} \otimes \mathbbm{1}_B) T_{\Psi}\left(\Cwhat_{\mathcal{N}}\right)\right), \label{equ: HRI}
\end{eqnarray}
for all $\mathcal{N} \in \mathcal{L}\left({A},{B}\right)$ and $\mathsf{X} \in \mathcal{L}({A})$, where $\mathsf{X}^{t}$ represents the transpose of $\mathsf{X}$ with respect to the usual canonical basis. Similarly, we can define $\Cwhat_{\mathcal{M}}^{\Phi}$, $T_{\Phi}$, and $\mathrm{\Theta}_{\Phi}$ for another rank one operator $\Phi_{RC}\in \mathsf{FRank}(R\otimes C)$, and a map $\mathcal{M}: \mathcal{L}(C) \rightarrow \mathcal{L}(D)$.

Now, we define a representing map $\mathfrak{T}: \mathcal{L}({A} \otimes {B}) \rightarrow \mathcal{L}({C} \otimes {D})$ for a supermap $\mathrm{\Theta}:\mathcal{L}\left(A,B\right) \rightarrow \mathcal{L}\left(C,D\right)$ with respect to the Choi operators $\Cwhat_{\mathcal{N}}^{\Psi}$ and $\Cwhat_{\mathcal{M}}^{\Phi}$ as follows:
\begin{eqnarray}
    \mathfrak{T}\left(\Cwhat_{\mathcal{N}}^{\Psi}\right)= \Cwhat_{\mathrm{\Theta}(\mathcal{N})}^{\Phi} \label{equ:new_rep_1}.
\end{eqnarray}
We establish the relation between the representing map $\mathsf{T}$ of the supermap $\mathrm{\Theta}$ defined with respect to the Choi operators $\Cwhat_{\mathcal{N}}$ and $\Cwhat_{\mathcal{M}}$, and the representing map $\mathfrak{T}$ of the supermap $\mathrm{\Theta}$ defined with respect to the Choi operators $\Cwhat_{\mathcal{N}}^{\Psi}$ and $\Cwhat_{\mathcal{M}}^{\Phi}$ as follows. We have $\mathsf{T}\left(\Cwhat_{\mathcal{N}}\right)= \Cwhat_{\mathrm{\Theta}(\mathcal{N})}\Leftrightarrow \mathsf{T} \circ T_{\Psi}^{-1} \left(\Cwhat_{\mathcal{N}}^{\Psi}\right)= T_{\Phi}^{-1} \left(\Cwhat_{\mathrm{\Theta}(\mathcal{N})}^{\Phi}\right)$, which implies $T_{\Phi} \circ \mathsf{T} \circ T_{\Psi}^{-1} \left(\Cwhat_{\mathcal{N}}^{\Psi}\right)=  \Cwhat_{\mathrm{\Theta}(\mathcal{N})}^{\Phi}$. By comparing this with
 Eq.~(\ref{equ:new_rep_1}), we have 
\begin{eqnarray}
    \mathfrak{T}= T_{\Phi} \circ \mathsf{T} \circ T_{\Psi}^{-1}. \label{new_rep_3}
\end{eqnarray}
Noticing that $T^{-1}_{\Psi}$ and $T_{\Phi}$ are both completely positive, we have that $\mathfrak{T}$ is completely positive if and only if $\mathsf{T}$ is completely positive. From Lemma \ref{lemma_3}, $\mathsf{T}$ is completely positive if and only if $\mathrm{\Theta}$ is completely CP-preserving. Thus, we conclude that $\mathfrak{T}$ is completely positive if and only if $\mathrm{\Theta}$ is completely CP-preserving.

\medskip
\noindent
{\it The adjoint map $\mathfrak{T}^{*}$}.---
We already know that $\mathsf{T}^{*}\left(\Cwhat_{\mathcal{M}}\right)=\Cwhat_{\mathrm{\Theta}^{*}(\mathcal{M})}$. As the map $\mathsf{T} \leftrightarrow \mathrm{\Theta}$ is product preserving, we have $ T^{-1}\left(\Cwhat_{\mathcal{M}}\right)= \Cwhat_{\mathrm{\Theta}^{-1}(\mathcal{M})}$.
From this relation we derive the action of $\mathfrak{T}^{*} : \mathcal{L}({C} \otimes {D}) \rightarrow \mathcal{L}({A} \otimes {B})$ on $\Cwhat_{\mathcal{M}}^{\Phi}$ as
\begin{eqnarray}
    \mathfrak{T}^{*} \left(\Cwhat_{\mathcal{M}}^{\Phi}\right)= (T_{\Psi}^{-1})^{*} \circ \mathsf{T}^{*} \circ (T_{\Phi})^{*} \circ T_{\Phi} \left(\Cwhat_{\mathcal{M}}\right)= \Cwhat^{\Psi}_{\mathrm{\Theta}_{\Psi}^{-1} \circ \mathrm{\Theta}^{-1*}_{\Psi} \circ \mathrm{\Theta}^{*} \circ \mathrm{\Theta}^{*}_{\Phi} \circ \mathrm{\Theta}_{{\Phi}(\mathcal{M})}} .\label{T_theta_phi}
\end{eqnarray}
By defining $\mathrm{\Theta}^{*}_{\Phi} \circ \mathrm{\Theta}_{\Phi}:= \Hat{\mathrm{\Theta}}_{\Phi}$ and $\mathrm{\Theta}^{*}_{\Psi} \circ \mathrm{\Theta}_{\Psi}:= \Hat{\mathrm{\Theta}}_{\Psi} $ we can rewrite the above equation as
\begin{eqnarray}
    \mathfrak{T}^{*} \left(\Cwhat_{\mathcal{M}}^{\Phi}\right)&&= \Cwhat^{\Psi}_{{\Hat{\mathrm{\Theta}}_{\Psi}}^{-1} \circ \mathrm{\Theta}^{*} \circ \Hat{\mathrm{\Theta}}_{{\Phi}(\mathcal{M})}}. \label{equ:theta_caps}
\end{eqnarray}
\begin{remark} \label{rem:choi-state-choi-operator-relation_1}
    We can also define the representing map $\mathfrak{T}$ of a supermap $\Theta$ with respect to the normalized Choi operators $\Cstat_{\mathcal{N}}^{\Psi}$ and $\Cstat_{\mathcal{N}}^{\Phi}$ as $\mathfrak{T}(\Cstat_{\mathcal{N}}^{\Psi})=\Cstat_{\Theta(\mathcal{N})}^{\Phi}$. Due to Remark~\ref{rem:choi-state-choi-operator-relation}, the representing map defined in this way satisfies Eq.~(\ref{new_rep_3}) and Eq.~(\ref{equ:theta_caps}). From $\mathfrak{T}$, a trace-preserving map $\mathfrak{T}'$ can be defined as in Eq.~(\ref{equ:trace_preserving_T}). For a trace-preserving map $\mathcal{N}$ and a TP-preserving supermap $\Theta$, we have $\tr\left(\Cstat_{\mathcal{N}}^{\Psi} \right)=\tr(\Psi_{RA})$ and $\tr \left( \mathfrak{T}(\Cstat_{\mathcal{N}}^{\Psi})\right)=\tr\left(\Cstat_{\Theta(\mathcal{N})}^{\Phi} \right)=\tr(\Phi_{RC})$. Due to the fact that $\Psi_{RA}$ and $\Phi_{RC}$ are normalized, i.e., $\tr(\Psi_{RA})=\tr(\Phi_{RC})=1$, we have $\mathfrak{T}'(\Cstat_{\mathcal{N}}^{\Psi})=\mathfrak{T}(\Cstat_{\mathcal{N}}^{\Psi})$. Notice that the validity of this relation does not require $|A|=|C|$ unlike the case where the representing map is defined with respect to the unnormalized $\Psi_{RA}$ and $\Phi_{RC}$. This observation is essential in deriving Theorems~\ref{theorem_3} and \ref{theorem_4}.
\end{remark}
\sloppy{
\noindent
{\it Representing map for super-supermaps}.--- Let us denote by $\spadesuit: \mathscr{L}\left[\mathcal{L}\left(A,B\right),\mathcal{L}\left(C,D\right)\right] \rightarrow \mathscr{L}\left[\mathcal{L}\left(A',B'\right),\mathcal{L}\left(C',D'\right)\right] $ a higher order process transforming supermaps to supermaps.} We call it a super-supermap. Now, we can define its representing supermap $\mathbb{T}_{\spadesuit}: \mathcal{L}\left(A,B\right)\otimes \mathcal{L}\left(C,D\right) \rightarrow \mathcal{L}\left(A',B'\right) \otimes \mathcal{L}\left(C',D'\right)$ by 
\begin{align}
    \mathbb{T}_{\spadesuit}(\Lambda_{\Theta})= \Lambda_{\spadesuit(\Theta)}.
\end{align}
where the supermap $\Theta \in \mathscr{L}\left[\mathcal{L}\left(A,B\right),\mathcal{L}\left(C,D\right)\right]$ and $\Lambda_{\Theta}= \sum_{ijkl}  \mathcal{E}^{A \rightarrow B}_{ijkl} \otimes \Theta\left( \mathcal{E}^{A \rightarrow B}_{ijkl}\right)$. 
For a given supermap $\mathbb{T}_{\spadesuit}$, one can find its corresponding super-supermap $\spadesuit$ as
\begin{align}
    \spadesuit[\Theta](\mathcal{N})=\tr_{A'}[(\overbar{\mathcal{N}}^{*} \otimes \id)  \mathbb{T}_{\spadesuit}(\Lambda_{\Theta})], \label{YYY_1}
\end{align}
where $\overbar{\mathcal{N}}$ represents the complex conjugate of the map $\mathcal{N}$ defined as $\overbar{\mathcal{N}}:= \sum_{ijkl} \bar{\alpha}_{ijkl}\mathcal{E}^{A' \rightarrow B'}_{ijkl}$ when $\mathcal{N}$ is written as $\mathcal{N}:= \sum_{ijkl}\alpha_{ijkl}\mathcal{E}^{A' \rightarrow B'}_{ijkl}$ and ${}^\ast$ represents the adjoint as defined in Eq.~\eqref{map_adjoint}. The trace of a map $\mathcal{N}: \mathcal{L}(A') \rightarrow \mathcal{L}(A')$ is defined by $\tr[\mathcal{N}]= \sum_{ij} \inner{e^{A'}_{ij}, \mathcal{N} \left(e^{A'}_{ij}\right)}$ with $e^{A'}_{ij}$ being the canonical basis of the space $A'$. Using the form of Eq.~(\ref{equ:superchannel_action}) for $\mathbb{T}_{\spadesuit}$ in Eq.~(\ref{YYY_1}), we expect a representation similar to Eq.~\eqref{equ:superchannel_action} to hold for a super-superchannel $\spadesuit$.

  \section{Entropy change of quantum channels}\label{Entropy_of_quantum_channels}    

 Distinguishing between two different quantum states is one of the most basic tasks in quantum information theory as a multitude of problems in quantum information theory can be recast as state distinguishability problems \cite{Hel76}. In the asymptotic limit of quantum state discrimination, the minimum error in distinguishing two quantum states is expressed neatly by the quantum Chernoff bound \cite{ACMT+07}, which is very closely related to the relative entropy  between the states \cite{Vedral_2002}. 
  This provides an operational meaning to the relative entropy between two quantum states.  Apart from these, the relative entropy plays a pivotal role in proving a plethora of results such as the strong subadditivity of the von Neumann entropy \cite{NC00} and the recoverability of quantum states affected by quantum noise \cite{OP93}. In fact, the quantum relative entropy can be treated as a parent quantity for the analysis of various information theoretic problems, as many other information theoretic quantities such as the von Neumann entropy, the quantum mutual information \cite{NC00}, and the free energy \cite{BHO+13} can be re-expressed in terms of relative entropy. Our interest here is in the von Neumann entropy. The von Neumann entropy is a measure of randomness or uncertainty present in a quantum system and has a variety of uses in quantum information theory, e.g., it is crucial in the theory of entanglement~\cite{Schumacher1996}, quantum communications~\cite{Holevo1998,Yuen1993}, and quantum cryptography~\cite{CBTW17}. It is a natural generalization of Shannon entropy~\cite{SHANNON1948}. We can write the von Neumann entropy of a quantum state $\rho$ in terms of the quantum relative entropy as follows
\begin{equation}
     S(A)_{\rho} = -\staD*{\rho_{A}}{\mathbbm{1}_A}.\label{equ:entropy}
\end{equation}
If the quantum system is finite-dimensional, the above formula can also be written as 
\begin{equation}
     S(A)_{\rho} =\log|A| -\staD*{\rho_{A}}{\mathbbm{\widetilde{1}}_A}, \label{equ:entropy 1}
\end{equation}
where $ \mathbbm{\widetilde{1}}_{A}:=\mathbbm{1}_A/|A|$ is the maximally mixed state of the system  $A$.

While quantum states describe a quantum system at any instant completely, its dynamics is described by the quantum channels. It is insightful to note that quantum states themselves can be thought of as quantum channels with one dimensional inputs, and quantum measurements can be thought of as quantum channels with one dimensional outputs \cite{Chiribella}. When viewed from an operational perspective, a quantum state is a preparation process while a measurement describes post-processing. In this sense, quantum channels can be thought of as a universal paradigm  describing both static and dynamic processes of quantum systems.

To this end, one can naturally ask: Is there a similar notion of entropy~(or relative entropy) that can be applied to quantum channels? One way to answer this question is to first find a meaningful notion of relative entropy between quantum channels, then, similar to quantum states, relative entropy between quantum channels can be used to define the entropy of quantum channels. This procedure was followed in~\cite{Gour_entropy} to define the notion of entropy for a quantum channel. The notion of relative entropy between two quantum channels also has an operational meaning analogous to quantum states, namely, it quantifies the difficulty in discriminating  two quantum channels~\cite{Gour_entropy}. It was advocated in~\cite{Gour_entropy} that the entropy functional for a quantum channel should satisfy some physically motivated axioms~(see Appendix~\ref{axioms}). In~\cite{Gour_entropy}, Eq.~(\ref{equ:entropy 1}) is considered as the starting point for defining the entropy of quantum channels and for a quantum channel $\mathcal{N}:\mathcal{L}\left({A}\right)\rightarrow\mathcal{L}\left(B\right)$, the authors have defined its entropy $S(\mathcal{N})$ as 
\begin{equation}
    S(\mathcal{N}):=\log|B| -\chD*{\mathcal{N}}{ \widetilde{\mathcal{R}}}, \label{equ:definition_entropy_gour}
\end{equation}
where $\widetilde{\mathcal{R}}:\mathcal{L}\left({A}\right)\rightarrow\mathcal{L}\left(B\right)$ is the completely depolarizing channel defined by $\widetilde{\mathcal{R}}(\mathsf{X})=\mathrm{tr}(\mathsf{X})\widetilde{\mathbbm{1}}_B$ and $\chD*{\mathcal{N}}{ \mathcal{M}} $ is the relative entropy between two quantum channels $\mathcal{N}$ and $\mathcal{M}$ defined by~\cite{Gour_entropy}
\begin{equation}
    \chD*{\mathcal{N}}{ \mathcal{M}} :=  \underset{\rho_{RA}\in\mathcal{D}(R\otimes A)}{\mathrm{sup}}~ \staD*{\mathrm{id}_{R}\otimes\mathcal{N} (\rho_{RA}) }{\mathrm{id}_{R}\otimes\mathcal{M}(\rho_{RA})}.
\end{equation}
The entropy of a quantum channel was postulated to be non-decreasing under the action of random unitary superchannels. A stronger statement was also proved: if a superchannel $\Theta:\mathcal{L}\left(A,B\right)\rightarrow \mathcal{L}\left(C,D\right)$ maps the completely depolarizing channel to the completely depolarizing channel, then the entropy of a quantum channel as defined in Eq.~(\ref{equ:definition_entropy_gour})  never decreases under the action of such quantum superchannels, provided $|B| = |D|$ and $|A|=|C|$. The fact that the entropy defined in  Eq.~(\ref{equ:definition_entropy_gour}), never decreases under the action of a random unitary superchannel follows as a consequence of this stronger statement.

Recall that under a positive trace-preserving and subunital map $\mathcal{N}$, the entropy of a quantum state never decreases i.e., $S(\mathcal{N}(\rho)) \geq S(\rho)$. We want to investigate if a similar statement holds for the entropy of quantum channels as well. We note that the generalization of the identity matrix is the completely depolarizing map~(see Definition \ref{def_completely_depolarising_map}) and hence the generalization of subunitality for a supermap $\mathrm{\Theta}$ is captured by $\mathrm{\Theta}(\mathcal{R}) \leq \mathcal{R}$, where the meaning of the ordering between the hermiticity-preserving maps can be found in Definition~\ref{Rsub_preserving_map}. Here, we ask whether the entropy of a quantum channel is nondecreasing under those quantum superchannels $\mathrm{\Theta}$ for which  $\mathrm{\Theta}(\mathcal{R}_{A \rightarrow B}) \leq \mathcal{R}_{C \rightarrow D}$ holds. To answer this question, first, we rewrite the definition of the entropy of a quantum channel which is in a similar spirit to Eq.~\eqref{equ:entropy}~(see Definition \ref{dfn:entropy_channel} ). However, to do so, we first need the notion of the relative entropy between a quantum channel and a completely positive~(CP) map. This is because, in the settings of quantum channels, the generalization of identity matrix is the completely depolarising map. In fact, the completely depolarising map is the identity element of the convolution algebra of maps acting on finite-dimensional matrix algebra~\cite{Sohail_2022}. In Proposition~\ref{prop:entropy_gain}, we have shown that for a quantum superchannel $\Theta$, if $\Theta(\mathcal{R}_{A \rightarrow B}) \leq  \mathcal{R}_{C \rightarrow D}$ holds, then the entropy of a quantum channel is non-decreasing under the action of $\Theta$, i.e., $S\left[\Theta(\mathcal{N}_{A \rightarrow B})\right] \geq S\left[\mathcal{N}_{A \rightarrow B}\right]$. We also observe that a necessary condition for the relation $S\left[\Theta(\mathcal{N}_{A \rightarrow B})\right] \geq S\left[\mathcal{N}_{A \rightarrow B}\right]$ to hold is $|B| \leq |D|$ (See Remark~(\ref{rem:r-pres})).

\subsection{Data processing inequality}
The generalized channel divergence between two given quantum channels $\mathcal{N}_{A\rightarrow B}$ and $\mathcal{M}_{A \rightarrow B}$ was defined in \cite{Cooney_2016,Felix_2018,Yuan}. In the following, we analogously define the relative entropy between a quantum channel and a CP map.
 \begin{dfn}[\cite{Cooney_2016,Felix_2018}]
    The relative entropy between a quantum channel  $\mathcal{N}_{A\to B}$ and a completely positive map $\mathcal{M}_{A\to B} $ is defined as
\begin{equation}
    \chD*{\mathcal{N}}{ \mathcal{M}} =  \underset{\rho_{RA}\in\mathcal{D}(R\otimes A)}{\mathrm{sup}}~\staD{\mathrm{id}_{R}\otimes\mathcal{N} (\rho_{RA})}{\mathrm{id}_{R}\otimes\mathcal{M}(\rho_{RA})},
\end{equation}
where $R$ is a reference system of arbitrary dimension. Due to purification, data processing inequality, and Schmidt decomposition, it suffices to optimize over the set of all pure bipartite states $\Psi_{RA}$
 \begin{equation}
      \chD*{\mathcal{N}}{ \mathcal{M}} =  \underset{\Psi_{RA}\in\mathcal{D}(R\otimes A)}{\mathrm{sup}}~\staD{\mathrm{id}_{R}\otimes\mathcal{N} (\Psi_{RA})}{ \mathrm{id}_{R}\otimes\mathcal{M}(\Psi_{RA})}.
 \end{equation}
\end{dfn}
It immediately follows that the relative entropy $ \chD*{\mathcal{N}}{ \mathcal{M}}$ is non-negative if the state that realizes the supremum, say $\widetilde{\Psi}_{RA}$, is such that $\mathrm{tr}\left(\mathrm{id}_{R}\otimes\mathcal{M}\left(\widetilde{\Psi}_{RA}\right)\right)\leq 1$. In the following proposition, we prove that the relative entropy $\chD*{\mathcal{N}}{ \mathcal{M}}$ is monotonic under the action of a quantum superchannel, i.e., it follows the data processing type inequality.

\begin{proposition}\label{equ:prop1}
    The relative entropy between a quantum channel $\mathcal{N}:\mathcal{L}\left({A}\right)\rightarrow\mathcal{L}\left({A'}\right)$ and a completely positive map $\mathcal{M}:\mathcal{L}\left({A}\right)\rightarrow\mathcal{L}\left({A'}\right)$  is nonincreasing under the action of a quantum superchannel $\Theta:\mathcal{L}\left({A},{A'}\right)\rightarrow \mathcal{L}\left({C},{C '}\right)$, i.e.,
    \begin{equation}
       \chD*{\Theta\left(\mathcal{N}\right) }{ \Theta \left(\mathcal{M}\right)} \leq \chD*{\mathcal{N}}{ \mathcal{M}}.
    \end{equation}
\end{proposition}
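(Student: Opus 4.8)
The plan is to reduce this monotonicity statement to the ordinary data processing inequality~\eqref{eq:data_proc} for the relative entropy of states, by dilating the superchannel into a pre-processing channel, the map being acted upon, and a post-processing channel.

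First I would unfold the left-hand side from the variational definition, so that it suffices to bound
$\staD*{\mathrm{id}_R\otimes\Theta(\mathcal{N})(\rho_{RC})}{\mathrm{id}_R\otimes\Theta(\mathcal{M})(\rho_{RC})}$ by $\chD*{\mathcal{N}}{\mathcal{M}}$ for each fixed $\rho_{RC}\in\mathcal{D}(R\otimes C)$ and then take the supremum. Fixing $\rho_{RC}$, I would invoke the physical realization~\eqref{equ:superchannel_action} of $\Theta$: there exist a pre-processing channel $\mathcal{U}_{C\to AE}$, a post-processing channel $\mathcal{V}_{A'E\to C'}$, and a memory system $E$, all depending only on $\Theta$, such that $\Theta(\mathcal{N})=\mathcal{V}_{A'E\to C'}\circ(\mathcal{N}_{A\to A'}\otimes\mathrm{id}_E)\circ\mathcal{U}_{C\to AE}$; since $\Theta$ is a linear map on all of $\mathcal{L}(A,A')$ whose representing map $\mathsf{T}$ is completely positive (Lemma~\ref{lemma_3}), the same $\mathcal{U}$, $\mathcal{V}$, $E$ also compute $\Theta(\mathcal{M})=\mathcal{V}_{A'E\to C'}\circ(\mathcal{M}_{A\to A'}\otimes\mathrm{id}_E)\circ\mathcal{U}_{C\to AE}$ for the completely positive map $\mathcal{M}$. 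I then set $\sigma_{RAE}:=(\mathrm{id}_R\otimes\mathcal{U}_{C\to AE})(\rho_{RC})\in\mathcal{D}(R\otimes A\otimes E)$.

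The remaining two steps are routine. By the data processing inequality for the relative entropy between a density operator and a positive semidefinite operator, applied to the channel $\mathrm{id}_R\otimes\mathcal{V}_{A'E\to C'}$,
\begin{equation}
\staD*{\mathrm{id}_R\otimes\Theta(\mathcal{N})(\rho_{RC})}{\mathrm{id}_R\otimes\Theta(\mathcal{M})(\rho_{RC})}\le\staD*{(\mathrm{id}_{RE}\otimes\mathcal{N}_{A\to A'})(\sigma_{RAE})}{(\mathrm{id}_{RE}\otimes\mathcal{M}_{A\to A'})(\sigma_{RAE})}.
\end{equation}
I would then regard the composite $RE$ as a single reference system $\widetilde R$, so that $\sigma_{RAE}$ becomes a state $\sigma_{\widetilde R A}\in\mathcal{D}(\widetilde R\otimes A)$ and the right-hand side equals $\staD*{\mathrm{id}_{\widetilde R}\otimes\mathcal{N}(\sigma_{\widetilde R A})}{\mathrm{id}_{\widetilde R}\otimes\mathcal{M}(\sigma_{\widetilde R A})}$, which is at most the supremum of this quantity over all states of $\widetilde R\otimes A$ — namely $\chD*{\mathcal{N}}{\mathcal{M}}$, since the definition allows a reference of arbitrary dimension. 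Taking the supremum over $\rho_{RC}$ yields $\chD*{\Theta(\mathcal{N})}{\Theta(\mathcal{M})}\le\chD*{\mathcal{N}}{\mathcal{M}}$.

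The main point requiring care is the claim in the second step that the dilation~\eqref{equ:superchannel_action} — stated for quantum channels — also represents $\Theta(\mathcal{M})$ for a map $\mathcal{M}$ that is merely completely positive, with the \emph{same} $\mathcal{U}$ and $\mathcal{V}$. This holds because the dilation is a structural statement about $\Theta$ as a linear operator, obtained from a Stinespring dilation of the completely positive representing map $\mathsf{T}$ (see Lemmas~\ref{lemma_3} and~\ref{lemma_4}), so it is valid on all of $\mathcal{L}(A,A')$; alternatively, one may first rescale $\mathcal{M}\mapsto c\mathcal{M}$, which shifts both sides of the desired inequality by the same additive constant $-\log c$, to reduce to a trace non-increasing $\mathcal{M}$ and then dilate it by embedding it as a corner of a quantum channel. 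Everything else follows the standard pattern of dilating, applying the state data processing inequality, and then recognizing a feasible test state, and the argument goes through verbatim for finite- or infinite-dimensional systems.
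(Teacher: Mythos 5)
Your proposal is correct and follows essentially the same route as the paper's proof in Appendix~\ref{monotonicity_of_channeldivergence_proof}: dilate $\Theta$ via Eq.~\eqref{equ:superchannel_action}, apply the state data-processing inequality to the post-processing channel, and observe that the pre-processed state (with $RE$ absorbed into the reference) is a feasible test state for $\chD*{\mathcal{N}}{\mathcal{M}}$. Your explicit justification that the same dilation computes $\Theta(\mathcal{M})$ for a merely completely positive $\mathcal{M}$ is a point the paper uses implicitly without comment, so it is a welcome (and correct) addition rather than a deviation.
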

We have provided a detailed proof of the above proposition in Appendix~\ref{monotonicity_of_channeldivergence_proof}. Recall that the relative entropy between a state and a positive operator is monotonic under the action of a quantum channel. The above proposition tells us that the same properties also hold if we replace the state with a channel, the channel with a superchannel, and the positive operator with a CP map. Note that the relative entropy between two quantum states (or two positive semidefinite operators) is monotonic under a larger class of operations, namely, under all positive and trace-preserving maps (compared to a restricted class of CPTP maps) \cite{Hermes2017}. What happens to the monotonicity of relative entropy between two quantum channels if we restrict to process the quantum channels under the larger class of CP-preserving and TP-preserving supermaps (cf. Proposition \ref{equ:prop1})? We answer this question in the affirmative, and thereby generalize the monotonicity result of \cite{Hermes2017} to the higher order transformations. More specifically, if we restrict ourselves to a particular class of channels such that for any pair of channels $\mathcal{N}$ and $\mathcal{M}$ in this class, the supremum in the relative entropy functional occurs at some state $\Psi $ whose reduced state is full rank, then the monotonicity of relative entropy also holds for CP-preserving and TP-preserving supermaps, which we prove in our next proposition.

{\sloppy
\begin{proposition}\label{equ:prop_new}
    The relative entropy between quantum channels $\mathcal{N}:\mathcal{L}\left({A}\right)\rightarrow\mathcal{L}\left({A'}\right)$ and $\mathcal{M}:\mathcal{L}\left({A}\right)\rightarrow\mathcal{L}\left({A'}\right)$  is nonincreasing under the action of a CP-preserving and TP-preserving supermap $\Theta:\mathcal{L}\left({A},{A'}\right)\rightarrow \mathcal{L}\left({C},{C '}\right)$, i.e.,
    \begin{equation}
       \chD*{\Theta\left(\mathcal{N}\right) }{ \Theta \left(\mathcal{M}\right)} \leq \chD*{\mathcal{N}}{ \mathcal{M}},
    \end{equation}   
    if the pure states that realize the supremum in the relative entropy functional $\chD*{\mathcal{N}}{ \mathcal{M}}$ and $\chD*{\Theta \left(\mathcal{N}\right)}{ \Theta \left(\mathcal{M}\right) }$ have the full rank marginals, and the representing map of $\Theta$ is a trace nonincreasing map.
\end{proposition}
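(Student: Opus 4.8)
The plan is to transport the inequality to the level of (generalized) Choi operators and then invoke the data-processing inequality for the relative entropy under positive trace-preserving maps~\cite{Hermes2017} (the monotonicity recalled just after Proposition~\ref{equ:prop1}), applied to a suitable trace-preserving version of the representing map of $\Theta$. The two hypotheses---full-rank optimizers and a trace-nonincreasing representing map---are exactly what make that map a bona fide positive trace-preserving map, so the argument of Proposition~\ref{equ:prop1} (which leans on the physical dilation of a genuine superchannel) must be replaced by the Choi/representing-map route.

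First I would fix a pure state $\Phi_{RC}\in\mathsf{FRank}(R\otimes C)$ achieving the supremum in $\chD*{\Theta(\mathcal{N})}{\Theta(\mathcal{M})}$ (full rank by hypothesis) and pick any $\Psi_{RA}\in\mathsf{FRank}(R\otimes A)$, say a normalized maximally entangled state. Because $\mathcal{N},\mathcal{M}$ are channels and $\Theta$ is TP-preserving, $\Theta(\mathcal{N}),\Theta(\mathcal{M})$ are channels, so $\Cstat^{\Psi}_{\mathcal{N}},\Cstat^{\Psi}_{\mathcal{M}},\Cstat^{\Phi}_{\Theta(\mathcal{N})},\Cstat^{\Phi}_{\Theta(\mathcal{M})}$ are quantum states; by the definition of the channel relative entropy (supremum over pure input states) and the choice of $\Phi_{RC}$,
\begin{equation}
\chD*{\mathcal{N}}{\mathcal{M}}\;\ge\;\staD*{\Cstat^{\Psi}_{\mathcal{N}}}{\Cstat^{\Psi}_{\mathcal{M}}},
\qquad
\chD*{\Theta(\mathcal{N})}{\Theta(\mathcal{M})}\;=\;\staD*{\Cstat^{\Phi}_{\Theta(\mathcal{N})}}{\Cstat^{\Phi}_{\Theta(\mathcal{M})}}.
\end{equation}
(If $\Psi_{RA}$ is taken to be the $\chD*{\mathcal{N}}{\mathcal{M}}$-optimizer, which the statement lets us assume full rank, the first inequality becomes an equality, but only the displayed inequality is used below.)

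Next I would bring in the representing map $\mathfrak{T}:\mathcal{L}(A\otimes A')\to\mathcal{L}(C\otimes C')$ of $\Theta$ defined with respect to the normalized Choi operators $\Cstat^{\Psi}_{(\cdot)}$ and $\Cstat^{\Phi}_{(\cdot)}$, so that $\mathfrak{T}(\Cstat^{\Psi}_{\mathcal{K}})=\Cstat^{\Phi}_{\Theta(\mathcal{K})}$ for every linear map $\mathcal{K}:\mathcal{L}(A)\to\mathcal{L}(A')$, in particular for $\mathcal{K}\in\{\mathcal{N},\mathcal{M}\}$ (Eq.~\eqref{equ:new_rep_1}, Remark~\ref{rem:choi-state-choi-operator-relation_1}). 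By Eq.~\eqref{new_rep_3}, $\mathfrak{T}=T_{\Phi}\circ\mathsf{T}\circ T_{\Psi}^{-1}$ with $\mathsf{T}$ the ordinary representing map of $\Theta$. Since $\Theta$ is CP-preserving, $\mathsf{T}$ sends the Choi operator of any CP map---i.e., by the CJKS theorem, an arbitrary positive operator---to a positive operator, hence $\mathsf{T}$ is positive; as $T_{\Psi}^{-1}$ and $T_{\Phi}$ are completely positive conjugations, $\mathfrak{T}$ is positive. Using the hypothesis that the representing map of $\Theta$ is trace nonincreasing together with the normalization conventions of Remarks~\ref{rem:choi-state-choi-operator-relation} and~\ref{rem:choi-state-choi-operator-relation_1}, I would argue that the trace-preserving modification $\mathfrak{T}'$ of $\mathfrak{T}$ (Eq.~\eqref{equ:trace_preserving_T}, with $\sigma_{0}$ a fixed state) is a positive trace-preserving map, and that, since $\mathcal{N},\mathcal{M},\Theta(\mathcal{N}),\Theta(\mathcal{M})$ are channels, $\mathfrak{T}'(\Cstat^{\Psi}_{\mathcal{K}})=\mathfrak{T}(\Cstat^{\Psi}_{\mathcal{K}})=\Cstat^{\Phi}_{\Theta(\mathcal{K})}$ for $\mathcal{K}\in\{\mathcal{N},\mathcal{M}\}$ (Remark~\ref{rem:choi-state-choi-operator-relation_1}). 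Then the monotonicity of the relative entropy under the positive trace-preserving map $\mathfrak{T}'$ closes the loop:
\begin{equation}
\begin{split}
\chD*{\mathcal{N}}{\mathcal{M}}\;\ge\;\staD*{\Cstat^{\Psi}_{\mathcal{N}}}{\Cstat^{\Psi}_{\mathcal{M}}}
&\;\ge\;\staD*{\mathfrak{T}'(\Cstat^{\Psi}_{\mathcal{N}})}{\mathfrak{T}'(\Cstat^{\Psi}_{\mathcal{M}})}\\
&\;=\;\staD*{\Cstat^{\Phi}_{\Theta(\mathcal{N})}}{\Cstat^{\Phi}_{\Theta(\mathcal{M})}}\;=\;\chD*{\Theta(\mathcal{N})}{\Theta(\mathcal{M})}.
\end{split}
\end{equation}

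The main obstacle will be the last verification: showing that $\mathfrak{T}'$ is at once positive, trace-preserving, \emph{and} still implements the action of $\Theta$ on the Choi states of channels. The conjugations $T_{\Psi}^{-1}$ and $T_{\Phi}$ each rescale traces, so the trace-nonincreasing property of $\mathsf{T}$ does not pass to $\mathfrak{T}$ automatically; the hypothesis has to be combined carefully with the normalization bookkeeping of the generalized Choi construction (Remarks~\ref{rem:choi-state-choi-operator-relation} and~\ref{rem:choi-state-choi-operator-relation_1}) so that $\mathfrak{T}'$ genuinely lands in the class of positive trace-preserving maps---this is precisely the role the trace-nonincreasing hypothesis plays. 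A minor additional point is the case where the supremum defining $\chD*{\Theta(\mathcal{N})}{\Theta(\mathcal{M})}$ is not attained, which I would handle via an approximating sequence of full-rank pure states and lower semicontinuity of the relative entropy.
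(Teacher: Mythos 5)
Your proposal follows essentially the same route as the paper's proof in Appendix~\ref{equ:proof_of_prop2}: pass to the generalized Choi states via the representing map $\mathfrak{T}$, deduce positivity of $\mathfrak{T}$ from CP-preservation of $\Theta$, upgrade it to the positive trace-preserving map $\mathfrak{T}'$ using the trace-nonincreasing hypothesis, and invoke the monotonicity of relative entropy under positive trace-preserving maps from~\cite{Hermes2017}. The only (harmless) deviation is that you use the inequality $\chD*{\mathcal{N}}{\mathcal{M}}\geq \staD*{\Cstat^{\Psi}_{\mathcal{N}}}{\Cstat^{\Psi}_{\mathcal{M}}}$ rather than evaluating at the input-side optimizer, and you flag more explicitly than the paper does the normalization bookkeeping needed for $\mathfrak{T}'$ to land in the right class.
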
}

The detailed proof of the above proposition, which is one of our main results, is provided in Appendix~\ref{equ:proof_of_prop2}. Further, we provide an explicit condition on supermap $\Theta$ in Eq.~(\ref{equ:tp_condition}) such that the monotonicity holds for the above-mentioned class of channels even if we take the supermap $\Theta$ to be CP-preserving but not TP-preserving. It would be interesting to see if the result holds for all quantum channels. We also extend some properties that hold for the relative entropy between a state and a positive operator to the relative entropy between a channel and a CP map, for example, if we have $\mathsf{P} \leq \mathsf{Q}$, where $\mathsf{P}$ and $\mathsf{Q}$ are positive operators, then for any state $\rho$, we also have $\staD*{\rho }{ \mathsf{P}} \geq \staD*{\rho}{\mathsf{Q}}$. We prove the channel version of this inequality as the following proposition.
\begin{proposition}\label{prop2}
    Consider a quantum channel $\mathcal{N}:\mathcal{L}\left({A}\right)\rightarrow\mathcal{L}\left({A'}\right)$ and two completely positive maps $\mathcal{M}:\mathcal{L}\left({A}\right)\rightarrow\mathcal{L}\left({A'}\right)$ and  $\widetilde{\mathcal{M}}:\mathcal{L}\left({A}\right)\rightarrow\mathcal{L}\left({A'}\right)$ with the ordering $\widetilde{\mathcal{M}}\geq \mathcal{M}$, then we have 
    \begin{equation}
        \chD*{\mathcal{N}}{\widetilde{\mathcal{M}}} \leq \chD*{\mathcal{N}}{\mathcal{M}}.
    \end{equation}
    \end{proposition}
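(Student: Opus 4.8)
The plan is to reduce the claim to the corresponding elementary fact for states, namely that for any state $\rho$ and positive semidefinite operators $\mathsf{P}\leq\mathsf{Q}$ one has $\staD*{\rho}{\mathsf{Q}}\leq\staD*{\rho}{\mathsf{P}}$ — this is the fact recalled just before the statement, and it follows from operator monotonicity of the logarithm together with the convention that $\staD*{\cdot}{\cdot}=+\infty$ when the support inclusion fails. First I would unpack the hypothesis $\widetilde{\mathcal{M}}\geq\mathcal{M}$: by the ordering on hermiticity-preserving maps (Definition~\ref{Rsub_preserving_map}) this means that $\widetilde{\mathcal{M}}-\mathcal{M}$ is a completely positive map. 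Consequently, for an arbitrary reference system $R$ and any pure state $\Psi_{RA}\in\mathcal{D}(R\otimes A)$, the map $\mathrm{id}_R\otimes(\widetilde{\mathcal{M}}-\mathcal{M})$ is positive, so $\big(\mathrm{id}_R\otimes(\widetilde{\mathcal{M}}-\mathcal{M})\big)(\Psi_{RA})\geq 0$, i.e.
\begin{equation}
   \big(\mathrm{id}_R\otimes\widetilde{\mathcal{M}}\big)(\Psi_{RA})\;\geq\;\big(\mathrm{id}_R\otimes\mathcal{M}\big)(\Psi_{RA}).
\end{equation}

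Next I would invoke the state-level inequality with $\rho=\big(\mathrm{id}_R\otimes\mathcal{N}\big)(\Psi_{RA})$, which is a genuine state since $\mathcal{N}$ is a channel, and with $\mathsf{P}=\big(\mathrm{id}_R\otimes\mathcal{M}\big)(\Psi_{RA})$, $\mathsf{Q}=\big(\mathrm{id}_R\otimes\widetilde{\mathcal{M}}\big)(\Psi_{RA})$. This gives
\begin{equation}
   \staD*{\big(\mathrm{id}_R\otimes\mathcal{N}\big)(\Psi_{RA})}{\big(\mathrm{id}_R\otimes\widetilde{\mathcal{M}}\big)(\Psi_{RA})}\;\leq\;\staD*{\big(\mathrm{id}_R\otimes\mathcal{N}\big)(\Psi_{RA})}{\big(\mathrm{id}_R\otimes\mathcal{M}\big)(\Psi_{RA})}.
\end{equation}
Taking the supremum over all $R$ and all such $\Psi_{RA}$ on both sides and applying the definition of the channel relative entropy then yields $\chD*{\mathcal{N}}{\widetilde{\mathcal{M}}}\leq\chD*{\mathcal{N}}{\mathcal{M}}$, as claimed.

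There is no substantial obstacle; the only points requiring a little care are (i) that the ordering convention on maps really gives \emph{complete} positivity of $\widetilde{\mathcal{M}}-\mathcal{M}$, which is exactly what is needed so that tensoring with $\mathrm{id}_R$ preserves the operator ordering, and (ii) the support/domain bookkeeping in the state-level step: if $\operatorname{supp}\big((\mathrm{id}_R\otimes\mathcal{N})(\Psi_{RA})\big)$ is not contained in $\operatorname{supp}\big((\mathrm{id}_R\otimes\mathcal{M})(\Psi_{RA})\big)$ the right-hand side is $+\infty$ and the inequality is trivial, while otherwise $\mathsf{P}\leq\mathsf{Q}$ also gives $\operatorname{supp}(\mathsf{P})\subseteq\operatorname{supp}(\mathsf{Q})$, so both relative entropies are finite and the operator-monotone-logarithm argument applies directly.
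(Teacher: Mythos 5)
Your proposal is correct and follows essentially the same route as the paper: both reduce the claim to the operator ordering $(\mathrm{id}_R\otimes\widetilde{\mathcal{M}})(\Psi_{RA})\geq(\mathrm{id}_R\otimes\mathcal{M})(\Psi_{RA})$ and then use monotonicity of $\staD*{\rho}{\cdot}$ in its second argument before taking the supremum; the paper merely proves that state-level monotonicity inline via the $\epsilon$-regularized operator-monotone-logarithm argument rather than citing it. Your explicit handling of the support/infinity cases is a point the paper glosses over, but it is not a substantive difference.
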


 \begin{proof}
     Since $\widetilde{\mathcal{M}}$ and $\mathcal{M}$ are completely positive maps with the ordering $\widetilde{\mathcal{M}}\geq \mathcal{M}$, then the following ordering will also hold: 
         $\Delta_{\mathcal{M}, \widetilde{\mathcal{M}}}(\epsilon):=\log\left\{\left(\mathrm{id}_{R}\otimes\widetilde{\mathcal{M}}\right)\left(\Psi_{RA}\right)+ \epsilon \mathbbm{1}_{RA'}\right\} -\log \left\{ \left(\mathrm{id}_{R}\otimes \mathcal{M}\right)\left(\Psi_{RA}\right)+ \epsilon \mathbbm{1}_{RA'}\right\} \geq 0$ for any $\epsilon >0$.
     Now, we have 
     \begin{align*}
        0 & \leq \lim_{\epsilon\rightarrow 0+} \tr\left(\left(\mathrm{id}_{R}\otimes \mathcal{N}\right)\left(\Psi_{RA}\right)~ \Delta_{\mathcal{M}, \widetilde{\mathcal{M}}}(\epsilon)\right)\\
        &=\lim_{\epsilon\rightarrow 0+} 
        \staD*{\left(\mathrm{id}_{R}\otimes \mathcal{N}\right)\left(\Psi_{RA}\right) }{ \left(\mathrm{id}_{R}\otimes \mathcal{M}\right)\left(\Psi_{RA}\right) + \epsilon \mathbbm{1}_{RA'}} \\
        &- \staD*{\left(\mathrm{id}_{R}\otimes \mathcal{N}\right)\left(\Psi_{RA}\right)}{ \left(\mathrm{id}_{R}\otimes \widetilde{\mathcal{M}}\right)\left(\Psi_{RA}\right) + \epsilon \mathbbm{1}_{RA'}}\\
         &=\staD*{\left(\mathrm{id}_{R}\otimes \mathcal{N}\right)\left(\Psi_{RA}\right) }{ \left(\mathrm{id}_{R}\otimes \mathcal{M}\right)\left(\Psi_{RA}\right)} - \staD*{\left(\mathrm{id}_{R}\otimes \mathcal{N}\right)\left(\Psi_{RA}\right) }{ \left(\mathrm{id}_{R}\otimes \widetilde{\mathcal{M}}\right)\left(\Psi_{RA}\right) }.
     \end{align*}
     After taking the supremum over the set of pure states $\Psi_{RA}$, we obtain our desired inequality, i.e., $\chD*{\mathcal{N}}{ \widetilde{\mathcal{M}}} \leq \chD*{\mathcal{N}}{\mathcal{M}}$. 
 \end{proof}

\begin{proposition}\label{prop3}
    Consider a pair of quantum channels $\mathcal{N}:\mathcal{L}\left({A}\right)\rightarrow\mathcal{L}\left({A'}\right)$ and $\widetilde{\mathcal{N}}:\mathcal{L}\left({B}\right)\rightarrow\mathcal{L}\left({B'}\right)$, and another pair of completely positive maps $\mathcal{M}:\mathcal{L}\left({A}\right)\rightarrow\mathcal{L}\left({A'}\right)$ and  $\widetilde{\mathcal{M}}:\mathcal{L}\left({B}\right)\rightarrow\mathcal{L}\left({B'}\right)$, then we have 
    \begin{equation}
       \chD*{\mathcal{N}\otimes \widetilde{\mathcal{N}}}{\mathcal{M}\otimes \widetilde{\mathcal{M}}} \geq \chD*{\mathcal{N}}{\mathcal{M}}
        +\chD*{\widetilde{\mathcal{N}}}{ \widetilde{\mathcal{M}}}.
    \end{equation}
    \end{proposition}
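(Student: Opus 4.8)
The plan is to reduce the super-additivity statement for channel relative entropy to the well-known super-additivity of the state relative entropy under tensor products, namely the fact that for states $\rho_1,\rho_2$ and positive operators $\sigma_1,\sigma_2$ one has $\staD*{\rho_1\otimes\rho_2}{\sigma_1\otimes\sigma_2}=\staD*{\rho_1}{\sigma_1}+\staD*{\rho_2}{\sigma_2}$, together with the data processing inequality for the state relative entropy under partial trace. First I would unpack the definitions: by the variational formula in the definition preceding Proposition~\ref{equ:prop1}, $\chD*{\mathcal{N}}{\mathcal{M}}$ is the supremum over pure $\Psi_{RA}$ of $\staD*{(\id_R\otimes\mathcal{N})(\Psi_{RA})}{(\id_R\otimes\mathcal{M})(\Psi_{RA})}$, and similarly for the tilde channels and for the tensor-product channels (with a joint reference system).

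The key step is to choose, for any $\varepsilon>0$, near-optimal pure states $\Psi_{R_1A}$ and $\Phi_{R_2B}$ achieving $\chD*{\mathcal{N}}{\mathcal{M}}$ and $\chD*{\widetilde{\mathcal{N}}}{\widetilde{\mathcal{M}}}$ to within $\varepsilon$, and then feed the product state $\Psi_{R_1A}\otimes\Phi_{R_2B}$ into the definition of $\chD*{\mathcal{N}\otimes\widetilde{\mathcal{N}}}{\mathcal{M}\otimes\widetilde{\mathcal{M}}}$ with combined reference $R=R_1R_2$. Since $(\id_{R_1R_2}\otimes\mathcal{N}\otimes\widetilde{\mathcal{N}})(\Psi_{R_1A}\otimes\Phi_{R_2B})=\big((\id_{R_1}\otimes\mathcal{N})(\Psi_{R_1A})\big)\otimes\big((\id_{R_2}\otimes\widetilde{\mathcal{N}})(\Phi_{R_2B})\big)$, and likewise with $\mathcal{M},\widetilde{\mathcal{M}}$ in place of $\mathcal{N},\widetilde{\mathcal{N}}$, the additivity of relative entropy on tensor products gives
\begin{align*}
\chD*{\mathcal{N}\otimes\widetilde{\mathcal{N}}}{\mathcal{M}\otimes\widetilde{\mathcal{M}}}
&\geq \staD*{(\id_{R_1}\otimes\mathcal{N})(\Psi_{R_1A})}{(\id_{R_1}\otimes\mathcal{M})(\Psi_{R_1A})}\\
&\quad + \staD*{(\id_{R_2}\otimes\widetilde{\mathcal{N}})(\Phi_{R_2B})}{(\id_{R_2}\otimes\widetilde{\mathcal{M}})(\Phi_{R_2B})}
\geq \chD*{\mathcal{N}}{\mathcal{M}}+\chD*{\widetilde{\mathcal{N}}}{\widetilde{\mathcal{M}}}-2\varepsilon,
\end{align*}
and letting $\varepsilon\to0$ finishes the argument. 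One subtlety is that the input to $\mathcal{N}\otimes\widetilde{\mathcal{N}}$ is an operator on $A\otimes B$ while the reference is $R_1\otimes R_2$; I would note that $\Psi_{R_1A}\otimes\Phi_{R_2B}$, after regrouping systems, is a legitimate (pure) input state on $(R_1R_2)\otimes(AB)$, so it is an admissible choice in the supremum defining $\chD*{\mathcal{N}\otimes\widetilde{\mathcal{N}}}{\mathcal{M}\otimes\widetilde{\mathcal{M}}}$.

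A minor care point is the case where one of the relative entropies is $+\infty$ or the supremum is not attained; this is handled uniformly by the $\varepsilon$-approximation argument (if $\chD*{\mathcal{N}}{\mathcal{M}}=+\infty$ one picks $\Psi_{R_1A}$ with relative entropy exceeding any prescribed $M$, and the same product-state computation forces the left side to be $+\infty$ as well). I expect the main obstacle to be purely bookkeeping: being careful that the additivity identity $\staD*{\rho_1\otimes\rho_2}{\sigma_1\otimes\sigma_2}=\staD*{\rho_1}{\sigma_1}+\staD*{\rho_2}{\sigma_2}$ is applied to the possibly unnormalized positive operators $(\id\otimes\mathcal{M})(\Psi)$ and $(\id\otimes\widetilde{\mathcal{M}})(\Phi)$ (it still holds in that generality, since $\log(\sigma_1\otimes\sigma_2)=\log\sigma_1\otimes\mathbbm{1}+\mathbbm{1}\otimes\log\sigma_2$ on the relevant support), and keeping the tensor-factor labels straight when regrouping $R_1AR_2B$ into $(R_1R_2)(AB)$. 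No deep inequality beyond additivity of the state relative entropy on products is needed here — unlike Proposition~\ref{equ:prop1}, this is the "easy" direction and does not even require data processing.
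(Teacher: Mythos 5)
Your proposal is correct and follows essentially the same route as the paper's own proof: both restrict the supremum defining $\chD*{\mathcal{N}\otimes\widetilde{\mathcal{N}}}{\mathcal{M}\otimes\widetilde{\mathcal{M}}}$ to product input states $\Psi_{R_1A}\otimes\Phi_{R_2B}$ and then invoke additivity of the state relative entropy on tensor products (the paper states it with the trace weights $\tr(\rho_1),\tr(\rho_2)$, which equal $1$ here since $\mathcal{N},\widetilde{\mathcal{N}}$ are trace-preserving, exactly the normalization point you flag). Your explicit $\varepsilon$-approximation and treatment of the $+\infty$ case are minor refinements of the same argument, not a different approach.
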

The proof of this Proposition is provided in Appendix~\ref{Detailed_proof_of_Proposition4}.

\subsection{Entropy of quantum channels and its properties}
In this section, we study the entropy of a quantum channel and its behavior under the action of quantum superchannels. We begin by introducing a class of supermaps called $\mathcal{R}$-subpreserving supermaps, as we will later show that the entropy of a quantum channel is nondecreasing under the action of $\mathcal{R}$-subpreserving superchannels.
\begin{dfn}\label{Rsub_preserving_map}
    Let $\mathcal{R}_{A \rightarrow B}$ and $\mathcal{R}_{C \rightarrow D}$ be two completely depolarizing maps, i.e., $\mathcal{R}_{A \rightarrow B}(\mathsf{X})=\tr(\mathsf{X})\mathbbm{1}_B$ and $\mathcal{R}_{C \rightarrow D}(\mathsf{Y})=\tr(\mathsf{Y})\mathbbm{1}_D$ for all $\mathsf{X}\in \mathcal{L}(A)$ and $\mathsf{Y}\in \mathcal{L}(C)$. Then, a supermap $\Theta:\mathcal{L}\left(A,B\right)\rightarrow \mathcal{L}\left(C,D\right)$ is called $\mathcal{R}$-subpreserving if
\begin{eqnarray}
  \Theta(\mathcal{R}_{A \rightarrow B}) \leq \mathcal{R}_{C \rightarrow D},
\end{eqnarray}
where the above ordering means that the map $(\mathcal{R}_{C \rightarrow D} - \Theta(\mathcal{R}_{A \rightarrow B}))$ is completely positive.
\end{dfn}

The entropy of a quantum channel is formally defined in terms of the relative entropy between the quantum channel and the completely depolarizing map as follows:
\begin{dfn}\label{dfn:entropy_channel}
Consider a quantum channel $\mathcal{N}_{A \rightarrow B}$, and the completely depolarizing map $\mathcal{R}_{A \rightarrow B}$. The entropy of $\mathcal{N}_{A \rightarrow B}$  is defined as (cf. Eq.~\eqref{equ:definition_entropy_gour})
\begin{equation}
    S\left[\mathcal{N}_{A \rightarrow B}\right] := - \chD*{\mathcal{N}_{A \rightarrow B} }{ \mathcal{R}_{A \rightarrow B}}, \label{equ:channel_entropy_definition}
\end{equation}
where $\chD*{\mathcal{N}_{A \rightarrow B} }{ \mathcal{R}_{A \rightarrow B}}$ is the relative entropy between the quantum channel $\mathcal{N}_{A \rightarrow B}$ and the completely depolarizing map $\mathcal{R}_{A \rightarrow B}$.
\end{dfn}

Note that since the relative entropy between two quantum channels quantifies the minimum error in discriminating between them~\cite{Gour_entropy}, the definition of entropy (Definition \ref{dfn:entropy_channel}) is naturally endowed with the same operational meaning. That is, the entropy of a quantum channel quantifies how well it can be distinguished from the completely randomizing channel. We now show that the definition of the entropy functional has some nice properties: (1) it is nondecreasing under
random unitary superchannels, (2) it is additive under tensor product, and (3) it satisfies the normalization condition. First, we prove that the entropy functional of a quantum channel is additive.

\begin{proposition}\label{proposition_2}
    The entropy functional of quantum channels is additive under tensor product of channels, i.e,
   \begin{equation}
       S\left[{\mathcal{N}}_{A_1 \rightarrow B_1} \otimes {\mathcal{M}}_{A_2 \rightarrow B_2}\right] = S\left[{\mathcal{N}}_{A_1 \rightarrow B_1}\right] + S\left[{\mathcal{M}}_{A_2 \rightarrow B_2}\right],
   \end{equation}
where $\mathcal{N}_{A_{1} \rightarrow B_1}$ and $\mathcal{M}_{A_{2} \rightarrow B_2}$ are quantum channels. 
\end{proposition}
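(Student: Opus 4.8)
The plan is to unfold the definition $S[\mathcal{N}]=-\chD*{\mathcal{N}}{\mathcal{R}}$ and reduce the claim to the identity
\[
\chD*{\mathcal{N}_{A_1\to B_1}\otimes\mathcal{M}_{A_2\to B_2}}{\mathcal{R}_{A_1A_2\to B_1B_2}}=\chD*{\mathcal{N}_{A_1\to B_1}}{\mathcal{R}_{A_1\to B_1}}+\chD*{\mathcal{M}_{A_2\to B_2}}{\mathcal{R}_{A_2\to B_2}}.
\]
The first thing I would record is that the completely depolarizing map factorizes, $\mathcal{R}_{A_1A_2\to B_1B_2}=\mathcal{R}_{A_1\to B_1}\otimes\mathcal{R}_{A_2\to B_2}$, which follows by evaluating both sides on product operators and extending linearly. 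Thus it remains to establish a superadditivity (``$\geq$'') and a subadditivity (``$\leq$'') of the channel relative entropy for the pairs $(\mathcal{N},\mathcal{R}_{A_1\to B_1})$ and $(\mathcal{M},\mathcal{R}_{A_2\to B_2})$.

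The ``$\geq$'' direction is immediate from Proposition~\ref{prop3}, applied with $\widetilde{\mathcal{N}}=\mathcal{M}_{A_2\to B_2}$, $\mathcal{M}=\mathcal{R}_{A_1\to B_1}$, and $\widetilde{\mathcal{M}}=\mathcal{R}_{A_2\to B_2}$ (the completely depolarizing maps being CP): it yields $\chD*{\mathcal{N}\otimes\mathcal{M}}{\mathcal{R}}\geq\chD*{\mathcal{N}}{\mathcal{R}}+\chD*{\mathcal{M}}{\mathcal{R}}$, i.e.\ $S[\mathcal{N}\otimes\mathcal{M}]\le S[\mathcal{N}]+S[\mathcal{M}]$.

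The reverse inequality is the substantive part. I would fix an arbitrary input state $\Psi_{RA_1A_2}$, restrict $R$ so that $\Psi_R$ is invertible, and set $\sigma:=(\id_{RA_2}\otimes\mathcal{N})(\Psi_{RA_1A_2})$ on $RA_2B_1$ and $\omega:=(\id_{RB_1}\otimes\mathcal{M})(\sigma)=(\id_R\otimes\mathcal{N}\otimes\mathcal{M})(\Psi)$ on $RB_1B_2$. Because $\mathcal{N},\mathcal{M}$ are trace preserving and $\mathcal{R}$ replaces everything by $\tr(\cdot)\mathbbm{1}$, one has for any TP map $\mathcal{K}_{A\to B}$ and any state $\xi_{R'A}$, writing $\eta_{R'B}=(\id_{R'}\otimes\mathcal{K})(\xi_{R'A})$, the elementary identity $\staD*{\eta_{R'B}}{\xi_{R'}\otimes\mathbbm{1}_B}=S(R')_{\eta}-S(R'B)_{\eta}$ (using $\eta_{R'}=\xi_{R'}$ and $\log\mathbbm{1}_B=0$; the support condition holds automatically). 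Applied to $\mathcal{N}\otimes\mathcal{M}$ with reference $R$, this shows the term inside the supremum defining $\chD*{\mathcal{N}\otimes\mathcal{M}}{\mathcal{R}_{A_1A_2\to B_1B_2}}$ equals $S(R)_{\Psi}-S(RB_1B_2)_{\omega}$; taking $RA_2$ as reference for $\mathcal{N}$ and $RB_1$ as reference for $\mathcal{M}$, the same identity gives $\chD*{\mathcal{N}}{\mathcal{R}_{A_1\to B_1}}\geq S(RA_2)_{\sigma}-S(RA_2B_1)_{\sigma}$ and $\chD*{\mathcal{M}}{\mathcal{R}_{A_2\to B_2}}\geq S(RB_1)_{\omega}-S(RB_1B_2)_{\omega}$. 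Since $\sigma_{RB_1}=\omega_{RB_1}$ and $\sigma_R=\Psi_R$, adding these and requiring the sum to dominate $S(R)_{\Psi}-S(RB_1B_2)_{\omega}$ collapses exactly to $S(RA_2)_{\sigma}+S(RB_1)_{\sigma}\geq S(R)_{\sigma}+S(RA_2B_1)_{\sigma}$, which is the strong subadditivity of the von Neumann entropy. Taking the supremum over $\Psi_{RA_1A_2}$ then gives $\chD*{\mathcal{N}}{\mathcal{R}}+\chD*{\mathcal{M}}{\mathcal{R}}\geq\chD*{\mathcal{N}\otimes\mathcal{M}}{\mathcal{R}}$, and combining with the previous paragraph yields the claimed equality.

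I expect the main obstacle to be the bookkeeping in this last step: one must choose the reference systems ($RA_2$ \emph{before} and $RB_1$ \emph{after} applying $\mathcal{M}$) so that the telescoping of entropies lands precisely on a strong-subadditivity instance, and keep track of the support/invertibility hypotheses that make the relative-entropy manipulations legitimate. The remaining ingredients --- the factorization of $\mathcal{R}$, the conditional-entropy identity, and the ``$\geq$'' half via Proposition~\ref{prop3} --- are routine. As a consistency check, $\chD*{\mathcal{N}}{\mathcal{R}}$ differs from the channel divergence against the normalized depolarizing channel only by the additive constant $\log|B|$, so for finite-dimensional channels the statement is equivalent to the additivity of channel entropy of~\cite{Gour_entropy}, which provides an independent route.
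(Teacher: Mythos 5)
Your proof is correct and follows essentially the same route as the paper's: the easy direction by restricting the supremum to product inputs (the paper re-derives inline what you cite as Proposition~\ref{prop3}), and the substantive direction by evaluating the divergence against the depolarizing reference on a fixed input and peeling off the two channels one at a time with suitably enlarged reference systems. Your strong-subadditivity step is equivalent to the paper's chain-rule decomposition followed by monotonicity of relative entropy under partial trace, so the key ingredient coincides.
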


 To show that the entropy of quantum channels is additive, it is sufficient to show that the following equality holds:
   \begin{align}
        \chD*{\mathcal{N}_{A_1\rightarrow B_1}\otimes {\mathcal{M}}_{A_2 \rightarrow B_2} }{ \mathcal{R}_{A_1\rightarrow B_1}\otimes {\mathcal{R}}_{A_2 \rightarrow B_2}}  = \chD*{\mathcal{N}_{A_1 \rightarrow B_1}}{ \mathcal{R}_{A_1\rightarrow B_1}} + \chD*{\mathcal{M}_{A_2 \rightarrow B_2}}{\mathcal{R}_{A_2\rightarrow B_2}}.
   \end{align}

    The complete proof is presented in Appendix~\ref{prop2_channel_entropy_addition_proof}. Now we prove that the entropy functional attains its maximum for the completely depolarizing channel and is zero for a replacer channel that outputs a pure state.

\begin{proposition}
    Let $\mathcal{N}_{A \rightarrow B}$ be a replacer channel, i.e., for all $\rho \in \mathcal{D}(A)$, it satisfies $\mathcal{N}_{A \rightarrow B}(\rho)=\sigma_{0}$ for a fixed $\sigma_{0}\in \mathcal{D}(B)$. Then $S(\mathcal{N}_{A \rightarrow B})=0$ when $\sigma_{0}$ is a pure state, and $S(\mathcal{N}_{A \rightarrow B})$ is maximum, i.e., $S(\mathcal{N}_{A \rightarrow B})= \log|B|$ when $\sigma_{0}=\frac{1}{|B|}\mathbbm{1}_B$, i.e., when $\mathcal{N}_{A \rightarrow B}$ is the completely depolarizing channel.
\end{proposition}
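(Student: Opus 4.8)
The plan is to first establish the general identity $S[\mathcal{N}_{A\to B}] = S(\sigma_0)$ for an arbitrary replacer channel $\mathcal{N}_{A\to B}$ with $\mathcal{N}_{A\to B}(\rho) = \sigma_0$ for all $\rho\in\mathcal{D}(A)$, and then read off the two special cases. By Definition~\ref{dfn:entropy_channel}, $S[\mathcal{N}_{A\to B}] = -\chD*{\mathcal{N}_{A\to B}}{\mathcal{R}_{A\to B}}$, so I would compute the channel relative entropy by working out the two operators that appear under the supremum. Fix a pure state $\Psi_{RA}$ on $R\otimes A$ and put $\Psi_R = \tr_A(\Psi_{RA})$. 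By linearity of $\mathcal{N}_{A\to B}$ together with the replacer property $\mathcal{N}_{A\to B}(\mathsf{X}) = \tr(\mathsf{X})\,\sigma_0$, one gets $(\mathrm{id}_R\otimes\mathcal{N}_{A\to B})(\Psi_{RA}) = \Psi_R\otimes\sigma_0$; similarly, from Definition~\ref{def_completely_depolarising_map}, $(\mathrm{id}_R\otimes\mathcal{R}_{A\to B})(\Psi_{RA}) = \Psi_R\otimes\mathbbm{1}_B$.

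Next I would evaluate $\staD*{\Psi_R\otimes\sigma_0}{\Psi_R\otimes\mathbbm{1}_B}$. Since $\operatorname{supp}(\Psi_R\otimes\sigma_0)\subseteq\operatorname{supp}(\Psi_R)\otimes B = \operatorname{supp}(\Psi_R\otimes\mathbbm{1}_B)$, the relative entropy is finite, and additivity of relative entropy over tensor factors gives $\staD*{\Psi_R\otimes\sigma_0}{\Psi_R\otimes\mathbbm{1}_B} = \staD*{\Psi_R}{\Psi_R} + \staD*{\sigma_0}{\mathbbm{1}_B}$. The first term vanishes, and since $\log\mathbbm{1}_B = 0$ the second term equals $\tr(\sigma_0\log\sigma_0) = -S(\sigma_0)$. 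Hence the quantity under the supremum equals $-S(\sigma_0)$ for \emph{every} $\Psi_{RA}$, so $\chD*{\mathcal{N}_{A\to B}}{\mathcal{R}_{A\to B}} = -S(\sigma_0)$, and therefore $S[\mathcal{N}_{A\to B}] = S(\sigma_0)$.

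Finally I would specialize. If $\sigma_0$ is pure then $S(\sigma_0) = 0$, so $S[\mathcal{N}_{A\to B}] = 0$; if $\sigma_0 = \mathbbm{1}_B/|B|$ then $\mathcal{N}_{A\to B}$ is the (normalized) completely depolarizing channel $\widetilde{\mathcal{R}}_{A\to B}$ and $S(\sigma_0) = \log|B|$. To see that $\log|B|$ is in fact the maximal value of the entropy over \emph{all} channels with output system $B$, I would use $\mathcal{R}_{A\to B} = |B|\,\widetilde{\mathcal{R}}_{A\to B}$, whence $\chD*{\mathcal{M}}{\mathcal{R}_{A\to B}} = \chD*{\mathcal{M}}{\widetilde{\mathcal{R}}_{A\to B}} - \log|B|$ for every quantum channel $\mathcal{M}_{A\to B}$ (cf.\ Eq.~\eqref{equ:definition_entropy_gour}); since $\chD*{\mathcal{M}}{\widetilde{\mathcal{R}}_{A\to B}}\geq 0$, being a supremum of relative entropies between genuine density operators, we obtain $S[\mathcal{M}_{A\to B}] = -\chD*{\mathcal{M}}{\mathcal{R}_{A\to B}} \leq \log|B|$, with equality for the replacer channel to $\mathbbm{1}_B/|B|$. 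There is no real obstacle in this argument; the only point needing care is the bookkeeping around $\mathbbm{1}_B$ being a bounded rather than trace-class operator (so that $\log\mathbbm{1}_B = 0$ and $\staD*{\sigma_0}{\mathbbm{1}_B} = -S(\sigma_0)$ is finite) and the finiteness of $|B|$ required for the $\log|B|$ statements, the computation for the $\sigma_0$ pure case otherwise carrying over to arbitrary separable output spaces.
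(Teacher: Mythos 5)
Your proposal is correct and follows essentially the same route as the paper's proof: both reduce $(\mathrm{id}_R\otimes\mathcal{N})(\Psi_{RA})$ and $(\mathrm{id}_R\otimes\mathcal{R})(\Psi_{RA})$ to $\Psi_R\otimes\sigma_0$ and $\Psi_R\otimes\mathbbm{1}_B$, use additivity of the relative entropy over tensor factors to get $S[\mathcal{N}]=S(\sigma_0)$, and then specialize. The one genuine addition on your side is the final paragraph showing that $\log|B|$ is actually the maximum of $S[\cdot]$ over all channels with output $B$ (via $\mathcal{R}=|B|\,\widetilde{\mathcal{R}}$ and nonnegativity of the divergence with respect to the normalized depolarizing channel), a claim the proposition asserts but the paper's proof only verifies by exhibiting the value for the completely depolarizing channel.
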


\begin{proof}
    Let $\Psi_{RA}$ be an arbitrary pure state. It can be shown that $\mathrm{id}_{R} \otimes \mathcal{N}_{A \rightarrow B} (\Psi_{RA})=\Psi_{R} \otimes \sigma_{0}$ and $\mathrm{id}_{R} \otimes \mathcal{R}_{A \rightarrow B} (\Psi_{RA})=\Psi_{R} \otimes \mathbbm{1}_{B}$. Now, we have
    \begin{align*}
        \staD*{\mathrm{id}_{R} \otimes \mathcal{N}_{A \rightarrow B} (\Psi_{RA}) }{  \mathrm{id}_{R} \otimes \mathcal{R}_{A \rightarrow B} (\Psi_{RA})}= \staD*{\Psi_{R} \otimes \sigma_{0}}{\Psi_{R} \otimes \mathbbm{1}_{B}}= \staD*{\sigma_{0}}{ \mathbbm{1}_{B}}.
    \end{align*}
    Hence, the entropy of the replacer channel is given by $
    S\left[\mathcal{N}_{A \rightarrow B}\right]= -\staD*{\sigma_{0}}{  \mathbbm{1}_{B}}=S\left(\sigma_{0}\right)$. Noticing that $S\left(\sigma_{0}\right)=0$ when $\sigma_{0}$ is a pure state and $S\left(\sigma_{0}\right)=\log|B|$ when $\sigma_{0}=\mathbbm{1}_{B}/|B|$, completes the proof.
\end{proof}

Another crucial property that the entropy functional of a quantum channel has to satisfy is that it should be nondecreasing under the action of a random unitary superchannel. The definition used in \cite{Gour} satisfies this property as a corollary of the fact that the channel entropy always increases under those quantum superchannels which preserve the completely depolarizing channel. Here, we generalize this result and show that the entropy functional is nondecreasing under completely depolarizing map subpreserving superchannels. Random isometry superchannels are one of the examples of such class of superchannels, and we prove this in the following  proposition.

\begin{dfn}
    A random isometry superchannel $\Theta:\mathcal{L}\left(A ,B\right)\rightarrow \mathcal{L}\left(C ,D\right)$ is defined as:
\begin{eqnarray}
    \mathrm{\Theta}\left(\mathcal{N}_{A \rightarrow B}\right)= \sum_{i} p_{i} \mathcal{V}_{B \rightarrow D}^{i}\circ \mathcal{N}_{A \rightarrow B} \circ \mathcal{U}_{C \rightarrow A}^{i}, \label{Random_isometry_super_channel}
\end{eqnarray}
where $\{p_{i}\}_i$ is a probability distribution, and $\mathcal{V}_{B \rightarrow D}^{i}$, and $\mathcal{U}_{C \rightarrow A}^{i}$ are isometry channels defined as $\mathcal{V}_{B \rightarrow D}^{i}\left(\mathsf{X}\right)=v^{i}\mathsf{X}v^{i\dagger}$ and $\mathcal{U}_{C \rightarrow A}^{i}\left(\mathsf{Y}\right)=u^{i}\mathsf{Y}u^{i\dagger}$ respectively. Here, $u^i:C \rightarrow A$ and $v^i:B \rightarrow D$ are isometries.
\end{dfn}
\begin{proposition}\label{Rtheta_inequality}
    A random isometry superchannel $\Theta:\mathcal{L}\left(A ,B\right)\rightarrow \mathcal{L}\left(C ,D\right)$ is an $\mathcal{R}$-subpreserving superchannel.
\end{proposition}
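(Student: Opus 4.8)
The plan is to verify the two defining properties of an $\mathcal{R}$-subpreserving superchannel separately: that $\Theta$ is a genuine quantum superchannel, and that $\Theta(\mathcal{R}_{A\to B})\leq\mathcal{R}_{C\to D}$ in the sense of Definition~\ref{Rsub_preserving_map}.

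For the first part, I would observe that each summand $\mathcal{N}_{A\to B}\mapsto\mathcal{V}^{i}_{B\to D}\circ\mathcal{N}_{A\to B}\circ\mathcal{U}^{i}_{C\to A}$ already has the realization form of Eq.~\eqref{equ:superchannel_action} with a trivial reference system (preprocessing by the isometry channel $\mathcal{U}^{i}_{C\to A}$, postprocessing by the isometry channel $\mathcal{V}^{i}_{B\to D}$), hence is a quantum superchannel; since a convex combination of quantum superchannels is again a quantum superchannel (complete CP-preservation and TP-preservation are both preserved under convex mixtures), $\Theta$ is one. Equivalently, one may absorb the index $i$ into a classical memory register and exhibit $\Theta$ directly in the form of Eq.~\eqref{equ:superchannel_action}.

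For the second part, the main computation is to evaluate $\Theta(\mathcal{R}_{A\to B})$ explicitly, where the key point is the asymmetric way the two isometries enter. Since $u^{i}\colon C\to A$ is an isometry, $u^{i\dagger}u^{i}=\mathbbm{1}_{C}$, so precomposing $\mathcal{R}_{A\to B}$ with the isometry channel $\mathcal{U}^{i}_{C\to A}$ leaves it unchanged: $\mathcal{R}_{A\to B}\circ\mathcal{U}^{i}_{C\to A}(\mathsf{Y})=\tr(u^{i}\mathsf{Y}u^{i\dagger})\mathbbm{1}_{B}=\tr(\mathsf{Y})\mathbbm{1}_{B}$. Postcomposing with $\mathcal{V}^{i}_{B\to D}$ then gives $\tr(\mathsf{Y})\,v^{i}v^{i\dagger}$, so that $\Theta(\mathcal{R}_{A\to B})(\mathsf{Y})=\tr(\mathsf{Y})\sum_{i}p_{i}\,v^{i}v^{i\dagger}$. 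Because $v^{i}\colon B\to D$ is an isometry, each $v^{i}v^{i\dagger}$ is the orthogonal projection onto the range of $v^{i}$ and hence $\leq\mathbbm{1}_{D}$, so $\mathsf{Z}:=\mathbbm{1}_{D}-\sum_{i}p_{i}\,v^{i}v^{i\dagger}\geq0$; this is precisely where the ``sub'' (rather than an equality) originates, and it is also why one only needs $|B|\le|D|$.

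It then remains to see that the difference map $\mathcal{R}_{C\to D}-\Theta(\mathcal{R}_{A\to B})$, which acts as $\mathsf{Y}\mapsto\tr(\mathsf{Y})\,\mathsf{Z}$ with $\mathsf{Z}\geq0$, is completely positive---not merely positive---as required by Definition~\ref{Rsub_preserving_map}. This is immediate since such a map factors as the (completely positive) trace functional followed by the (completely positive, because $\mathsf{Z}\geq0$) preparation map $c\mapsto c\,\mathsf{Z}$, whence $\Theta(\mathcal{R}_{A\to B})\leq\mathcal{R}_{C\to D}$. I do not expect any genuine obstacle: the argument is a short direct calculation, and the only point that warrants a moment's care is the complete (rather than ordinary) positivity of the difference map, which becomes transparent once it has been written in the displayed form above.
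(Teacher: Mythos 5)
Your proposal is correct and follows essentially the same route as the paper: the same direct computation showing $\Theta(\mathcal{R}_{A\to B})(\mathsf{Y})=\tr(\mathsf{Y})\sum_i p_i\, v^i v^{i\dagger}$, followed by the observation that each $v^i v^{i\dagger}$ is a projector so the difference map is $\mathsf{Y}\mapsto\tr(\mathsf{Y})\mathsf{Z}$ with $\mathsf{Z}\geq 0$. Your factorization of the difference map as the trace functional composed with the preparation map $c\mapsto c\,\mathsf{Z}$ makes the complete positivity explicit where the paper merely asserts "it can be shown," which is a small but welcome improvement; the preliminary check that $\Theta$ is a superchannel is harmless extra care.
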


\begin{proof}
Consider a random isometry superchannel $\Theta$ given by Eq.~\eqref{Random_isometry_super_channel} and a positive semidefinite operator $\mathsf{X} \in \mathcal{L}\left({C}\right)$. Then, we have $\left[\Theta(\mathcal{R}_{A \rightarrow B})\right]\left(\mathsf{X}\right)=\sum_{i} p_{i} \mathcal{V}_{B \rightarrow D}^{i}\circ \mathcal{R}_{A \rightarrow B} \circ \mathcal{U}_{C \rightarrow A}^{i}\left(\mathsf{X}\right)$. Using the definitions of $\mathcal{U}^{i}_{C \rightarrow A}$, and $\mathcal{R}_{A \rightarrow B}$, and $ \mathcal{V}^{i}_{B \rightarrow D}$, we get 
\begin{align}
\left[\Theta(\mathcal{R}_{A \rightarrow B})\right]\left(\mathsf{X}\right) =\sum_{i} p_{i} v^{i}v^{i \dagger} \mathbbm{1}_{D}  \mathrm{tr}\left(\mathsf{X}\right)
     =\sum_{i} p_{i} v^{i}v^{i \dagger} \mathcal{R}_{C \rightarrow D} \left(\mathsf{X}\right).
\end{align}
Defining a map $V : D \rightarrow D$ by $V=\sum_{i} p_{i} v^{i}v^{i \dagger}$, we have the following relation 
  \begin{eqnarray}
      \left[\Theta(\mathcal{R}_{A \rightarrow B})\right](\mathsf{X})= V \mathcal{R}_{C \rightarrow D} (\mathsf{X})= \mathcal{R}_{C \rightarrow D} \left(\mathsf{X}\right) V. \label{equ:defV}
  \end{eqnarray}
Notice that $v^{i}v^{i \dagger}$ are projectors from $D$ onto the range of $v^{i}$, which implies that $(\mathbbm{1}_D-V) \geq 0$. Whenever $\mathsf{X}$ is positive semidefinite, $\mathcal{R}_{C \rightarrow D} (\mathsf{X})$ is positive semidefinite. As $(\mathbbm{1}_D-V)$ and  $\mathcal{R}_{C \rightarrow D} (\mathsf{X})$ commutes, their product is also positive semidefinite, i.e. $(\mathbbm{1}_D - V)\mathcal{R}_{C \rightarrow D} \left(\mathsf{X}\right) \geq 0$, which implies that the map $(\mathbbm{1}_D - V)\mathcal{R}_{C \rightarrow D}$ is positive. It can be shown that $(\mathbbm{1}_D - V)\mathcal{R}_{C \rightarrow D}$ is completely positive as well. Hence, we have the ordering $ \mathbbm{1}_D \mathcal{R}_{C \rightarrow D}\geq V \mathcal{R}_{C \rightarrow D}$, which implies 
\begin{eqnarray}
    \Theta(\mathcal{R}_{A \rightarrow B}) \leq  \mathcal{R}_{C \rightarrow D}. \label{equ:sub_R_preserving}
\end{eqnarray}
\end{proof}

 \begin{remark}\label{rem:r-pres}
     Recall that for an $\mathcal{R}$-subpreserving supermap $\Theta$, we have  $\Theta(\mathcal{R}_{A \rightarrow B}) \leq  \mathcal{R}_{C \rightarrow D}$. If we take $\Theta$ to be a random unitary superchannel, then we get $\Theta(\mathcal{R}_{A \rightarrow B}) =  \mathcal{R}_{C \rightarrow D}$ i.e., $\Theta$ becomes $\mathcal{R}$-preserving superchannel. Using Eq.~(\ref{equ:superchannel_action}), we observe that a necessary condition for a superchannel $\Theta:\mathcal{L}\left(A ,B\right)\rightarrow \mathcal{L}\left(C ,D\right)$ to be an $\mathcal{R}$-subpreserving superchannel is $|B| \leq |D|$. Whereas, a necessary condition for a superchannel $\Theta:\mathcal{L}\left(A ,B\right)\rightarrow \mathcal{L}\left(C ,D\right)$ to be an $\mathcal{R}$-preserving superchannel is $|B| = |D|$.
     
     If we consider a superchannel $\Theta:\mathcal{L}\left(A ,B\right)\rightarrow \mathcal{L}\left(C ,D\right)$ with the following action on an arbitrary quantum channel $\mathcal{N}_{A\to B}$, $\mathrm{\Theta}\left(\mathcal{N}\right) = \mathcal{V}\circ \mathcal{N}\circ \mathcal{P}$, where $\mathcal{V}_{B\to D}$ and $\mathcal{P}_{C\to A}$ are arbitrary quantum channels. Then, $\mathrm{\Theta}$ is $\mathcal{R}$-subpreserving if $\mathcal{V}$ is subunital and $\mathcal{R}$-preserving if $\mathcal{V}$ is unital. We note that there exists unital channels which cannot be expressed as a mixture of unitary operations~\cite{MM09}.
 \end{remark}

\begin{proposition}\label{prop:entropy_gain}
    Let $\Theta:\mathcal{L}\left(A ,B\right)\rightarrow \mathcal{L}\left(C,D\right)$ be an $\mathcal{R}$-subpreserving superchannel, then the entropy of a quantum channel $\mathcal{N}_{A\to B}$ is nondecreasing under the action of $\Theta$, i.e., \begin{equation}
     S\left[\Theta(\mathcal{N}_{A \rightarrow B})\right] \geq S\left[\mathcal{N}_{A \rightarrow B}\right].   
    \end{equation}
\end{proposition}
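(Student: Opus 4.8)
The plan is to reduce the claim to a two-step chain built from results already proved in the excerpt, namely Proposition~\ref{equ:prop1} (data processing for the channel--CP-map relative entropy under quantum superchannels) and Proposition~\ref{prop2} (antitonicity of $\chD*{\cdot}{\cdot}$ in its second argument). Unfolding Definition~\ref{dfn:entropy_channel}, the inequality $S\left[\Theta(\mathcal{N}_{A\to B})\right]\geq S\left[\mathcal{N}_{A\to B}\right]$ is equivalent to
\begin{equation}
    \chD*{\Theta(\mathcal{N}_{A\to B})}{\mathcal{R}_{C\to D}} \leq \chD*{\mathcal{N}_{A\to B}}{\mathcal{R}_{A\to B}},
\end{equation}
so it suffices to upper bound the left-hand side by the right-hand side.

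First I would note that $\Theta(\mathcal{R}_{A\to B})$ is a well-defined completely positive map in $\mathcal{L}(C,D)$: indeed $\Theta$, being a quantum superchannel, is completely CP-preserving, and $\mathcal{R}_{A\to B}$ is CP. Since $\Theta$ is $\mathcal{R}$-subpreserving we have, by Definition~\ref{Rsub_preserving_map}, that $\mathcal{R}_{C\to D}-\Theta(\mathcal{R}_{A\to B})$ is completely positive, i.e.\ $\mathcal{R}_{C\to D}\geq\Theta(\mathcal{R}_{A\to B})$. Applying Proposition~\ref{prop2} with the channel $\Theta(\mathcal{N})$ in the first slot, the CP map $\Theta(\mathcal{R}_{A\to B})$ playing the role of $\mathcal{M}$, and $\mathcal{R}_{C\to D}$ playing the role of $\widetilde{\mathcal{M}}$, I obtain
\begin{equation}
    \chD*{\Theta(\mathcal{N})}{\mathcal{R}_{C\to D}} \leq \chD*{\Theta(\mathcal{N})}{\Theta(\mathcal{R}_{A\to B})}.
\end{equation}
Next I would apply Proposition~\ref{equ:prop1} (after the harmless relabeling of the output systems) to the channel $\mathcal{N}_{A\to B}$, the CP map $\mathcal{R}_{A\to B}$, and the superchannel $\Theta$, which gives $\chD*{\Theta(\mathcal{N})}{\Theta(\mathcal{R}_{A\to B})}\leq\chD*{\mathcal{N}}{\mathcal{R}_{A\to B}}$. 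Chaining the two displayed inequalities yields the required bound, and negating it recovers the proposition.

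I expect no substantive obstacle: the two genuine ingredients (monotonicity under superchannels, and antitonicity in the second argument) are supplied verbatim by Propositions~\ref{equ:prop1} and~\ref{prop2}. The only points requiring mild care are (i) that $\Theta(\mathcal{R}_{A\to B})$ is merely CP, not trace preserving, so the intermediate quantity must be read as the channel--CP-map relative entropy --- which is exactly the object appearing in Proposition~\ref{equ:prop1}; and (ii) that the relative entropies are finite and the suprema over pure input states behave well, which is implicit in the definitions used. Dimensional consistency such as $|B|\leq|D|$ is not needed as a separate hypothesis, since, as recorded in Remark~\ref{rem:r-pres}, it is already forced by $\mathcal{R}$-subpreservation.
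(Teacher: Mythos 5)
Your proof is correct and follows essentially the same route as the paper's: both arguments chain the antitonicity of the channel--CP-map relative entropy in its second argument (coming from the ordering $\Theta(\mathcal{R}_{A\to B})\leq\mathcal{R}_{C\to D}$) with the monotonicity of Proposition~\ref{equ:prop1}. The only cosmetic difference is that you invoke Proposition~\ref{prop2} for the first step, whereas the paper re-derives that antitonicity inline via the $\epsilon$-regularization and the operator monotonicity of the logarithm.
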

    \begin{proof}
    Let us assume that for a superchannel $\Theta:\mathcal{L}\left(A ,B\right)\rightarrow \mathcal{L}\left(C ,D\right)$ the inequality $\Theta(\mathcal{R}_{A \rightarrow B}) \leq  \mathcal{R}_{C \rightarrow D}$ holds, where the maps $\mathcal{R}_{A \rightarrow B}\in \mathcal{L}\left(A,B\right)$ and $\mathcal{R}_{C \rightarrow D}\in \mathcal{L}\left(C,D\right)$ are completely depolarising maps.  Then, for any positive operator $\rho_{RC}\in \mathcal{L}\left({R\otimes C}\right)$, where $R$ is a reference system, we have $( \mathrm{id}_{R} \otimes \Theta(\mathcal{R}_{A \rightarrow B}))\rho_{RC}\leq  (\mathrm{id}_{R} \otimes \mathcal{R}_{C \rightarrow D})\rho_{RC}$. Now, for any $\epsilon >0$, adding $\epsilon \mathbbm{1}$ on both sides so as to make both sides strictly positive operators, we also have the following inequality:
    $\log \{( \mathrm{id}_{R} \otimes \mathrm{\Theta}(\mathcal{R}_{A \rightarrow B}))\rho_{RC} + \epsilon \mathbbm{1}\} \leq  \log \{( \mathrm{id}_{R} \otimes \mathcal{R}_{C \rightarrow D})\rho_{RC} + \epsilon \mathbbm{1}\}$,
which further implies 
\begin{align}
    &-\tr \left[ \left\{( \mathrm{id}_{R} \otimes \mathrm{\Theta}(\mathcal{N}_{A \rightarrow B}))\rho_{RC}\right\}\log \left\{\left( \mathrm{id}_{R} \otimes \mathrm{\Theta}(\mathcal{R}_{A \rightarrow B})\right)\rho_{RC} + \epsilon \mathbbm{1}\right\} \right] \nonumber\\
   & \geq -\tr \left[ \{(\mathrm{id}_{R} \otimes \mathrm{\Theta}(\mathcal{N}_{A \rightarrow B}))\rho_{RC}\}  \log \left\{(\mathrm{id}_{R} \otimes \mathcal{R}_{C \rightarrow D})\rho_{RC} + \epsilon \mathbbm{1}\right\} \right].
\end{align}
Now we add $\mathrm{tr}\left[\left\{\left(\mathrm{id}_{R} \otimes \Theta\left(\mathcal{N}_{A \rightarrow B}\right)\right)\rho_{RA}\right\} \log \left\{\left( \mathrm{id} \otimes \Theta\left(\mathcal{N}_{A \rightarrow B}\right)\right)\rho_{RA}\right\}\right]$ on both sides of the above inequality and take the limit $\epsilon \rightarrow 0$ according to Eq.~(\ref{wilde 1}). After that we take the supremum over $\rho_{RC}$ to arrive at the following inequality.
\begin{align}
 & \sup_{\rho_{RC}} \chD*{\left(\mathrm{id}_{R} \otimes \Theta\left(\mathcal{N}_{A \rightarrow B}\right)\right)\rho_{RC}}{\left(\mathrm{id}_{R} \otimes \Theta\left(\mathcal{R}_{A \rightarrow B}\right)\right)\rho_{RC}} \\
  &\geq  \sup_{\rho_{RC}} \chD*{\left(\mathrm{id}_{R} \otimes \Theta\left(\mathcal{N}_{A \rightarrow B}\right)\right)\rho_{RC}}{\left(\mathrm{id}_{R} \otimes \mathcal{R}_{C \rightarrow D}\right)\rho_{RC}}.\nonumber
\end{align}
Observe that the left-hand side of the above inequality is nothing but the relative entropy between the quantum channel $\Theta\left(\mathcal{N}_{A \rightarrow B}\right)$ and the CP map $\Theta\left(\mathcal{R}_{A \rightarrow B}\right)$, and the right-hand side is the relative entropy between the quantum channel $\Theta\left(\mathcal{N}_{A \rightarrow B}\right)$ and the CP map $\mathcal{R}_{C \rightarrow D}$. Using the monotonicity of relative entropy between a quantum channel and a CP map under the action of quantum superchannels, we get $ \chD*{\mathcal{N}_{A \rightarrow B}}{\mathcal{R}_{A \rightarrow B}} \geq \chD*{\Theta\left(\mathcal{N}_{A \rightarrow B}\right)}{\mathcal{R}_{C \rightarrow D}}$, which implies that
\begin{eqnarray}
    S\left[\Theta\left(\mathcal{N}_{A \rightarrow B}\right)\right] \geq S\left[\mathcal{N}_{A \rightarrow B}\right].
\end{eqnarray}
This completes the proof of the proposition.
\end{proof}

The Proposition \ref{prop:entropy_gain} is an expression for the entropy gain of quantum channels under the action of $\mathcal{R}$-subpreserving superchannels. As a special case, this proposition yields an expression for the entropy gain of quantum channels under the action of random isometry superchannels, generalizing the result of the entropy gain of quantum states under noisy operations. The proof relies on the fact that the random isometry superchannels are sub-completely randomizing map-preserving superchannels~(see Proposition \ref{Rtheta_inequality}). Although it is clear that the entropy of quantum channels is nondecreasing under the action of random isometry superchannels, it does not provide any estimate of the change of entropy in terms of some remainder terms. In the following, we study the entropy gain of quantum channels in detail and estimate a remainder term for entropy gain for a class of quantum channels where we assume that the state that realizes the supremum has full rank reduced state. Note that the tele-covariant channels are an example of such class of channels where the supremum occurs at a maximally entangled state for which the reduced state is proportional to the identity operator. More specifically, we have the following theorem, which is one of our main results.

\begin{theorem}\label{theorem_3}

    Let $\Theta:\mathcal{L}\left(A, A'\right)\rightarrow \mathcal{L}\left(C ,C'\right)$ be a superchannel and $\mathcal{N} \in \mathcal{L}\left(A, A'\right)$ 
    be a quantum channel. Let  $\Psi_{\widetilde{A}A} \in \mathsf{FRank}\left(\widetilde{A}\otimes{A}\right)\cap \mathcal{D}\left(\widetilde{A}\otimes{A}\right)$ and $\Phi_{\widetilde{C}C} \in \mathsf{FRank}\left(\widetilde{C}\otimes{C}\right)\cap \mathcal{D}\left(\widetilde{C}\otimes{C}\right)$ be the pure states that realize the supremum in the entropy functional $S\left[\mathcal{N}\right]$ and $S\left[\Theta\left(\mathcal{N}\right)\right]$, respectively. Then
    \begin{equation}
        S\left[\Theta\left(\mathcal{N}\right)\right] - S\left[\mathcal{N}\right] \geq 
\staD*{\mathsf{C}_{\mathcal{N}}^{\Psi}}{\left(\mathsf{C}_{\mathcal{N}}^{\Psi}\right)_{\alpha}} + \Delta' \left[\Psi_{\widetilde{A}},\Phi_{\widetilde{C}}\right],
    \end{equation}
   where $\mathsf{C}_{\mathcal{N}}^{\Psi}\in\mathcal{D}(\widetilde{A}\otimes A)$ is the Choi state for the channel $\mathcal{N}$, $\left(\mathsf{C}_{\mathcal{N}}^{\Psi}\right)_{\alpha}:= \frac{1}{\alpha^{\alpha}} \left(\mathfrak{T}^{*} \circ \mathfrak{T}\left( \mathsf{C}_{\mathcal{N}}^{\Psi} \right)\right)^{\alpha}$
   and $\alpha:= \norm{\mathfrak{T}^{*}(\mathbbm{1})}_{\infty}$ with $\mathfrak{T}$ being the representing map of $\Theta$ and $\mathfrak{T}^{*}$ being its adjoint map, defined respectively in Eqs.~\eqref{T_theta_phi}~and~\eqref{equ:theta_caps}. Here,  $\Delta' \left[\Psi_{\widetilde{A}},\Phi_{\widetilde{C}}\right]:=S\left(\Psi_{\widetilde{A}}\right) - S\left(\Phi_{\widetilde{{C}}}\right)$, $\widetilde{A}\simeq A$, and $\widetilde{C}\simeq C$.
\end{theorem}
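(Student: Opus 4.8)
The plan is to push the statement down to the level of Choi states and then apply a Buscemi--Das--Wilde--type refinement of the form~\eqref{eq:bdw_main} to the representing map of $\Theta$. First I would rewrite the two channel entropies. By Definition~\ref{dfn:entropy_channel} we have $S[\mathcal{N}]=-\chD*{\mathcal{N}}{\mathcal{R}_{A\to A'}}$, and since $\Psi_{\widetilde{A}A}$ attains the supremum in $\chD*{\mathcal{N}}{\mathcal{R}_{A\to A'}}$,
\[
  S[\mathcal{N}] = -\staD*{(\mathrm{id}_{\widetilde{A}}\otimes\mathcal{N})(\Psi_{\widetilde{A}A})}{(\mathrm{id}_{\widetilde{A}}\otimes\mathcal{R}_{A\to A'})(\Psi_{\widetilde{A}A})} .
\]
Here $(\mathrm{id}_{\widetilde{A}}\otimes\mathcal{N})(\Psi_{\widetilde{A}A})=\mathsf{C}^{\Psi}_{\mathcal{N}}$, and since $\mathcal{R}_{A\to A'}(\cdot)=\tr(\cdot)\mathbbm{1}_{A'}$ the second argument equals $\Psi_{\widetilde{A}}\otimes\mathbbm{1}_{A'}$. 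Expanding the relative entropy, using $\log\mathbbm{1}_{A'}=0$, trace-preservation of $\mathcal{N}$ (so that $(\mathsf{C}^{\Psi}_{\mathcal{N}})_{\widetilde{A}}=\Psi_{\widetilde{A}}$), and the full-rank hypothesis on $\Psi_{\widetilde{A}}$ to keep everything finite, I obtain $S[\mathcal{N}]=S(\mathsf{C}^{\Psi}_{\mathcal{N}})-S(\Psi_{\widetilde{A}})$, and in exactly the same way $S[\Theta(\mathcal{N})]=S(\mathsf{C}^{\Phi}_{\Theta(\mathcal{N})})-S(\Phi_{\widetilde{C}})$ from $\Phi_{\widetilde{C}C}$.

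Subtracting, the marginal-entropy difference $S(\Psi_{\widetilde{A}})-S(\Phi_{\widetilde{C}})=\Delta'[\Psi_{\widetilde{A}},\Phi_{\widetilde{C}}]$ peels off, and the theorem becomes equivalent to $S(\mathsf{C}^{\Phi}_{\Theta(\mathcal{N})})-S(\mathsf{C}^{\Psi}_{\mathcal{N}})\geq\staD*{\mathsf{C}^{\Psi}_{\mathcal{N}}}{(\mathsf{C}^{\Psi}_{\mathcal{N}})_{\alpha}}$. By Remark~\ref{rem:choi-state-choi-operator-relation_1}, $\mathsf{C}^{\Phi}_{\Theta(\mathcal{N})}=\mathfrak{T}(\mathsf{C}^{\Psi}_{\mathcal{N}})$, where $\mathfrak{T}=T_{\Phi}\circ\mathsf{T}\circ T_{\Psi}^{-1}$ (cf.~\eqref{new_rep_3}) is the representing map of $\Theta$ at the level of these Choi states; $\mathfrak{T}$ is completely positive because $\Theta$ is completely CP-preserving (Lemma~\ref{lemma_3}), and $\mathfrak{T}(\mathsf{C}^{\Psi}_{\mathcal{N}})=\mathsf{C}^{\Phi}_{\Theta(\mathcal{N})}$ is a density operator because $\Theta(\mathcal{N})$ is again a quantum channel. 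So the reduced inequality is essentially~\eqref{eq:bdw_main} applied to the CP map $\mathfrak{T}$ and the state $\mathsf{C}^{\Psi}_{\mathcal{N}}$, the only wrinkle being that $\mathfrak{T}$ need not be trace-preserving and hence $\mathfrak{T}^{*}$ need not be unital; the scalar $\alpha=\norm{\mathfrak{T}^{*}(\mathbbm{1})}_{\infty}$ together with the $\alpha$-th power in $(\mathsf{C}^{\Psi}_{\mathcal{N}})_{\alpha}$ are precisely what compensates for this.

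It then remains to prove the generalized form of~\eqref{eq:bdw_main}: for a density operator $\rho$ and a CP map $\mathfrak{T}$ with $\mathfrak{T}(\rho)$ a density operator, $S(\mathfrak{T}(\rho))-S(\rho)\geq\staD*{\rho}{\tfrac{1}{\alpha^{\alpha}}(\mathfrak{T}^{*}\!\circ\mathfrak{T}(\rho))^{\alpha}}$. Setting $\omega:=\mathfrak{T}(\rho)$ and $\Phi:=\tfrac{1}{\alpha}\mathfrak{T}^{*}$ --- which is CP, satisfies $\Phi(\mathbbm{1})\leq\mathbbm{1}$, and has $\norm{\Phi(\mathbbm{1})}_{\infty}=1$ --- regularizing all logarithms by an $\epsilon\mathbbm{1}$ shift as in~\eqref{wilde 1}, expanding the right-hand side (note $\tfrac{1}{\alpha^{\alpha}}(\mathfrak{T}^{*}(\omega))^{\alpha}=\Phi(\omega)^{\alpha}$), and using the adjoint identity $\tr[\omega\log\omega]=\tr[\rho\,\mathfrak{T}^{*}(\log\omega)]$ coming from Lemma~\ref{lemma_4}/\eqref{equ:theta_caps} to rewrite $S(\omega)$, the inequality collapses to
\[
  \tr\!\big[\rho\,(\log\Phi(\omega)-\Phi(\log\omega))\big]\geq 0 .
\]
I would prove this last inequality by extending $\Phi$ to a genuinely unital CP map $\widehat{\Phi}$ on $\mathcal{H}\oplus\mathbb{C}$ (adjoining the deficit $\mathbbm{1}-\Phi(\mathbbm{1})\geq 0$ as a $1\times 1$ block), applying the operator Jensen/Choi--Davis inequality $\widehat{\Phi}(\log X)\leq\log\widehat{\Phi}(X)$ for the operator-concave function $\log$, and then discharging the extra block using the normalization $\tr[\rho\,\Phi(\mathbbm{1})]=\tr[\omega]/\alpha=1/\alpha$ forced by $\omega$ being a state (which in particular forces $\alpha\geq 1$). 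The main obstacle is exactly this non-unital operator-Jensen step: keeping $\tr[\rho(\log\Phi(\omega)-\Phi(\log\omega))]$ nonnegative in the presence of the deficit $\mathbbm{1}-\Phi(\mathbbm{1})$ is what the normalization by $\alpha$ and the $\alpha$-th power in $(\mathsf{C}^{\Psi}_{\mathcal{N}})_{\alpha}$ are designed to absorb, and it is where the full-rank-marginal hypothesis on $\Psi_{\widetilde{A}A}$ and $\Phi_{\widetilde{C}C}$ enters --- it guarantees that $\mathfrak{T}$ and $\mathfrak{T}^{*}$ are well defined on the relevant Choi states and that all the relative entropies above are finite.
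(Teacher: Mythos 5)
Your reduction to the level of Choi states is exactly the paper's route (Appendix~\ref{sec_proof_of_channel_entropy_gain}): both arguments peel off $\Delta'[\Psi_{\widetilde A},\Phi_{\widetilde C}]$ and reduce the theorem to the single claim that, for a CP map $\mathfrak{T}$ with $\mathfrak{T}(\rho)$ a state, $S(\mathfrak{T}(\rho))-S(\rho)\geq \staD*{\rho}{\tfrac{1}{\alpha^{\alpha}}(\mathfrak{T}^{*}\circ\mathfrak{T}(\rho))^{\alpha}}$ with $\alpha=\norm{\mathfrak{T}^{*}(\mathbbm{1})}_{\infty}$; the paper packages this as Theorem~\ref{result_entropy_gain_1}. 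Your further algebraic reduction of that claim to $\tr[\rho(\log\Phi(\omega)-\Phi(\log\omega))]\geq 0$, with $\Phi=\tfrac{1}{\alpha}\mathfrak{T}^{*}$ subunital and $\omega=\mathfrak{T}(\rho)$, is correct and is in fact an exact reformulation, not merely a sufficient condition.

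The gap is precisely in the step you flag as the main obstacle, and it cannot be closed as proposed. The unital dilation $\widehat\Phi(X\oplus t)=\Phi(X)+t(\mathbbm{1}-\Phi(\mathbbm{1}))$ forces you to evaluate $\log$ at the adjoined scalar $t$; Choi--Davis--Jensen then gives $\Phi(\log X)+(\log t)(\mathbbm{1}-\Phi(\mathbbm{1}))\leq\log\bigl(\Phi(X)+t(\mathbbm{1}-\Phi(\mathbbm{1}))\bigr)$, and no admissible $t$ discharges the deficit: $t\to 0^{+}$ sends the left-hand side to $-\infty$ on the support of $\mathbbm{1}-\Phi(\mathbbm{1})$ (vacuous there), while $t=1$ yields only the strictly weaker $\Phi(\log X)\leq\log(\Phi(X)+\mathbbm{1}-\Phi(\mathbbm{1}))$. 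This is not a repairable technicality, because the target inequality is false for general subunital CP maps. Take $\mathfrak{T}(X)=KXK^{\dagger}$ on a qubit with $K=\operatorname{diag}(\sqrt{2},1/\sqrt{2})$ and $\rho=\operatorname{diag}(1/3,2/3)$, so that $\omega=\mathfrak{T}(\rho)=\operatorname{diag}(2/3,1/3)$ is a state and $\alpha=2$. Then $\Phi(\omega)=\operatorname{diag}(2/3,1/12)$ and $\Phi(\log\omega)=\operatorname{diag}(\log\tfrac{2}{3},\tfrac{1}{4}\log\tfrac{1}{3})$, whence $\tr[\rho(\log\Phi(\omega)-\Phi(\log\omega))]=\tfrac{2}{3}\bigl(\log\tfrac{1}{12}-\tfrac{1}{4}\log\tfrac{1}{3}\bigr)\approx-1.47<0$; equivalently $S(\mathfrak{T}(\rho))-S(\rho)=0$ while $\staD*{\rho}{\tfrac{1}{4}(\mathfrak{T}^{*}\circ\mathfrak{T}(\rho))^{2}}\approx 2.95$. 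The normalization $\tr[\rho\,\Phi(\mathbbm{1})]=1/\alpha$ that you invoke is satisfied by this example, so it does not rescue the argument. Note also that the paper's own proof of Theorem~\ref{result_entropy_gain_1} leans on the same claimed subunital inequality $\mathcal{F}(\log X)\leq\log\mathcal{F}(X)$ (Eq.~\eqref{entropy_gain_subunital}), which is valid for unital positive maps but already fails for $\mathcal{F}(x)=x/2$, $X=1/2$; you have therefore located exactly where the content of the theorem sits, but neither the dilation nor a citation to that inequality closes it, and any complete proof would have to exploit structure of the representing maps $\mathfrak{T}$ beyond mere complete positivity.
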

 We note that the term $\Delta' \left[\Psi_{\widetilde{A}},\Phi_{\widetilde{C}}\right]$ is the difference between the entanglement entropy of the states $\Psi_{\widetilde{A}A}$ and $\Phi_{\widetilde{C}C}$. The proof of the theorem is presented in Appendix~\ref{sec_proof_of_channel_entropy_gain}. In the theorem above, we can lower bound $\Delta' \left[\Psi_{\widetilde{A}},\Phi_{\widetilde{C}}\right]$ more succinctly if it can be proven that both the marginal states $\Psi_{\widetilde{A}}$ and $\Phi_{\widetilde{C}}$ are connected by a  positive  map. In Appendix~\ref{sec_entropy_gain_under_posive_maps}, we prove several entropy inequalities for the change in state under CP maps, unital CP maps, and positive maps. In the following we demonstrate that there always exist a CP map connecting $\Phi_{\widetilde{C}}$ to $\Psi_{\widetilde{A}}$. Let us define a map $\mathcal{F}: \mathcal{L}\left(\widetilde{C}\right) \rightarrow \mathcal{L}\left(\widetilde{A}\right)$ by $\mathcal{F} (\mathsf{Y})=\tr(\mathsf{Y}) \Psi_{\widetilde{A}} $. This map is CP and it satisfies $\mathcal{F}(\Phi_{\widetilde{C}})=\Psi_{\widetilde{A}}$. In the case when $\widetilde{A}=\widetilde{C}$, we have another example of a CP map that maps $\Phi_{\widetilde{C}}$ to $\Psi_{\widetilde{A}}$. As in our context we have $\Psi_{\widetilde{A}}> 0 $ and $\Phi_{\widetilde{C}}>0$, we construct a completely positive map $\mathcal{F}: \mathcal{L}\left(\widetilde{C}\right) \rightarrow \mathcal{L}\left(\widetilde{A}\right)$, as follows 
  \begin{align}
      \mathcal{F}(Y)= \left(\Psi_{\widetilde{A}}^{1/2}\Phi_{\widetilde{C}}^{-1/2} \right) Y \left(\Psi_{\widetilde{A}}^{1/2}\Phi_{\widetilde{C}}^{-1/2}\right)^\dag~~\forall Y\in \mathcal{L}\left(\widetilde{C}\right),
  \end{align}
  which satisfies $\mathcal{F}\left(\Phi_{\widetilde{C}}\right)=\Psi_{\widetilde{A}}$. Since a completely positive map is also a positive map, we can use Theorem \ref{result_entropy_gain_1} of Appendix \ref{sec_entropy_gain_under_posive_maps} to arrive at the following inequality, $\Delta' \left[\Psi_{\widetilde{A}},\Phi_{\widetilde{C}}\right]\geq \staD*{\Phi_{\widetilde{C}}}{\left(\Phi_{\widetilde{C}}\right)_{\gamma}}$, where $\gamma:= \norm{\mathcal{F}^{*}(\mathbbm{1}_{\widetilde{A}})}_{\infty}$ and  $\left(\Phi_{\widetilde{C}}\right)_{\gamma}:=\left(\frac{1}{\gamma}\right)^{\gamma}\left(\mathcal{F}^{*}\circ \mathcal{F}\left(\Phi_{\widetilde{C}}\right)\right)^\gamma$. Now, we can modify the lower bound on channel entropy gain as
  \begin{equation}
       S\left[\Theta\left(\mathcal{N}\right)\right] - S\left[\mathcal{N}\right] \geq \staD*{\mathsf{C}_{\mathcal{N}}^{\Psi}}{\left(\mathsf{C}_{\mathcal{N}}^{\Psi}\right)_{\alpha}} + \staD*{\Phi_{\widetilde{C}}}{\left(\Phi_{\widetilde{C}}\right)_{\gamma}}. \label{equ:channel_enropy_gain_bound}
  \end{equation}

\begin{remark}
    As there may exist more than one positive map $\mathcal{F}$ that maps $\Phi_{\widetilde{C}}$ to $\Psi_{\widetilde{A}}$, we can further tighten the bound on $\Delta' \left[\Psi_{\widetilde{A}},\Phi_{\widetilde{C}}\right]$ by taking supremum over all such $\mathcal{F}$ as follows:
    \begin{equation}
       S\left[\Theta\left(\mathcal{N}\right)\right] - S\left[\mathcal{N}\right] \geq \staD*{\mathsf{C}_{\mathcal{N}}^{\Psi}}{\left(\mathsf{C}_{\mathcal{N}}^{\Psi}\right)_{\alpha}} + \sup_{\substack{\mathcal{F}:\\ \mathcal{F}(\Phi_{\widetilde{C}})=\Psi_{\widetilde{A}}}}\staD*{\Phi_{\widetilde{C}}}{\left(\Phi_{\widetilde{C}}\right)_{\gamma}}. \nonumber
  \end{equation}
\end{remark}  
 
Deriving a non-negative lower bound on the entropy difference under the action of a superchannel as above holds significant implications for identifying a set of superchannels that are channel entropy nondecreasing. A trivial example of a channel entropy increasing superchannel is a superchannel that maps any input channel to a completely depolarising channel. If we consider superchannel $\Theta$ and channel $\mathcal{N}$ such that both the terms in the RHS of Eq.~\eqref{equ:channel_enropy_gain_bound} are greater than or equal to $0$, i.e., $ \staD*{\mathsf{C}_{\mathcal{N}}^{\Psi}}{\left(\mathsf{C}_{\mathcal{N}}^{\Psi}\right)_{\alpha}}\geq 0$, and $  \staD*{\Phi_{\widetilde{C}}}{\left(\Phi_{\widetilde{C}}\right)_{\gamma}}\geq 0$, we can easily conclude that superchannel $\Theta$ is entropy nondecreasing superchannel. As an example, in the following, we compute the lower bounds in Eq.~\eqref{equ:channel_enropy_gain_bound} for superchannels that map tele-covariant channels to tele-covariant channels.

\begin{example*}
For a tele-covariant quantum channel (see Definition~\ref{dfn:telecov}), the supremum in the entropy functional occurs at a maximally entangled state~\cite{Gour_2019}. The entropy of a tele-covariant channel $\mathcal{N}_{A\to B}$ is then given as
\begin{align}
S\left[\mathcal{N}\right]&=S\left(\mathrm{id}_{\widetilde{A}} \otimes \mathcal{N}\left(|A|^{-1}\Psi^+_{\widetilde{A}A}\right)\right)-S\left(|A|^{-1}\mathbbm{1}_A\right)\nonumber\\
    &=S\left(\mathsf{C}_{\mathcal{N}}\right)- \log|A|, \label{equ:pappu1}
\end{align}
where $|A|^{-1}\Psi^+_{\widetilde{A}A}$ is a maximally entangled state and $\mathsf{C}_{\mathcal{N}}$ is the Choi state of the channel $\mathcal{N}$. Let $\mathrm{cov}\left(\mathcal{L}(A ,B)\right)$  and $\mathrm{cov} \left(\mathcal{L}(C,D)\right)$ denote the set of all tele-covariant channels in $\mathcal{L}(A,B)$ and $\mathcal{L}\left(C,D\right)$, respectively. Now, consider a tele-covariance preserving superchannel $\Theta:\mathrm{cov}\left(\mathcal{L}(A,B)\right) \rightarrow \mathrm{cov} \left(\mathcal{L}(C,D)\right)$, i.e., a superchannel that maps a tele-covariant channel to a tele-covariant channel. The entropy of the tele-covariant channel $\Theta(\mathcal{N})$ is the following,
\begin{align}
S\left[\Theta\left(\mathcal{N}\right)\right]&=S\left(\left(\mathrm{id}_{\widetilde{C}} \otimes \Theta\left(\mathcal{N}\right)\right)\left(|C|^{-1}\Psi^+_{\widetilde{C}C}\right)\right)-S\left(|C|^{-1}\mathbbm{1}_C\right)\nonumber\\
\end{align}
where $|C|^{-1}\Psi^+_{\widetilde{C}C}$ is a maximally entangled state. Using Theorem \ref{theorem_3}, we can write the gain in the channel entropy as
\begin{align}
    S\left[\Theta\left(\mathcal{N}\right)\right] - S\left[\mathcal{N}\right] \geq \staD*{\mathsf{C}_{\mathcal{N}}}{\left(\mathsf{C}_{\mathcal{N}}\right)_{\alpha}} + \log{\frac{|A|}{|C|}}.
\end{align}
Observe that $\Delta' \left[|A|^{-1}\Psi^+_{\widetilde{A}},|C|^{-1}\Psi^+_{\widetilde{C}}\right]$ is already a constant in this case and equal to $\log\widetilde{d}$, where $\widetilde{d}= |A|/|C|$. Further, we have $\left(\mathsf{C}_{\mathcal{N}}\right)_{\alpha} =\left(\frac{\widetilde{d}}{\alpha |C|}\right)^{\alpha} \left(\mathsf{C}_{\mathrm{\Theta}^{*} \circ \mathrm{\Theta}(\mathcal{N})}\right)^\alpha $, where $\alpha=\widetilde{d} \norm{\mathsf{C}_{\Theta^{*}(\mathcal{R})}}_{\infty}$. Notice that if superchannel $\Theta$ is such that $|A| \geq |C| $, then we always have $\log \widetilde{d} \geq 0$.  Note that if $\tr\left( \left(\mathsf{C}_{\mathcal{N}}\right)_{\alpha}\right)\leq 1$, then we have $\staD*{\mathsf{C}_{\mathcal{N}}}{\left(\mathsf{C}_{\mathcal{N}}\right)_{\alpha}}\geq 0$.
\end{example*}

\subsection{Entropy of continuous-variable quantum channels}
In this section, we show that our definition of the entropy of discrete-variable quantum channels is easily extended to continuous-variable quantum channels, which is analogous to the definition of the entropy for quantum states in terms of relative entropy.

Let $A$ be a finite- or infinite-dimensional, separable Hilbert space and $B$ be an infinite-dimensional, separable Hilbert space. Let $\widehat{H}_B$ be a Hermitian operator representing the Hamiltonian associated with the quantum system $B$; the lowest eigenvalue of $\widehat{H}_B$ is bounded from below by $0$. Then, a thermal operator $\widehat{\tau}_{B}^{\beta}$ for $B$ with an inverse temperature $\beta\in[0,\infty]$ is given as
\begin{equation}\label{eq:thermalstate}
   \widehat{\tau}_B^{\beta}=\operatorname{e}^{-\beta \widehat{H}_B}.
\end{equation}
A thermal operator $ \widehat{\tau}_B^{\beta}$ is bounded and trace-class for all $\beta\in (0,\infty)$. We note that for $\dim(B)=\infty$ and as $\beta\to 0^+$, we get 
\begin{equation}
    \lim_{\beta\to 0^+} \widehat{\tau}_B^{\beta}=\mathbbm{1}_B.
\end{equation}
Let us consider a completely thermalizing map $\mathcal{R}^{\beta}_{A\to B}:\mathcal{L}({A})_+\to\mathcal{L}({B})_+$, whose action on all trace-class operator $\mathsf{X}_A \in\mathcal{L}({A})_+$ is given by
\begin{equation}
 \mathcal{R}^{\beta}_{A\to B} (\mathsf{X}_A)=\tr[\mathsf{X}_A] \widehat{\tau}_B^{\beta}.
\end{equation}

 We can now define $\beta$-parameterized generalized entropy functional $\mathbf{S}_{\beta}$ of an arbitrary quantum channel $\mathcal{N}_{A\to B}$, where $\dim(B)=\infty$, for $\beta\in(0,\infty)$ as
\begin{equation}
    \mathbf{S}_{\beta}[\mathcal{N}]:= - \BchD*{\mathcal{N}_{A\to B}}{\mathcal{R}^{\beta}_{A\to B}} =- \sup_{\rho_{RA}\in \mathcal{D}(R\otimes A)} \BstD*{\id_R\otimes\mathcal{N}_{A\to B}(\rho_{RA})}{\id_R\otimes\mathcal{R}^\beta_{ A\to B}(\rho_{RA})},
\end{equation}
where $R$ is of arbitrary dimension and $\mathbf{D}(\cdot\Vert\cdot)$ is a generalized divergence functional. It easily follows that 
\begin{align*}
    \BstD*{\id_R\otimes\mathcal{N}_{A\to B}(\rho_{RA})}{\id_R\otimes\mathcal{R}^\beta_{ A\to B}(\rho_{RA})} =  \BstD*{\mathcal{N}_{A\to B}(\rho_{RA})}{\rho_R\otimes \widehat{\tau}^\beta_B}.
\end{align*}

\begin{dfn}\label{dfn:gen_ent_channel}
    The generalized entropy functional $\mathbf{S}[\mathcal{N}]$ of an arbitrary quantum channel $\mathcal{N}_{A\to B}$, where $|B|=\infty$, is defined as
\begin{equation}
  \mathbf{S}[\mathcal{N}]:= \lim_{\beta\to 0^+}  \mathbf{S}_{\beta}[\mathcal{N}] = - \lim_{\beta\to 0^+} \BchD*{\mathcal{N}}{\mathcal{R}^{\beta}}. 
\end{equation}
Similarly, if the input states to the channel are accessible only from a set $\mathsf{Access}(R\otimes A)\subset \mathcal{D}(R\otimes A)$ of states because of some physical constraints, then we define the resource-constrained generalized entropy functional $\mathbf{S}_{\mathsf{Access}}[\mathcal{N}]$ of an arbitrary quantum channel $\mathcal{N}_{A\to B}$, where inputs to $\mathbf{D}[\mathcal{N}\Vert \mathcal{R}^{\beta}]$ are only from $\mathsf{Access}(R\otimes A)$ (cf.~\cite{Win16,Gour_entropy}), as
\begin{equation}
  \mathbf{S}_{\mathsf{Access}}[\mathcal{N}]:= \lim_{\beta\to 0^+}  \mathbf{S}_{_{\mathsf{Access}},\beta}[\mathcal{N}] = - \lim_{\beta\to 0^+} \mathbf{D}_{\mathsf{Access}}[\mathcal{N}~\Vert~ \mathcal{R}^{\beta}], 
\end{equation}
where
\begin{equation}
    \mathbf{D}_{\mathsf{Access}}[\mathcal{N}_{A\to B}~\Vert~ \mathcal{R}^{\beta}_{A\to B}] =\sup_{\rho_{RA}\in\mathsf{Access}(R\otimes A)} \BstD*{\id_R\otimes\mathcal{N}_{A\to B}(\rho_{RA})}{\id_R\otimes\mathcal{R}^\beta_{ A\to B}(\rho_{RA})}.
\end{equation}
\end{dfn}

\section{Sufficiency for the recoverability of quantum channels} \label{Sufficiency_of_superchannels}
 The data processing inequality states that the relative entropy between two states  $\rho$ and $  \sigma$ is nonincreasing under CPTP maps, i.e., $ \staD*{\rho}{\sigma} \geq 
 \staD*{\mathcal{N} \left(\rho\right)}{\mathcal{N} \left(\sigma \right)},$
where $\mathcal{N}\in \mathcal{L}(A,B)$ is some quantum channel. The necessary and sufficient condition for the saturation of this inequality is related to the sufficiency~(or reversibility) of the quantum channel $\mathcal{N}$~\cite{Petz:1986,Mosonyi_2004,Petz:1988}. A quantum channel is called sufficient for a set of states \(\{\rho_{i}\}_{i}\) if there exists another CPTP map $\mathcal{P}$ called the recovery map, such that for each $\rho_{i}$, we have  $\left(\mathcal{P} \circ \mathcal{N}\right)(\rho_{i}) = \rho_{i}$. Now, it can be immediately checked that a quantum channel $\mathcal{N}$ preserves the relative entropy between $\rho$ and $\sigma$ if there exists a recovery map  $\mathcal{P}$ for the set $\{\rho, \sigma\}$. Conversely, if a quantum channel $\mathcal{N}$ does not change the relative entropy between $\rho$ and $\sigma$, then its action on the set $\{\rho, \sigma\}$ is reversible and the recovery map, called the Petz recovery map $\mathcal{P}_{\sigma,\mathcal{N}}$, is known in terms of $\sigma$ and $\mathcal{N}$~\cite{Petz:1986,Petz:1988}. The Petz recovery map $\mathcal{P}_{\sigma,\mathcal{N}}:\mathcal{L}(B) \rightarrow \mathcal{L}(A)$ is a linear, completely positive, and trace nonincreasing map defined as

\begin{align}
    \mathcal{P}_{\sigma,\mathcal{N}}\left(\mathsf{X}\right) \equiv \sigma^{\frac{1}{2}}\mathcal{N}^{*}\left(\left(\mathcal{N}\left(\sigma\right)\right)^{-\frac{1}{2}} \mathsf{X}\left(\mathcal{N}\left(\sigma\right)\right)^{-\frac{1}{2}}\right)\sigma^{\frac{1}{2}},
\end{align}
where $\mathcal{N}^{*}$ is the adjoint map of $\mathcal{N}$. In recent years, some refinements of the data processing inequality have also been proposed in terms of a recovery map~\cite{Wilde_2015,Junge_2018,berta2015,Sutter_2016}. In ~\cite{Junge_2018}, it was shown that there exists a universal recovery map (channel) $\mathcal{P}^{\rm R}$ such that the monotonicity of the quantum relative entropy between two arbitrary quantum states $\rho$ and $\sigma$ under the action of an arbitrary quantum channel $\mathcal{N}\in \mathcal{L}(A,B)$ can be refined as

\begin{align}
  \staD*{\rho}{\sigma} - \staD{\mathcal{N} \left(\rho\right)}{\mathcal{N} \left(\sigma \right)}\geq-\log F\left( \rho, \left(\mathcal{P}^{\rm R} \circ \mathcal{N}\right)(\rho) \right),
\end{align}
where $F(\rho, \sigma):=\norm{\sqrt{\rho}\sqrt{\sigma}}^2_1$ is the fidelity between $\rho,\sigma\in\mathcal{D}(A)$. Here, the recovery channel $\mathcal{P}^{\rm R}$ is given by \cite{Junge_2018} (see also \cite{wilde2016})
\begin{equation}
    \mathcal{P}^{\rm R}\left(\mathsf{X}\right)= \int dt \alpha(t) \mathcal{P}^{t/2}_{\sigma, \mathcal{N}}(\mathsf{X})+\tr\left(\left(\mathbbm{1}_{A}-\Pi_{\mathcal{N}(\sigma)}\right)\mathsf{X}\right)\xi, \label{junge_recovery_channel}
\end{equation}
where $\xi \in \mathcal{D}\left(A\right)$, $\mathsf{X} \in \mathcal{L}(B)$, $\Pi_{\mathcal{N}(\sigma)}$ is the projection onto the support of $\mathcal{N}(\sigma)$, and $\alpha_{t} \equiv \pi\left(\cosh(\pi t)+1\right)^{-1}/2$ can be interpreted as probability density function on $t \in \mathbb{R}$. And $ \mathcal{P}^{t}_{\sigma,\mathcal{N}} \equiv \mathcal{U}_{\sigma, -t} \circ \mathcal{P}_{\sigma, \mathcal{N}}\circ \mathcal{U}_{\mathcal{N}(\sigma),t}$, where $\mathcal{U}_{\sigma,t}$ is a partial isometric map defined as  $\mathcal{U}_{\sigma,t}(\mathsf{X}) = \sigma^{it}\mathsf{X} \sigma^{-it}$. The fidelity between any two arbitrary states $\rho, \sigma$ is always bounded as $F(\rho,\sigma)\in[0,1]$, with $F(\rho,\sigma)=1$ if and only if $\rho=\sigma$ and $F(\rho,\sigma)=0$ if and only if $\rho\perp \sigma$. The term $\tr\left(\left(\mathbbm{1}_{A}-\Pi_{\mathcal{N}(\sigma)}\right)\mathsf{X}\right)\xi$ present in the above equation is there to ensure $\mathcal{P}^{\rm R}$ is trace-preserving. It can be easily checked that the recovery channel $\mathcal{P}^{\rm R}$ perfectly recovers $\sigma$ from $\mathcal{N}(\sigma)$, i.e., $\left(\mathcal{P}^{\rm R}\circ \mathcal{N}\right)(\sigma) = \sigma$, and also if $\rho$ and $(\mathcal{P}^{\rm R}\circ\mathcal{N})(\rho)$ are close then $-\log F\left(\rho, (\mathcal{P}^{\rm R}\circ \mathcal{N})(\rho)\right)$ will be a small number. So, the above refinement says that it is possible to perform a recovery operation in which one can perfectly recover one state and approximately recover the other states if the change in the quantum relative entropy between the two quantum states before and after the action of the quantum channel is small~\cite{Wilde_2015,Junge_2018}.

 Motivated by these, one can ask the following question: is it possible to reverse the action of a quantum superchannel $\Theta$ on some given pair of channels $\{\mathcal{N}, \mathcal{M}\}$? To answer this question, we first define the notion of the sufficiency of superchannels, in the similar spirit of the sufficiency of the quantum channels, as follows. A superchannel $\Theta$ is called sufficient for a set of channels \(\{\mathcal{N}_{i}\}_{i}\) if there exists another superchannel $\Theta^{\rm R}$, such that for each $\mathcal{N}_{i}$, we have  $\left(\Theta^{\rm R} \circ \Theta\right)\left(\mathcal{N}_{i}\right) = \mathcal{N}_{i}$. Note that the relative entropy between two quantum channels $\mathcal{N}, \mathcal{M} \in \mathcal{L}\left(A, A'\right)$ is also nonincreasing under the action of a quantum superchannel $\Theta:\mathcal{L}\left(A,A'\right)\rightarrow \mathcal{L}\left(C,C'\right)$, i.e., 
  $\chD*{\Theta\left(\mathcal{N}\right) }{\Theta \left(\mathcal{M}\right)} \leq \chD*{\mathcal{N}}{ \mathcal{M}}$ (see Proposition \ref{equ:prop1}, and also \cite{Cooney_2016,Felix_2018,Yuan}). If the action of the superchannel $\Theta$ on the set of channels $\{\mathcal{N}, \mathcal{M}\}$ can be reversed, i.e., if there exists another superchannel $\Theta^{\rm R}:\mathcal{L}\left(C,C'\right)\rightarrow \mathcal{L}\left(A,A'\right)$  such that
$\left(\Theta^{\rm R}\circ \Theta\right)\left(\mathcal{N}\right) = \mathcal{N}$, and $\left(\Theta^{\rm R}\circ \Theta\right)\left(\mathcal{M}\right) = \mathcal{M}$, then one can immediately show that the superchannel $\Theta$ does not change the relative entropy between channels $\mathcal{N}$ and $\mathcal{M}$. Here we prove that the converse is also true, i.e., if $\chD*{\Theta\left(\mathcal{N}\right) }{\Theta \left(\mathcal{M}\right)} = \chD*{\mathcal{N}}{ \mathcal{M}}$, then there exists a supermap $\Theta^{\rm R}:\mathcal{L}\left(C,C'\right)\rightarrow \mathcal{L}\left(A,A'\right)$ such that
$\left(\Theta^{\rm R}\circ \Theta\right)\left(\mathcal{N}\right) = \mathcal{N}$, and $\left(\Theta^{\rm R}\circ \Theta\right)\left(\mathcal{M}\right) = \mathcal{M}$, for channels and superchannels that obey certain symmetry. First, we prove a refinement of the data processing inequality for the relative entropy of quantum channels as the following theorem.

\begin{theorem}\label{theorem_4}
    Let $\Theta:\mathcal{L}\left(A,A'\right)\rightarrow \mathcal{L}\left(C,C'\right)$ be a superchannel in the class $\mathsf{SCT}$ (see Definition \ref{dfn_sct}) and  $\mathcal{N}_{A \rightarrow A'} $, and $\mathcal{M}_{A \rightarrow A'}$ be two quantum channels. Let  $\Psi_{\widetilde{A} A}\in \mathsf{FRank}\left(\widetilde{A}\otimes{A}\right)\cap \mathcal{D}\left(\widetilde{A}\otimes{A}\right)$ and $\Phi_{\widetilde{C} C} \in \mathsf{FRank}\left(\widetilde{C}\otimes{C}\right)\cap \mathcal{D}\left(\widetilde{C}\otimes{C}\right)$ be the pure states that realize the supremum in the relative entropy functional $\chD*{\mathcal{N}}{ \mathcal{M}}$ and $\chD*{\Theta\left(\mathcal{N}\right) }{\Theta \left(\mathcal{M}\right)}$, respectively. Then we have
    \begin{equation}
        \chD*{\mathcal{N}}{ \mathcal{M}} - \chD*{\Theta\left(\mathcal{N}\right) }{\Theta \left(\mathcal{M}\right)} \geq - \log F \left(\mathsf{C}^{\Psi}_{\mathcal{N}},\left(\mathcal{P}^{\rm R}\circ \mathfrak{T}'\right)\left(\mathsf{C}^{\Psi}_{\mathcal{N}}\right)\right),
    \end{equation}
   where $F(\rho, \sigma):=\norm{\sqrt{\rho}\sqrt{\sigma}}^2_1$ is the fidelity of states $\rho$ and $\sigma$,   $\mathsf{C}_{\mathcal{N}}^{\Psi}$ is the Choi state for the channel $\mathcal{N}_{A \rightarrow A'}$ with respect to $\Psi$ and  $\mathcal{P}^{\rm R}$ is the recovery map defined by Eq.~\eqref{junge_recovery_channel}. The map $\mathfrak{T}'$ is defined as $\mathfrak{T}'(\mathsf{X}):= \mathfrak{T}(\mathsf{X})+\left[\mathrm{tr}(\mathsf{X})-\mathrm{tr}(\mathfrak{T}(\mathsf{X}))\right] \sigma_{0}$, where $\sigma_{0}\in\mathcal{L}(C\otimes C')$ is chosen in such a way that $\mathfrak{T}'$ is a CPTP map (see Appendix \ref{quantum_channel_from_TP_map}) and $\mathfrak{T}:\mathcal{L}\left(A \otimes A'\right) \rightarrow \mathcal{L}\left(C \otimes C'\right)$ is the representing map of the superchannel $\Theta$~(see Remark \ref{rem:choi-state-choi-operator-relation_1} and Eq.~\eqref{equ:new_rep_1}).
\end{theorem}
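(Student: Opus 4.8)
The plan is to transport the universal-recovery refinement of the data processing inequality for states (from~\cite{Junge_2018}) to the level of channels and superchannels via the Choi correspondence, in close analogy with the proof of the state statement. The first step is to reduce the two channel divergences on the left-hand side to relative entropies of states. Since $\Psi_{\widetilde{A} A}\in\mathsf{FRank}(\widetilde{A}\otimes A)\cap\mathcal{D}(\widetilde{A}\otimes A)$ is a pure state attaining the supremum in $\chD*{\mathcal{N}}{\mathcal{M}}$, and the normalized Choi state of a channel with respect to such a reference state is $\mathsf{C}_{\mathcal{N}}^{\Psi}=(\mathrm{id}_{\widetilde{A}}\otimes\mathcal{N})(\Psi_{\widetilde{A} A})$ (Remark~\ref{rem:choi-state-choi-operator-relation}), one gets $\chD*{\mathcal{N}}{\mathcal{M}}=\staD*{\mathsf{C}_{\mathcal{N}}^{\Psi}}{\mathsf{C}_{\mathcal{M}}^{\Psi}}$; in the same way, using the optimizer $\Phi_{\widetilde{C} C}$ of $\chD*{\Theta(\mathcal{N})}{\Theta(\mathcal{M})}$, this second divergence equals $\staD*{\mathsf{C}_{\Theta(\mathcal{N})}^{\Phi}}{\mathsf{C}_{\Theta(\mathcal{M})}^{\Phi}}$. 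Note that $\mathsf{C}_{\mathcal{N}}^{\Psi},\mathsf{C}_{\mathcal{M}}^{\Psi}$ are genuine states, since $\mathcal{N},\mathcal{M}$ are channels and $\Psi$ is normalized.

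Next I would identify the quantum channel that carries the first pair of Choi states to the second. By Eq.~\eqref{equ:new_rep_1}, the representing map $\mathfrak{T}$ of $\Theta$ taken with respect to the Choi operators attached to $\Psi$ and $\Phi$ satisfies $\mathfrak{T}(\mathsf{C}_{\mathcal{N}}^{\Psi})=\mathsf{C}_{\Theta(\mathcal{N})}^{\Phi}$ and $\mathfrak{T}(\mathsf{C}_{\mathcal{M}}^{\Psi})=\mathsf{C}_{\Theta(\mathcal{M})}^{\Phi}$. Because $\mathcal{N},\mathcal{M}$ are trace-preserving, $\Theta$ is TP-preserving, and $\Psi,\Phi$ are normalized states, Remark~\ref{rem:choi-state-choi-operator-relation_1} shows that the trace-corrected map $\mathfrak{T}'$ agrees with $\mathfrak{T}$ on both of these inputs --- and, crucially, this identification holds without imposing $|A|=|C|$, which is exactly why one passes to the normalized Choi picture. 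Finally, since $\Theta\in\mathsf{SCT}$ (Definition~\ref{dfn_sct}), complete positivity of $\mathsf{T}$ (Lemma~\ref{lemma_3}) transfers to $\mathfrak{T}$ through $\mathfrak{T}=T_\Phi\circ\mathsf{T}\circ T_\Psi^{-1}$ (Eq.~\eqref{new_rep_3}, with $T_\Psi,T_\Phi$ invertible by the $\mathsf{FRank}$ hypothesis), and $\sigma_0$ can be chosen so that $\mathfrak{T}'$ is a genuine quantum channel; see Appendix~\ref{quantum_channel_from_TP_map}.

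Then I would apply the refinement of~\cite{Junge_2018} to the states $\rho=\mathsf{C}_{\mathcal{N}}^{\Psi}$, $\sigma=\mathsf{C}_{\mathcal{M}}^{\Psi}$ and the quantum channel $\mathfrak{T}'$: there exists a recovery channel $\mathcal{P}^{\rm R}$, depending only on $\mathsf{C}_{\mathcal{M}}^{\Psi}$ and $\mathfrak{T}'$ and given by Eq.~\eqref{junge_recovery_channel} (with the roles of $\sigma$ and $\mathcal{N}$ there played by $\mathsf{C}_{\mathcal{M}}^{\Psi}$ and $\mathfrak{T}'$), such that
\begin{equation}
\staD*{\mathsf{C}_{\mathcal{N}}^{\Psi}}{\mathsf{C}_{\mathcal{M}}^{\Psi}}-\staD*{\mathfrak{T}'(\mathsf{C}_{\mathcal{N}}^{\Psi})}{\mathfrak{T}'(\mathsf{C}_{\mathcal{M}}^{\Psi})}\geq -\log F\!\left(\mathsf{C}_{\mathcal{N}}^{\Psi},(\mathcal{P}^{\rm R}\circ\mathfrak{T}')(\mathsf{C}_{\mathcal{N}}^{\Psi})\right).
\end{equation}
Substituting the identities of the previous steps for the two relative entropies on the left-hand side turns this into exactly the asserted inequality.

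The step I expect to be the main obstacle is the middle one: producing a single CPTP map $\mathfrak{T}'$ that simultaneously realizes $\mathsf{C}_{\mathcal{N}}^{\Psi}\mapsto\mathsf{C}_{\Theta(\mathcal{N})}^{\Phi}$ and $\mathsf{C}_{\mathcal{M}}^{\Psi}\mapsto\mathsf{C}_{\Theta(\mathcal{M})}^{\Phi}$ even though the optimizing reference states $\Psi$ and $\Phi$ may differ and the input and output dimensions need not agree. Handling this is precisely why one works with the normalized Choi representation of Remark~\ref{rem:choi-state-choi-operator-relation_1} and restricts $\Theta$ to the class $\mathsf{SCT}$. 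A minor additional caveat is that the argument assumes the suprema defining the two channel divergences are attained at full-rank-marginal pure states; if they are not, or if a support condition fails and both sides are infinite, the statement is either vacuous or recovered by a routine limiting argument.
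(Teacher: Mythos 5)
Your proposal is correct and follows essentially the same route as the paper's own proof: reduce both channel divergences to relative entropies of the Choi states at the optimizing full-rank-marginal pure states, connect them by the representing map $\mathfrak{T}$, pass to the trace-corrected CPTP map $\mathfrak{T}'$ (which agrees with $\mathfrak{T}$ on the normalized Choi states by Remark~\ref{rem:choi-state-choi-operator-relation_1} and is a genuine channel because $\Theta\in\mathsf{SCT}$), and then invoke the universal-recovery refinement of~\cite{Junge_2018}. You also correctly identify the same ingredients the paper leans on (Lemma~\ref{lemma_3}, Eq.~\eqref{new_rep_3}, Appendix~\ref{quantum_channel_from_TP_map}), so no gap remains.
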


 The detailed proof of this theorem is provided in Appendix~ \ref{detailde_proof_of_recovery}, while we sketch the proof of this theorem in the following. The basic ingredient behind the proof is the fact that a supermap induces a linear map, called the representing map, at the level of Choi matrices. The representing map $\mathfrak{T}$ of a superchannel $\Theta$ is in general a CP map that becomes trace-preserving if the adjoint of $\Theta$ satisfies a simple condition (see Eq.~\eqref{equ:tp_condition} of Appendix \ref{Entropy_of_quantum_channels}). This, in turn, simplifies to the fact that $\Theta^{*}$ is a completely depolarising map preserving~(see Lemma \ref{lemma5}) when $\Psi$ and $\Phi$ are maximally entangled states. Even if this trace-preserving condition on the representing map $\mathfrak{T}$ is not satisfied, since we have $\Theta \in \mathsf{SCT}$, we can always replace the representing map $\mathfrak{T}$ with the following CPTP map: $\mathfrak{T}'(\mathsf{X}):= \mathfrak{T}(\mathsf{X})+\left[\mathrm{tr}(\mathsf{X})-\mathrm{tr}(\mathfrak{T}(\mathsf{X}))\right] \sigma_{0}$ for all $\mathsf{X}\in \mathcal{L}(A\otimes A')$ with an appropriate choice of $\sigma_{0}\in\mathcal{L}(C\otimes C')$ (see Appendix \ref{quantum_channel_from_TP_map}). The map $\mathfrak{T}'$ satisfies $\mathfrak{T}'(\mathsf{C}^{\Psi}_{\mathcal{N}})=\mathfrak{T}(\mathsf{C}^{\Psi}_{\mathcal{N}})$ and $\mathfrak{T}'(\mathsf{C}^{\Psi}_{\mathcal{M}})=\mathfrak{T}(\mathsf{C}^{\Psi}_{\mathcal{M}})$. The remaining task is to find the recovery map $\mathcal{P}^{\rm R}:\mathcal{L}\left(C \otimes C'\right)\rightarrow \mathcal{L}\left(A \otimes A'\right)$ for the CPTP map $\mathfrak{T}'$. The recovery map $\mathcal{P}^{\rm R}$ is constructed in such a way that it always recovers $\mathsf{C}^{\Psi}_{\mathcal{M}}$ from $\mathfrak{T}'(\mathsf{C}^{\Psi}_{\mathcal{M}})$.

 \begin{remark}\label{rem:data-sat}
  Theorem \ref{theorem_4} provides the necessary and sufficient condition for the sufficiency of quantum superchannels. We can define a supermap $\Theta^{\rm R}:\mathcal{L}\left(C,C'\right)\rightarrow \mathcal{L}\left(A,A'\right)$, which we call the recovery supermap, corresponding to the recovery map $\mathcal{P}^{\rm R}$, such that it always recovers $\mathcal{M}$ from $\Theta\left(\mathcal{M}\right)$. The action of such a recovery supermap on $\widetilde{\mathcal{N}}_{C \rightarrow C'}$ is given as follows:
\begin{align}
   \Theta^{\rm R} \left(\widetilde{\mathcal{N}}_{C \rightarrow C'}\right) (\mathsf{X})= \mathrm{tr}_{A}\left((\mathsf{X}^{t} \otimes \mathbbm{1}) T^{-1}_{\Psi} \mathcal{P}^{\rm R}\left(\mathsf{C}^{\Phi}_{\widetilde{\mathcal{N}}}\right)\right). \label{equ:recovery_supermap}
\end{align}
It can be verified that if $\left(\mathcal{P}^{\rm R}\circ \mathfrak{T}\right)\left(\mathsf{C}^{\Psi}_{\mathcal{N}}\right) = \mathsf{C}^{\Psi}_{\mathcal{N}}$ then $\Theta^{\rm R}$ also recovers $\mathcal{N}$ from $\Theta\left(\mathcal{N}\right)$. So, the above refinement says that if the data processing inequality saturates, i.e.,  $  \chD*{\mathcal{N}}{ \mathcal{M}} = \chD*{\Theta\left(\mathcal{N}\right) }{\Theta \left(\mathcal{M}\right)} $, then there exists a recovery supermap $\Theta^{\rm R}$ such that 
\begin{align}
  \Theta^{\rm R}\circ \Theta \left(\mathcal{N}\right)   = \mathcal{N}~~\text{and}~~
   \Theta^{\rm R}\circ \Theta \left(\mathcal{M}\right)  = \mathcal{M}.
\end{align}
 \end{remark}
 
An immediate consequence of the above theorem is the following proposition (also see \cite[Observation 2]{DGP24}). 
\begin{proposition}\label{prop:rem-thm}
    Let $\mathcal{N}, \mathcal{M} \in \mathcal{L}\left(A,A'\right)$  be two tele-covariant quantum channels (with respect to the same unitary representation of a group)~\cite[Definition 4]{Das_2019}, and let $\Theta:\mathcal{L}\left(A,A'\right)\rightarrow \mathcal{L}\left(C,C'\right)$ be a tele-covariance preserving superchannel and $\Theta \in \mathsf{SCT}$. Then we have 
    \begin{equation}
         \chD*{\mathcal{N}}{ \mathcal{M}} - \chD*{\Theta\left(\mathcal{N}\right) }{\Theta \left(\mathcal{M}\right)} \geq - \log F \left(\mathsf{C}_{\mathcal{N}},\left(\mathcal{P}^{\rm R}\circ \mathfrak{T}'\right)\left(\mathsf{C}_{\mathcal{N}}\right)\right),
    \end{equation}
    where $\mathsf{C}_{\mathcal{N}}$ is the Choi state for channel $\mathcal{N}_{A \rightarrow A'}$, and $\mathcal{P}^{\rm R}$ and $\mathfrak{T}'$ are defined in Theorem~\ref{theorem_4}.
\end{proposition}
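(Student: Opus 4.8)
The plan is to obtain the proposition as a direct specialization of Theorem~\ref{theorem_4} to the case in which the pure states realizing the two suprema are (normalized) maximally entangled states. The only substantive ingredient is the identification of these optimizers; the rest consists of checking the hypotheses of Theorem~\ref{theorem_4} and a bookkeeping reduction of $\mathsf{C}^{\Psi}_{\mathcal{N}}$ to the ordinary Choi state $\mathsf{C}_{\mathcal{N}}$.

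First I would recall that a channel tele-covariant with respect to a group $G$ (Definition~\ref{dfn:telecov}) is teleportation-simulable via the teleportation protocol associated with $G$: there is a single LOCC channel $\Lambda$, depending only on the representation, such that $(\id_{R}\otimes\mathcal{N})(\psi_{RA})=\Lambda(\psi_{RA}\otimes\mathsf{C}_{\mathcal{N}})$ for every input $\psi_{RA}$, and the same $\Lambda$ simulates any channel tele-covariant with respect to the same representation. Applying this to both $\mathcal{N}$ and $\mathcal{M}$, and using monotonicity of the relative entropy under $\Lambda$ together with additivity of $\staD*{\cdot}{\cdot}$ over the identical tensor factor $\psi_{RA}$, gives $\staD*{(\id\otimes\mathcal{N})(\psi_{RA})}{(\id\otimes\mathcal{M})(\psi_{RA})}\le\staD*{\mathsf{C}_{\mathcal{N}}}{\mathsf{C}_{\mathcal{M}}}$ for every $\psi_{RA}$; since the normalized maximally entangled input $|A|^{-1}\Psi^{+}_{\widetilde{A}A}$ attains equality, the supremum defining $\chD*{\mathcal{N}}{\mathcal{M}}$ is attained at $\Psi_{\widetilde{A}A}=|A|^{-1}\Psi^{+}_{\widetilde{A}A}$ (cf.~\cite{Gour_2019,Das_2019}). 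Because $\Theta$ is tele-covariance preserving, $\Theta(\mathcal{N})$ and $\Theta(\mathcal{M})$ are again tele-covariant channels in $\mathcal{L}(C,C')$ with respect to the same representation, so the identical argument shows that the supremum defining $\chD*{\Theta(\mathcal{N})}{\Theta(\mathcal{M})}$ is attained at $\Phi_{\widetilde{C}C}=|C|^{-1}\Psi^{+}_{\widetilde{C}C}$.

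Next I would verify that all hypotheses of Theorem~\ref{theorem_4} hold for this choice. The states $|A|^{-1}\Psi^{+}_{\widetilde{A}A}$ and $|C|^{-1}\Psi^{+}_{\widetilde{C}C}$ are pure, with marginals $|A|^{-1}\mathbbm{1}_{\widetilde{A}}$ and $|C|^{-1}\mathbbm{1}_{\widetilde{C}}$ of full rank, hence they lie in $\mathsf{FRank}(\widetilde{A}\otimes A)\cap\mathcal{D}(\widetilde{A}\otimes A)$ and $\mathsf{FRank}(\widetilde{C}\otimes C)\cap\mathcal{D}(\widetilde{C}\otimes C)$, respectively, and $\Theta\in\mathsf{SCT}$ by assumption. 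Theorem~\ref{theorem_4} therefore applies with $\Psi=|A|^{-1}\Psi^{+}_{\widetilde{A}A}$ and $\Phi=|C|^{-1}\Psi^{+}_{\widetilde{C}C}$ and yields $\chD*{\mathcal{N}}{\mathcal{M}}-\chD*{\Theta(\mathcal{N})}{\Theta(\mathcal{M})}\ge-\log F\left(\mathsf{C}^{\Psi}_{\mathcal{N}},\left(\mathcal{P}^{\rm R}\circ\mathfrak{T}'\right)\left(\mathsf{C}^{\Psi}_{\mathcal{N}}\right)\right)$, where $\mathfrak{T}$ is the representing map of $\Theta$ built with respect to these maximally entangled states and $\mathfrak{T}',\mathcal{P}^{\rm R}$ are as in Theorem~\ref{theorem_4}.

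Finally I would invoke Remark~\ref{rem:choi-state-choi-operator-relation}: for $\Psi_{\widetilde{A}A}$ the normalized maximally entangled state, $\mathsf{C}^{\Psi}_{\mathcal{N}}=(\id_{\widetilde{A}}\otimes\mathcal{N})(|A|^{-1}\Psi^{+}_{\widetilde{A}A})$ coincides with the ordinary Choi state $\mathsf{C}_{\mathcal{N}}$ of $\mathcal{N}_{A\to A'}$, so substituting this into the bound just derived produces exactly the claimed inequality. I do not expect a genuine obstacle: the teleportation-simulability reduction to the maximally entangled input is the sole non-mechanical step, and it is standard in the tele-covariant setting (and is already exploited for the entropy functional elsewhere in this paper); everything else is hypothesis-checking and notational simplification.
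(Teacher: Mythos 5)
Your proof is correct and follows essentially the same route as the paper: the paper's own argument is a one-line specialization of Theorem~\ref{theorem_4}, observing that for channels tele-covariant with respect to the same representation (and a tele-covariance preserving $\Theta$) the suprema are attained at maximally entangled states, so that $\mathsf{C}^{\Psi}_{\mathcal{N}}$ reduces to $\mathsf{C}_{\mathcal{N}}$. Your teleportation-simulability justification of that optimizer identification is exactly the standard argument the paper implicitly relies on via its citation of earlier work, so you have simply supplied the detail the paper leaves implicit.
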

The proof of the proposition follows directly from the Theorem \ref{theorem_4} by noticing the following: when we take channels $\mathcal{N}$ and $\mathcal{M}$ to be covariant with respect to the same unitary representation of a group  and the superchannel $\Theta$ to be a tele-covariant channel preserving superchannel, $\mathsf{C}^{\Psi}_{\mathcal{N}} $ becomes $ \mathsf{C}_{\mathcal{N}}$ and the recovery supermap $\Theta^{\rm R}$ changes accordingly.

We now briefly mention an application of Theorem~\ref{theorem_4}, Remark~\ref{rem:data-sat}, and Proposition~\ref{prop:rem-thm} in quantum error correction protocol. Assume that we have a covariant error correction scheme~\cite{FNA+18,ZLJ21} where encoding is performed by a tele-covariant channel $\mathcal{M}$. If the encoding channel is corrupted into $\Theta(\mathcal{M})$, where $\Theta$ is a tele-covariant preserving superchannel, then the covariant error correction scheme may loose its efficacy. However, based on our results, since there exists a recovery supermap $\Theta^{\rm R}$ corresponding to $\Theta$ and $\mathcal{M}$, one can recover $\mathcal{M}$ by using $\Theta^{\rm R}\circ\Theta(\mathcal{M})=\mathcal{M}$, preserving the covariant error correction scheme. Further, our results show that the same recovery supermap $\Theta^{\rm R}$ can be used to exactly correct the error in another tele-covariant encoding channel $\mathcal{N}$ if ever $\mathcal{N}$ is transformed by $\Theta$, i.e., $\Theta^{\rm R}\circ\Theta(\mathcal{N})=\mathcal{N}$ provided that $\chD*{\mathcal{N}}{ \mathcal{M}} = \chD*{\Theta\left(\mathcal{N}\right) }{\Theta \left(\mathcal{M}\right)}$.

Next, we estimate the entropy gain of a tele-covariant channel under the action of tele-covariant channel preserving superchannels as an application of Theorem~\ref{theorem_4}. In particular, we have the following lemma.
\begin{lemma}
 \label{lem:tele-cov-ent-diff}
   Let $\mathcal{N}\in \mathcal{L}\left(A,A'\right)$ be a tele-covariant channel,  $\Theta:\mathcal{L}\left(A,A'\right)\rightarrow \mathcal{L}\left(C,C'\right)$ be a tele-covariant channel preserving superchannel and $\Theta \in \mathsf{SCT}$ such that $\Theta^{*}(\mathcal{R}_{C \rightarrow C'})=\frac{|C|}{|A|}\mathcal{R}_{A \rightarrow A'}$ and $\Theta(\mathcal{R}_{A \rightarrow A'}) \leq \frac{|C|}{|A|}\mathcal{R}_{C \rightarrow C'}$ . Then the change in the entropy of the quantum channel $\mathcal{N}$ is bounded as
       \begin{equation}
        S\left[\Theta\left(\mathcal{N}\right)\right] - S\left[\mathcal{N}\right] \geq \staD*{\mathsf{C}_{\mathcal{N}}}{\left(\widetilde{\mathcal{P}}^{\rm R}\circ \mathfrak{T}\right)\mathsf{C}_{\mathcal{N}}}+\log\frac{|A|}{|C|},
    \end{equation}
   where $\mathsf{C}_{\mathcal{N}}$ is the Choi state for channel $\mathcal{N}$ and $\mathfrak{T}\in \mathcal{L}\left(A\otimes A', C\otimes C'\right)$ is the representing map for superchannel $\Theta$ (see Remark \ref{rem:choi-state-choi-operator-relation_1} and Eq.~\eqref{equ:new_rep_1}). The map $\widetilde{\mathcal{P}}^{\rm R}\left(\mathsf{X}\right)$~(see \cite[Eq.~(36)]{Buscemi_2016}) is given by $\widetilde{\mathcal{P}}^{\rm R}\left(\mathsf{X}\right)  = {\mathfrak{T}}^{*}(\mathsf{X})+\tr\left(\left(\mathrm{id}-{\mathfrak{T}}^{*}\right)\mathsf{X}\right)\xi $, where $\xi \in \mathcal{L}(A\otimes A')$ is a quantum state, and $\mathsf{X} \in \mathcal{L}(C\otimes C')$.
 \end{lemma}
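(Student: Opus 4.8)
The plan is to transfer the inequality to the level of Choi states and then apply the refined Buscemi--Das--Wilde entropy-gain bound \eqref{eq:bdw_main} to the representing map of $\Theta$, which the two depolarizing hypotheses will render a genuine quantum channel. First I would linearize the left-hand side using tele-covariance: since $\mathcal{N}$ is tele-covariant the supremum in $S[\mathcal{N}]$ is attained at a maximally entangled state \cite{Gour_2019}, so $S[\mathcal{N}]=S(\mathsf{C}_{\mathcal{N}})-\log|A|$ as in \eqref{equ:pappu1}, and since $\Theta$ preserves tele-covariance, $\Theta(\mathcal{N})$ is tele-covariant, giving $S[\Theta(\mathcal{N})]=S(\mathsf{C}_{\Theta(\mathcal{N})})-\log|C|$. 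For a maximally entangled $\Psi$ one has $\mathsf{C}^{\Psi}_{\mathcal{N}}=\mathsf{C}_{\mathcal{N}}$ and similarly $\mathsf{C}^{\Phi}_{\Theta(\mathcal{N})}=\mathsf{C}_{\Theta(\mathcal{N})}$, so the defining property of the representing map (Remark~\ref{rem:choi-state-choi-operator-relation_1}, Eq.~\eqref{equ:new_rep_1}) becomes $\mathfrak{T}(\mathsf{C}_{\mathcal{N}})=\mathsf{C}_{\Theta(\mathcal{N})}$. Subtracting, $S[\Theta(\mathcal{N})]-S[\mathcal{N}]=S(\mathfrak{T}(\mathsf{C}_{\mathcal{N}}))-S(\mathsf{C}_{\mathcal{N}})+\log\tfrac{|A|}{|C|}$, so it remains to lower bound the entropy change of the state $\mathsf{C}_{\mathcal{N}}$ under $\mathfrak{T}$.

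Next I would check that $\mathfrak{T}$ is CPTP and that $\widetilde{\mathcal{P}}^{\rm R}$ is a bona fide recovery channel, which is where the scaled depolarizing hypotheses are used. Comparing $\mathsf{T}(\Cwhat_{\mathcal{N}})=\Cwhat_{\Theta(\mathcal{N})}$ with $\mathfrak{T}(\mathsf{C}_{\mathcal{N}})=\mathsf{C}_{\Theta(\mathcal{N})}$ and the normalizations $\Cwhat_{\mathcal{X}}=|A|\,\mathsf{C}_{\mathcal{X}}$ (and $|C|$ for maps out of $\mathcal{L}(C)$) gives $\mathfrak{T}=\tfrac{|A|}{|C|}\mathsf{T}$. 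Complete positivity of $\mathfrak{T}$ then follows from Lemma~\ref{lemma_3}, since $\Theta$ is a superchannel and hence completely CP-preserving. Using $\mathbbm{1}_{CC'}=\Cwhat_{\mathcal{R}_{C\to C'}}$ and Lemma~\ref{lemma_4}, I would compute $\mathfrak{T}^{*}(\mathbbm{1}_{CC'})=\tfrac{|A|}{|C|}\Cwhat_{\Theta^{*}(\mathcal{R}_{C\to C'})}=\tfrac{|A|}{|C|}\cdot\tfrac{|C|}{|A|}\mathbbm{1}_{AA'}=\mathbbm{1}_{AA'}$, where the hypothesis $\Theta^{*}(\mathcal{R}_{C\to C'})=\tfrac{|C|}{|A|}\mathcal{R}_{A\to A'}$ is precisely what makes this hold (the scaled analogue of Lemma~\ref{lemma5}); thus $\mathfrak{T}$ is CPTP and $\mathfrak{T}^{*}$ is unital. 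Analogously $\mathfrak{T}(\mathbbm{1}_{AA'})=\tfrac{|A|}{|C|}\Cwhat_{\Theta(\mathcal{R}_{A\to A'})}\le\tfrac{|A|}{|C|}\cdot\tfrac{|C|}{|A|}\mathbbm{1}_{CC'}=\mathbbm{1}_{CC'}$ by $\Theta(\mathcal{R}_{A\to A'})\le\tfrac{|C|}{|A|}\mathcal{R}_{C\to C'}$, so $\mathfrak{T}^{*}$ is trace nonincreasing and $\mathsf{X}\mapsto\tr((\id-\mathfrak{T}^{*})\mathsf{X})=\langle\mathbbm{1}_{CC'}-\mathfrak{T}(\mathbbm{1}_{AA'}),\mathsf{X}\rangle$ is a positive functional; hence $\widetilde{\mathcal{P}}^{\rm R}(\mathsf{X})=\mathfrak{T}^{*}(\mathsf{X})+\tr((\id-\mathfrak{T}^{*})\mathsf{X})\,\xi$ is a sum of the CP map $\mathfrak{T}^{*}$ with a CP prepare map, so it is CP, and it is trace-preserving by construction.

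Finally I would apply \eqref{eq:bdw_main} to $\mathfrak{T}$ and $\mathsf{C}_{\mathcal{N}}$, obtaining $S(\mathfrak{T}(\mathsf{C}_{\mathcal{N}}))-S(\mathsf{C}_{\mathcal{N}})\ge\staD*{\mathsf{C}_{\mathcal{N}}}{\mathfrak{T}^{*}\circ\mathfrak{T}(\mathsf{C}_{\mathcal{N}})}$, the support condition being automatic because $\mathfrak{T}^{*}(\mathbbm{1})=\mathbbm{1}$. Since $\tr(\mathfrak{T}(\mathsf{C}_{\mathcal{N}}))=1$ and $\tr(\mathfrak{T}^{*}\circ\mathfrak{T}(\mathsf{C}_{\mathcal{N}}))=\langle\mathfrak{T}(\mathbbm{1}_{AA'}),\mathfrak{T}(\mathsf{C}_{\mathcal{N}})\rangle\le1$, the weight appearing in $\widetilde{\mathcal{P}}^{\rm R}$ is nonnegative, so $(\widetilde{\mathcal{P}}^{\rm R}\circ\mathfrak{T})(\mathsf{C}_{\mathcal{N}})=\mathfrak{T}^{*}\circ\mathfrak{T}(\mathsf{C}_{\mathcal{N}})+(\text{nonneg})\,\xi\ge\mathfrak{T}^{*}\circ\mathfrak{T}(\mathsf{C}_{\mathcal{N}})$; by operator antitonicity of $\sigma\mapsto\staD*{\rho}{\sigma}$, $\staD*{\mathsf{C}_{\mathcal{N}}}{\mathfrak{T}^{*}\circ\mathfrak{T}(\mathsf{C}_{\mathcal{N}})}\ge\staD*{\mathsf{C}_{\mathcal{N}}}{(\widetilde{\mathcal{P}}^{\rm R}\circ\mathfrak{T})(\mathsf{C}_{\mathcal{N}})}$ (one may also cite \cite[Eq.~(36)]{Buscemi_2016} directly for this refined form). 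Combining with the identity from the first step yields $S[\Theta(\mathcal{N})]-S[\mathcal{N}]\ge\staD*{\mathsf{C}_{\mathcal{N}}}{(\widetilde{\mathcal{P}}^{\rm R}\circ\mathfrak{T})\mathsf{C}_{\mathcal{N}}}+\log\tfrac{|A|}{|C|}$.

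The main obstacle I anticipate is the normalization bookkeeping in the middle step: carefully relating $\mathsf{T}$ and $\mathfrak{T}$ and confirming that the two scaled completely-depolarizing conditions are exactly what is needed --- one to make $\mathfrak{T}$ trace-preserving (hence a channel), the other to guarantee the correction term turning $\mathfrak{T}^{*}$ into $\widetilde{\mathcal{P}}^{\rm R}$ has nonnegative weight. Verifying that membership in $\mathsf{SCT}$ is consistent with, indeed implied by, these facts (so it is not an extra constraint) is a small additional check, and the suprema in both $S[\mathcal{N}]$ and $S[\Theta(\mathcal{N})]$ being attained at maximally entangled states needs only that tele-covariance is preserved, which is assumed.
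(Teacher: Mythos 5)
Your proposal is correct and follows essentially the same route as the paper: reduce both entropies to Choi-state entropies via tele-covariance, show that the two scaled depolarizing conditions make the representing map $\mathfrak{T}$ a subunital quantum channel, and invoke the Buscemi--Das--Wilde entropy-gain bound with the recovery map $\widetilde{\mathcal{P}}^{\rm R}$. The only difference is that you fill in the normalization bookkeeping ($\mathfrak{T}=\tfrac{|A|}{|C|}\mathsf{T}$, unitality of $\mathfrak{T}^{*}$, subunitality of $\mathfrak{T}$) and rederive the refined bound from inequality~\eqref{eq:bdw_main} plus antitonicity, whereas the paper asserts these steps and cites \cite[Eq.~(37)]{Buscemi_2016} directly.
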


 \begin{proof}
Since $\mathcal{N}$ is a tele-covariant channel and $\Theta$ is a tele-covariant channel preserving superchannel, we have $S\left[\mathcal{N}\right]=S(\mathsf{C}_{\mathcal{N}})- \log\left(|A|\right)$ and $S\left[\Theta\left(\mathcal{N}\right)\right]=S\left(\mathsf{C}_{\Theta(\mathcal{N})}\right)- \log\left(|C|\right)$, where $\mathsf{C}_{\mathcal{N}}$ and $\mathsf{C}_{\Theta(\mathcal{N})}$ are the Choi states of $\mathcal{N}$ and $\Theta(\mathcal{N})$, respectively. Then the entropy difference becomes $S\left[\Theta\left(\mathcal{N}\right)\right] - S\left[\mathcal{N}\right] = S\left(\mathsf{C}_{\Theta(\mathcal{N})}\right) - S(\mathsf{C}_{\mathcal{N}}) + \log \widetilde{d} $, where $\widetilde{d}:= |A|/|C|$. Also, $\mathsf{C}_{\mathcal{N}}$ and $\mathsf{C}_{\Theta(\mathcal{N})}$ are connected by the representing map $\mathfrak{T}$ which is CP but not trace-preserving, in general. As the superchannel $\Theta$ satisfies $\Theta^{*}(\mathcal{R}_{C \rightarrow C'})=\frac{|C|}{|A|}\mathcal{R}_{A \rightarrow A'}$ and $\Theta(\mathcal{R}_{A \rightarrow A'}) \leq \frac{|C|}{|A|}\mathcal{R}_{C \rightarrow C'}$, using Eq.~(\ref{new_rep_3}) it can be shown that the representing map $\mathfrak{T}$ is a subunital quantum channel. Hence, the entropy difference then is lower bounded by~(see~\cite[Eq.~(37)]{Buscemi_2016}, also \cite{Das2018})
\begin{equation}
     S\left[\Theta\left(\mathcal{N}\right)\right] - S\left[\mathcal{N}\right] \geq  \staD*{\mathsf{C}_{\mathcal{N}}}{\left(\widetilde{\mathcal{P}}^{\rm R}\circ \mathfrak{T}\right)\left(\mathsf{C}_{\mathcal{N}}\right)}+ \log\frac{|A|}{|C|},
\end{equation}
where each term in the above inequality is defined in the statement of the lemma. This completes the proof.
 \end{proof}
 When $|A|=|C|$ holds, Lemma \ref{lem:tele-cov-ent-diff} tells that if the entropy difference before and after applying the superchannel is considerably small, then one can conclude that states $\mathsf{C}_{\mathcal{N}}$ and $\left(\widetilde{\mathcal{P}}^{\rm R}\circ \mathsf{T}'\right)(\mathsf{C}_{\mathcal{N}})$ are very close. That is by performing the recovery operation $\widetilde{\mathcal{P}}^{\rm R}$, one can approximately recover the state $\mathsf{C}_{\mathcal{N}}$, or equivalently it means that  there exists a recovery supermap $\Theta^{\rm R}$, defined as $ \Theta^{\rm R} (\widetilde{\mathcal{N})} (\mathsf{X})= \abs{A}\mathrm{tr}_{A}\left((\mathsf{X}^{t} \otimes \mathbbm{1})  \widetilde{\mathcal{P}}^{\rm R}(\mathsf{C}_{\widetilde{\mathcal{N}}})\right)$ that can approximately recover $\mathcal{N}$ from $\Theta(\mathcal{N})$. On more application side, noting that quantum channels form an indispensable part in any quantum network (recall that states, measurements and all physical transformations can be understood as linear maps in the quantum domain), it is of significant importance to protect quantum channels from unwanted noise~(some superchannel). Our inequalities provide a necessary and sufficient condition for this task in terms of entropy and relative entropy of channels.
 
\section{Generalized divergence and entropy of superchannels } \label{Generalized_divergence_and_entropy_of_superchannels}

In quantum information theory, the information is usually encoded in the state of the quantum system and then any quantum information processing task may be understood simply as a collection of parallel and sequential transformations (linear CPTP maps) acting on the  quantum states. However, one can encode quantum information in quantum channels too and then any quantum information processing task may be understood as a collection of parallel and sequential quantum channels and quantum superchannels acting on quantum states and quantum channels, respectively. In this light, one may think quantum channels as first order transformations and quantum superchannels as second order transformations. In fact, one can recursively do this to obtain arbitrary finite order processes mapping a lower order process to another process of the same lower order. Such a paradigm has attracted considerable attention recently, see e.g.~\cite{Giulio2008,Giulio2009,Bisio_2019, Anna2021, Hoffreumon2022}.

In Section~\ref{Entropy_of_quantum_channels}, we generalized the definition of entropy to quantum channels, where we argued that similar to the entropy of quantum states, the entropy of quantum channels can also be defined in terms of relative entropy between the given quantum channel and the completely depolarising map~(see Definition~\ref{equ:channel_entropy_definition}). Here, we first extend the definition of generalized divergences to superchannels using the generalized divergence between two quantum channels. Recall that a map $\mathbf{D}:\mathcal{L}\left(A\right)_{+} \times \mathcal{L}\left(A\right)_{+} \rightarrow \mathbb{R}$ is called a generalized state divergence if it satisfies data processing inequality: $ \BstD*{\rho}{\sigma} \geq \BstD*{\mathcal{N}(\rho)}{\mathcal{N}(\sigma)}$, where $\rho,\sigma \in \mathcal{L}\left(A\right)_{+}$ and $\mathcal{N}\in\mathcal{L}(A,A')$ is a quantum channel. Given a generalized state divergence $\mathbf{D}$, we define the generalized divergence between a quantum channel $\mathcal{N}\in \mathcal{L}\left(A, A'\right)$ and a CP map $\mathcal{M} \in \mathcal{L}\left(A, A'\right)$ as 
\begin{equation}\label{eq:gd-s-c}
    \BchD*{\mathcal{N}}{\mathcal{M}} = \sup_{\Psi_{RA}\in\mathcal{D}(R\otimes A)}\BstD*{\left(\mathrm{id}_{R} \otimes \mathcal{N} \right)(\Psi_{RA})}{ \left(\mathrm{id}_{R} \otimes \mathcal{M} \right)(\Psi_{RA})},
\end{equation}
where $R$ is a reference system of arbitrary dimension. Now, given a generalized divergence $\mathbf{D}$ between channels (cf.~\cite{Cooney_2016,Felix_2018,Yuan}), we define a divergence functional between a quantum superchannel $\Theta$ and  a supermap $\Gamma$, denoted by $\dt$, as follows. Let $\mathcal{L}( A, B)_{+} $ denotes the set of all quantum channels $\mathcal{N}\in \mathcal{L}( A, B)$.
 
\begin{dfn}
    Given a generalized divergence $\BchD*{\cdot}{\cdot}$ between a quantum channel and a CP map (as in Eq.~\eqref{eq:gd-s-c}), a  divergence between a quantum superchannel $\Theta:\mathcal{L}\left(A,B\right)\rightarrow \mathcal{L}\left(C,D\right)$ and a completely CP-preserving supermap $\Gamma:\mathcal{L}\left(A,B\right)\rightarrow \mathcal{L}\left(C,D\right)$ is defined as 
    \begin{equation}
      \BschD*{\Theta }{\Gamma} = \sup_{\mathcal{N} \in \mathcal{L}(R\otimes A, R\otimes B)_{+}}  \BchD*{\left(\mathrm{id} \otimes \Theta  \right)\left( \mathcal{N}  \right) }{\left( \mathrm{id} \otimes    \Gamma \right)\left(\mathcal{N}\right)},
      \label{equ:divergence_of_superchannels}
    \end{equation}
   where $\mathrm{id}$ is the identity quantum superchannel, and $R$ is an arbitrary Hilbert space.
\end{dfn}

We note that generalized divergence between superchannels was also defined in \cite{Hirche_2023}, but our definition differs notably and can be recursively generalized to higher-order processes. Let us first consider a class $\mathsf{SSChan}$ of super-superchannels, i.e., a class of higher-order processes that map superchannels in $\mathscr{L}\left[\mathcal{L}\left(A ,B\right),\mathcal{L}\left(C,D\right)\right]$ to superchannels in $\mathscr{L}\left[\mathcal{L}\left(A',B'\right),\mathcal{L}\left(C',D'\right)\right]$ such that for any $\spadesuit \in \mathsf{SSChan}$ we can write
\begin{equation}
    \spadesuit\left(\Theta\right)= \Lambda_{\text{post}} \circ (\Theta \otimes \id )\circ \Lambda_{\text{pre}} ~~~\forall~ \Theta \in \mathscr{L}\left[\mathcal{L}\left(A ,B\right),\mathcal{L}\left(C,D\right)\right],
    \label{equ:SUPERCHANNEL_th_order_rep}
\end{equation}
where $\Lambda_{\text{pre}}\in \mathscr{L}\left[\mathcal{L}\left(A' ,B'\right),\mathcal{L}\left(A\otimes R ,B\otimes R\right)\right]$ and $\Lambda_{\text{post}} \in \mathscr{L}\left[\mathcal{L}\left(C\otimes R ,D\otimes R\right),\mathcal{L}\left(C',D'\right)\right]$ are preprocessing and postprocessing superchannels, respectively, similar to the preprocessing and postprocessing quantum channels of  Eq.~(\ref{equ:superchannel_action}). Here, $R$ is the reference system on which $\id$ acts. Looking at the mathematical similarity to quantum channels, we expect that the dilation equality, Eq.~\eqref{equ:SUPERCHANNEL_th_order_rep}, will hold for all quantum super-superchannels. But, for the rest of this section, we restrict ourselves to the class $\mathsf{SSChan}$ of super-superchannels.   For any pair of superchannels $\Theta, \Gamma \in \mathscr{L}\left[\mathcal{L}\left(A ,B\right),\mathcal{L}\left(C,D\right)\right]$, our definition of the divergence $\dt$ for superchannels satisfies the monotonicity relation $\BschD*{\Theta}{ \Gamma} \geq \BschD*{\spadesuit\left(\Theta\right) }{ \spadesuit\left(\Gamma\right)}$ for all $\spadesuit\in\mathsf{SSChan}$. The proof goes as follows. For $\spadesuit\in \mathsf{SSChan}$, the superchannels $\spadesuit(\Theta)$ and $\spadesuit(\Gamma)$ can be written in terms of concatenation of superchannels given by Eq.~\eqref{equ:SUPERCHANNEL_th_order_rep}. Using our definition of divergence between two superchannels, we can write  
\begin{equation}
\label{eq:ss-rel-ent}
   \BschD*{\spadesuit\left(\Theta\right) }{ \spadesuit\left(\Gamma\right)}= \sup_{\mathcal{N} \in \mathcal{L}(R'\otimes A', R'\otimes B')_{+}}\BchD*{\left(\mathrm{id}_{R'} \otimes \spadesuit\left(\Theta\right)  \right)\left( \mathcal{N}  \right) }{ \left( \mathrm{id}_{R'} \otimes   \spadesuit\left(\Gamma\right)\right)\left(\mathcal{N} \right)},
\end{equation}
where $R'$ is the Hilbert space of a reference system. Using Eq.~\eqref{equ:SUPERCHANNEL_th_order_rep}, we can write $\left(\id_{R'} \otimes \spadesuit\left(\Theta\right) \right)\left(\mathcal{N}\right) = \left(\id_{R'} \otimes \Lambda_{\text{post}}\right)\left(\id_{R'}\otimes \left(\Theta \otimes \mathrm{id}_{R}\right)\right) \mathcal{F}^{\mathcal{N}}$, where $\mathcal{F}^{\mathcal{N}}:=\left(\id_{R'} \otimes \Lambda_{\text{pre}}\right) (\mathcal{N})$. Similarly, we have $\left(\id_{R'}\otimes \spadesuit\left(\Gamma\right)\right)\left(\mathcal{N}\right) = \left(\id_{R'} \otimes \Lambda_{\text{post}}\right)\left(\id_{R'}\otimes \left(\Gamma \otimes \mathrm{id}_{R}\right)\right) \mathcal{F}^{\mathcal{N}}$. Using these in the right-hand side of Eq.~\eqref{eq:ss-rel-ent}, we obtain 

\begin{align*}
&\BschD*{\spadesuit\left(\Theta\right) }{\spadesuit\left(\Gamma\right)}\\
&=\sup_{\mathcal{N} \in \mathcal{L}(R'\otimes A', R'\otimes B')_{+}} \BchD*{\left(\id_{R'} \otimes \Lambda_{\text{post}}\right)\left(\id_{R'}\otimes \left(\Theta \otimes \mathrm{id}_{R}\right)\right) \left(\mathcal{F}^{\mathcal{N}}\right) }{\left(\id_{R'} \otimes \Lambda_{\text{post}}\right)\left(\id_{R'}\otimes \left(\Gamma \otimes \mathrm{id}_{R}\right)\right) \left(\mathcal{F}^{\mathcal{N}}\right)}\\
&\leq \sup_{\widetilde{\mathcal{N}} \in \mathcal{L}(R'\otimes A\otimes R,~ R'\otimes B\otimes R)_{+}} \BchD*{\left(\id_{R'} \otimes \Lambda_{\text{post}}\right)\left(\id_{R'}\otimes \left(\Theta \otimes \mathrm{id}_{R}\right)\right) \left(\widetilde{\mathcal{N}}\right) }{\left(\id_{R'} \otimes \Lambda_{\text{post}}\right)\left(\id_{R'}\otimes \left(\Gamma \otimes \mathrm{id}_{R}\right)\right) \left(\widetilde{\mathcal{N}}\right)}\\
&\leq \sup_{\widetilde{\mathcal{N}} \in \mathcal{L}(R'\otimes A\otimes R,~ R'\otimes B\otimes R)_{+}} \BchD*{\left(\id_{R'}\otimes \left(\Theta \otimes \mathrm{id}_{R}\right)\right) \left(\widetilde{\mathcal{N}}\right) }{\left(\id_{R'}\otimes \left(\Gamma \otimes \mathrm{id}_{R}\right)\right) \left(\widetilde{\mathcal{N}}\right)}=\BschD*{\Theta}{\Gamma},
\end{align*}

where the first inequality follows from the fact that $ \{\left(\id_{R'} \otimes \Lambda_{\text{pre}}\right) (\mathcal{N}): \forall \hspace{0.2cm}\mathcal{N}\in \mathcal{L}(R'\otimes A', R'\otimes B')_{+}  \}\subseteq \mathcal{L}(R'\otimes A\otimes R,~ R'\otimes B\otimes R)_{+}$. The second inequality follows by using the fact that $\left(\id_{R'} \otimes \Lambda_{\text{post}}\right)$ is a valid quantum super-superchannel and the generalized channel divergence $\mathbf{D}\left[\cdot \Vert \cdot \right]$ is monotonic under quantum super-superchannels. Thus, we have $\BschD*{\Theta}{ \Gamma} \geq \BschD*{\spadesuit\left(\Theta\right) }{\spadesuit\left(\Gamma\right)}$.

 In the following, we define a notion of generalized entropy for quantum superchannels based on the notion of the generalized divergence. Recall that generalized divergences $\BstD*{\cdot }{ \cdot }$  are functionals that take a pair of operators, say $\mathsf{A}, \mathsf{B} \geq 0$, as input and give a real number as output. One can define generalized entropies by taking $\mathsf{B}$ to be the identity operator. For example, by taking $\BstD*{\cdot }{ \cdot }$  as relative entropy, and $\mathsf{B} = \mathbbm{1}$, we get von Neumann entropy, which we have already discussed in the previous section. Also,  by taking $\BstD*{\cdot }{ \cdot }$  as sandwiched Rényi divergence~\cite{Wilde_2014,MUller_2013}, and again $\mathsf{B} = \mathbbm{1}$, one can get $\alpha$-Rényi entropy. To define the entropy of quantum channels, we started with the relative entropy which is a generalized divergence, and replaced states with channels and identity with the completely depolarising map and optimized overall input space of the channel. Similarly, we define the entropy of superchannels by taking the generalized divergence of channels as the starting point and replacing channels with superchannels and replacing the completely depolarising map with the supermap that serves its generalization in the space of supermaps, and then optimizing over all input space of the superchannel. Recall that the completely depolarizing map was defined as $\mathcal{R}\left(\mathsf{X}\right) = \tr(\mathsf{X}) \mathbbm{1}$, where $\mathsf{X}$ was an arbitrary operator belonging to the input space. We now generalize this definition of the completely depolarizing map for the space of supermaps as follows. Let us denote by $\rt$ a completely depolarising supermap.

\begin{dfn}
The action of a completely depolarizing supermap $\rt:\mathcal{L}\left(A,A'\right)\rightarrow \mathcal{L}\left(C,C'\right) $ is defined as
    \begin{equation}
    \label{eq:gen-comp-depol}
    \rt(\mathsf{X})=\tr\left(\Cwhat_{\mathsf{X}}\right)\mathcal{R},
\end{equation}
where $\mathsf{X}$ is an arbitrary map in the input space of $\rt$, $\mathcal{R}$ is the completely depolarising map, and $\Cwhat_{\mathsf{X}}$ is the Choi operator for $\mathsf{X}$.
\end{dfn}
The map $\rt$ is a completely CP-preserving supermap. With this definition, we are now in a position to define the entropy of higher-order processes.

\begin{dfn}
We define the generalized entropy of a superchannel $\Theta:\mathcal{L}\left(A, A'\right)\rightarrow \mathcal{L}\left(C,C'\right)$ as 
\begin{equation}\label{eq:ent-ssc32}
    \mathbf{S}\left[\Theta\right]:=-\BschD*{\Theta } {\rt},
\end{equation} 
where $\BschD*{\Theta } {\rt}$ is a generalized divergence between the superchannel $\Theta$ and the completely depolarizing supermap $\rt$. If we fix the generalized divergence in Eq.~\eqref{eq:ent-ssc32} to be the quantum relative entropy, then we get the entropy $S[\Theta]$ of the superchannel $\Theta$, i.e.,
\begin{equation}\label{eq:ent-ssc2}
  {S}\left[\Theta\right]:= -\schD*{\Theta }{\rt}.
\end{equation} 
\end{dfn}
In the remaining section, we work only with Eq.~\eqref{eq:ent-ssc2}, i.e., the quantum relative entropy. We show that the entropy of superchannels, given by Eq.~\eqref{eq:ent-ssc2}, satisfies the following properties: (1) $S\left[\Theta\right]$ is nondecreasing under $\rt$-subpreserving super-superchannels with a mild assumption (see Proposition \ref{prop:ssch-ent-dec}), (2) $S\left[\Theta\right]$ is subadditive under tensor product of superchannels (see Proposition \ref{prop:ssch-add}), (3) For a replacer superchannel $\Theta_{0}$, which maps every channel to a fix channel $\mathcal{N}_{0}$, the entropy of $\Theta_{0}$ is upper bounded by the entropy of $\mathcal{N}_{0}$ up to a constant term $\log |A|$ (see Proposition \ref{prop:ssch-nomr-cond}). We first show that the entropy of superchannels is nondecreasing under $\rt$-subpreserving super-superchannels as the following proposition.

\begin{proposition}
\label{prop:ssch-ent-dec}
 Let  $\spadesuit: \mathscr{L}\left[\mathcal{L}\left(A ,B\right),\mathcal{L}\left(C,D\right)\right] \rightarrow \mathscr{L}\left[\mathcal{L}\left(A' ,B'\right),\mathcal{L}\left(C',D'\right)\right] $ be a quantum super-superchannel mapping quantum superchannels to quantum superchannels such that it satisfies decomposability as in Eq.~\eqref{equ:SUPERCHANNEL_th_order_rep}. Let $\rt:\mathcal{L}\left(A ,B\right)\rightarrow \mathcal{L}\left(C,D\right)$ and $\wrt:\mathcal{L}\left(A' ,B'\right)\rightarrow \mathcal{L}\left(C',D'\right)$ be the generalizations of completely depolarising map to the space of superchannels such that  $\spadesuit \left[\rt\right] \leq \wrt$ (i.e., $\wrt-\spadesuit \left[\rt\right]$ is completely CP-preserving). Then for any superchannel $\Theta:\mathcal{L}\left(A ,B\right)\rightarrow \mathcal{L}\left(C,D\right)$, we have $S\left[\spadesuit\left[\Theta\right]\right]\geq S\left[\Theta\right]$.
\end{proposition}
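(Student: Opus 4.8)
The plan is to follow the template of Proposition~\ref{prop:entropy_gain}, lifting every object one level: quantum channels become quantum superchannels, quantum superchannels become quantum super-superchannels, and the completely depolarizing map $\mathcal{R}$ becomes the completely depolarizing supermap $\rt$. Since $S[\Theta]=-\schD*{\Theta}{\rt}$ and $S[\spadesuit[\Theta]]=-\schD*{\spadesuit[\Theta]}{\wrt}$, the assertion $S[\spadesuit[\Theta]]\geq S[\Theta]$ is equivalent to the chain
\begin{equation}
\schD*{\spadesuit[\Theta]}{\wrt}\;\leq\;\schD*{\spadesuit[\Theta]}{\spadesuit[\rt]}\;\leq\;\schD*{\Theta}{\rt},
\end{equation}
and I would prove the two inequalities separately.

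For the rightmost inequality I would invoke the monotonicity of the superchannel divergence $\dt$ under super-superchannels that was established earlier in this section. That monotonicity applies to every $\spadesuit\in\mathsf{SSChan}$, i.e.\ to every $\spadesuit$ admitting the decomposition of Eq.~\eqref{equ:SUPERCHANNEL_th_order_rep}, which is exactly the hypothesis of the proposition. Although it was phrased there for pairs of superchannels, its proof uses only that the second argument is a completely CP-preserving supermap --- so that the inner channel divergences appearing in the definition of $\dt$ are between a quantum channel and a CP map, to which Proposition~\ref{equ:prop1} applies --- and $\rt$, hence also $\spadesuit[\rt]$, is completely CP-preserving. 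Thus applying it to the pair $(\Theta,\rt)$ gives $\schD*{\spadesuit[\Theta]}{\spadesuit[\rt]}\leq\schD*{\Theta}{\rt}$ at once.

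For the leftmost inequality I would establish a supermap analogue of Proposition~\ref{prop2}: if $\Gamma$ and $\widetilde{\Gamma}$ are completely CP-preserving supermaps with $\widetilde{\Gamma}\geq\Gamma$ (i.e.\ $\widetilde{\Gamma}-\Gamma$ completely CP-preserving, the ordering of Definition~\ref{Rsub_preserving_map} lifted to supermaps), then $\schD*{\Xi}{\widetilde{\Gamma}}\leq\schD*{\Xi}{\Gamma}$ for every superchannel $\Xi$. The argument is the $\epsilon$-regularization-plus-logarithm computation already used in Propositions~\ref{prop2} and~\ref{prop:entropy_gain}: for each input channel $\mathcal{N}$ in the supremum defining $\schD*{\cdot}{\cdot}$, the supermap inequality $\widetilde{\Gamma}-\Gamma\geq 0$ yields the CP-map inequality $(\mathrm{id}\otimes\widetilde{\Gamma})(\mathcal{N})\geq(\mathrm{id}\otimes\Gamma)(\mathcal{N})$, which after feeding in any input state gives the operator ordering of the two states entering the underlying relative entropy; the order-reversing behaviour of the relative entropy in its second argument then gives $\chD*{(\mathrm{id}\otimes\Xi)(\mathcal{N})}{(\mathrm{id}\otimes\widetilde{\Gamma})(\mathcal{N})}\leq\chD*{(\mathrm{id}\otimes\Xi)(\mathcal{N})}{(\mathrm{id}\otimes\Gamma)(\mathcal{N})}$, and taking the supremum over $\mathcal{N}$ finishes it. Applying this with $\Gamma=\spadesuit[\rt]$, $\widetilde{\Gamma}=\wrt$ and $\Xi=\spadesuit[\Theta]$, and using the hypothesis $\spadesuit[\rt]\leq\wrt$, yields the leftmost inequality; combining the two steps proves $S[\spadesuit[\Theta]]\geq S[\Theta]$.

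I expect the main obstacle to be bookkeeping rather than any new idea: one must track how the order relation $\spadesuit[\rt]\leq\wrt$, living at the level of supermaps, descends first to an operator order between the associated CP maps $(\mathrm{id}\otimes\cdot)(\mathcal{N})$ and then to an order between the states actually fed into the state relative entropy inside the nested suprema defining $\dt$, and one must verify that these suprema run over the same sets on both sides of each inequality. The ``mild assumption'' referred to in the statement is precisely the decomposability of $\spadesuit$ as in Eq.~\eqref{equ:SUPERCHANNEL_th_order_rep}, which is what licenses the use of the monotonicity of $\dt$ in the first step.
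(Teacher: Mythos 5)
Your proposal is correct and follows essentially the same route as the paper's proof: the paper likewise chains $\schD*{\Theta}{\rt}\geq\schD*{\spadesuit[\Theta]}{\spadesuit[\rt]}\geq\schD*{\spadesuit[\Theta]}{\wrt}$, using the monotonicity of the superchannel divergence under decomposable super-superchannels for the first step and Proposition~\ref{prop2} (applied to the channels $(\mathrm{id}\otimes\spadesuit[\rt])(\widetilde{\mathcal{N}})\leq(\mathrm{id}\otimes\wrt)(\widetilde{\mathcal{N}})$ inside the supremum) for the second. Your ``supermap analogue of Proposition~\ref{prop2}'' is exactly how the paper descends the ordering $\spadesuit[\rt]\leq\wrt$ to the channel level, so there is no substantive difference.
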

\begin{proof}
   Since the super-superchannel $\spadesuit$ satisfies the decomposability condition given by Eq.~\eqref{equ:SUPERCHANNEL_th_order_rep}, the data processing inequality holds for the pair $\Theta$ and $\rt$ under the super-superchannel $\spadesuit$. Thus, we have
    \begin{align*}
      &\sup_{\mathcal{\mathcal{N}}\in \mathcal{L}\left(R \otimes A,R \otimes B\right)_{+}} \chD*{\left(\mathrm{id}_{R'}\otimes\Theta\right)\left(\mathcal{N}\right)}{ \left(\mathrm{id}_{R'}\otimes\rt\right)\left(\mathcal{N}\right)} \\
      &\geq \sup_{\widetilde{\mathcal{\mathcal{N}}}\in \mathcal{L}\left(R'\otimes A',R'\otimes B'\right)_{+}}\chD*{\left(\mathrm{id}_{R'}\otimes\spadesuit\left[\Theta\right]\right)\left(\widetilde{\mathcal{N}}\right)}{ \left(\mathrm{id}_{R'}\otimes\spadesuit\left[\rt\right]\right)\left(\widetilde{\mathcal{N}}\right) } \\
      &\geq \sup_{\widetilde{\mathcal{\mathcal{N}}}\in \mathcal{L}\left(R'\otimes A',R'\otimes B'\right)_{+}}\chD*{\left(\mathrm{id}_{R'}\otimes\spadesuit\left[\Theta\right]\right)\left(\widetilde{\mathcal{N}}\right)}{\left(\mathrm{id}_{R'}\otimes\wrt\right)\left(\widetilde{\mathcal{N}}\right)},
    \end{align*}
where we have used Proposition~\ref{prop2} in the second line (by identifying $\left(\mathrm{id}_{R'}\otimes\spadesuit\left[\rt\right]\right)\left(\widetilde{\mathcal{N}}\right)$  with $\mathcal{M}$ and $\left(\mathrm{id}_{R'}\otimes\wrt\right)\left(\widetilde{\mathcal{N}}\right)$ with $\widetilde{\mathcal{M}}$ and noticing that $\mathcal{M}\leq \widetilde{\mathcal{M}}$ holds). Now, using the definition of entropy of superchannels, we obtain $S\left[\spadesuit\left[\Theta\right]\right] \geq S\left[\Theta\right]$.
\end{proof}

\begin{remark}
\label{rem:rem-decomp-sschan}
We have proved the Proposition \ref{prop:ssch-ent-dec} only for the class of super-superchannels that are decomposable as in Eq.~\eqref{equ:SUPERCHANNEL_th_order_rep}. However, the proposition only requires the data processing inequality for the pair $\Theta$ and $\rt$ under the super-superchannel $\spadesuit$. It will be interesting to show that the data processing inequality holds true for the pair $\Theta$ and $\rt$ under all the super-superchannels $\spadesuit$ and with that we can remove the condition of decomposability from the Proposition~\ref{prop:ssch-ent-dec}. We believe that all super-superchannels are decomposable in the form of Eq.~\eqref{equ:SUPERCHANNEL_th_order_rep}, however, a proof or a counter-proof is left open for now.
\end{remark}

\begin{proposition}
\label{prop:ssch-add}
    The entropy functional of quantum superchannels is subadditive under the tensor product of superchannels, i.e,
   \begin{align}
   \label{equ:additivity_of_entropy_superchannel}
       S\left[\Theta \otimes \Gamma \right] \leq S\left[\Theta\right] + S\left[\Gamma\right],
   \end{align}
   where $\Theta:\mathcal{L}\left(A ,B\right)\rightarrow \mathcal{L}\left(C ,D\right)$  and $\Gamma:\mathcal{L}\left(A' ,B'\right)\rightarrow \mathcal{L}\left(C' ,D'\right)$ are any two superchannels.
\end{proposition}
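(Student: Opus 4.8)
The plan is to recast the claimed subadditivity of $S[\cdot]$ as a \emph{superadditivity} statement for the superchannel relative entropy and then to prove that statement by restricting the optimization in the definition of $\schD*{\cdot}{\cdot}$ to product inputs and invoking Proposition~\ref{prop3}. First I would verify that the completely depolarizing supermap tensorizes: if $\rt:\mathcal{L}(A,B)\to\mathcal{L}(C,D)$ and $\wrt:\mathcal{L}(A',B')\to\mathcal{L}(C',D')$ are the completely depolarizing supermaps attached to $\Theta$ and $\Gamma$, then $\rt\otimes\wrt$ is exactly the completely depolarizing supermap attached to $\Theta\otimes\Gamma$. This follows from $\Cwhat_{\mathsf{X}\otimes\mathsf{Y}}=\Cwhat_{\mathsf{X}}\otimes\Cwhat_{\mathsf{Y}}$ up to a fixed reshuffling of tensor factors, hence $\tr(\Cwhat_{\mathsf{X}\otimes\mathsf{Y}})=\tr(\Cwhat_{\mathsf{X}})\tr(\Cwhat_{\mathsf{Y}})$, together with $\mathcal{R}_{C\to D}\otimes\mathcal{R}_{C'\to D'}=\mathcal{R}_{CC'\to DD'}$ and linearity on the span of simple tensors. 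Granting this, $S[\Theta\otimes\Gamma]=-\schD*{\Theta\otimes\Gamma}{\rt\otimes\wrt}$, so Eq.~\eqref{equ:additivity_of_entropy_superchannel} is equivalent to
\begin{equation*}
\schD*{\Theta\otimes\Gamma}{\rt\otimes\wrt}\;\geq\;\schD*{\Theta}{\rt}+\schD*{\Gamma}{\wrt}.
\end{equation*}

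To prove this inequality I would fix arbitrary quantum channels $\mathcal{N}_1\in\mathcal{L}(R_1\otimes A, R_1\otimes B)_{+}$ and $\mathcal{N}_2\in\mathcal{L}(R_2\otimes A', R_2\otimes B')_{+}$ and regard $\mathcal{N}_1\otimes\mathcal{N}_2$, after the standard regrouping of tensor factors, as a quantum channel in $\mathcal{L}\bigl(R\otimes(A\otimes A'),R\otimes(B\otimes B')\bigr)_{+}$ with $R:=R_1\otimes R_2$; this is an admissible input in the supremum defining $\schD*{\Theta\otimes\Gamma}{\rt\otimes\wrt}$. Under the same regrouping one has $\mathrm{id}_R\otimes(\Theta\otimes\Gamma)=(\mathrm{id}_{R_1}\otimes\Theta)\otimes(\mathrm{id}_{R_2}\otimes\Gamma)$ and likewise for $\rt\otimes\wrt$, so that $\bigl(\mathrm{id}_R\otimes(\Theta\otimes\Gamma)\bigr)(\mathcal{N}_1\otimes\mathcal{N}_2)=\mathcal{P}_1\otimes\mathcal{P}_2$ and $\bigl(\mathrm{id}_R\otimes(\rt\otimes\wrt)\bigr)(\mathcal{N}_1\otimes\mathcal{N}_2)=\mathcal{Q}_1\otimes\mathcal{Q}_2$, where $\mathcal{P}_1:=(\mathrm{id}_{R_1}\otimes\Theta)(\mathcal{N}_1)$ and $\mathcal{Q}_1:=(\mathrm{id}_{R_1}\otimes\rt)(\mathcal{N}_1)$ are a quantum channel and a CP map, respectively, and $\mathcal{P}_2,\mathcal{Q}_2$ are built from $\mathcal{N}_2$ via $\Gamma$ and $\wrt$ in the same way.

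Applying Proposition~\ref{prop3} then gives $\chD*{\mathcal{P}_1\otimes\mathcal{P}_2}{\mathcal{Q}_1\otimes\mathcal{Q}_2}\geq\chD*{\mathcal{P}_1}{\mathcal{Q}_1}+\chD*{\mathcal{P}_2}{\mathcal{Q}_2}$. Since $\mathcal{N}_1\otimes\mathcal{N}_2$ is admissible, the left-hand side is at most $\schD*{\Theta\otimes\Gamma}{\rt\otimes\wrt}$; and since $\chD*{\mathcal{P}_1}{\mathcal{Q}_1}$ depends only on $(\mathcal{N}_1,R_1)$ and $\chD*{\mathcal{P}_2}{\mathcal{Q}_2}$ only on $(\mathcal{N}_2,R_2)$, taking the supremum over $(\mathcal{N}_2,R_2)$ and then over $(\mathcal{N}_1,R_1)$ yields $\schD*{\Theta\otimes\Gamma}{\rt\otimes\wrt}\geq\schD*{\Theta}{\rt}+\schD*{\Gamma}{\wrt}$. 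Multiplying by $-1$ and using the first paragraph gives Eq.~\eqref{equ:additivity_of_entropy_superchannel}.

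The step I expect to be most delicate is the bookkeeping in the second paragraph: pinning down the identification of $\mathcal{L}(A,B)\otimes\mathcal{L}(A',B')$ with $\mathcal{L}(A\otimes A',B\otimes B')$, and checking that $\mathrm{id}\otimes(\Theta\otimes\Gamma)$ really splits as $(\mathrm{id}\otimes\Theta)\otimes(\mathrm{id}\otimes\Gamma)$ on product inputs while consistently tracking the reference systems $R_1,R_2$ (and the further reference implicit in the definition of $\chD*{\cdot}{\cdot}$). It is also worth emphasizing that this argument yields only the inequality $\leq$, not equality: obtaining the reverse would require the optimal input for $\schD*{\Theta\otimes\Gamma}{\rt\otimes\wrt}$ to be of product form, which---unlike the situation for channel entropy in Proposition~\ref{proposition_2}, where the underlying channel relative entropy against $\mathcal{R}\otimes\mathcal{R}$ is additive---does not follow from Proposition~\ref{prop3} alone, that being only a one-sided bound.
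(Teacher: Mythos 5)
Your proposal is correct and follows essentially the same route as the paper: restrict the supremum in $\schD*{\Theta\otimes\Gamma}{\rt\otimes\wrt}$ to product inputs $\mathcal{N}_1\otimes\mathcal{N}_2$ (using the isomorphism $\mathcal{L}(R\otimes R'\otimes A\otimes A',\,R\otimes R'\otimes B\otimes B')\simeq\mathcal{L}(R\otimes A,R\otimes B)\otimes\mathcal{L}(R'\otimes A',R'\otimes B')$), invoke Proposition~\ref{prop3} for the superadditivity of the channel relative entropy, and take suprema separately. Your explicit check that the completely depolarizing supermap tensorizes, and your closing remark on why only the one-sided inequality follows, are sensible additions but do not alter the argument.
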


\begin{proof}
Let $\rt:\mathcal{L}\left(A ,B\right)\rightarrow \mathcal{L}\left(C ,D\right)$ and $\widetilde{\mathcal{R}}^{(2)}:\mathcal{L}\left(A', B'\right)\rightarrow \mathcal{L}\left(C', D'\right)$ be the generalizations of completely depolarising map to the space of superchannels (see Eq.~\eqref{eq:gen-comp-depol}). From the definition of the generalized divergence of superchannels, we have
\begin{align}
 \schD*{\Theta \otimes \Gamma}{\rt \otimes\wrt} = \sup_{\mathcal{N}\in \mathcal{L}(E\otimes A\otimes A', E\otimes B\otimes B')_{+}} \chD*{\left(\mathrm{id}_{E} \otimes \Theta \otimes \Gamma \right) \left(\mathcal{N}\right) }{ \left(\mathrm{id}_{E} \otimes \rt \otimes \wrt \right) \left(\mathcal{N}\right)}, \nonumber
\end{align}
where we can take the ancilla to be $E=R \otimes R'$. Now, using the facts that $\mathcal{L}(R\otimes R' \otimes A\otimes A', R\otimes R' \otimes B\otimes B')$ is isomorphic to  $\mathcal{L}(R\otimes A, R\otimes B) \otimes \mathcal{L}(R'\otimes A', R'\otimes B')$, and supremum over a set is always greater than or equal to the supremum over its subsets, we obtain
\begin{align*}
     &\sup_{\mathcal{N}\in \mathcal{L}(E\otimes A\otimes A', E\otimes B\otimes B')_{+}} \chD*{\left(\mathrm{id}_{E} \otimes \Theta \otimes \Gamma \right) \left(\mathcal{N}\right) }{ \left(\mathrm{id}_{E} \otimes \rt \otimes \wrt \right) \left(\mathcal{N}\right)}\\
     & \geq \sup_{ \substack{\mathcal{N}_1 \otimes \mathcal{N}_2: \\ \mathcal{N}_1\in \mathcal{L}(R\otimes A, R\otimes B)_{+}\\ \mathcal{N}_2\in \mathcal{L}(R'\otimes A', R'\otimes B')_{+}}} \chD*{\left(\mathrm{id}_{R} \otimes \Theta\right) \left(\mathcal{N}_1\right) \otimes \left(\mathrm{id}_{R'} \otimes \Gamma\right) \left(\mathcal{N}_2\right) }{ \left(\mathrm{id}_{R} \otimes \rt\right) \left(\mathcal{N}_1\right)  \otimes \left(\mathrm{id}_{R'} \otimes \wrt\right) \left(\mathcal{N}_2\right)} \\
   &\geq \sup_{\mathcal{N}_1\in \mathcal{L}(R\otimes A, R\otimes B)_{+}}  \chD*{ \left(\mathrm{id}_{R} \otimes \Theta\right) \left(\mathcal{N}_1\right)}{ \left(\mathrm{id}_{R} \otimes \rt\right) \left(\mathcal{N}_1\right) }\\
   &\hspace{0.5cm}+\sup_{\mathcal{N}_2\in \mathcal{L}(R'\otimes A', R'\otimes B')_{+}}  \chD*{\left(\mathrm{id}_{R'} \otimes \Gamma\right) \left(\mathcal{N}_2\right)}{\left(\mathrm{id}_{R'} \otimes \wrt\right) \left(\mathcal{N}_2\right)} \\
    &=\schD*{\Theta }{\rt} + \schD*{\Gamma}{\wrt},
\end{align*}
where the second inequality follows from Proposition \ref{prop3}. Thus, we have 
\begin{align*}
   -\schD*{\Theta \otimes\Gamma }{\rt\otimes\wrt} \leq -\schD*{\Theta }{\rt} - \schD*{\Gamma}{\wrt},
\end{align*}
which implies that $S\left[\Theta \otimes \Gamma \right] \leq S\left[\Theta\right] + S\left[\Gamma\right]$, completing the proof of the proposition.
\end{proof}

\begin{proposition}
\label{prop:ssch-nomr-cond}
    Let $\Theta_{0}:\mathcal{L}\left(A ,A'\right)\rightarrow \mathcal{L}\left(C ,C'\right)$ be a replacer superchannel defined as $\Theta_0\left(\mathsf{X}\right) := |A|^{-1}\tr\left(\Cwhat_{\mathsf{X}}\right)\mathcal{N}_{0}$, where $\mathsf{X}$ is an arbitrary map in the input space of $\Theta_{0}$, $\mathcal{N}_{0}\in \mathcal{L}(C, C')$ is some fixed channel, $\Cwhat_{\mathsf{X}}$ is the Choi operator for $\mathsf{X}$. Then
    \begin{align}
        S\left[\Theta_{0}\right] \leq S\left[ \mathcal{N}_{0}\right]+\log\abs{A}.
    \end{align}
    
\end{proposition}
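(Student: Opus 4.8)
The plan is to lower-bound the superchannel divergence $\schD*{\Theta_0}{\rt}$ by evaluating the supremum in Eq.~\eqref{equ:divergence_of_superchannels} at a single, conveniently chosen input channel, and then to account for the normalization constant relating $\rt$ to the completely depolarizing map $\mathcal{R}_{C\to C'}$. Since $S[\Theta_0]=-\schD*{\Theta_0}{\rt}$ and $S[\mathcal{N}_0]=-\chD*{\mathcal{N}_0}{\mathcal{R}_{C\to C'}}$ by Definition~\ref{dfn:entropy_channel}, it suffices to show $\schD*{\Theta_0}{\rt}\geq\chD*{\mathcal{N}_0}{\mathcal{R}_{C\to C'}}-\log|A|$. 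First I would take the reference system $R$ in Eq.~\eqref{equ:divergence_of_superchannels} to be trivial (or, if one prefers to keep $R$ nontrivial, take $\mathcal{N}=\id_R\otimes\mathcal{E}$ and restrict the inner supremum to product input states), which immediately gives $\schD*{\Theta_0}{\rt}\geq\chD*{\Theta_0(\mathcal{E})}{\rt(\mathcal{E})}$ for every quantum channel $\mathcal{E}\in\mathcal{L}(A,A')_+$.

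Next I would simplify the two maps on the right. For any channel $\mathcal{E}$ one has $\tr(\Cwhat_{\mathcal{E}})=\tr(\mathcal{E}(\mathbbm{1}_A))=|A|$, so the definition of $\Theta_0$ gives $\Theta_0(\mathcal{E})=|A|^{-1}\tr(\Cwhat_{\mathcal{E}})\,\mathcal{N}_0=\mathcal{N}_0$, and the definition of the completely depolarizing supermap (Eq.~\eqref{eq:gen-comp-depol}) gives $\rt(\mathcal{E})=\tr(\Cwhat_{\mathcal{E}})\,\mathcal{R}_{C\to C'}=|A|\,\mathcal{R}_{C\to C'}$. Hence $\schD*{\Theta_0}{\rt}\geq\chD*{\mathcal{N}_0}{|A|\,\mathcal{R}_{C\to C'}}$, and the remaining step is to compare the channel divergence against the unnormalized CP map $|A|\,\mathcal{R}_{C\to C'}$ with the one against $\mathcal{R}_{C\to C'}$. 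Here I would use that $(\id_R\otimes\mathcal{N}_0)(\Psi_{RC})$ is a normalized density operator for every pure $\Psi_{RC}$, since $\mathcal{N}_0$ is a channel, so rescaling the second argument of the state relative entropy by $|A|$ subtracts exactly $\log|A|$; taking the supremum over $\Psi_{RC}$ yields $\chD*{\mathcal{N}_0}{|A|\,\mathcal{R}_{C\to C'}}=\chD*{\mathcal{N}_0}{\mathcal{R}_{C\to C'}}-\log|A|$.

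Chaining the two inequalities gives $\schD*{\Theta_0}{\rt}\geq\chD*{\mathcal{N}_0}{\mathcal{R}_{C\to C'}}-\log|A|$, hence $S[\Theta_0]=-\schD*{\Theta_0}{\rt}\leq S[\mathcal{N}_0]+\log|A|$, which is the claim. The computation is short and there is no substantial obstacle; the one point demanding care is the normalization bookkeeping---the factor $|A|=\tr(\Cwhat_{\mathcal{E}})$ appearing in $\rt(\mathcal{E})$ is exactly what converts the divergence against $\rt(\mathcal{E})$ into the divergence against the genuine completely depolarizing map, and it is this factor that produces the additive $\log|A|$. One should also check that collapsing the supremum in $\schD*{\cdot}{\cdot}$ to a single product input channel is legitimate, but this follows at once from the definition together with the monotonicity (data processing) of the generalized channel divergence.
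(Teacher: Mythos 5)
Your proof is correct. It reaches the same bound as the paper but by a slightly different (and arguably more economical) route. The paper keeps an arbitrary bipartite input channel $\mathcal{N}\in\mathcal{L}(R\otimes A,R\otimes A')_+$, observes that $(\mathrm{id}_R\otimes\Theta_0)(\mathcal{N})$ and $(\mathrm{id}_R\otimes\rt)(\mathcal{N})$ factorize as $|A|^{-1}\mathcal{M}^{\mathcal{N}}\otimes\mathcal{N}_0$ and $\mathcal{M}^{\mathcal{N}}\otimes\mathcal{R}$ for a common map $\mathcal{M}^{\mathcal{N}}$ on $R$, and then invokes the superadditivity of the channel divergence under tensor products (Proposition~\ref{prop3}) to peel off the reference factor; the $-\log|A|$ emerges as $\chD*{|A|^{-1}\mathcal{M}^{\mathcal{N}}}{\mathcal{M}^{\mathcal{N}}}$. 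You instead collapse the supremum in Eq.~\eqref{equ:divergence_of_superchannels} to a trivial reference and a single input channel $\mathcal{E}$, compute $\Theta_0(\mathcal{E})=\mathcal{N}_0$ and $\rt(\mathcal{E})=|A|\,\mathcal{R}$ exactly from $\tr(\Cwhat_{\mathcal{E}})=|A|$, and extract the $-\log|A|$ from the elementary scaling identity $\staD*{\rho}{c\,\sigma}=\staD*{\rho}{\sigma}-\log c$ for normalized $\rho$. Since only a lower bound on the supremum is needed, your restriction is legitimate, and it spares you both the factorization claim for arbitrary bipartite inputs and the appeal to Proposition~\ref{prop3}; what the paper's version buys is a statement valid term-by-term over the whole feasible set, which is not needed here. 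In both arguments the source of the additive $\log|A|$ is identical: the $|A|^{-1}$ normalization built into $\Theta_0$ is absent from $\rt$.
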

\begin{proof}
   Let $\mathcal{N} \in \mathcal{L}(R \otimes A, R \otimes A')_{+}$ be an arbitrary bipartite quantum channel. Note that $\left(\mathrm{id}_{R}\otimes\Theta_{0}\right) \mathcal{N}$ and $\left(\mathrm{id}_{R}\otimes\rt\right) \mathcal{N}$ can be written as $ |A|^{-1} \mathcal{M}^{\mathcal{N}}\otimes \mathcal{N}_{0}$ and $\mathcal{M}^{\mathcal{N}}\otimes \mathcal{R}$, respectively for some CPTP map $|A|^{-1} \mathcal{M}^{\mathcal{N}}_{R \rightarrow R}$. Using the definitions of the replacer superchannel $\Theta_{0}$ and supermap $\rt$, we can write
   \begin{align}
      \sup_{\mathcal{\mathcal{N}}\in\mathcal{L}(R \otimes A, R \otimes A')_{+}} \chD*{\left(\mathrm{id}_{R}\otimes\Theta_{0}\right)\mathcal{N}}{ \left(\mathrm{id}_{R}\otimes\rt\right) \mathcal{N} }  
      &= \sup_{\mathcal{\mathcal{N}}\in\mathcal{L}(R \otimes A, R \otimes A')_{+}} \chD*{|A|^{-1} \mathcal{M}^{\mathcal{N}}\otimes \mathcal{N}_{0}}{ \mathcal{M}^{\mathcal{N}}\otimes \mathcal{R}} \nonumber\\ 
      & \geq \chD*{\mathcal{N}_{0} }{\mathcal{R}}-\log|A|,
    \end{align}
where the inequality follows from Proposition \ref{prop3}. This completes the proof of the proposition.
\end{proof}

\section{Discussion}\label{conclusion}
Entropic functionals such as the von Neumann entropy and the relative entropy have played a pivotal role in the development of quantum technologies as they are the widely used functionals to gauge the performance and efficacy of quantum information processing and computing protocols. With the recent advances in quantum information theory, there is a prevailing consensus that the quantum channels constitute the fundamental objects whose control and manipulation yield the most general quantum information processing and computing protocols. Therefore, it is of utmost importance to associate a notion of entropic functionals to the network of quantum channels.

In this work, we start by discussing the notions of entropies and the relative entropies of quantum channels and generalize them to include the higher order processes. Using the results we derive in this work on the entropy gain of a quantum state under positive maps, which generalize some of the existing results in the literature, we provide a compact formula for the change in entropy of a quantum channel belonging to a wide class of quantum channels under the action of superchannels that preserves this class. Tele-covariant channels belong to this wide class of channels and for them we compute the entropy change under tele-covariance preserving superchannels as an example of our results. Further, we define the notion of the sufficiency of superchannels and obtain the necessary and sufficient conditions for sufficiency in terms of entropic inequalities. We then go on to define the notions of generalized divergence and entropy for quantum super-superchannels and analyze their properties.

Following the similar ideas that we use to define generalized divergence and entropy of quantum superchannels, we can also define the generalized divergence and entropy of higher order processes. Given a generalized divergence between two $n^{\text{th}}$ order processes, we define the generalized divergence between two $(n+1)^{\text{th}}$ order processes as follows. Let $\mathcal{V}^{(n)}$ denote the vector space of $n^{\text{th}}$ order maps. We define $(n+1)^{\text{th}}$ order processes as the $(n+1)^{\text{th}}$ order maps that are completely $n^{\text{th}}$ order process preserving. Here $n=0$ represents quantum states, $n=1$ represents quantum channels, $n=2$ represents quantum superchannels, and so on. We denote the set of $n^{\text{th}}$ order processes by $\mathcal{V}^{(n)}_{+}$. Clearly, we have, $\mathcal{V}^{(n)}_{+} \subseteq \mathcal{V}^{(n)}$.

{\it Possible generalized divergence functional}.--- Given a generalized divergence $\BnchD*{\cdot}{\cdot}$ of $n^{\text{th}}$ order process, let us define the functional $\BnonechD*{\Theta^{(n+1)}}{\Gamma^{(n+1)}}$ between two $(n+1)^{\mathrm{\text{th}}}$ order processes $\Theta^{(n+1)}$ and $\Gamma^{(n+1)}$, where $n \in \mathbb{Z}_+$, as follows: 
    \begin{equation}
     \BnonechD*{\Theta^{(n+1)}}{\Gamma^{(n+1)}} = \sup_{\Omega^{(n)} \in  \left(\mathcal{W}^{(n)} \otimes \mathcal{V}^{(n)}\right)_{+} } \BnchD*{\left(\mathrm{id} \otimes \Theta^{(n+1)} \right)\left(\Omega^{(n)}\right)}{\left(\mathrm{id} \otimes \Gamma^{(n+1)}  \right)\left(\Omega^{(n)}\right)}, \label{equ:divergence_of_nth_order_process}
    \end{equation}
    where $\mathcal{W}^{(n)}$ denotes an ancillary space that contains all $n^{\text{th}}$ order processes of all possible finite (positive integers) dimensions and the supremum is taken with respect to the joint process $\left(\mathcal{W}^{(n)} \otimes \mathcal{V}^{(n)}\right)_{+}$. Here $\mathrm{id}$ is the identity of $(n+1)^{\text{th}}$ order process.
 Note that the generalized divergence for $n^{\text{th}}$ order process is a functional $\mathbf{D}^{(n)}:\mathcal{V}^{(n)}_{+} \times \mathcal{V}^{(n)}_{+} \rightarrow \mathbb{R}$, which satisfies the following monotonicity relation under $(n+1)^{\text{th}}$ order process:
    \begin{equation}
    \BnchD*{\Theta^{(n)}}{\Gamma^{(n)}} \geq \BnchD*{\Phi^{(n+1)}\left(\Theta^{(n)}\right)}{\Phi^{(n+1)}\left(\Gamma^{(n)}\right)}, \label{EEE_1}
\end{equation}
for all $\Theta^{(n)},\Gamma^{(n)} \in \mathcal{V}^{(n)}_{+}$ and $\Phi^{(n+1)} \in \mathcal{V}^{(n+1)}_{+}$. We expect that the following dilation theorem similar to Eq.~(\ref{equ:superchannel_action}) will hold for any $n+1$ order process: $\Phi^{(n+1)}\left(\Theta^{(n)}\right)= \Gamma_{\text{post}}^{(n)} \circ \left(\Theta^{(n)} \otimes \id \right)\circ \Lambda^{(n)}_{\text{pre}}$, where $\Gamma^{(n)}$ and $\Lambda^{(n)}$ are two given $n^{\text{th}}$ order processes similar to the preprocessing and postprocessing quantum channels as in Eq.~(\ref{equ:superchannel_action}). It is clear that the right-hand side is also an $n^{\text{th}}$ order process since it is a concatenation of $n^{\text{th}}$ order processes but we leave the proof of converse statement open. In Appendix~\ref{proof_of_monotonicity_of_dn}, we show that for all $\Phi^{(n+1)} \in \mathcal{V}^{(n+1)}_{+}$ and $\Theta^{(n)},\Gamma^{(n)}\in \mathcal{V}^{(n)}_{+}$ such that the above dilation holds, the functional $\BnonechD*{\cdot}{\cdot}$ (Eq.~\eqref{equ:divergence_of_nth_order_process}) becomes a valid generalized divergence.

Now, to define the generalized entropy, we need the generalization of completely depolarizing maps to higher order processes, which we do as follows. Let ${\mathcal{R}}^{(n)}$ denote the completely depolarising map of $n^{\text{th}}$ order process. Then, the action of $\mathcal{R}^{(n)}:\mathcal{V}^{(n-1)} \rightarrow \mathcal{V}^{(n-1)}$, where $\mathcal{V}^{(n-1)}$ denotes the space of all $(n-1)^{\text{th}}$ order processes, is defined as  $\mathcal{R}^{(n)}(\mathsf{X})= \widetilde{f}\left(\mathsf{X}\right)\mathcal{R}^{(n-1)}$, 
where $\mathsf{X}$ is an arbitrary $(n-1)^{\text{th}}$ order process and $\mathcal{R}^{(n-1)}$ is a completely depolarising map of $(n-1)^{\text{th}}$ order process, and $\widetilde{f}(\mathsf{X})$ is a linear functional that one has to choose suitably such that the map $\mathcal{R}^{(n)}$ becomes a completely $(n-1)^{\text{th}}$ order  process preserving map. With these notions, we define the entropy $\mathbf{S}^{(n)}\left[\Theta^{(n)}\right]$ of a higher-order process ${\Theta}^{(n)}:\mathcal{V}^{(n-1)} \rightarrow \mathcal{V}^{(n-1)}$ as 
\begin{equation}
    \mathbf{S}^{(n)}\left[\Theta^{(n)}\right]:=-\BnchD*{\Theta^{(n)}}{\mathcal{R}^{(n)}},
\end{equation} 
where $\BnchD*{\Theta^{(n)}}{\mathcal{R}^{(n)}}$ is a generalized divergence between $n^{\text{th}}$ order process $\Theta^{(n)}$ and the completely depolarizing map $\mathcal{R}^{(n)}$ of $n^{\text{th}}$ order process. 

A key component in proving various propositions and theorems related to the entropy and relative entropy of quantum channels in our paper is the dilation theorem for quantum superchannels~(i.e., Eq~\eqref{equ:superchannel_action}). If it can be shown that the dilation theorem holds for higher-order processes, our results can be extended to a broader class of maps. We leave it as an open question to study the properties of $\mathbf{S}^{(n)}\left[\Theta^{(n)}\right]$, and to investigate an operational meaning for the definition of the generalized divergence in the discrimination task of higher-order quantum processes. It would be interesting to study possible implications of data processing inequality between channels (cf.~Theorem~\ref{theorem_4}) on correction of gates in different information processing tasks. Another interesting direction would be to discuss the properties of conditional entropy, mutual information, and conditional mutual information of multipartite quantum channels that can be derived based on the definition of the relative entropy between quantum channels. We believe that our main results and meta results will find interesting applications in wide domains of quantum information theory, such as open quantum systems, quantum communication, quantum many-body systems, and other areas of quantum physics where entropy change and data processing inequalities play crucial role.

\section*{Acknowledgments}

We are grateful to Maarten Wegewijs and  Kaiyuan Ji for carefully reading our paper and pointing out an error to a previous version of Theorem 2. We thank Mark M Wilde for discussions. Sohail and VP thank the International Institute of Information Technology Hyderabad for the support and hospitality during their visits in the summer 2023 where part of this work was done. US and SD acknowledge support from the Ministry of Electronics and Information Technology (MeitY), Government of India, under Grant No. 4(3)/2024-ITEA and IIIT Hyderabad under the Faculty Seed Grant. SD acknowledges support from the Science and Engineering Research Board, Department of Science and Technology (SERB-DST), Government of India under Grant No. SRG/2023/000217.

\subsection*{Funding and/or Conflicts of interests/Competing interests}
 The authors declare that they have no conflict of interest.

\appendix
\section{Structure of the Appendix}
In the Appendices, we provide detailed proofs of the results (lemmas, propositions, and theorems) of the main text and offer some additional observations. The structure of the appendix is as follows. In Appendix~\ref{appa}, we provide the proofs of statements (which are about some desirable properties of spaces of channels and superchannels) from Section \ref{prelims&notations}. In Appendix~\ref{proof_of_binod}, we have proof of statements from Section~\ref{Entropy_of_quantum_channels}. This section also contains several bounds on entropy gain under CP, CP unital, and positive maps (see Theorem~\ref{result_entropy_gain_1}, Lemma~\ref{lemma_binods} and Proposition  \ref{prop_binod}). In Appendix~\ref{detailde_proof_of_recovery}, we provide the proof of Theorem \ref{theorem_4}, which is also one of our main results and it provides a refinement of the data processing inequality. In Appendix \ref{proof_of_monotonicity_of_dn}, we provide the proof of monotonicity of $\mathbf{D}^{(n)}$.

   \section{Proofs of statements in Section~\texorpdfstring{\ref{prelims&notations}}{}}\label{appa}
   \subsection{Proof of Lemma \ref{equ:lemma1}}\label{detailed_proof_lemma_1}

      Let $\{b^{A}_{j}\}_{j}$ denote an orthonormal basis for $\mathcal{L}\left({A}\right)$. Any operator $\mathsf{K} \in \mathcal{L}(A)$ can be expressed as a linear combination of the elements of $\{b^{A}_{j}\}_{j}$ as: $\mathsf{K}=\sum_{j} \inner{b_{j}^{A},\mathsf{K}} b_{j}^{A}$.
Using this equation, the inner product between linear maps $\mathcal{N}_{A \rightarrow B}$ and $\mathcal{M}_{A \rightarrow B}$ can be written as 
\begin{align*}
   \sum_{j} \inner{\mathcal{N}(b^{A}_{j}),\mathcal{M}(b^{A}_{j})} &= \sum_{i} \sum_{j} \inner{\mathcal{N}(b^{A}_{j}),\mathcal{M}\left(\inner{a^{A}_{i},b^{A}_{j}} a^{A}_{i}\right)}\\
   &= \sum_{i} \inner{\mathcal{N}\left(\mathsf{B}^{A}_{i}\right),\mathcal{M}( a^{A}_{i})}\\
   &= \sum_{i} \inner{\mathcal{N}(a^{A}_{i}),\mathcal{M}(a^{A}_{i})},
\end{align*}
  where $\{a^{A}_{i}\}$ is some orthonormal basis in $\mathcal{L}\left({A}\right)$ and we have used the fact that  $\mathsf{B}^{A}_{i}:=\sum_{j}\inner{b^{A}_{j},a^{A}_{i}}b^{A}_{j}=a^{A}_{i}$ in the last equality. This concludes the proof.

\subsection{Proof of Lemma \ref{theorem_1}}\label{detailed_proof_thm_1}

We will first show that "if part" holds. Note that $\sum_{ijkl}  \mathcal{E}^{A \rightarrow B}_{ijkl} \otimes  \mathcal{E}^{A \rightarrow B}_{ijkl}$ is a CP map. As we have assumed that the supermap $\Theta$ is completely CP-preserving, we have $\Lambda_{\Theta}=\left(\mathrm{id} \otimes \Theta \right) \sum_{ijkl}  \mathcal{E}^{A \rightarrow B}_{ijkl} \otimes \mathcal{E}^{A \rightarrow B}_{ijkl}$ is CP.

 We will now prove the nontrivial "only if" part. Given that $\Lambda_{\Theta}$ is CP,  we need to prove that if $ \Omega \in \mathcal{L}\left(E,E'\right) \otimes \mathcal{L}(A,B) $ is CP~(where $E$ and $E'$ are reference systems), then $\left({\mathrm{id}} \otimes \Theta\right) \Omega$ is also CP. 
  First, we prove the following lemma
\begin{lemma} \label{lamma:1} A map $ \Omega \in \mathcal{L}\left(E,E'\right) \otimes \mathcal{L}(A,B)$ is CP if and only if it can be written as 
\begin{equation}
    \Omega =\sum_{n}\sum_{\alpha \beta \gamma \delta} \sum_{p q r s} {\kappa}^{n}_{\alpha \beta \gamma \delta} \bar{{\kappa}}^{n}_{pqrs} \mathcal{E}^{{E} \rightarrow {E'}}_{\gamma r \alpha p} \otimes \mathcal{E}^{A \rightarrow B}_{\delta s \beta q}, \label{lamma1}
\end{equation}
where $\kappa^{n}_{\alpha \beta \gamma \delta}$ are complex numbers, $\Bar{\kappa}^{n}_{pqrs}$ denotes the complex conjugate of $\kappa^{n}_{pqrs}$~(these complex numbers depend on the CP map $\Omega$), and $\left\{\mathcal{E}^{{E} \rightarrow {E'}}_{\gamma r \alpha p}\right\}_{\gamma r \alpha p}$, $\left\{\mathcal{E}^{A \rightarrow B}_{\delta s \beta q}\right\}_{\delta s \beta q}$ are orthonormal basis for $\mathcal{L}\left(E,E'\right)$ and $\mathcal{L}\left(A,B\right)$, respectively. 
\end{lemma}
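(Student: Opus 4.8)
The plan is to view $\Omega$, under the vector-space identification used throughout Section~\ref{prelims&notations}, as a linear map from $\mathcal{L}(E\otimes A)$ to $\mathcal{L}(E'\otimes B)$, and to recognize the right-hand side of Eq.~\eqref{lamma1} as exactly the Kraus form of such a map written out in the product basis $\{\mathcal{E}^{E\to E'}_{\gamma r\alpha p}\otimes\mathcal{E}^{A\to B}_{\delta s\beta q}\}$. The first step is to record the elementary identity
\[
\left(\mathcal{E}^{E\to E'}_{\gamma r\alpha p}\otimes\mathcal{E}^{A\to B}_{\delta s\beta q}\right)(\mathsf{X})=\inner{e^{E}_{\gamma r}\otimes e^{A}_{\delta s},\,\mathsf{X}}\,\left(e^{E'}_{\alpha p}\otimes e^{B}_{\beta q}\right),
\]
valid for all $\mathsf{X}\in\mathcal{L}(E\otimes A)$; this follows from the definition in Eq.~\eqref{equ:channel_basis} by checking it on product operators $\mathsf{X}=\mathsf{X}_E\otimes\mathsf{X}_A$ and extending by bilinearity of the Hilbert--Schmidt inner product. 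I will also use the companion identity $\bra{\gamma\delta}\mathsf{X}\ket{rs}=\inner{e^{E}_{\gamma r}\otimes e^{A}_{\delta s},\,\mathsf{X}}$.

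For the "if'' direction, suppose $\Omega$ has the form Eq.~\eqref{lamma1}. For each $n$ I would set $M_n:=\sum_{\alpha\beta\gamma\delta}\kappa^{n}_{\alpha\beta\gamma\delta}\,\ketbra{\alpha}{\gamma}_{E'E}\otimes\ketbra{\beta}{\delta}_{BA}\in\mathcal{L}(E\otimes A,\,E'\otimes B)$, i.e.\ the operator whose matrix elements in the canonical bases are the coefficients $\kappa^{n}_{\alpha\beta\gamma\delta}$, so that $M_n^{\dagger}=\sum_{pqrs}\bar\kappa^{n}_{pqrs}\,\ketbra{r}{p}_{EE'}\otimes\ketbra{s}{q}_{AB}$. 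Expanding $\sum_n M_n\,\mathsf{X}\,M_n^{\dagger}$ and using $\bra{\gamma\delta}\mathsf{X}\ket{rs}=\inner{e^{E}_{\gamma r}\otimes e^{A}_{\delta s},\mathsf{X}}$ together with the displayed identity shows, term by term, that $\sum_n M_n\,\mathsf{X}\,M_n^{\dagger}$ equals the right-hand side of Eq.~\eqref{lamma1} applied to $\mathsf{X}$. Hence $\Omega(\mathsf{X})=\sum_n M_n\,\mathsf{X}\,M_n^{\dagger}$, a Kraus decomposition, so $\Omega$ is completely positive.

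For the "only if'' direction, assume $\Omega:\mathcal{L}(E\otimes A)\to\mathcal{L}(E'\otimes B)$ is completely positive. Since all spaces are finite-dimensional, the CJKS/Kraus theorem recalled in Section~\ref{prelims&notations} provides a finite family $\{M_n\}_n\subset\mathcal{L}(E\otimes A,\,E'\otimes B)$ with $\Omega(\mathsf{X})=\sum_n M_n\,\mathsf{X}\,M_n^{\dagger}$. Expanding each $M_n$ in the canonical operator basis of $\mathcal{L}(E\otimes A,\,E'\otimes B)$ defines the scalars $\kappa^{n}_{\alpha\beta\gamma\delta}$ (which depend on the chosen Kraus representation, as asserted in the statement); substituting this expansion back and running the same computation as above rewrites $\Omega$ precisely in the form Eq.~\eqref{lamma1}, with the conjugates $\bar\kappa^{n}_{pqrs}$ coming from the matrix elements of $M_n^{\dagger}$.

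I expect the only delicate point to be index bookkeeping: one must keep straight which pair of labels of $\mathcal{E}^{\cdot\to\cdot}_{ijkl}$ is the "input'' matrix unit and which the "output'' one, and how passing to $M_n^{\dagger}$ permutes and complex-conjugates the coefficients $\kappa^{n}$. Once the elementary identity for $\mathcal{E}^{E\to E'}_{\gamma r\alpha p}\otimes\mathcal{E}^{A\to B}_{\delta s\beta q}$ is established, the rest is routine. I would also state explicitly at the outset that "complete positivity of $\Omega$'' is meant with respect to the identification $\mathcal{L}(E,E')\otimes\mathcal{L}(A,B)\cong\mathcal{L}(E\otimes A,\,E'\otimes B)$, i.e.\ the same sense in which $\Lambda_{\Theta}$ is called CP in Lemma~\ref{theorem_1}.
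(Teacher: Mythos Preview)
Your proposal is correct and follows essentially the same route as the paper: both arguments identify $\Omega$ with a linear map $\mathcal{L}(E\otimes A)\to\mathcal{L}(E'\otimes B)$, invoke the Kraus/Choi--Kraus characterization of complete positivity, expand the Kraus operators $K_n$ (your $M_n$) in the canonical product basis with coefficients $\kappa^{n}_{\alpha\beta\gamma\delta}$, and then verify by direct computation that $\sum_n K_n\mathsf{X}K_n^{\dagger}$ coincides with the right-hand side of Eq.~\eqref{lamma1}. The only cosmetic difference is that you spell out both directions separately, whereas the paper carries out the ``only if'' computation in detail and leaves the ``if'' direction implicit in the biconditional ``CP $\Leftrightarrow$ Kraus form''.
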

\begin{proof}
Let $\left\{e^{Q}_{ij}\right\}_{ij}$ be a canonical basis for the space $\mathcal{L}\left({Q}\right)$, where $ Q = \{A,B,E,E'\}$. Using the bra-ket notation we can write $ e^{Q}_{ij} = \ketbra{i}{j}_{Q}$, where $\left\{\ket{i}_{Q}\right\}_{i}$ is an orthonormal basis for $Q$. A map $\Omega$ is CP if and only if it has a Choi-Kraus decomposition as follows 
\begin{equation}
    \Omega\left(\mathsf{X}\right)=\sum_{n} K_{n} \mathsf{X} K^{\dagger}_{n}, \label{equ:kraus}
\end{equation}
where $K_{n}: E \otimes A \rightarrow E' \otimes B$ are linear operators, and $\mathsf{X} \in \mathcal{L}(E \otimes A)$. So, we can write $K_n$ in terms of basis vectors of $A, B,E$, and $E'$  as follows
\begin{equation}
    K_{n}= \sum_{ \alpha \beta \gamma \delta } \kappa^{n}_{\alpha \beta \gamma \delta} \ket{\alpha_{{E'}}}\ket{\beta_{{B}}} \bra{\gamma_{E}} \bra {\delta_{A}}, \label{equ:choie_kraus}
\end{equation}
where $\kappa^{n}_{\alpha \beta \gamma \delta}= \bra{\alpha_{{E'}}}\bra{\beta_{{B}}} K_{n} \ket{\gamma_{E}} \ket{\delta_{A}}$. Using the above form of $K_{n}$ in Eq.~\eqref{equ:kraus}, we get

\begin{align}
\nonumber
    \Omega\left(\mathsf{X}\right) & = \sum_{n} \sum_{ \alpha \beta \gamma \delta } \sum_{ pqrs } \kappa^{n}_{\alpha \beta \gamma \delta}  \Bar{\kappa}^{n}_{pqrs} \bra{\gamma_{E}} \bra {\delta_{A}} \mathsf{X} \ket{r_{E}} \ket {s_{A}} \ketbra{\alpha_{{E'}}}{p_{{E'}}} \otimes \ketbra{\beta_{{B}}}{q_{{B}}}\\
\nonumber    
    &= \sum_{n} \sum_{\alpha \beta \gamma \delta} \sum_{pqrs} \kappa^{n}_{\alpha \beta \gamma \delta}  \Bar{\kappa}^{n}_{pqrs} \inner{e^{E}_{\gamma r} \otimes e^{A}_{\delta s}, \mathsf{X} } e^{E'}_{\alpha p} \otimes e^{B}_{\beta q}\\
\nonumber    
    &= \sum_{n} \sum_{\alpha \beta \gamma \delta} \sum_{pqrs} \sum_{ijkl} \kappa^{n}_{\alpha \beta \gamma \delta}  \Bar{\kappa}^{n}_{pqrs} x_{ijkl}\inner{e^{E}_{\gamma r} \otimes e^{A}_{\delta s}, e^{E}_{ij} \otimes e^{A}_{kl} } e^{{E'}}_{\alpha p} \otimes e^{B}_{\beta q}\\
\nonumber    
     &=  \sum_{n} \sum_{\alpha \beta \gamma \delta} \sum_{pqrs} \sum_{ijkl} \kappa^{n}_{\alpha \beta \gamma \delta}  \Bar{\kappa}^{n}_{pqrs} x_{ijkl}\inner{e^{E}_{\gamma r}, e^{E}_{ij}} e^{{E'}}_{\alpha p} \otimes \inner{e^{A}_{\delta s}, e^{A}_{kl}} e^{B}_{\beta q}\\
\nonumber     
      &=  \sum_{n} \sum_{\alpha \beta \gamma \delta} \sum_{pqrs} \sum_{ijkl} \kappa^{n}_{\alpha \beta \gamma \delta}  \Bar{\kappa}^{n}_{pqrs} x_{ijkl} \mathcal{E}^{E \rightarrow {E'}}_{\gamma r \alpha p} \left(e^{E}_{ij}\right) \otimes \mathcal{E}^{{A} \rightarrow B}_{\delta s \beta q} \left(e^{A}_{kl}\right)\\
\nonumber      
       &=  \sum_{n} \sum_{\alpha \beta \gamma \delta} \sum_{pqrs}  \kappa^{n}_{\alpha \beta \gamma \delta}  \Bar{\kappa}^{n}_{pqrs} \mathcal{E}^{E \rightarrow {E'}}_{\gamma r \alpha p} \otimes \mathcal{E}^{A \rightarrow B}_{\delta s \beta q} \left(\sum_{\{ijkl\}} x_{ijkl}  e^{E}_{ij} \otimes e^{A}_{kl}   \right) \\
\nonumber       
        &= \sum_{n} \sum_{\alpha \beta \gamma \delta} \sum_{pqrs} \kappa^{n}_{\alpha \beta \gamma \delta}  \Bar{\kappa}^{n}_{pqrs} \mathcal{E}^{E \rightarrow {E'}}_{\gamma r \alpha p} \otimes \mathcal{E}^{A \rightarrow B}_{\delta s \beta q} \left( \mathsf{X} \right).
\end{align}
The second line simply follows from the definition of the Hilbert-Schmidt inner product and we have written back everything in a canonical basis. In the third line, we have used a decomposition of $ \mathsf{X} = \sum_{\{ijkl\}} x_{ijkl}  e^{E}_{ij} \otimes e^{A}_{kl}$ to rewrite $\mathsf{X}$ in terms of canonical basis, and $\{\mathcal{E}^{E \rightarrow {E'}}_{\gamma r \alpha p}\}_{\gamma r \alpha p}$, $\{\mathcal{E}^{{A} \rightarrow B}_{\delta s \beta q}\}_{\delta s \beta q}$ are basis for spaces  $\mathcal{L}\left(E,E'\right)$ and  $\mathcal{L}\left(A, B\right)$, respectively. The fourth and fifth line follows from the definition of basis in $ \{\mathcal{E}^{E \rightarrow {E'}}_{\gamma r \alpha p}\}$ and $\{\mathcal{E}^{{A} \rightarrow B}_{\delta s \beta q}\}$ (see Eq.\eqref{equ:channel_basis}). This concludes the proof of the lemma.
\end{proof} 
 Now given Eq~(\ref{lamma1}) we define a supermap ${\Xi}_{\mathsf{X}} : \mathcal{L}(A ,B)\rightarrow \mathcal{L}\left(E,E'\right)$ by linear extension which is given by the following equation
\begin{eqnarray}
    \Xi_{\mathsf{X}} \left( \mathcal{E}^{{A} \rightarrow {B}}_{\delta s \beta q}  \right)=\sum_{n} \sum_{\alpha \gamma r p} \kappa^{n}_{\alpha \beta \gamma \delta}  \Bar{\kappa}^{n}_{pqrs} \mathcal{E}^{{E} \rightarrow {E'}}_{\gamma r \alpha p}, \label{equ:linear_extension_supermap}
\end{eqnarray}
where $\kappa^{n}_{\alpha \beta \gamma \delta}$ are complex numbers defined in Eq.~\eqref{equ:choie_kraus} and $\Bar{\kappa}^{n}_{\alpha \beta \gamma \delta}$ is the complex conjugate of $\kappa^{n}_{\alpha \beta \gamma \delta}$. So, we can write $\Omega$ as follows
\begin{equation}
    \Omega = \left(\Xi_{\mathsf{X}} \otimes \mathrm{id}\right) \left( \sum_{\delta s  \beta  q } \mathcal{E}^{{A} \rightarrow B}_{\delta s \beta q} \otimes \mathcal{E}^{{A} \rightarrow B}_{\delta s \beta q} \right).
\end{equation}
Applying $\left(\mathrm{id} \otimes \Theta \right)$ on both sides of the above equation, we obtain 
\begin{align}
    \left(\mathrm{id} \otimes \Theta \right) \Omega &= \left(\Xi_{\mathsf{X}} \otimes \mathrm{id}\right) \left( \sum_{\delta s  \beta  q } \mathcal{E}^{{A} \rightarrow B}_{\delta s \beta q} \otimes \Theta \left( \mathcal{E}^{{A} \rightarrow B}_{\delta s \beta q} \right) \right) = (\Xi_{\mathsf{X}} \otimes \mathrm{id}) \Lambda_{\Theta}. \label{equ:Xi_lambda}
\end{align}
It was shown in Lemma~\ref{lamma:1} that a map $\Lambda_{\Theta} \in \mathcal{L}(A,B) \otimes \mathcal{L}(C,D) $ is CP if and only if it can be written as 
\begin{eqnarray}
    \Lambda_{\Theta}=\sum_{m}\sum_{ijkl} \sum_{abcd} b^{m}_{ijkl} \Bar{b}^{m}_{abcd} \mathcal{E}^{A \rightarrow B}_{kcia} \otimes \mathcal{E}^{C \rightarrow D}_{ldjb}. \label{equ:cp_condition_lambda}
\end{eqnarray}
The map $\Lambda_{\Theta}$ in Eq.~\eqref{equ:Xi_lambda} is CP, thus, the use of the above form for $\Lambda_{\Theta}$ in Eq.~\eqref{equ:Xi_lambda} leads to
\begin{align}
\nonumber
    (\mathrm{id} \otimes \Theta) \Omega &= \sum_{m}\sum_{ijkl} \sum_{abcd} b^{m}_{ijkl} \Bar{b}^{m}_{abcd}~\Xi_{\mathsf{X}}\left( \mathcal{E}^{A \rightarrow B}_{kcia} \right) \otimes \mathcal{E}^{C \rightarrow D}_{ldjb} \\
\nonumber    
    & = \sum_{n} \sum_{m}\sum_{ijkl} \sum_{abcd} \sum_{tuvw} b^{m}_{ijkl} \Bar{b}^{m}_{abcd} {\kappa}^{n}_{vitk} \Bar{\kappa}^{n}_{wauc} \mathcal{E}^{{E} \rightarrow {E'}}_{tuvw} \otimes \mathcal{E}^{{C} \rightarrow {D}}_{ldjb}\\  
    & = \sum_{nm} \sum_{jlbd} \sum_{tuvw} \left( \sum_{ik}  {\kappa}^{n}_{vitk} {b}^{m}_{ijkl} \right) \left( \sum_{ac} \Bar{{\kappa}}^{n}_{wauc} {\Bar{b}}^{m}_{abcd} \right) \mathcal{E}^{{E} \rightarrow {E'}}_{tuvw} \otimes \mathcal{E}^{{C} \rightarrow {D}}_{ldjb}, \label{pandu}
\end{align}
where the second line simply follows from the definition of $\Xi_{\mathsf{X}}$~(see Eq.~\eqref{equ:linear_extension_supermap}). 
In terms of canonical basis, the coefficients $\sum_{ik}  {\kappa}^{n}_{vitk} {b}^{m}_{ijkl}$ and $\sum_{ac}  \Bar{\kappa}^{n}_{wauc} \Bar{b}^{m}_{abcd}$ have the following form 

\begin{align}
\nonumber
    \sum_{ik}  {\kappa}^{n}_{vitk} {b}^{m}_{ijkl} &= \mathrm{tr} \left[ \left( \ketbra{j_{{D}}}{v_{{E'}}}\otimes \mathbbm{1}_{{B}} \right) K_{n} \left( \ketbra{t_{{E}}}{l_{{C}}} \otimes \mathbbm{1}_{A} \right) B_{m}^{t}\right ] \\
\nonumber    
    \sum_{ac}  \Bar{\kappa}^{n}_{wauc} \Bar{b}^{m}_{abcd} &= \mathrm{tr} \left[ \left( \ketbra{b_{{D}}}{w_{{E'}}} \otimes \mathbbm{1}_{B} \right) \bar{K}_{n} \left( \ketbra{u_{{A}}}{d_{{C}}} \otimes \mathbbm{1}_{A} \right) B_{m}^{\dagger} \right],
\end{align}
where $B_{m}:{A} \otimes C \rightarrow {B} \otimes {D}$ are linear operators, and $B_{m}^{t}$ is transpose of $B_{m}$. Now we define operators $F^{mn}: E \otimes {C} \rightarrow {E'} \otimes {D}$ such that its matrix elements are given by the following
\begin{equation}
\nonumber
    F^{mn}_{vjtl} = \mathrm{tr} \left[ \left( \ketbra{j_{{D}}}{v_{{E'}}}\otimes \mathbbm{1}_{{B}} \right) K_{n} \left( \ketbra{t_{{E}}}{l_{{C}}} \otimes \mathbbm{1}_{A} \right) {B^t_{m}} \right ].
\end{equation}

Note that the complex conjugate of the matrix elements of $F^{mn}$ are given by 
\begin{align}
\nonumber
    \Bar{F}^{mn}_{wbud} &= \mathrm{tr} \left[ \left( \ketbra{b_{{D}}}{w_{{E'}}} \otimes \mathbbm{1}_{B} \right) \bar{K}_{n} \left( \ketbra{u_{{A}}}{d_{{C}}} \otimes \mathbbm{1}_{A} \right) B_{m}^{\dagger} \right].
\end{align}
In terms of these matrix elements, Eq.~(\ref{pandu}) can be rewritten as the following
\begin{align}
\nonumber
    (\mathrm{id} \otimes \Theta) \Omega &= \sum_{n,m} \sum_{jlbd} \sum_{tuvw} F^{mn}_{vjtl}  \Bar{F}^{mn}_{wbud}~\mathcal{E}^{{E} \rightarrow {E'}}_{tuvw} \otimes \mathcal{E}^{{C} \rightarrow {D}}_{ldjb}.
\end{align}
Looking at the above form and using Lemma~(\ref{lamma:1}), it can be concluded that  $(\mathrm{id} \otimes \Theta) \Omega $ is CP. This completes the proof of the theorem.

\subsection{Proof of Lemma~\ref{lemma_3}}\label{sec:detailed_proof_of_lemma3}
We already have shown that a supermap $\Theta :\mathcal{L}\left(A ,B\right) \rightarrow \mathcal{L}\left(C,D\right)$ is completely CP-preserving if and only if $\Lambda_{\Theta}=\sum_{ijkl} \mathcal{E}^{A \rightarrow B}_{ijkl} \otimes \Theta \left( \mathcal{E}^{A \rightarrow B}_{ijkl}\right) \in \mathcal{L}\left(A ,B\right) \otimes \mathcal{L}\left(C,D\right)$ is CP, 
where $\left\{\mathcal{E}^{{A} \rightarrow B}_{\delta s \beta q}\right\}_{\delta s \beta q}$ is a basis for $\mathcal{L}\left(A,B\right)$ as defined in Eq.~(\ref{equ:channel_basis}). We can determine the complete positivity of $\Lambda_{\Theta}$ by checking the positivity of its Choi operator $ \Cwhat_{\Lambda_{\Theta}}$ as defined below
  \begin{eqnarray}
    \Cwhat_{\Lambda_{\Theta}}=\sum_{pqrs} \left( e^{{A}}_{pq} \otimes e^{C}_{rs}  \right) \otimes  \Lambda_{\Theta} \left( e^{A}_{pq} \otimes e^{C}_{rs}  \right) \in \mathcal{L}(A \otimes C\otimes B \otimes D),
\end{eqnarray}  
where $\{e^{Q}_{ij}\}_{ij}$ denotes a canonical basis for the space $\mathcal{L}\left(Q\right)$ and $ Q = \{A,B,C,D\}$. Using the form of $\Lambda_{\Theta}$ from Eq.~(\ref{Choi-type-operator}) in the above equation, we have 
\begin{equation}
    \Cwhat_{\Lambda_{\Theta}}=\sum_{rs} \sum_{ijkl}\left( e^{A}_{ij} \otimes e^{C}_{rs}  \right) \otimes \left( e^{B}_{kl} \otimes \Theta( \mathcal{E}^{A \rightarrow B}_{ijkl}) e^{C}_{rs}  \right). \label{CCTR}
\end{equation}
Now, the Choi operator of the representing map $\mathsf{T}$ of $\Theta$ is the following
\begin{align}
\nonumber
    \Cwhat_{\mathsf{T}} &= \sum_{ijkl}\left( e^{A}_{ij} \otimes e^{B}_{kl}  \right) \otimes \mathsf{T} \left( e^{A}_{ij} \otimes e^{B}_{kl} \right) = \sum_{ijkl}  \Cwhat_{\mathcal{E}^{
    A \rightarrow B}_{ijkl}} \otimes \mathsf{T} \left( \Cwhat_{\mathcal{E}^{A \rightarrow B}_{ijkl}} \right)= \sum_{ijkl}  \Cwhat_{\mathcal{E}^{A \rightarrow B}_{ijkl}} \otimes  \Cwhat_{\Theta \left({\mathcal{E}^{A \rightarrow B}_{ijkl}}\right)},
\end{align}
where we have used the fact that $e^{A}_{ij} \otimes e^{{B}}_{kl}$ is the Choi operator $\mathsf{C}_{\mathcal{E}^{{A} \rightarrow {B}}_{ijkl}}$ for $\mathcal{E}^{{A} \rightarrow {B}}_{ijkl}$ in the second equality, and the last equality directly follows from Eq.~\eqref{equ:rep_map2}. Interchanging the spaces $B$ and $C$ in the Eq.~(\ref{CCTR}), we have
\begin{align}
\nonumber
    \Cwhat_{\Lambda_{\Theta}} &:=\sum_{rs} \sum_{ijkl}\left( e^{A}_{ij} \otimes e^{B}_{kl}  \right) \otimes \left( e^{C}_{rs}  \otimes \Theta( \mathcal{E}^{A \rightarrow B}_{ijkl}) e^{C}_{rs}  \right)\\
\nonumber    
    &= \sum_{ijkl}\left( e^{A}_{ij} \otimes e^{B}_{kl}  \right) \otimes  \sum_{rs}  \left( e^{C}_{rs}  \otimes \Theta( \mathcal{E}^{A \rightarrow B}_{ijkl}) e^{C}_{rs}  \right)= \sum_{ijkl}  \Cwhat_{\mathcal{E}^{A \rightarrow B}_{ijkl}} \otimes  \Cwhat_{\Theta \left({\mathcal{E}^{A \rightarrow B}_{ijkl}}\right)}=\Cwhat_{\mathsf{T}}.
\end{align}
Now, the transformation $\Cwhat_{\Lambda_{\Theta}} \leftrightarrow \Cwhat_{\mathsf{T}}$ is an $^{\ast}$-isomorphism. So, $\Cwhat_{\mathsf{T}}$ is positive semidefinite if and only if $\Cwhat_{\Lambda_{\Theta}}$ is positive semidefinite. Now, $\Cwhat_{\Lambda_{\Theta}}$ is positive semidefinite if and only if $\Lambda_{\Theta}$ is CP, and $\Lambda_{\Theta}$ is CP if and only if $\Theta$ is completely CP-preserving. As $\mathsf{T}$ is CP if and only if $\Cwhat_{\mathsf{T}}$ is positive semidefinite, we have $\mathsf{T}$ is CP if and only if $\Theta$ is completely CP-preserving. This completes the proof.

\subsection{Quantum channels from representing maps}\label{quantum_channel_from_TP_map}

Let us define a map $\mathsf{T}':\mathcal{L}\left(A \otimes {B}\right) \rightarrow \mathcal{L}\left({C}\otimes {D}\right)$ as follows 
\begin{equation}
    \mathsf{T}{'}(\mathsf{X})=\mathsf{T}(\mathsf{X})+\mathsf{T}_{0}(\mathsf{X}), \label{b11}
\end{equation}
where $\mathsf{T}_{0}(\mathsf{X}):=\left[\mathrm{tr}(\mathsf{X})-\mathrm{tr}(\mathsf{T}(\mathsf{X}))\right] \sigma_{0}$, $\sigma_{0}\in \mathcal{L}(C\otimes D)$ and $\mathsf{T}:\mathcal{L}\left(A \otimes {B}\right) \rightarrow \mathcal{L}\left({C}\otimes {D}\right)$ is the representing map of the superchannel $\Theta:\mathcal{L}\left(A,B\right)\rightarrow \mathcal{L}\left(C,D\right)$. Note that the representing map $\mathsf{T}$ is CP but it could be trace non-increasing or trace increasing or neither trace increasing nor trace decreasing. However, if we take $\sigma_0$ to be an operator with unit trace then $ \mathsf{T}'$ becomes a  trace-preserving map but then $\mathsf{T}'$ may not be a CP map. In the following, we wish to find out if $\mathsf{T}'$ is  a CP map given that $\mathsf{T}$ is CP and $\tr(\sigma_0) = 1$. In the case where $\mathsf{T}$ is trace non-increasing, the condition $\sigma_0\geq 0$ in Eq.~\eqref{b11} guarantees that the map $\mathsf{T}'$ is CP. In other cases, we derive the necessary and sufficient conditions for $\mathsf{T}'$ to be CP as follows.

{\it Necessary and Sufficient Conditions}.---The Choi operator of the map $\mathsf{T}_{0}$ in Eq. \eqref{b11} is given by 
\begin{align}
    \Cwhat_{\mathsf{T}_{0}} =  \sum_{ijkl} (e_{ij}\otimes f_{kl} )\otimes \mathsf{T}_{0}(e_{ij} \otimes f_{kl}) = \left[\mathbbm{1}_{AB}-\mathsf{T}^{*}(\mathbbm{1}_{CD})\right]^{t} \otimes \sigma_0,
\end{align}
where $t$ denotes the transpose operation in the basis $\{e_{ij} \otimes f_{kl}\}$. Then Eq. \eqref{b11} in terms of Choi operators implies
\begin{align}
    \Cwhat_{\mathsf{T}'} = \Cwhat_{\mathsf{T}} + \left[\mathbbm{1}_{AB}-\mathsf{T}^{*}(\mathbbm{1}_{CD})\right]^{t} \otimes \sigma_0,
\end{align}
where $\Cwhat_{\mathsf{T}'}$ and $\Cwhat_{\mathsf{T}}$ are the Choi operators for the maps $\mathsf{T}'$ and $\mathsf{T}$, respectively.
We have freedom over $\sigma_0$ in the above equation and we need to choose it such that $\Cwhat_{\mathsf{T}'}$ becomes positive semidefinite. Note that $\left[\mathsf{T}^{*}(\mathbbm{1}_{CD})-\mathbbm{1}_{AB}\right]^{t}$ is a Hermitian operator and let its spectral decomposition be given by $\left[\mathsf{T}^{*}(\mathbbm{1}_{CD})-\mathbbm{1}_{AB}\right]^{t}=\sum_{m:\lambda_m>0}\lambda_m \Pi_m - \sum_{m:\lambda_m<0}|\lambda_m| \Pi_m+\sum_{m:\lambda_m=0}\lambda_m \Pi_m$, where $\{\Pi_m\}_{m:\lambda_m>0},~\{\Pi_m\}_{m:\lambda_m<0}$ and $\{\Pi_m\}_{m:\lambda_m=0}$ are orthogonal projectors. Further, define
$\Pi_+:=\sum_{m:\lambda_m>0} \Pi_m$ and $\Pi_-:=\sum_{m:\lambda_m<0}|\lambda_m| \Pi_m$ as the projectors onto the support of positive and negative eigenvalues of the Hermitian operator $\left[\mathsf{T}^{*}(\mathbbm{1}_{CD})-\mathbbm{1}_{AB}\right]^{t}$, respectively. We denote by $\Pi_0$ the projector onto the eigenspace corresponding to the zero eigenvalue of $\left[\mathsf{T}^{*}(\mathbbm{1}_{CD})-\mathbbm{1}_{AB}\right]^{t}$. One can now prove easily that $\Cwhat_{\mathsf{T}'}$ is positive semidefinite  if and only if the operator $\sigma_{0}\in \mathcal{L}(C\otimes D)$ satisfies
\begin{equation}
    \left(\Pi_{+}-\Pi_{-}\right) \otimes \sigma_0 \leq \left(\left[|(\mathsf{T}^{*}(\mathbbm{1}_{CD}))^t-\mathbbm{1}_{AB}\right|^{-\frac{1}{2}}+\Pi_0]\otimes \mathbbm{1}_{CD}\right) \Cwhat_{\mathsf{T}} \left([\left|(\mathsf{T}^{*}(\mathbbm{1}_{CD}))^t-\mathbbm{1}_{AB}\right|^{-\frac{1}{2}}+\Pi_0]\otimes \mathbbm{1}_{CD}\right). \label{smat}
\end{equation}
Now we will show that this condition is also compatible with the trace condition on $\sigma_0$~($\tr \sigma_0 = 1$). By taking the trace with respect to the second subsystem in the above equation, we obtain
\begin{equation}
     \left(\Pi_{+}-\Pi_{-}\right)  (\tr{\sigma_0})\leq \left(\left|(\mathsf{T}^{*}(\mathbbm{1}_{CD}))^t-\mathbbm{1}_{AB}\right|^{-\frac{1}{2}}+\Pi_0\right) (\mathsf{T}^{*}(\mathbbm{1}_{CD}))^t \left(\left|(\mathsf{T}^{*}(\mathbbm{1}_{CD}))^t-\mathbbm{1}_{AB}\right|^{-\frac{1}{2}}+\Pi_0\right). \label{smap}
\end{equation}
Let $\{\mu_k\}_k$ be the spectrum of $(\mathsf{T}^{*}(\mathbbm{1}_{CD}))^t$, then the spectral decomposition of $(\mathsf{T}^{*}(\mathbbm{1}_{CD}))^t$ gives $(\mathsf{T}^{*}(\mathbbm{1}_{CD}))^t = \sum_i \mu_i \Pi_i$. Then we have 
\begin{align*}
    \left|(\mathsf{T}^{*}(\mathbbm{1}_{CD}))^t-\mathbbm{1}_{AB}\right|^{-\frac{1}{2}} = \sum_{\substack{i: \\ \mu_i \neq 1}} \left|\mu_i - 1\right|^{-\frac{1}{2}}\Pi_i,
\end{align*}
which implies
\begin{align*}
    \left(\left[(\mathsf{T}^{*}(\mathbbm{1}_{CD}))^t-\mathbbm{1}_{AB}\right]^{-\frac{1}{2}}+\Pi_0\right) (\mathsf{T}^{*}(\mathbbm{1}_{CD}))^t \left(\left[(\mathsf{T}^{*}(\mathbbm{1}_{CD}))^t-\mathbbm{1}_{AB}\right]^{-\frac{1}{2}}+\Pi_0\right) = \sum_{\substack{i: \\ \mu_i \neq 1}} \frac{\mu_i}{\left|\mu_i - 1\right|} \Pi_i+\Pi_0.
\end{align*}
Note that $\mu_i$ is positive but can be less than one while the operator $\left(\Pi_{+}-\Pi_{-}\right)$ may be nonpositive, then there is no clash of trace condition on $\sigma_0$, i.e., $\tr(\sigma_0) = 1$, with inequality in Eq. \eqref{smap}.
\\
In the special case where $\mathsf{T}$ is a trace increasing map, i.e., $\tr(\mathsf{X})<\mathrm{tr}(\mathsf{T}(\mathsf{X}))$ for all $X\in \mathcal{L}\left(A \otimes {B}\right)$, or equivalently, $\mathsf{T}^*(\mathbbm{1}_{CD}) - \mathbbm{1}_{AB}$ is a strictly positive operator, we have $\Pi_+=\mathbbm{1}_{AB}$ and $\Pi_-=0$. Now $\Cwhat_{\mathsf{T}'}$ is positive semidefinite  if and only if the operator $\sigma_{0}\in \mathcal{L}(C\otimes D)$ satisfies
\begin{equation}
    \mathbbm{1}_{AB} \otimes \sigma_0 \leq \left(\left[(\mathsf{T}^{*}(\mathbbm{1}_{CD}))^t-\mathbbm{1}_{AB}\right]^{-\frac{1}{2}}\otimes \mathbbm{1}_{CD}\right) \Cwhat_{\mathsf{T}} \left(\left[(\mathsf{T}^{*}(\mathbbm{1}_{CD}))^t-\mathbbm{1}_{AB}\right]^{-\frac{1}{2}}\otimes \mathbbm{1}_{CD}\right). \label{smat2}
\end{equation}
Taking partial trace with respect to the subsystems $C$ and $D$ in above equation, we get
\begin{align}
    \mathbbm{1}_{AB}\tr(\sigma_0)\leq \left(\left[(\mathsf{T}^{*}(\mathbbm{1}_{CD}))^t-\mathbbm{1}_{AB}\right]^{-\frac{1}{2}}\right) (\mathsf{T}^{*}(\mathbbm{1}_{CD}))^t \left(\left[(\mathsf{T}^{*}(\mathbbm{1}_{CD}))^t-\mathbbm{1}_{AB}\right]^{-\frac{1}{2}}\right) = \sum_i \frac{\mu_i}{\mu_i - 1} \Pi_i> \mathbbm{1}_{AB}.
\end{align}
Thus, the trace condition on $\sigma_0$, i.e., $\tr(\sigma_0) = 1$, does not create any contradiction with Eq.~\eqref{smat2}. 
\\
We now provide two examples of how to construct $\sigma_0$: one for a given trace increasing map and another for a map that is neither trace increasing nor trace decreasing.
\\
{\bf Example $1$:}--Let us define a map $\mathsf{T}:\mathcal{L}\left(A \otimes {B}\right) \rightarrow \mathcal{L}\left({C}\otimes {D}\right)$ as follows 
\begin{equation}
    \mathsf{T}(\mathsf{X}) = \alpha \tr(\mathsf{X}) \mathbbm{1}_{CD},
\end{equation}
 where $\alpha$ is real. We are not assuming whether $\mathsf{T}$ is trace increasing or decreasing yet. The above equation also implies $ \mathsf{T}^{*}(\mathbbm{1}_{CD}) = \alpha n \mathbbm{1}_{AB}  $, where $n := \tr(\mathbbm{1}_{CD})$ and $\mathsf{T}^{*}$ is adjoint of $\mathsf{T}$. We now have 
 \begin{equation}
     \left(\left[(\mathsf{T}^{*}(\mathbbm{1}_{CD}))^t-\mathbbm{1}_{AB}\right]^{-\frac{1}{2}}\otimes \mathbbm{1}_{CD}\right) \Cwhat_{\mathsf{T}} \left(\left[(\mathsf{T}^{*}(\mathbbm{1}_{CD}))^t-\mathbbm{1}_{AB}\right]^{-\frac{1}{2}}\otimes \mathbbm{1}_{CD}\right) = \left(\frac{\alpha}{n \alpha -1}\right)\mathbbm{1}_{AB} \otimes \mathbbm{1}_{CD}.
 \end{equation}
 Notice that if we choose $\alpha>\frac{1}{n}$, then $\mathsf{T}$ becomes trace increasing map. Using Eq.~\eqref{smat2} we obtain the following 
 inequality
 \begin{equation}
     \mathbbm{1}_{AB} \otimes\sigma_0 \leq \left(\frac{\alpha}{n \alpha -1}\right)\mathbbm{1}_{AB}\otimes \mathbbm{1}_{CD}.
 \end{equation}
By a suitable choice of $\alpha$, we can make the number $\left(\frac{\alpha}{n \alpha -1}\right)$ as large as we want. Specifically, setting $\alpha=\frac{1}{n}+ \frac{1}{n^2 L}$, where $L$ is a very large positive real number, gives $\left(\frac{\alpha}{n \alpha -1}\right)=L+\frac{1}{n}$.
 If we choose $\sigma_0$ to a unit trace and diagonal matrix with the diagonal elements being less than $L+ \frac{1}{n}$ (some of the diagonal entries can be chosen to be negative also), then the above inequality holds. Thus with this choice of $\sigma_0$, the CP and trace increasing map $ \mathsf{T}(\mathsf{X}) = \alpha \tr(\mathsf{X}) \mathbbm{1}_{CD}$ can be transformed into a CP and trace-preserving map, using Eq.~\eqref{b11}.
 \\
 {\bf Example $2$:}--
Let us define a map $\mathsf{T}$, whose input and output systems are qubits, by the Kraus operators $K_1 = \alpha \ketbra{0}{0}$ and $K_2 = \beta \ketbra{1}{1}$, where $\{\ket{0}, \ket{1}\}$ forms the computational basis and $\alpha>1$, $0<\beta <1$ are positive real constants. If we take the input of $\mathsf{T}$ as $\mathsf{X} = \ketbra{0}{0}$, we then have $\tr(\mathsf{X})<\mathrm{tr}(\mathsf{T}(\mathsf{X}))$. On the other hand, if we take the input of $\mathsf{T}$ as $\mathsf{X} = \ketbra{1}{1}$, we then have $\tr(\mathsf{X})>\mathrm{tr}(\mathsf{T}(\mathsf{X}))$. Clearly, the map $\mathsf{T}$ is neither trace increasing nor trace decreasing. The Choi operator of $\mathsf{T}$ is given by 
\begin{equation}
    \Cwhat_{\mathsf{T}}  = \alpha^2 \ketbra{00}{00} + \beta^2 \ketbra{11}{11}.
\end{equation}
Now we calculate the right-hand side of Eq.~\eqref{smat}. First we need the following term
\begin{align}
    \vert \mathsf{T}^{*}(\mathbbm{1})^{t} - \mathbbm{1}\vert ^{-\frac{1}{2}} \otimes \mathbbm{1} = \frac{1}{\sqrt{(\alpha^2-1)}} \left( \ketbra{00}{00} + \ketbra{01}{01} \right)+  \frac{1}{\sqrt{(\beta^2-1)}} \left( \ketbra{10}{10} + \ketbra{11}{11}\right).
\end{align}
We then have
\begin{align}
    \mathsf{A}:= \left(\left|(\mathsf{T}^{*}(\mathbbm{1}))^t-\mathbbm{1}\right|^{-\frac{1}{2}}\otimes \mathbbm{1}\right) \Cwhat_{\mathsf{T}} \left(\left|(\mathsf{T}^{*}(\mathbbm{1}))^t-\mathbbm{1}\right|^{-\frac{1}{2}}\otimes \mathbbm{1}\right) = \alpha^2 (\alpha^2 - 1)^{-1} \ketbra{00}{00} + \beta^2 (\beta^2 - 1)^{-1} \ketbra{11}{11}.
\end{align}
We can write $\mathsf{T}^{*}(\mathbbm{1})^t-\mathbbm{1} = M - N$, where $M = (\alpha^2-1)\ketbra{0}{0}$ and $N = (1-\beta^2)\ketbra{1}{1}$. This also implies $\Pi_{M} = \ketbra{0}{0} $ and $\Pi_{N} = \ketbra{1}{1}$. We seek to find $\sigma_0$ such that $\tr(\sigma_0) = 1$ and $(\Pi_M - \Pi_N)\otimes \sigma_0 \leq \mathsf{A}$. Let us choose a diagonal matrix $\sigma$ with eigenvalues $\lambda_1$ and $\lambda_2$ such that $\lambda_1 < \alpha^2(\alpha^2-1)^{-1}$, $ \left|\lambda_2 \right| < \beta^2(\beta^2-1)^{-1}$, $\lambda_2<0$ and $\lambda_1+\lambda_2 > 1$. Now if we take $\alpha = \sqrt{2}$, $\beta = \frac{1}{\sqrt{2}}$ and choose $\lambda_1 = 1.99$, $\lambda_2 = -\frac{1}{5}$ then all the above-mentioned condition are satisfied. Now we choose $\sigma_0 = \frac{\sigma}{\tr(\sigma)}$. With this choice, we have $\tr(\sigma_0) = 1$ and $(\Pi_M - \Pi_N)\otimes \sigma_0 \leq \mathsf{A}$. This example can trivially be generalized to incorporate bipartite systems suitable in the context of representing maps.

\subsection{Proof of Lemma~\ref{lemma_4} }\label{sec:detailed_proof_lemma_4}

Consider the representing map $\mathsf{T}: \mathcal{L}\left(A\otimes B\right) \rightarrow \mathcal{L}\left(C\otimes D\right)$ of a supermap $\Theta:\mathcal{L}\left(A,B\right)\rightarrow \mathcal{L}\left(C,D\right)$  and let $\Theta^{*}$ be the adjoint of $\Theta$. From the definition of adjoint, for all $\mathsf{X} \in \mathcal{L}\left(A\otimes B\right)$ and $\mathsf{Y} \in \mathcal{L}\left(C\otimes D\right)$, we have
\begin{align}
    \inner{\mathsf{T}^{*}(\mathsf{Y}),\mathsf{X}} &= \inner {\mathsf{Y},T(\mathsf{X})} = \inner{\Cwhat_{\Gamma_{\mathsf{Y}}},\Cwhat_{\Theta(\Gamma_{\mathsf{X}})}} = \mathrm{tr}\left[\Cwhat^{\dagger}_{\Gamma_{\mathsf{Y}}}\Cwhat_{\Theta\left(\Gamma_{\mathsf{X}}\right)}\right].
\end{align}
Let $\{e^{Q}_{ij}\}$ be a canonical basis for the space $\mathcal{L}\left(Q\right)$, where $ Q = \{A,B,C,D\}$. Then, the above equation leads to
\begin{align}
&\mathrm{tr}\left[\Cwhat^{\dagger}_{\Gamma_{\mathsf{Y}}}\Cwhat_{\Theta\left(\Gamma_{\mathsf{X}}\right)}\right]\nonumber\\
&=\sum_{ij}\sum_{kl} \mathrm{tr}\left[\left({e_{ij}^{C}}^{\dagger}\otimes \left\{\Gamma_{\mathsf{Y}}\left(e_{ij}^{C}\right)\right\}^{\dagger}\right)\left({e_{kl}^{C}} \otimes \Theta\left(\Gamma_{\mathsf{X}}\right)\left({e_{kl}^{C}}\right)\right)\right]\nonumber\\
&=\sum_{ij}\sum_{kl} \delta_{ik} \delta_{jl} \mathrm{tr}\left[\left\{\Gamma_{\mathsf{Y}}\left({e_{ij}^{C}}\right)\right\}^{\dagger}\Theta\left(\Gamma_{\mathsf{X}}\right)\left({e_{kl}^{C}}\right)\right] =\sum_{ij}\mathrm{tr}\left[\left\{\Gamma_{Y}\left({e_{ij}^{C}}\right)\right\}^{\dagger}\Theta\left(\Gamma_{\mathsf{X}}\right)\left({e_{ij}^{C}}\right)\right] =\inner{\Gamma_{\mathsf{Y}},\Theta(\Gamma_{\mathsf{X}})}.
\end{align}
Using the definition of the adjoint of a supermap $\Theta$, we obtain
\begin{align}
\nonumber
  \inner{\Gamma_{\mathsf{Y}},\Theta(\Gamma_{\mathsf{X}})}  =\inner{\Theta^{*}(\Gamma_{\mathsf{Y}}),\Gamma_{\mathsf{X}}}
&=\sum_{i,j}\inner{\Theta^{*}\left(\Gamma_{\mathsf{Y}}\right)\left({e_{ij}^{A}}\right),\Gamma_{\mathsf{X}}\left({e_{ij}^{A}}\right)}\\
\nonumber
    &=\sum_{i,j} \sum_{k,l} \delta_{ik} 
    \delta_{jl} \inner{\Theta^{*}\left(\Gamma_{\mathsf{Y}}\right)\left({e_{ij}^{A}}\right),\Gamma_{\mathsf{X}}\left({e_{kl}^{A}}\right)}\\
\nonumber    
    &=\sum_{i,j} \sum_{k,l} \inner{ {e_{ij}^{A}}, {e_{kl}^{A}}} \inner{\Theta^{*}\left(\Gamma_{\mathsf{Y}}\right)\left({e_{ij}^{A}}\right),\Gamma_{\mathsf{X}}\left({e_{kl}^{A}}\right)}\\
\nonumber    
    &= \inner{\sum_{i,j} {e_{ij}^{A}}\otimes \Theta^{*}\left(\Gamma_{\mathsf{Y}}\right)\left({e_{ij}^{A}}\right), \sum_{k,l} {e_{kl}^{A}} \otimes \Gamma_{\mathsf{X}}\left({e_{kl}^{A}}\right)}= \inner{\Cwhat_{\Theta^{*}(\Gamma_{\mathsf{Y}})},\mathsf{X}}.  
\end{align}
As one can choose $\mathsf{X}$ arbitrarily, it can be concluded that $\mathsf{T}^{*}(\mathsf{Y})=\mathsf{C}_{\Theta^{*}(\Gamma_{\mathsf{Y}})}$.

 \section{Proofs of statements in Section~\ref{Entropy_of_quantum_channels}}\label{proof_of_binod}
 
\subsection{Proof of monotonicity of relative entropy under quantum superchannel }\label{monotonicity_of_channeldivergence_proof}

 Let $\mathcal{N}_{A' \rightarrow A}$ and $\mathcal{M}_{A' \rightarrow A}$ be a quantum channel and a CP map, respectively, and let $\Theta:\mathcal{L}\left(A',A\right)\rightarrow \mathcal{L}\left(C',C\right)$ be a superchannel. To prove that the proposed inequality holds, we start with the left-hand side, i.e., $ D\left(\Theta\left(\mathcal{N}\right)\Vert \Theta \left(\mathcal{M}\right)\right)$. Now, considering a general superchannel $\Theta$, whose action on any CP map is defined in Eq.\eqref{equ:superchannel_action}, we can write
  \begin{align}
     & \chD*{\Theta\left(\mathcal{N}\right)}{\Theta \left(\mathcal{M}\right)} \nonumber\\
     & = \chD*{V'_{AE \rightarrow C}\hspace{0.05cm}\circ\left(\mathcal{N}_{A'\rightarrow A} \otimes \mathrm{id}_{E}\right) \hspace{0.05cm} \circ U_{C' \rightarrow A'E}}{V'_{AE \rightarrow C}\hspace{0.05cm}\circ\left(\mathcal{M}_{A'\rightarrow A} \otimes \mathrm{id}_{E}\right) \hspace{0.05cm} \circ U_{C' \rightarrow A'E}} \nonumber\\
     & =  \underset{\Psi_{C'B}}{\mathrm{sup}} \staD*{\left( V'_{AE \rightarrow C}\hspace{0.05cm}\circ\left(\mathcal{N}_{A'\rightarrow A} \otimes \mathrm{id}_{E}\right) \hspace{0.05cm} \circ U_{C' \rightarrow A'E} \otimes \mathrm{id}_{B}\right) \Psi_{C'B} }{\left(V'_{AE \rightarrow C}\hspace{0.05cm}\circ\left(\mathcal{M}_{A'\rightarrow A} \otimes \mathrm{id}_{E}\right) \hspace{0.05cm} \circ U_{C' \rightarrow A'E} \otimes \mathrm{id}_{B}\right) \Psi_{C'B}} \nonumber \\
     & \leq \underset{\Psi_{A'E B}}{\mathrm{sup}} \staD*{\left(V'_{AE \rightarrow C}\hspace{0.05cm}\circ\left(\mathcal{N}_{A'\rightarrow A} \otimes \mathrm{id}_{E}\right)  \otimes \mathrm{id}_{B}\right) \Psi_{A'EB} }{\left(V'_{AE \rightarrow C}\hspace{0.05cm}\circ\left(\mathcal{M}_{A'\rightarrow A} \otimes \mathrm{id}_{E}\right) \otimes \mathrm{id}_{B}\right) \Psi_{A'EB}},\nonumber
  \end{align}
  where the inequality in the third line follows from the fact that the set of pure states obtained by varying $\Psi_{C'B}$ and applying $U_{C' \rightarrow A'E}$ is smaller in comparison to the set of all pure states in the system $A'EB$, i.e, $\{\Psi_{A'E B}\}$. In other words, supremum of a set is always greater than or equal to supremum over its subsets. Let us define $\left(\left(\mathcal{N}_{A'\rightarrow A}\otimes \mathrm{id}_{E}\right)\otimes \mathrm{id}_{B}\right)\Psi_{A'EB} := \rho^{\Psi}_{ABE}$, which is also a density operator since $\mathcal{N}_{A' \rightarrow A}$ is a quantum channel, and $\left(\left(\mathcal{M}_{A'\rightarrow A}\otimes \mathrm{id}_{E}\right)\otimes \mathrm{id}_{B}\right)\Psi_{A'EB} := \sigma^{\Psi}_{ABE}$, which is a positive operator, because map $\mathcal{M}$ is CP but not trace-preserving. Now, we have 
   \begin{align}
       & \underset{\Psi_{A'E B}}{\mathrm{sup}} \staD*{\left(V'_{AE \rightarrow C}\hspace{0.05cm}\circ\left(\mathcal{N}_{A'\rightarrow A} \otimes \mathrm{id}_{E}\right)  \otimes \mathrm{id}_{B}\right) \Psi_{A'EB} }{\left(V'_{AE \rightarrow C}\hspace{0.05cm}\circ\left(\mathcal{M}_{A'\rightarrow A} \otimes \mathrm{id}_{E}\right) \otimes \mathrm{id}_{B}\right) \Psi_{A'EB}}\nonumber \\
       & =\underset{\Psi_{A'E B}}{\mathrm{sup}} \staD*{\left( V'_{AE \rightarrow C} \otimes \mathrm{id}_{B}\right) \rho^{\Psi}_{ABE}}{\left( V'_{AE \rightarrow C}\otimes \mathrm{id}_{B}\right) \sigma^{\Psi}_{AEB}}\nonumber \\
       & \leq \underset{\Psi_{A'E B}}{\mathrm{sup}} \staD*{\rho^{\Psi}_{ABE}}{\sigma^{\Psi}_{AEB}}\nonumber \\
       & = \underset{\Psi_{A'E B}}{\mathrm{sup}} \staD*{\left(\left(\mathcal{N}_{A'\rightarrow A}\otimes \mathrm{id}_{E}\right)\otimes \mathrm{id}_{B}\right)\Psi_{A'EB}}{\left(\left(\mathcal{M}_{A'\rightarrow A}\otimes \mathrm{id}_{E}\right)\otimes \mathrm{id}_{B}\right)\Psi_{A'EB}} = \chD*{\mathcal{N}}{\mathcal{M}},
   \end{align}
where the third line follows from the monotonicity of relative entropy under a quantum channel, and the last line simply follows from the definition of relative entropy between two completely positive maps.  Thus, we have the inequality
\begin{equation}
    D\left[\Theta\left(\mathcal{N}\right) \Vert \Theta \left(\mathcal{M}\right)\right] \leq D\left[\mathcal{N}\Vert \mathcal{M}\right].
\end{equation}

\subsection{Proof of monotonicity of relative entropy under CP and TP preserving supermaps} \label{equ:proof_of_prop2}

Since $\mathcal{N}_{A \rightarrow A'}$ and $\mathcal{M}_{ A \rightarrow A'}$ are quantum channels, and the supermap $\Theta:\mathcal{L}\left(A,A'\right)\rightarrow \mathcal{L}\left(C,C'\right)$ is CP- and TP-preserving, we have that $\Theta(\mathcal{N})$ and $\Theta(\mathcal{M})$ are also quantum channels. Here, the quantum channels $\mathcal{M}, \mathcal{N}$ and supermap $\Theta$, are such that the states $\Psi$ and $\Phi$ that realize the supremum of relative entropy functionals $\chD*{\mathcal{N}}{\mathcal{M}}$ and $\chD*{\Theta(\mathcal{N})}{\Theta(\mathcal{M})}$, are the elements of $\mathsf{FRank}\left(\widetilde{A}\otimes{A}\right)\cap \mathcal{D}\left(\widetilde{A}\otimes{A}\right)$ and $\mathsf{FRank}\left(\widetilde{C}\otimes{C}\right)\cap \mathcal{D}\left(\widetilde{C}\otimes{C}\right)$, respectively. We then have $ \chD*{\mathcal{N}}{\mathcal{M}}= \staD*{{\mathsf{C}}_{\mathcal{N}}^{\Psi}}{{\mathsf{C}}_{\mathcal{M}}^{\Psi}}$ and $ \chD*{\Theta(\mathcal{N})}{\mathcal{M}(\Theta)} = \staD*{{\mathsf{C}}_{\Theta(\mathcal{N})}^{\Phi}}{{\mathsf{C}}_{\Theta(\mathcal{M})}^{\Phi}}$, where $\Psi \in \mathsf{FRank}\left(\widetilde{A}\otimes{A}\right)\cap \mathcal{D}\left(\widetilde{A}\otimes{A}\right)$ and $\Phi \in \mathsf{FRank}\left(\widetilde{C}\otimes{C}\right)\cap \mathcal{D}\left(\widetilde{C}\otimes{C}\right)$~(see Appendix \ref{detailde_proof_of_recovery} for detailed calculations). Further,  $\mathsf{C}_{\Theta\left(\mathcal{N}\right)}^{\Phi}$ and $\mathsf{C}_{\Theta\left(\mathcal{M}\right)}^{\Phi}$ are Choi states of channels $\Theta \left(\mathcal{N}\right)$ and $\Theta \left(\mathcal{N}\right)$, respectively and $\mathsf{C}_{\Theta\left(\mathcal{N}\right)}^{\Phi}$ and $\mathsf{C}_{\Theta\left(\mathcal{M}\right)}^{\Phi}$ are Choi states of channels $\Theta \left(\mathcal{N}\right)$ and $\Theta \left(\mathcal{M}\right)$, respectively. Note that the Choi states $\mathsf{C}_{\mathcal{N}}^{\Psi}$, $\mathsf{C}_{\Theta(\mathcal{N})}^{\Phi}$ and  $\mathsf{C}_{\mathcal{M}}^{\Psi}$, $\mathsf{C}_{\Theta(\mathcal{M})}^{\Phi}$ are connected via the representing map $\mathfrak{T}:\mathcal{L}\left(A \otimes A'\right)\rightarrow \mathcal{L}\left(C \otimes C'\right)$ of the supermap $\Theta$ as $\mathsf{C}_{\Theta \left(\mathcal{N}\right)}^{\Phi} =  \mathfrak{T} \left(\mathsf{C}_{\mathcal{N}}^{\Psi}\right)$ and $\mathsf{C}_{\Theta \left(\mathcal{M}\right)}^{\Phi} =  \mathfrak{T} \left(\mathsf{C}_{\mathcal{M}}^{\Psi}\right)$.  As the supermap $\Theta$ is CP-preserving, the representing map $\mathfrak{T}$ is a positive map. However,  $\mathfrak{T}$ may not be a trace-preserving map in general. In the cases where it is trace non-increasing, we can always replace it with the following positive and trace-preserving map: $\mathfrak{T}'(\mathsf{X}):= \mathfrak{T}(\mathsf{X})+\left[\mathrm{tr}(\mathsf{X})-\mathrm{tr}(\mathfrak{T}(\mathsf{X}))\right] \sigma_{0}$, where $\sigma_{0} \in \mathcal{D}(C \otimes C')$ is some fixed state and $\mathsf{X}\in \mathcal{L}(A \otimes A')$. Following Remark~(\ref{rem:choi-state-choi-operator-relation_1}), we have $\mathfrak{T}'(\mathsf{C}^{\Psi}_{\mathcal{N}})=\mathfrak{T}(\mathsf{C}^{\Psi}_{\mathcal{N}})$ and $\mathfrak{T}'(\mathsf{C}^{\Psi}_{\mathcal{M}})=\mathfrak{T}(\mathsf{C}^{\Psi}_{\mathcal{M}})$. Then \cite[Theorem 1]{Hermes2017} leads to 
\begin{align}
 \staD*{{\mathsf{C}}_{\mathcal{N}}^{\Psi}}{{\mathsf{C}}_{\mathcal{M}}^{\Psi}} \geq  \staD*{\mathfrak{T}'\left({\mathsf{C}}_{\mathcal{N}}^{\Psi}\right)}{\mathfrak{T}'\left( \mathsf{C}_{\mathcal{M}}^{\Psi}\right)},
\end{align}
which implies
\begin{equation}
   \chD*{\mathcal{N}}{\mathcal{M}} \geq \chD*{\Theta\left(\mathcal{N}\right)}{\Theta \left(\mathcal{M}\right)}.
\end{equation}
If we relax the TP-preserving condition on the supermap $\Theta$ and take it  only to be CP-preserving, then the representing map $\mathfrak{T}$ will be a positive map, but it may not always be possible to define a trace-preserving map $\mathfrak{T}'$ such that $\mathfrak{T}'(\mathsf{C}^{\Psi}_{\mathcal{N}})=\mathfrak{T}(\mathsf{C}^{\Psi}_{\mathcal{N}})$ holds for all quantum channels $\mathcal{N}$. In this situation, we provide a condition on the supermap $\Theta$ such that its representing map $\mathfrak{T}$ becomes trace-preserving. The representing map $\mathsf{T}$ is trace-preserving if and only if $\Theta^{*}$ is completely depolarising map preserving~(see Lemma~\ref{lemma5}). A similar condition for the representing map  $\mathfrak{T}$ is given as follows: the map $\mathfrak{T}$ is trace-preserving if and only if its adjoint is unital, i.e.,  $\mathfrak{T}^{*}\left(\mathbbm{1}\right) = \mathbbm{1}$, which can also be written as $\mathfrak{T}^{*}\left( \mathsf{C}^{\Phi}_{\Theta^{-1}_{\Phi}(\mathcal{R}_{C \rightarrow C'})} \right) = \mathsf{C}^{\Psi}_{\Theta^{-1}_{\Psi}(\mathcal{R}_{A \rightarrow A'})} $. Using Eq.~(\ref{equ:theta_caps}), we obtain the following condition on the supermap $\Theta$, which implies that $\mathfrak{T}$ is trace-preserving:
\begin{equation}
    \Theta^{*} \circ \Theta^{*}_{\Phi}\left(\mathcal{R}_{C \rightarrow C'}\right)= \Theta^{*}_{\Psi} \left(\mathcal{R}_{A \rightarrow A'}\right). \label{equ:tp_condition}
\end{equation}
If $\mathcal{N}, \mathcal{M}$  and $\Theta$ are such that $\Psi$ and $\Phi$ are maximally entangled states, then $\Theta^{*}_{\Psi}$ and $\Theta^{*}_{\Phi}$ become the identity supermap~(see the definition of $\Theta_{\Psi}, \Theta_{\Phi}$ in  Eq.~(\ref{equ: HRI})), and $\Theta^{*}$ becomes $\mathcal{R}$-preserving supermap.

\subsection{Proof of Proposition~\ref{prop3}}\label{Detailed_proof_of_Proposition4}

    From our definition of relative entropy between a quantum channel and a positive map, we have 
    \begin{align}
        & \chD*{\mathcal{N}\otimes \widetilde{\mathcal{N}}}{\mathcal{M}\otimes \widetilde{\mathcal{M}}}\nonumber\\
        &= \underset{\Psi_{RAB}}{\mathrm{sup}}~\staD*{\left(\mathrm{id}_{R}\otimes\mathcal{N}\otimes \widetilde{N}\right) (\Psi_{RAB}) }{\left( \mathrm{id}_{R}\otimes\mathcal{M}\otimes\widetilde{\mathcal{M}}\right)(\Psi_{RAB})}\nonumber\\
        & \geq \underset{\Psi_{R_{1}A}\otimes \Psi_{R_{2}B}}{\mathrm{sup}}~ 
 \staD*{\left(\mathrm{id}_{R_{1}}\otimes\mathcal{N}\right)\Psi_{R_{1}A}\otimes\left(\mathrm{id}_{R_{2}} \otimes\widetilde{N}\right)\Psi_{R_{2}B}}{\left(\mathrm{id}_{R_{1}}\otimes\mathcal{M}\right)\Psi_{R_{1}A}\otimes\left(\mathrm{id}_{R_{2}} \otimes\widetilde{M}\right)\Psi_{R_{2}B}}\nonumber\\
         &= \underset{\Psi_{R_{1}A}\otimes \Psi_{R_{2}B}}{\mathrm{sup}}~\left(\staD*{\left(\mathrm{id}_{R_{1}}\otimes\mathcal{N}\right)\Psi_{R_{1}A}}{\left(\mathrm{id}_{R}\otimes\mathcal{M}\right)\Psi_{R_{1}A}}+ \staD*{\left(\mathrm{id}_{R_{2}}\otimes\widetilde{\mathcal{N}}\right)\Psi_{R_{2}B}}{\left(\mathrm{id}_{R_{2}}\otimes\widetilde{\mathcal{M}}\right)\Psi_{R_{2}B}}\right)\nonumber\\
         & = \chD*{\mathcal{N}}{\mathcal{M}}+\chD{\widetilde{\mathcal{N}}}{\widetilde{\mathcal{M}}},
    \end{align}
    where in the second line, we have used the  fact that  supremum of a function over a larger set is always greater than its  supremum over a smaller subset, and in the third line we have used the additivity of quantum relative entropy, i.e., for positive semidefinite operators $\rho_{1},\rho_{2},\sigma_{1}$, and $\sigma_{2}$, we have $D\left(\rho_{1}\otimes\rho_{2}\Vert \sigma_{1}\otimes\sigma_{2}\right) = \mathrm{tr}\left(\rho_{2}\right)D\left(\rho_{1}\Vert \sigma_{1}\right)+\mathrm{tr}\left(\rho_{1}\right)D\left(\rho_{2}\Vert \sigma_{2}\right)$, and the last line follows simply from the definition of relative entropy between a quantum channel and a CP map. 

\subsection{Axioms for state and channel entropy}\label{axioms}
We can write von Neumann entropy in terms of relative entropy as follows
\begin{equation}
     S(A)_{\rho} = -D\left(\rho_{A} \Vert \mathbbm{1}_A\right).
\end{equation}
The above definition of entropy of a quantum state satisfies some physically motivated axioms:
\begin{enumerate}
   \item Under the action of random unitary channels, it increases: let $\mathcal{U}(\rho)= \sum_{i} p_{i} U_i \rho U_{i}^{\dagger}$ be a random unitary channel with $U_i$'s being unitary operators. Then $S(\mathcal{U}(\rho)) \geq S(\rho)$ holds. 
   \item It is additive under tensor product of quantum states: for $\rho \in \mathcal{D}(A)$ and $\sigma \in \mathcal{D}(A')$, we have $S(\rho \otimes \sigma) =S(\rho)+ S(\sigma)$.
   \item It is maximum for the maximally mixed state and zero for all pure states.
\end{enumerate}
It was advocated in \cite{Gour_entropy} that the entropy functional for a quantum channel should satisfy the following physically motivated axioms:
\begin{enumerate}
    \item It should increase under the action of random unitary superchannels.
    \item It should be additive under tensor product of quantum channels.
    \item It should be maximum for the completely depolarizing channel and should be zero for a replacement channel that outputs a pure state.
\end{enumerate}

\subsection{Proof of additivity of channel entropy }\label{prop2_channel_entropy_addition_proof}
 Let $\mathcal{N}_{A_{1} \rightarrow B_1}$ and $\mathcal{M}_{A_{2} \rightarrow B_2}$ be two quantum channels, and $\mathcal{R}_{A_{1} \rightarrow B_1}$, $\mathcal{R}_{A_{2} \rightarrow B_2}$  be the corresponding completely depolarising maps. It directly follows that additivity holds if the following equality holds.
   \begin{align*}
     \chD*{\mathcal{N}_{A_1\rightarrow B_1}\otimes {\mathcal{M}}_{A_2 \rightarrow B_2}}{\mathcal{R}_{A_1\rightarrow B_1}\otimes {\mathcal{R}}_{A_2 \rightarrow B_2}}  = \chD*{\mathcal{N}_{A_1 \rightarrow B_1}}{\mathcal{R}_{A_1\rightarrow B_1}} + \chD*{\mathcal{M}_{A_2 \rightarrow B_2}}{\mathcal{R}_{A_2\rightarrow B_2}}.
   \end{align*}
   To show that the above equality holds, it is sufficient to prove that the quantity on the left-hand side is less than or equal to the quantity on the right-hand side and then the other way around is also true. If both inequalities are true, then, both sides must be equal. First, we will prove the $``\geq"$ part, i.e.,  
    \begin{align*}
       \chD*{\mathcal{N}_{A_1\rightarrow B_1}\otimes {\mathcal{M}}_{A_2 \rightarrow B_2}}{\mathcal{R}_{A_1\rightarrow B_1}\otimes {\mathcal{R}}_{A_2 \rightarrow B_2}} \geq \chD*{\mathcal{N}_{A_1 \rightarrow B_1}}{\mathcal{R}_{A_1\rightarrow B_1}} + \chD*{\mathcal{M}_{A_2 \rightarrow B_2}}{\mathcal{R}_{A_2\rightarrow B_2}}.
    \end{align*}
Let us define the action of channel $\mathcal{M}_{A_2 \rightarrow B_2}$ and the CP map $\mathcal{R}_{A_2 \rightarrow B_2}$ on an arbitrary pure state $\Psi_{R A_{1} A_{2}}$ as $\mathcal{M}_{A_2 \rightarrow B_2} \left( \Psi_{R A_{1} A_{2}} \right) := \rho_{RA_1B_2}$, and $ \mathcal{R}_{A_2 \rightarrow B_2} \left( \Psi_{R A_{1} A_{2}} \right) := \sigma_{RA_1B_2}$,
  where $\rho_{RA_1B_2}$ is a density operator and $\sigma_{RA_1B_2}$ is a positive operator. We then have 

   \begin{align}
    & \chD*{\mathcal{N}_{A_1\rightarrow B_1}\otimes {\mathcal{M}}_{A_2 \rightarrow B_2}}{\mathcal{R}_{A_1\rightarrow B_1}\otimes {\mathcal{R}}_{A_2 \rightarrow B_2}} \nonumber\\
    & = \sup_{\Psi_{R A_1 A_2}} \staD*{\left(\mathrm{id}_{R} \otimes \mathcal{N}_{A_1 \rightarrow B_1}\otimes\mathcal{M}_{A_2\rightarrow B_2}\right)\Psi_{R A_1 A_2}}{\left(\mathrm{id}_{R}\otimes \mathcal{R}_{A_1 \rightarrow B_1}\otimes\mathcal{R}_{A_2\rightarrow B_2}\right)\Psi_{RA_1A_2}} \nonumber\\
    & = \sup_{\Psi_{R A_1 A_2}} \staD*{\left(\mathrm{id}_{R} \otimes \mathcal{N}_{A_1 \rightarrow B_1}\otimes\mathrm{id}_{B_2}\right)\rho_{R A_1 B_2}}{\left(\mathrm{id}_{R}\otimes \mathcal{R}_{A_1 \rightarrow B_1}\otimes\mathrm{id}_{ B_2}\right)\sigma_{RA_1B_2}}\nonumber\\
    & = \sup_{\Psi_{R' A_1}} \staD*{\left(\mathrm{id}_{R'} \otimes \mathcal{N}_{A_1 \rightarrow B_1}\right)\rho_{R' A_1 }}{\left(\mathrm{id}_{R'}\otimes \mathcal{R}_{A_1 \rightarrow B_1}\right)\sigma_{R'A_1}},
 \end{align} 
   where $R' = RB_2$ is relabeling of the system  $RB_{2}$. Now, using the definition of the completely depolarising map, the operator $\left(\mathrm{id}_{R'}\otimes \mathcal{R}_{A_{1}\rightarrow B_{1}}\right)\sigma_{R'A_{1}}$ can be simplified to  $\sigma_{R'}\otimes\mathbbm{1}_{B_{1}}$. Also, let us define $A_{R'B_{1}} := \left(\mathrm{id}_{R'}\otimes \mathcal{N}_{A_{1}\rightarrow B_{1}}\right)\rho_{R'A_{1}}$, then we observe $A_{R'} = \mathrm{tr}_{B_{1}}\left(A_{R'B_{1}}\right) = \left(\mathrm{id}_{R'}\otimes\mathrm{tr_{B_{1}}}\right)\left(\mathrm{id_{R'}}\otimes \mathcal{N}_{A_{1}\rightarrow B_{1}}\right) \rho_{R'A_{1}} = \left(\mathrm{id}_{R'}\otimes\mathrm{tr_{B_{1}}} \circ \mathcal{N}_{A_{1}\rightarrow B_{1}}\right) \rho_{R'A_{1}} = \rho_{R'} $. Thus, we can write
   \begin{align}
      \staD*{\left(\mathrm{id}_{R'} \otimes \mathcal{N}_{A_1 \rightarrow B_1}\right)\rho_{R' A_1 } }{\left(\mathrm{id}_{R'}\otimes \mathcal{R}_{A_1 \rightarrow B_1}\right)\sigma_{R'A_1}}  &=  -S(A_{R'B_{1}})-\mathrm{tr}\left(A_{R'}\log \sigma_{R'}\right),\\
       \staD*{\left(\mathrm{id}_{R'}\otimes \mathcal{N}_{A_{1}\rightarrow B_{1}}\right)\rho_{R'A_{1}}}{\left(\mathrm{id}_{R'}\otimes \mathcal{R}_{A_{1}\rightarrow B_{1}}\right)\rho_{R'A_{1}}} & = - S\left(A_{R'B_{1}}\right)-\mathrm{tr}\left(A_{R'}\log \rho_{R'}\right),
   \end{align}
   where $S\left(\rho \right)=-\tr \left( \rho \log \rho \right)$ is the von Neumann entropy of $\rho$.  
 Using the above equations, we see that
 \begin{align}
   & \staD*{\left(\mathrm{id}_{R'}\otimes \mathcal{N}_{A_{1}\rightarrow B_{1}}\right)\rho_{R'A_{1}}}{\left(\mathrm{id}_{R'}\otimes \mathcal{R}_{A_{1}\rightarrow B_{1}}\right)\sigma_{R'A}}\nonumber\\
   &= \staD*{\left(\mathrm{id}_{R'}\otimes \mathcal{N}_{A_{1}\rightarrow B_{1}}\right)\rho_{R'A_{1}}}{\left(\mathrm{id}_{R'}\otimes \mathcal{R}_{A_{1}\rightarrow B_{1}}\right)\rho_{R'A}} + \staD*{\rho_{R'}}{\sigma_{R'}}.\nonumber
 \end{align}
Defining
\begin{align}
    X&:= \staD*{\left(\mathrm{id}_{R'}\otimes \mathcal{N}_{A_{1}\rightarrow B_{1}}\right)\rho_{R'A_{1}}}{\left(\mathrm{id}_{R'}\otimes \mathcal{R}_{A_{1}\rightarrow B_{1}}\right)\sigma_{R'A}},\\
    Y&:= \staD*{\left(\mathrm{id}_{R'}\otimes \mathcal{N}_{A_{1}\rightarrow B_{1}}\right)\rho_{R'A_{1}}}{\left(\mathrm{id}_{R'}\otimes \mathcal{R}_{A_{1}\rightarrow B_{1}}\right)\rho_{R'A}},
\end{align}
we have $X = Y + \staD*{\rho_{R'}}{\sigma_{R'}}=Y + \staD*{\rho_{RB_2}}{\sigma_{RB_2}}$. Now exploiting the monotonicity of the relative entropy under partial trace, we arrive at 
\begin{align}
    X 
    & \leq Y + \staD*{\rho_{RB_{2}A_{1}}}{\sigma_{RB_{2}A_{1}}} \nonumber\\
    & = Y + \staD*{\left(\mathrm{id}_{R}\otimes \mathcal{M}_{A_2 \rightarrow B_2}\otimes \mathrm{id}_{A_1}\right)\Psi_{R A_{2}A_{1}}}{\left(\mathrm{id}_{R}\otimes \mathcal{R}_{A_2 \rightarrow B_2}\otimes \mathrm{id}_{A_1}\right)\Psi_{RA_{2}A_{1}}} \nonumber\\
    & = Y + \staD*{\left(\mathrm{id}_{R}\otimes \mathrm{id}_{A_1}\otimes \mathcal{M}_{A_2 \rightarrow B_2}\right)\Psi_{R A_{1}A_{2}}}{\left(\mathrm{id}_{R}\otimes \mathrm{id}_{A_1} \otimes \mathcal{R}_{A_2 \rightarrow B_2}\right)\Psi_{R A_{1}A_{2}}}.
\end{align}
 Taking supremum on both sides of the above inequality with respect to state $\Psi_{R A_1A_2}$, we obtain 
 \begin{align}
     \sup_{\Psi_{R A_{1}A_{2}}} X \leq \sup_{\Psi_{R A_{1}A_{2}}} Y + \chD*{\mathcal{M}_{A_2 \rightarrow B_2}}{\mathcal{R}_{A_2 \rightarrow B_2}},
 \end{align}
 which implies 
 \begin{align}
   \chD*{\mathcal{N}_{A_{1}\rightarrow B_{1}}\otimes \mathcal{M}_{A_{2}\rightarrow B_{2}}}{\mathcal{R}_{A_{1}\rightarrow B_{1}}\otimes \mathcal{R}_{A_{2}\rightarrow B_{2}}}  &\leq
     \chD*{\mathcal{N}_{A_1 \rightarrow B_1}}{\mathcal{R}_{A_1\rightarrow B_1}} + \chD*{\mathcal{M}_{A_2 \rightarrow B_2}}{\mathcal{R}_{A_2\rightarrow B_2}}.
 \end{align}
By multiplying with the minus sign on both sides of the above inequality and using the definition of the entropy of a quantum channel, we obtain the desired inequality 
 \begin{equation}
     S\left[\mathcal{N}_{A_{1}\rightarrow B_{1}}\otimes \mathcal{M}_{A_{2}\rightarrow B_{2}}\right] \geq S\left[\mathcal{N}_{A_{1}\rightarrow B_{1}}\right] + S\left[\mathcal{M}_{A_{2}\rightarrow B_{2}}\right]. \label{equ_additivity_of_entropy_1}
 \end{equation}
Now, we will prove the $``\leq"$ part, i.e., $S\left(\mathcal{N}_{A_{1}\rightarrow B_{1}}\otimes \mathcal{M}_{A_{2}\rightarrow B_{2}}\right) \leq S\left(\mathcal{N}_{A_{1}\rightarrow B_{1}}\right) + S\left(\mathcal{M}_{A_{2}\rightarrow B_{2}}\right)$. From the definition of relative entropy between a quantum channel and a CP map, we have
\begin{align}
    & \chD*{\mathcal{N}_{A_{1}\rightarrow B_{1}}\otimes \mathcal{M}_{A_{2}\rightarrow B_{2}}}{\mathcal{R}_{A_{1}\rightarrow B_{1}}\otimes \mathcal{R}_{A_{2}\rightarrow B_{2}}} \nonumber\\
     &= \sup_{\Psi_{EA_1 A_2}} \staD*{(\mathrm{id}_{E} \otimes {\mathcal{N}}_{A_1 \rightarrow B_1} \otimes {\mathcal{M}}_{A_2 \rightarrow B_2} ) \Psi_{EA_1 A_2}}{(\mathrm{id}_{E} \otimes {\mathcal{R}}_{A_1 \rightarrow B_1} \otimes {\mathcal{R}}_{A_2 \rightarrow B_2} ) \Psi_{{EA_1 A_2}}},
\end{align}
where we can take the ancilla $E$ to be isomorphic to $A_1 \otimes A_2$. Equivalently, we can take $E=R_1 \otimes R_2$, such that, reference systems $R_1$ and $R_2$ are isomorphic to $A_1$ and $A_2$, respectively. Now, using the facts that $R_1 \otimes R_2 \otimes A_1 \otimes A_2$ is isomorphic to $R_1 \otimes A_1 \otimes R_2 \otimes A_2$, and the supremum of a function over a set is always greater than or equal to the supremum of the same over its subsets, we obtain
\begin{align}
     &\sup_{\Psi_{R_1 R_2 A_1 A_2}} \staD*{(\mathrm{id}_{R_1 R_2} \otimes {\mathcal{N}}_{A_1 \rightarrow B_1} \otimes {\mathcal{M}}_{A_2 \rightarrow B_2} ) \Psi_{EA_1 A_2}}{(\mathrm{id}_{R_1 R_2} \otimes {\mathcal{R}}_{A_1 \rightarrow B_1} \otimes {\mathcal{R}}_{A_2 \rightarrow B_2} ) \Psi_{{EA_1 A_2}}}\nonumber\\
     & \geq \sup_{\widetilde{\Psi}} \staD*{(\mathrm{id}_{R_1} \otimes {\mathcal{N}}_{A_1 \rightarrow B_1}) \Psi_{R_1 A_1} \otimes (\mathrm{id}_{R_2} \otimes {\mathcal{M}}_{A_2 \rightarrow B_2} ) \Psi_{ R_2 A_2}}{(\mathrm{id}_{R_1} \otimes {\mathcal{R}}_{A_1 \rightarrow B_1}) \Psi_{R_1 A_1} \otimes (\mathrm{id}_{R_2} \otimes {\mathcal{R}}_{A_2 \rightarrow B_2} ) \Psi_{ R_2 A_2}} \nonumber\\
    &= \sup_{\widetilde{\Psi}}\left\{ \right. \mathrm{tr}\left[(\mathrm{id}_{R_2} \otimes {\mathcal{M}}_{A_2 \rightarrow B_2} ) \Psi_{ R_2 A_2} \right] ~\staD*{(\mathrm{id}_{R_1} \otimes {\mathcal{N}}_{A_1 \rightarrow B_1}) \Psi_{R_1 A_1}}{(\mathrm{id}_{R_1} \otimes {\mathcal{R}}_{A_1 \rightarrow B_1}) \Psi_{R_1 A_1}} \nonumber \\
    &\hspace{1cm} + \mathrm{tr}\left[(\mathrm{id}_{R_1} \otimes {\mathcal{N}}_{A_1 \rightarrow B_1} ) \Psi_{ R_1 A_1} \right]~ \staD*{(\mathrm{id}_{R_2} \otimes {\mathcal{M}}_{A_2 \rightarrow B_2}) \Psi_{R_2 A_2}}{(\mathrm{id}_{R_2} \otimes {\mathcal{R}}_{A_2 \rightarrow B_2}) \Psi_{R_2 A_2}} \left.  \right\}  \nonumber  \\ 
   &= \sup_{\widetilde{\Psi}_{1}} \staD*{(\mathrm{id}_{R_1} \otimes {\mathcal{N}}_{A_1 \rightarrow B_1}) \Psi_{R_1 A_1}}{(\mathrm{id}_{R_1} \otimes {\mathcal{R}}_{A_1 \rightarrow B_1}) \Psi_{R_1 A_1}}\nonumber\\
   &\hspace{1cm}+\sup_{\widetilde{\Psi}_{2}} \staD*{(\mathrm{id}_{R_2} \otimes {\mathcal{M}}_{A_2 \rightarrow B_2}) \Psi_{R_2 A_2}}{(\mathrm{id}_{R_2} \otimes {\mathcal{R}}_{A_2 \rightarrow B_2}) \Psi_{R_2 A_2}} \nonumber\\
    &=\chD*{\mathcal{N}_{A_1 \rightarrow B_1}}{\mathcal{R}_{A_1\rightarrow B_1}} + \chD*{\mathcal{M}_{A_2 \rightarrow B_2}}{\mathcal{R}_{A_2\rightarrow B_2}},
\end{align}
where, $\widetilde{\Psi}:=\Psi_{R_1 A_1} \otimes \Psi_{ R_2 A_2}$, $\widetilde{\Psi}_{1}:=\Psi_{R_1 A_1}$ and $\widetilde{\Psi}_{2}:=\Psi_{R_2 A_2}$.  In the second line we have used the additivity of quantum relative entropy, i.e., for positive semidefinite operators $\rho_{1},\rho_{2},\sigma_{1}$, and $\sigma_{2}$, we have, $D\left(\rho_{1}\otimes\rho_{2}\Vert \sigma_{1}\otimes\sigma_{2}\right) = \mathrm{tr}\left(\rho_{2}\right)D\left(\rho_{1}\Vert \sigma_{1}\right)+\mathrm{tr}\left(\rho_{1}\right)D\left(\rho_{2}\Vert \sigma_{2}\right)$, and the last line simply follows from the definition of relative entropy between a  quantum channel and a CP map. Thus, we have the following inequality,
\begin{align}
   - \chD*{\mathcal{N}_{A_1\rightarrow B_1}\otimes {\mathcal{M}}_{A_2 \rightarrow B_2}}{\mathcal{R}_{A_1\rightarrow B_1}\otimes {\mathcal{R}}_{A_2 \rightarrow B_2}}  \leq -\chD*{\mathcal{N}_{A_1 \rightarrow B_1}}{\mathcal{R}_{A_1\rightarrow B_1}} - \chD*{\mathcal{M}_{A_2 \rightarrow B_2}}{\mathcal{R}_{A_2\rightarrow B_2}},
\end{align}
which implies that 
\begin{align}
   S\left[{\mathcal{N}}_{A_1 \rightarrow B_1} \otimes {\mathcal{M}}_{A_2 \rightarrow B_2}\right] \leq S\left[{\mathcal{N}}_{A_1 \rightarrow B_1}] + S[{\mathcal{M}}_{A_2 \rightarrow B_2}\right].\label{equ:additivity_of_entropy_ge}
\end{align}
Combining  inequalities~\eqref{equ_additivity_of_entropy_1} and \eqref{equ:additivity_of_entropy_ge}, we  conclude that the entropy functional of a quantum channel is additive under tensor product of channels, i.e,
\begin{equation}
     S\left[{\mathcal{N}}_{A_1 \rightarrow B_1} \otimes {\mathcal{M}}_{A_2 \rightarrow B_2}\right] = S\left[{\mathcal{N}}_{A_1 \rightarrow B_1}\right] + S\left[{\mathcal{M}}_{A_2 \rightarrow B_2}\right].
\end{equation}

\subsection{Entropy gain under positive maps}\label{sec_entropy_gain_under_posive_maps}
\label{sec:lemma1}
    
   \begin{lemma}
 Let us consider a pair of positive semidefinite operators $\{\sigma \in \mathcal{L}\left(C\right)_{+},\rho\in \mathcal{L}\left(A\right)_{+}\}$ such that whenever the finite sum $\sum_{n}c_{n} \sigma^{n} = 0$, we have $\sum_{n}c_{n} \rho^{n} = 0$, where $c_n$ are complex numbers. Here we adopt the convention that $\sigma^{0}=\mathbbm{1}_C$ and $\rho^{0}=\mathbbm{1}_A$. Then there exists a unital CP map $\mathcal{F}: \mathcal{L}\left(C\right) \rightarrow \mathcal{L}\left(A\right)$, such that $\rho= \mathcal{F}(\sigma)$.
   \end{lemma}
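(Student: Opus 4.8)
The plan is to show that the polynomial-annihilation hypothesis is equivalent to the spectral inclusion $\mathrm{spec}(\rho)\subseteq\mathrm{spec}(\sigma)$, and then to build $\mathcal{F}$ explicitly as a ``pinch-and-relabel'' channel. First I would recall that since $\sigma$ is positive semidefinite it is diagonalizable, so its minimal polynomial is $m_\sigma(x)=\prod_{\lambda\in\mathrm{spec}(\sigma)}(x-\lambda)$, which has only simple roots; writing $m_\sigma(x)=\sum_n c_n x^n$ and using the convention $\sigma^0=\mathbbm{1}_C$, $\rho^0=\mathbbm{1}_A$, the identity $\sum_n c_n\sigma^n=m_\sigma(\sigma)=0$ together with the hypothesis forces $\sum_n c_n\rho^n=m_\sigma(\rho)=0$. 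Hence the minimal polynomial $m_\rho$ of $\rho$ divides $m_\sigma$, and since $m_\sigma$ has simple roots, every eigenvalue of $\rho$ is an eigenvalue of $\sigma$, i.e.\ $\mathrm{spec}(\rho)\subseteq\mathrm{spec}(\sigma)$.

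Next I would set up spectral decompositions. Write $\sigma=\sum_{i}\lambda_i P_i$ with the $\lambda_i$ the distinct eigenvalues of $\sigma$ and $\{P_i\}$ the associated orthogonal spectral projectors, so $\sum_i P_i=\mathbbm{1}_C$ and $\tr(P_i)\ge 1$ for each $i$. For each index $i$ let $R_i\in\mathcal{L}(A)_{+}$ be the spectral projector of $\rho$ corresponding to the eigenvalue $\lambda_i$, with $R_i:=0$ when $\lambda_i\notin\mathrm{spec}(\rho)$. Because $\mathrm{spec}(\rho)\subseteq\mathrm{spec}(\sigma)$, the nonzero $R_i$ exhaust the spectral projectors of $\rho$, so $\sum_i R_i=\mathbbm{1}_A$ and $\rho=\sum_i\lambda_i R_i$.

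Then I would define
\begin{equation}
\mathcal{F}(\mathsf{X}):=\sum_i\frac{\tr(P_i\mathsf{X})}{\tr(P_i)}\,R_i,\qquad \mathsf{X}\in\mathcal{L}(C),
\end{equation}
and verify the three required properties. Complete positivity: each summand $\mathsf{X}\mapsto\tr(P_i\mathsf{X})R_i$ is completely positive (a positive linear functional post-multiplied by a fixed positive operator; equivalently it admits a Kraus form $\sum_{k,l}|f^{(i)}_l\rangle\langle e^{(i)}_k|\,\mathsf{X}\,|e^{(i)}_k\rangle\langle f^{(i)}_l|$ for orthonormal bases of $\mathrm{ran}(P_i)$ and $\mathrm{ran}(R_i)$), and a sum of CP maps scaled by positive constants is CP. Unitality: $\mathcal{F}(\mathbbm{1}_C)=\sum_i\frac{\tr(P_i)}{\tr(P_i)}R_i=\sum_i R_i=\mathbbm{1}_A$. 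Finally, evaluating on $\sigma$: $\mathcal{F}(\sigma)=\sum_i\frac{\tr(P_i\sigma)}{\tr(P_i)}R_i=\sum_i\frac{\lambda_i\tr(P_i)}{\tr(P_i)}R_i=\sum_i\lambda_i R_i=\rho$, as required.

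The only genuinely conceptual point is the first paragraph---recognizing that the hypothesis ``all polynomial relations of $\sigma$ are inherited by $\rho$'' is exactly the statement $m_\rho\mid m_\sigma$, hence spectral inclusion; after that the construction is routine. The minor bookkeeping to be careful about is the possible kernel of $\sigma$ (the index $i$ with $\lambda_i=0$ is harmless, since $\tr(P_i)\ge 1$ and that summand contributes $0$ to both $\mathcal{F}(\sigma)$ and $\rho$) and the fact that the relabeled projectors $R_i$ genuinely sum to $\mathbbm{1}_A$, which is guaranteed precisely by $\mathrm{spec}(\rho)\subseteq\mathrm{spec}(\sigma)$.
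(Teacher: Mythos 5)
Your proof is correct, but it takes a genuinely different route from the one in the paper. The paper works at the level of operator algebras: it considers the commutative unital $C^{*}$-subalgebra generated by $\{\sigma,\mathbbm{1}_C\}$, defines $\mathcal{F}(\sigma^{n}):=\rho^{n}$ (the polynomial-relation hypothesis is exactly what makes this well defined on the linear span), observes that this is a unital $*$-homomorphism and hence completely positive because the domain is commutative, and finally invokes Arveson's extension theorem to extend $\mathcal{F}$ to all of $\mathcal{L}(C)$. You instead distill the hypothesis into the single consequence $m_{\sigma}(\rho)=0$, i.e.\ $\mathrm{spec}(\rho)\subseteq\mathrm{spec}(\sigma)$, and then write down an explicit measure-and-prepare (entanglement-breaking) map $\mathcal{F}(\mathsf{X})=\sum_i \tr(P_i\mathsf{X})R_i/\tr(P_i)$ whose complete positivity, unitality, and action on $\sigma$ are all immediate. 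Your reduction to the minimal polynomial does capture the full hypothesis (any annihilating polynomial of $\sigma$ is a multiple of $m_{\sigma}$, so $m_{\sigma}(\rho)=0$ already implies all the other inherited relations), and the edge cases you flag (the kernel of $\sigma$, and $\sum_i R_i=\mathbbm{1}_A$) are handled correctly. What your approach buys is concreteness and elementarity: no extension theorem is needed, and the resulting channel is explicit. What the paper's approach buys is that it does not require diagonalizing anything and exhibits the conceptual reason the lemma is true (a positive map out of a commutative $C^{*}$-algebra is automatically CP and extends); that template adapts more readily to situations where one wants to intertwine functional calculi rather than just match spectra. Either proof is acceptable.
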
  
   \begin{proof}
      Consider a set $\mathtt{K}_C =\{ \sigma, \mathbbm{1} \}\subset \mathcal{L}\left(C\right)$, and let $\mathcal{A}_{\mathtt{K}_C}$ denote the sub-algebra generated by $\mathtt{K}_{C}$. Clearly, $\mathcal{A}_{\mathtt{K}_C}$ is a commutative unital $C^{*}$ sub-algebra of $\mathcal{L}(C)$. Now, we define a map $\mathcal{F}: \mathcal{A}_{\mathtt{K}_C} \rightarrow \mathcal{L}\left(A\right)$ by $\mathcal{F}(\sigma^{n})=\rho^{n}$, for all $n \in \mathbb{N} \cup \{0\}$, and extend it by linearity to the whole space. As $\sigma$ is positive and hence, Hermitian, it is clear that the map $\mathcal{F}$ is a $^{\ast}-\text{homomorphism}$ which also implies that $\mathcal{F}$ is a positive map. As $\mathcal{A}_{\mathtt{K}_C}$ is a commutative unital $C^{*}$ algebra and $\mathcal{L}\left(A\right)$ is a unital $C^{*}$ algebra, we have that $\mathcal{F}$ is completely positive~\cite{paulsen_2003}. We now employ Arveson's extension theorem to extend it to $\mathcal{L}\left(C\right)$.
      
    \emph{Arveson’s extension theorem}~\cite{paulsen_2003}: Let $\mathfrak{B}$ be a $C^{*}$-algebra and $\mathfrak{D} \subseteq \mathfrak{B}$ an operator system (a Hermitian subspace containing the identity of the $C^{*}$-algebra). Let $\mathcal{F}:\mathfrak{D} \rightarrow \mathcal{L}\left(A\right)$ be a completely positive map. Then there exists a completely positive map, $\mathcal{F'}: \mathfrak{B} \rightarrow \mathcal{L}\left(A\right)$, which extends $\mathcal{F}$ to the whole algebra $\mathfrak{B}$. Clearly, $\mathcal{A}_{\mathtt{K}_C}$ is an operator system. Hence, the map $\mathcal{F}$ extends to the full algebra $\mathcal{L}\left(C\right)$. 
   \end{proof}
Next, we derive several entropic inequalities for unital CP maps, CP maps, and positive maps. Our results also reduce to well-known previous results for an appropriate choice of a parameter involved in the remainder term of these bounds. Our main result in this section is Theorem \ref{result_entropy_gain_1}, which is on the entropy gain of positive maps.

\begin{proposition} \label{prop_binod}
    Let $\rho \in \mathcal{D}\left(A\right)$, $\sigma \in \mathcal{D}\left(C\right)$ and $\mathcal{F}:\mathcal{L}\left(C\right) \rightarrow \mathcal{L}\left(A\right)$ be a  completely positive unital map that connects $\rho$ and $\sigma$. Then, the von Neumann entropy of $\rho$ and $\sigma$ are related by the following equation.
    \begin{equation}
        S\left(\rho\right) \geq \norm{\mathcal{F}^{*}(\mathbbm{1}_{A})}_{\infty}S\left(\sigma\right),
    \end{equation}
    where $\mathcal{F}^{*}: \mathcal{L}\left(A\right) \rightarrow \mathcal{L}\left(C\right)$ is the adjoint of $\mathcal{F}$.
\end{proposition}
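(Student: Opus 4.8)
The plan is to reduce the claim to the ordinary monotonicity (data processing) of the relative entropy, exploiting unitality of $\mathcal{F}$ together with a rescaling governed by $\alpha := \norm{\mathcal{F}^{*}(\mathbbm{1}_{A})}_{\infty}$. First I would record that since $\mathcal{F}$ is CP and unital ($\mathcal{F}(\mathbbm{1}_{C})=\mathbbm{1}_{A}$), its adjoint $\mathcal{F}^{*}$ is CP and trace-preserving, so $\mathcal{F}^{*}(\mathbbm{1}_{A})$ is a positive operator on $C$ with $\tr \mathcal{F}^{*}(\mathbbm{1}_{A})=\abs{A}>0$; in particular $\alpha>0$, and by definition of the operator norm $\mathcal{F}^{*}(\mathbbm{1}_{A})\leq \alpha\,\mathbbm{1}_{C}$.

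The key observation is that this last operator inequality says exactly that the rescaled map $\tfrac{1}{\alpha}\mathcal{F}$ is trace-nonincreasing: for every $\mathsf{X}\in\mathcal{L}(C)_{+}$,
\[
  \tr\!\left(\tfrac{1}{\alpha}\mathcal{F}(\mathsf{X})\right)=\tfrac{1}{\alpha}\inner{\mathsf{X},\mathcal{F}^{*}(\mathbbm{1}_{A})}\leq \tr\mathsf{X},
\]
and it is of course still completely positive. I would therefore apply monotonicity of the relative entropy between a state and a positive semidefinite operator under the CP trace-nonincreasing map $\tfrac{1}{\alpha}\mathcal{F}$, with inputs $\sigma$ and $\mathbbm{1}_{C}$, and then use $\mathcal{F}(\sigma)=\rho$, $\mathcal{F}(\mathbbm{1}_{C})=\mathbbm{1}_{A}$, and the scaling identity $\staD*{cP}{cQ}=c\,\staD*{P}{Q}$ for $c>0$ (valid here since $\mathbbm{1}$ is full rank, so all the supports are contained and the divergences are finite):
\[
  \staD*{\sigma}{\mathbbm{1}_{C}}\;\geq\;\staD*{\tfrac{1}{\alpha}\mathcal{F}(\sigma)}{\tfrac{1}{\alpha}\mathcal{F}(\mathbbm{1}_{C})}\;=\;\staD*{\tfrac{1}{\alpha}\rho}{\tfrac{1}{\alpha}\mathbbm{1}_{A}}\;=\;\tfrac{1}{\alpha}\,\staD*{\rho}{\mathbbm{1}_{A}}.
\]

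Finally I would rewrite both ends via $S(\rho)=-\staD*{\rho}{\mathbbm{1}_{A}}$ and $S(\sigma)=-\staD*{\sigma}{\mathbbm{1}_{C}}$, which turns the displayed inequality into $-S(\sigma)\geq -\tfrac{1}{\alpha}S(\rho)$, i.e.\ $S(\rho)\geq \alpha\,S(\sigma)=\norm{\mathcal{F}^{*}(\mathbbm{1}_{A})}_{\infty}\,S(\sigma)$. There is no deep obstacle; the one point that must not be skipped is that data processing fails for a general CP map (e.g.\ $c\cdot\mathrm{id}$ with $c>1$ strictly increases $\mathbf{D}$), so the rescaling by $\alpha$ — which is precisely what brings $\mathcal{F}$ into the trace-nonincreasing class — is essential, and one should invoke the monotonicity in the form that allows the non-normalized second argument $\mathbbm{1}_{C}$.
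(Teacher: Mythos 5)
Your reduction breaks at the data-processing step. The inequality $\staD*{P}{Q}\geq\staD*{\mathcal{E}(P)}{\mathcal{E}(Q)}$ is \emph{not} valid for CP trace-nonincreasing maps $\mathcal{E}$ when the divergence is negative, and here it is negative by construction: $\staD*{\sigma}{\mathbbm{1}_C}=-S(\sigma)\leq 0$. The cleanest illustration is $\mathcal{E}=c\,\mathrm{id}$ with $0<c<1$, which is CP and trace-nonincreasing, yet $\staD*{cP}{cQ}=c\,\staD*{P}{Q}>\staD*{P}{Q}$ whenever $\staD*{P}{Q}<0$. You correctly flag that $c\,\mathrm{id}$ with $c>1$ violates monotonicity, but your rescaling by $1/\alpha$ pushes you into the $c<1$ regime, where monotonicity fails in the opposite direction for negative divergences. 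The only route to monotonicity for a trace-nonincreasing map is to complete it to a channel $\mathcal{E}\oplus\mathcal{E}'$ and drop the term $\staD*{\mathcal{E}'(P)}{\mathcal{E}'(Q)}$, which is legitimate only when that term is nonnegative --- not guaranteed here.

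In fact no repair is possible, because the proposition is false as stated when $\alpha:=\norm{\mathcal{F}^{*}(\mathbbm{1}_A)}_\infty>1$. Take $|A|=|C|=2$ and $\mathcal{F}(\mathsf{X})=K_1\mathsf{X}K_1^\dagger+K_2\mathsf{X}K_2^\dagger$ with $K_1=\op{1}$ and $K_2=\ketbra{2}{1}$; this is CP and unital ($K_1K_1^\dagger+K_2K_2^\dagger=\mathbbm{1}$) and acts as $\mathcal{F}(\mathsf{X})=\mathsf{X}_{11}\mathbbm{1}$. Choosing $\sigma=\mathbbm{1}/2$ gives $\rho=\mathcal{F}(\sigma)=\mathbbm{1}/2\in\mathcal{D}(A)$, so $S(\rho)=S(\sigma)=\log 2$, while $\mathcal{F}^{*}(\mathbbm{1}_A)=K_1^\dagger K_1+K_2^\dagger K_2=2\op{1}$, so $\alpha=2$ and the claimed bound reads $\log 2\geq 2\log 2$. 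The paper's own proof stumbles on the same sign issue: after the operator Jensen inequality $\mathcal{F}(\sigma)\log\mathcal{F}(\sigma)\leq\mathcal{F}(\sigma\log\sigma)$ it applies trace-nonincreasingness of $\mathcal{F}/\alpha$ to $\sigma\log\sigma$, which is negative semidefinite, so the trace inequality runs the wrong way. The statement does hold when $\alpha\leq 1$ (then $\mathcal{F}$ itself is trace-nonincreasing and your completion argument gives the stronger $S(\rho)\geq S(\sigma)$), and when $\mathcal{F}$ is additionally trace-preserving one has $\alpha=1$ and recovers the classical unital result; but for $\alpha>1$ the inequality must be abandoned or the prefactor weakened.
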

\begin{proof}
    Let $\mathcal{F}:\mathcal{L}\left(C\right) \rightarrow \mathcal{L}\left(A\right)$ denote a unital and completely positive map, then we can define a completely positive trace nonincreasing map $\widetilde{\mathcal{F}}:\mathcal{L}\left(C\right) \rightarrow \mathcal{L}\left(A\right)$ as follows: $\widetilde{\mathcal{F}}(\sigma):= \frac{\mathcal{F}(\sigma)}{\norm{\mathcal{F}^{*}(\mathbbm{1}_A)}_{\infty}} $. Moreover, for any unital and completely positive map $\mathcal{F}:\mathcal{L}\left(C\right) \rightarrow \mathcal{L}\left(A\right)$, and for any positive operator $\sigma\in \mathcal{L}\left(C\right)$, the following operator inequality holds (see \cite{Wolf2012}): $\mathcal{F}\left(\sigma\right)\log \mathcal{F}\left(\sigma\right)\leq \mathcal{F}\left(\sigma \log \sigma\right)=\norm{\mathcal{F}^{*}\left(\mathbbm{1}_A\right)}_{\infty}\widetilde{\mathcal{F}}(\sigma \log \sigma)$.
    By taking trace on both sides and using the fact that $\widetilde{F}$ is trace nonincreasing, we then have $-S(\left(\mathcal{F}\left(\sigma\right)\right) \leq - \norm{\mathcal{F}^{*}\left(\mathbbm{1}_A\right)}_{\infty} S\left(\sigma\right)$, which implies that $S\left(\rho\right) \geq \norm{\mathcal{F}^{*}\left(\mathbbm{1}_A\right)}_{\infty} S\left(\sigma\right)$.

 Note that, in principle, there may exist more than one unital CP map connecting $\rho$ to $\sigma$ ~(which we also proved earlier in the beginning of this section). Thus, this lower-bound can be further tightened as $S(\rho) \geq \sup\{\norm{\mathcal{F}^{*}\left(\mathbbm{1}_A\right)}_{\infty}\} S\left(\sigma\right)$, where the supremum is taken over all possible unital CP maps $\mathcal{F}$ connecting states $\rho$ and $\sigma$. Further, notice that the value of $\sup\{\norm{\mathcal{F}^{*}\left(\mathbbm{1}_A\right)}_{\infty}\}$ decides whether the entropy gain is positive or negative. Thus, we can make the following remark: For a completely positive unital map $\mathcal{F}$, entropy gain is non-negative if  $\sup\{\norm{\mathcal{F}^{*}\left(\mathbbm{1}_A\right)}_{\infty}\} \geq 1 $. Once we assume that the map $\mathcal{F}$ is also trace-preserving, which implies that $\mathcal{F}^{*}$ is unital, we then have $\sup\{\norm{\mathcal{F}^{*}\left(\mathbbm{1}_A\right)}_{\infty}\} = 1$, thereby retrieving the well-known result of~\cite{Alberti:1977wc}.
\end{proof}

 \begin{lemma}\label{lemma_binods}
     Let $\mathcal{F}:\mathcal{L}\left(C\right) \rightarrow \mathcal{L}\left(A\right)$ be a completely positive map, and let the entropy gain of $\sigma \in \mathcal{D}\left({C}\right)$  under the map $\mathcal{F}$ be defined as $\Delta S\left(\sigma, \mathcal{F}\right):=S\left(\mathcal{F}\left(\sigma\right)\right) - S\left(\sigma\right)$. Then, for all $\sigma \in \mathcal{D}\left({C}\right)$ such that $\mathcal{F}\left(\sigma\right)>0$, we have
    \begin{eqnarray}
        \Delta S\left(\sigma, \mathcal{F}\right) \geq \max \left\{\staD*{\sigma}{\sigma_{\alpha}}, \staD*{\sigma}{ \hat{\sigma}_{\alpha} }\right \},
    \end{eqnarray}
    where $\alpha := \norm{\mathcal{F}^{*}(\mathbbm{1}_{A})}_{\infty}$, $\sigma_{\alpha} :=\left(\frac{1}{\alpha}\right)^{\alpha}\left(\mathcal{F}^{*}\circ \mathcal{F}(\sigma)\right)^{\alpha}$, and $\hat{\sigma}_{\alpha}:=\frac{1}{\alpha}\mathcal{F}^{*} \left(\mathcal{F}(\sigma)\right)^{\alpha}$.
 \end{lemma}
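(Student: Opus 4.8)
The two inequalities are established separately and the maximum is taken at the end; I focus on $\Delta S(\sigma,\mathcal{F}) \ge \staD*{\sigma}{\sigma_\alpha}$, the bound with $\hat{\sigma}_{\alpha}$ being obtained by a parallel argument. Writing both sides in full and using $-S(\sigma) = \tr(\sigma\log\sigma)$ to cancel the common term, the claim is equivalent to the operator-trace inequality $\tr\!\big(\mathcal{F}(\sigma)\log\mathcal{F}(\sigma)\big) \le \tr(\sigma\log\sigma_\alpha)$. By the defining adjoint relation $\langle \mathsf{P},\mathcal{F}(\mathsf{Q})\rangle = \langle \mathcal{F}^*(\mathsf{P}),\mathsf{Q}\rangle$ one has $\tr\!\big(\mathcal{F}(\sigma)\log\mathcal{F}(\sigma)\big) = \tr\!\big(\sigma\,\mathcal{F}^*(\log\mathcal{F}(\sigma))\big)$, so it suffices to prove the \emph{operator} inequality $\mathcal{F}^*\!\big(\log\mathcal{F}(\sigma)\big) \le \log\sigma_\alpha$ and pair it with $\sigma\ge 0$. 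The assumption $\mathcal{F}(\sigma)>0$ makes $\log\mathcal{F}(\sigma)$ well defined, and complete positivity of $\mathcal{F}^*$ together with $\mathcal{F}(\sigma)>0$ forces $\operatorname{supp}(\sigma)\subseteq\operatorname{supp}(\sigma_\alpha)$, so $\staD*{\sigma}{\sigma_\alpha}$ is finite.

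The engine is an operator Jensen inequality for non-unital completely positive maps. Rescaling $\widetilde{\mathcal{F}}:=\mathcal{F}/\alpha$ with $\alpha = \norm{\mathcal{F}^*(\mathbbm{1}_A)}_{\infty}$ turns $\widetilde{\mathcal{F}}^*$ into a sub-unital CP map ($\widetilde{\mathcal{F}}^*(\mathbbm{1}_A)\le\mathbbm{1}_C$), which I complete to a unital CP map on a one-dimensional extension exactly as in the proof of Proposition~\ref{prop_binod} (or via Arveson's extension theorem, already invoked in this appendix). Operator concavity of $t\mapsto\log t$, together with the Hansen--Pedersen operator Jensen inequality applied to this unital extension, yields after passing back to $\mathcal{F}$ the inequality $\mathcal{F}^*(\log Y)\le \alpha\log\!\big(\mathcal{F}^*(Y)/\alpha\big) = \log\!\big(\alpha^{-\alpha}\,\mathcal{F}^*(Y)^{\alpha}\big)$ for all $Y>0$. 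Substituting $Y=\mathcal{F}(\sigma)$ and recognising $\mathcal{F}^*(\mathcal{F}(\sigma))=\mathcal{F}^*\!\circ\mathcal{F}(\sigma)$ gives $\mathcal{F}^*\!\big(\log\mathcal{F}(\sigma)\big)\le\log\sigma_\alpha$, which is what we needed. For the $\hat{\sigma}_{\alpha}$ bound one instead writes $\log Y = \tfrac{1}{\alpha}\log Y^{\alpha}$ and applies the same step to $W:=Y^{\alpha}>0$, obtaining $\mathcal{F}^*(\log Y)=\tfrac{1}{\alpha}\mathcal{F}^*(\log W)\le\log\!\big(\mathcal{F}^*(W)/\alpha\big)=\log\hat{\sigma}_{\alpha}$; tracing against $\sigma$ and taking the maximum of the two bounds completes the proof.

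It is worth recording the reduction to known results, which also serves as a consistency check. When $\mathcal{F}$ is trace-preserving, $\mathcal{F}^*$ is unital, so $\alpha = 1$ and both $\sigma_\alpha$ and $\hat{\sigma}_{\alpha}$ collapse to $\mathcal{F}^*\!\circ\mathcal{F}(\sigma)$; the operator Jensen step is then the standard one and the lemma becomes $S(\mathcal{F}(\sigma))-S(\sigma)\ge\staD*{\sigma}{\mathcal{F}^*\circ\mathcal{F}(\sigma)}$, i.e.\ the refinement of~\cite{Buscemi_2016}. Restricting further to subunital $\mathcal{F}$ makes $\mathcal{F}^*\!\circ\mathcal{F}(\sigma)$ sub-normalised, so the right-hand side is non-negative and one recovers the classical entropy-gain monotonicity of~\cite{Alberti:1977wc,AU82}.

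The step I expect to be the crux is the non-unital operator Jensen inequality in exactly the form $\mathcal{F}^*(\log Y)\le\alpha\log(\mathcal{F}^*(Y)/\alpha)$: one must control the CP map through the scalar $\alpha=\norm{\mathcal{F}^*(\mathbbm{1}_A)}_{\infty}$ rather than through the operator $\mathcal{F}^*(\mathbbm{1}_A)$ itself, which is precisely what produces the clean powers $(\,\cdot\,)^{\alpha}$ and distinguishes $\sigma_\alpha$ from the ``sandwiched Petz'' operator $\mathcal{F}^*(\mathbbm{1}_A)^{1/2}\log\!\big(\mathcal{F}^*(\mathbbm{1}_A)^{-1/2}\mathcal{F}^*(Y)\mathcal{F}^*(\mathbbm{1}_A)^{-1/2}\big)\mathcal{F}^*(\mathbbm{1}_A)^{1/2}$ that the bare operator-Jensen argument returns. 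If this scalar form is too strong as a pure operator inequality, the fallback is to prove the trace inequality $\tr(\sigma\,\mathcal{F}^*(\log\mathcal{F}(\sigma)))\le\tr(\sigma\log\sigma_\alpha)$ directly, exploiting the constraint $Y=\mathcal{F}(\sigma)$ (equivalently, working in the regime where $\mathcal{F}$ is trace-nonincreasing, so $\alpha\le 1$), rather than the support-free operator inequality; the secondary nuisance is the bookkeeping of supports and finiteness of the relative entropies when $\mathcal{F}^*$ is not faithful.
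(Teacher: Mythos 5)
Your proof is correct and follows essentially the same route as the paper: rescale to $\widetilde{\mathcal{F}}:=\mathcal{F}/\alpha$ so that $\widetilde{\mathcal{F}}^{*}$ is subunital, apply the operator Jensen inequality $\widetilde{\mathcal{F}}^{*}(\log Y)\leq \log\widetilde{\mathcal{F}}^{*}(Y)$ (the paper's Eq.~\eqref{entropy_gain_subunital}) with $Y=\mathcal{F}(\sigma)$ and with $Y=(\mathcal{F}(\sigma))^{\alpha}$ respectively, and trace against $\sigma$; your ``non-unital operator Jensen inequality'' $\mathcal{F}^{*}(\log Y)\leq \alpha\log(\mathcal{F}^{*}(Y)/\alpha)$ is exactly this rescaled statement. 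The only difference is presentational (you justify the subunital Jensen step via a unital dilation rather than citing it directly), so no gap.
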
 
 
    \begin{proof}
        We first define a completely positive, trace nonincreasing map $\widetilde{\mathcal{F}}:\mathcal{L}\left(C\right) \rightarrow \mathcal{L}\left(A\right)$ from the map $\mathcal{F}:\mathcal{L}\left(C\right) \rightarrow \mathcal{L}\left(A\right)$ as $\widetilde{F}(\sigma):= \frac{1}{\alpha}\mathcal{F}(\sigma)$, where $\alpha := \norm{\mathcal{F}^{*}(\mathbbm{1}_{A})}_{\infty}$. Now, using the definition of adjoint, the entropy gain can be written as $\Delta S\left(\sigma, \mathcal{F}\right) = \tr\left(\sigma \log \sigma\right)- \tr\left(\alpha \sigma \widetilde{\mathcal{F}}^{*}\left(\log \mathcal{F}\left(\sigma\right)\right)\right)$. As $\mathcal{F}\left(\sigma\right)$ is a full rank operator by assumption and $\widetilde{\mathcal{F}}^{*}$ is subunital, we use Eq.~(\ref{entropy_gain_subunital}) to arrive at $\Delta S\left(\sigma, \mathcal{F}\right) \geq  \tr\left(\sigma \log \sigma\right)- \tr\left( \sigma \alpha(\log \widetilde{\mathcal{F}}^{*} \circ \mathcal{F}\left(\sigma\right))\right)$, which implies $\Delta S\left(\sigma, \mathcal{F}\right) \geq \staD*{\sigma }{ \sigma_{\alpha}}$, where $\sigma_{\alpha} =\left(\frac{1}{\alpha}\right)^{\alpha}\left(\mathcal{F}^{*}\circ \mathcal{F}(\sigma)\right)^\alpha$. Note that we can also write the entropy gain as $\Delta S\left(\sigma, \mathcal{F}\right) = \tr\left(\sigma \log \sigma\right)- \tr\left(\widetilde{\mathcal{F}}\left(\sigma\right) \log \left(\mathcal{F}\left(\sigma\right)\right)^{\alpha}\right)$. Then, by following the similar arguments as above, we obtain another bound on the entropy gain as  $\Delta S\left(\sigma, \mathcal{F}\right) \geq \staD*{\sigma }{\hat{\sigma}_{\alpha}}$, where $\hat{\sigma}_{\alpha}:=\frac{1}{\alpha}\mathcal{F}^{*} \left(\mathcal{F}(\sigma)\right)^{\alpha}$. Since both the remainder terms, $\staD*{\sigma}{\sigma_{\alpha}}$ and $\staD*{\sigma}{ \hat{\sigma}_{\alpha} }$, are valid lower bounds on the entropy gain, we have 
        \begin{eqnarray}
        \Delta S\left(\sigma, \mathcal{F}\right) \geq \max \left\{\staD*{\sigma}{\sigma_{\alpha}}, \staD*{\sigma}{ \hat{\sigma}_{\alpha} }\right \}.
    \end{eqnarray}
    This completes the proof of the lemma.
    \end{proof}
    
If we take $\mathcal{F}^{*}$ to be unital~(or equivalently $\mathcal{F}$ to be a quantum channel) in the above lemma, then we have $\alpha = 1$,  and we obtain a well-known result~\cite{Buscemi_2016}: $\Delta S\left(\sigma, \mathcal{F}\right) \geq D\left(\sigma \Vert \mathcal{F}^{*}\circ \mathcal{F}(\sigma)\right)$. We now state some known facts which we will be using to prove our next results.

\begin{enumerate}
    \item Let $A$ and $B$ be two finite-dimensional Hilbert spaces, and let $\mathcal{F}: \mathcal{L}(A) \rightarrow \mathcal{L}(B)$ be a positive map such that $\mathcal{F}(\mathbbm{1}_A) \leq \mathbbm{1}_B$. Then for all $\mathsf{X} > 0$, we have the following inequality~\cite{Wolf2012}:
    \begin{eqnarray}
        \mathcal{F}\left(\log (\mathsf{X})\right) \leq \log (\mathcal{F}(\mathsf{X})). \label{entropy_gain_subunital}
    \end{eqnarray} 
    \item   Let $\rho \in \mathcal{D}(A)$ be a quantum state, and $\sigma \in \mathcal{L}(A)_{+}$ be a positive semidefinite operator. Then the following limit holds~\cite{wilde2016}:
    \begin{eqnarray}
        \staD*{\rho}{ \sigma}= \lim_{\epsilon \rightarrow 0+} \staD*{\rho }{ \sigma + \epsilon \mathbbm{1}_A}. \label{wilde 1}
    \end{eqnarray}
    \item Let $\rho \in \mathcal{D}(A)$ be a quantum state, and $\sigma, \sigma' \in \mathcal{L}(A)_{+}$ be positive semidefinite operators such that $\sigma \leq \sigma'$. Then we have the following inequality~\cite{wilde2016}  :
    \begin{eqnarray}
        \staD*{\rho}{ \sigma} \geq \staD*{\rho}{ \sigma'}. \label{wilde 2}
    \end{eqnarray}
\end{enumerate}

In the following, we provide a twofold generalization of Lemma \ref{lemma_binods}: (1) we consider positive maps instead of completely positive maps, (2) we drop the requirement that $\mathcal{F}(\sigma)>0$.

\begin{theorem} \label{result_entropy_gain_1}
    Let $\mathcal{F}:\mathcal{L}\left(A\right) \rightarrow \mathcal{L}\left(B\right)$ be a positive map. Then, for all $\rho \in \mathcal{D}\left(A\right)$, the change in entropy after and before applying the map is given by
    \begin{eqnarray}
        S\left(\mathcal{F}\left(\rho\right)\right) - S\left(\rho\right) \geq \staD*{\rho}{\rho_{\alpha}},
    \end{eqnarray}
    where $\alpha := \norm{\mathcal{F}^{*}(\mathbbm{1}_B)}_{\infty}$ and  $\rho_{\alpha} :=\frac{1}{\alpha^\alpha}\left(\mathcal{F}^{*}\circ \mathcal{F}(\rho)\right)^{\alpha}$.
 \end{theorem}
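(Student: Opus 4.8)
The plan is to reduce the positive-map statement in Theorem~\ref{result_entropy_gain_1} to the completely positive case already handled in Lemma~\ref{lemma_binods} (or rather, to re-derive it directly using the key operator inequality for subunital positive maps stated in Eq.~\eqref{entropy_gain_subunital}), together with the limiting argument Eq.~\eqref{wilde 1} to dispense with the assumption $\mathcal{F}(\rho)>0$. The crucial point is that Eq.~\eqref{entropy_gain_subunital}, namely $\mathcal{F}(\log \mathsf{X})\leq \log \mathcal{F}(\mathsf{X})$ for $\mathsf{X}>0$, is stated there for \emph{positive} (not merely completely positive) subunital maps, so the whole argument of Lemma~\ref{lemma_binods} goes through verbatim once we normalize.

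First I would introduce $\alpha:=\norm{\mathcal{F}^{*}(\mathbbm 1_B)}_{\infty}$ and the rescaled map $\widetilde{\mathcal{F}}:=\frac{1}{\alpha}\mathcal{F}$, noting that $\widetilde{\mathcal{F}}^{*}(\mathbbm 1_B)=\frac{1}{\alpha}\mathcal{F}^{*}(\mathbbm 1_B)\leq \mathbbm 1_A$, so $\widetilde{\mathcal{F}}^{*}$ is a positive subunital map. Next, to handle the possibly-singular output, fix $\epsilon>0$ and work with $\mathcal{F}(\rho)+\epsilon\mathbbm 1_B>0$; apply Eq.~\eqref{entropy_gain_subunital} with $\widetilde{\mathcal{F}}^{*}$ and $\mathsf{X}=\mathcal{F}(\rho)+\epsilon\mathbbm 1_B$ to get $\widetilde{\mathcal{F}}^{*}\big(\log(\mathcal{F}(\rho)+\epsilon\mathbbm 1_B)\big)\leq \log \widetilde{\mathcal{F}}^{*}\big(\mathcal{F}(\rho)+\epsilon\mathbbm 1_B\big)$. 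Then I would write the entropy gain (with the regularization) as
\begin{align}
S(\mathcal{F}(\rho)) - S(\rho)
&= \tr(\rho\log\rho) - \lim_{\epsilon\to 0^+}\tr\big(\mathcal{F}(\rho)\log(\mathcal{F}(\rho)+\epsilon\mathbbm 1_B)\big)\nonumber\\
&= \tr(\rho\log\rho) - \lim_{\epsilon\to 0^+}\alpha\,\tr\big(\rho\,\widetilde{\mathcal{F}}^{*}(\log(\mathcal{F}(\rho)+\epsilon\mathbbm 1_B))\big),\nonumber
\end{align}
where the first equality uses $S(\mathcal{F}(\rho))=-\lim_{\epsilon\to0^+}\tr(\mathcal{F}(\rho)\log(\mathcal{F}(\rho)+\epsilon\mathbbm 1_B))$ (a consequence of Eq.~\eqref{wilde 1} since $S(\mathcal{F}(\rho))=-\staD*{\mathcal{F}(\rho)}{\mathbbm 1_B}$ when $\mathcal{F}(\rho)$ is a state; for non-trace-preserving $\mathcal{F}$ one instead uses continuity of $x\mapsto x\log x$ on the spectrum), and the second uses the adjoint relation and linearity. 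Inserting the operator inequality and monotonicity of the trace against the positive operator $\rho$ yields the lower bound
\begin{align}
S(\mathcal{F}(\rho)) - S(\rho) &\geq \tr(\rho\log\rho) - \lim_{\epsilon\to 0^+}\alpha\,\tr\Big(\rho\,\log \widetilde{\mathcal{F}}^{*}\big(\mathcal{F}(\rho)+\epsilon\mathbbm 1_B\big)\Big)\nonumber\\
&= \tr(\rho\log\rho) - \alpha\,\tr\Big(\rho\,\log \tfrac{1}{\alpha}\mathcal{F}^{*}(\mathcal{F}(\rho))\Big),\nonumber
\end{align}
the last step again by Eq.~\eqref{wilde 1} applied to the state $\rho$ and positive operator $\tfrac1\alpha\mathcal{F}^{*}(\mathcal{F}(\rho))$. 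Rewriting $\alpha\log\big(\tfrac1\alpha\mathcal{F}^{*}(\mathcal{F}(\rho))\big)=\log\big(\tfrac{1}{\alpha^{\alpha}}(\mathcal{F}^{*}\circ\mathcal{F}(\rho))^{\alpha}\big)=\log\rho_{\alpha}$ gives $S(\mathcal{F}(\rho))-S(\rho)\geq \tr(\rho\log\rho)-\tr(\rho\log\rho_\alpha)=\staD*{\rho}{\rho_\alpha}$, as claimed.

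The main obstacle I anticipate is the care needed in the $\epsilon\to 0^+$ limiting step when $\mathcal{F}$ is not trace-preserving, so $\mathcal{F}(\rho)$ need not be a state and the identity $S(X)=-\staD*{X}{\mathbbm 1}$ does not literally apply: one must argue directly that $-\tr(\mathcal{F}(\rho)\log(\mathcal{F}(\rho)+\epsilon\mathbbm 1))\to S(\mathcal{F}(\rho))$ (straightforward from functional calculus on the finite-dimensional spectrum) and that the chain of inequalities survives the limit, which is fine since all terms are continuous in $\epsilon$ on $[0,\infty)$ after the operator inequality is applied. A secondary subtlety is justifying that $\mathcal{F}$ being merely positive (not $2$-positive) still permits Eq.~\eqref{entropy_gain_subunital}; but since that inequality is quoted in the excerpt for positive subunital maps, I would simply invoke it. A final bookkeeping point is verifying that $\widetilde{\mathcal{F}}^{*}$ is subunital, i.e.\ $\mathcal{F}^{*}(\mathbbm 1_B)\leq \alpha\mathbbm 1_A$, which holds because $\alpha=\norm{\mathcal{F}^{*}(\mathbbm 1_B)}_{\infty}$ is precisely the largest eigenvalue of the positive semidefinite operator $\mathcal{F}^{*}(\mathbbm 1_B)$.
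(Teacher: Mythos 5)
Your proposal is correct and follows essentially the same route as the paper's own proof: rescale by $\alpha=\norm{\mathcal{F}^{*}(\mathbbm{1}_B)}_{\infty}$ so that $\widetilde{\mathcal{F}}^{*}$ is positive and subunital, invoke the operator inequality $\widetilde{\mathcal{F}}^{*}(\log\mathsf{X})\leq\log\widetilde{\mathcal{F}}^{*}(\mathsf{X})$, regularize with $\epsilon\mathbbm{1}_B$, and pass to the limit. The one step you gloss over --- that $\widetilde{\mathcal{F}}^{*}(\mathcal{F}(\rho)+\epsilon\mathbbm{1}_B)=\widetilde{\mathcal{F}}^{*}\circ\mathcal{F}(\rho)+\epsilon\,\widetilde{\mathcal{F}}^{*}(\mathbbm{1}_B)$ rather than $+\,\epsilon\mathbbm{1}_A$ --- is exactly where the paper uses subunitality together with the monotonicity $\staD*{\rho}{\sigma}\geq\staD*{\rho}{\sigma'}$ for $\sigma\leq\sigma'$ before taking $\epsilon\to 0^{+}$, and is a one-line repair consistent with your stated limiting argument.
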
   
    \begin{proof}
        For the given positive map $\mathcal{F}$, we define a positive and trace nonincreasing map $\widetilde{\mathcal{F}}$ as $\widetilde{\mathcal{F}}(\rho):= \frac{1}{\alpha}\mathcal{F}(\rho)$, where $\alpha := \norm{\mathcal{F}^{*}(\mathbbm{1}_B)}_{\infty}$.
        Now we define an operator $\sigma := \mathcal{F}(\rho)+ \epsilon \mathbbm{1}_B$. Clearly for all $\epsilon > 0$, we have $\sigma >0$. Let us consider the following:
        \begin{eqnarray}
            S(\sigma)-S(\rho) &&= \tr (\rho \log \rho)-\tr (\sigma \log \sigma) \nonumber \\
            &&= \tr (\rho \log \rho)-\tr ( \alpha \widetilde{\mathcal{F}}(\rho) \log \sigma)- \epsilon \tr (\log \sigma) \nonumber \\
            &&= \tr (\rho \log \rho)- \alpha \tr (\rho \widetilde{\mathcal{F}}^{*}\log \sigma)- \epsilon \tr (\log \sigma) \nonumber\\
            && \geq \tr (\rho \log \rho)- \alpha \tr (\rho \log \widetilde{\mathcal{F}}^{*}(\sigma))- \epsilon \tr (\log \sigma) \nonumber\\
            &&= (1-\alpha)\tr (\rho \log \rho)+ \alpha \staD*{\rho}{\widetilde{\mathcal{F}}^{*}(\sigma)}- \epsilon \tr (\log \sigma).
        \end{eqnarray}
         As $\widetilde{\mathcal{F}}$ is trace nonincreasing, we have $\widetilde{\mathcal{F}}^{*}(\mathbbm{1}_B) \leq \mathbbm{1}_A$, then the inequality in the fourth line of above relations follows from the inequality~(\ref{entropy_gain_subunital}). Now $\widetilde{\mathcal{F}}^{*}(\sigma)=\widetilde{\mathcal{F}}^{*} \circ \mathcal{F} (\rho)+ \epsilon \widetilde{\mathcal{F}}^{*} (\mathbbm{1}_B) \leq \widetilde{\mathcal{F}}^{*} \circ \mathcal{F} (\rho)+ \epsilon \mathbbm{1}_A$. Thus, using inequality~(\ref{wilde 2}), we have $ \staD*{\rho}{\widetilde{\mathcal{F}}^{*}(\sigma)} \geq \staD*{\rho}{\widetilde{\mathcal{F}}^{*} \circ \mathcal{F} (\rho)+ \epsilon \mathbbm{1}_A}$. Using this, we get 
        \begin{eqnarray}
         S(\sigma)-S(\rho) && \geq  (1-\alpha)\tr (\rho \log \rho)+  \alpha \staD*{\rho}{\widetilde{\mathcal{F}}^{*} \circ \mathcal{F} (\rho)+ \epsilon \mathbbm{1}_A}- \epsilon \tr (\log \sigma).
        \end{eqnarray}
        Now we use Eq.~(\ref{wilde 1}) and the continuity of the von Neumann entropy to arrive at the following:
        \begin{eqnarray}
            S(\mathcal{F}(\rho)) -S(\rho) &&\geq (1-\alpha)\tr (\rho \log \rho)+  \alpha 
 \staD*{\rho}{\widetilde{\mathcal{F}}^{*} \circ \mathcal{F} (\rho)} \nonumber \\
            &&= \tr (\rho \log \rho) - \tr\left(\rho \alpha \log \widetilde{\mathcal{F}}^{*} \circ \mathcal{F} (\rho)\right) \nonumber\\
            &&= \tr \left(\rho \log \rho\right) - \tr\left(\rho \log \left\{\widetilde{\mathcal{F}}^{*} \circ \mathcal{F} (\rho)\right\}^{\alpha}\right) \nonumber\\
            &&= \staD*{\rho}{\left\{\widetilde{\mathcal{F}}^{*} \circ \mathcal{F} (\rho)\right\}^{\alpha}} = \staD*{\rho}{\left(\frac{1}{\alpha}\right)^{\alpha} \left(\mathcal{F}^{*}\circ \mathcal{F}(\rho)\right)^{\alpha}}.
        \end{eqnarray}
    This completes the proof of the theorem.
    
   \end{proof}
If we take $\mathcal{F}^{*}$ to be unital~(or equivalently $\mathcal{F}$ to be a trace-preserving) in Theorem \ref{result_entropy_gain_1}, then we have $\alpha = 1$,  and we obtain a well-known result~\cite{Buscemi_2016}: $S\left(\mathcal{F}(\rho)\right) -S(\rho) \geq \staD*{\rho}{\mathcal{F}^{*}\circ \mathcal{F}(\sigma)}$.

\subsection{Proof of Entropy gain under superchannels } \label{sec_proof_of_channel_entropy_gain}

Let us consider a channel $\mathcal
{N}:\mathcal{L}\left({A}\right) \rightarrow \mathcal{L}\left(B\right)$ such that the state that realizes the supremum in the entropy functional $S\left[\mathcal{N}\right]$ is an element of $ \mathsf{FRank}\left(\widetilde{A}\otimes{A}\right)\cap \mathcal{D}\left(\widetilde{A}\otimes{A}\right)$. 
 Then we can write
\begin{align}
\chD*{\mathcal{N}}{\mathcal{R}} &= \sup_{\rho_{\widetilde{A}A}} \staD*{\mathrm{id}_{\widetilde{A}}\otimes\mathcal{N}_{A \rightarrow B} \left(\rho_{\widetilde{A}A}\right)}{\mathrm{id}_{\widetilde{A}}\otimes\mathcal{R}_{A \rightarrow B}\left(\rho_{\widetilde{A}A}\right)}\nonumber\\
    & =  \staD*{\left(\mathrm{id}_{\widetilde{A}}\otimes \mathcal{N}_{A \rightarrow B}\right)\ketbra{\Psi}{\Psi}}{\left(\mathrm{id}_{\widetilde{A}}\otimes \mathcal{R}_{A \rightarrow B}\right)\ketbra{\Psi}{\Psi}},
\end{align}
where $\ketbra{\Psi}{\Psi} \in  \mathsf{FRank} \left(\widetilde{A}\otimes{A}\right)\cap \mathcal{D}\left(\widetilde{A}\otimes{A}\right)$. A simple calculation shows that $\left(\mathrm{id}_{\widetilde{A}}\otimes{R}_{A \rightarrow B}\right)\ketbra{\Psi}{\Psi}= \Psi_{\widetilde{A}}\otimes\mathbbm{1}_{B}$, where $\Psi_{\widetilde{A}} = \mathrm{tr}_{A}\left(\ketbra{\Psi}{\Psi}\right)$. Since $\Psi_{\widetilde{A}}$ is full rank operator, we have  $\left(\mathrm{id}_{\widetilde{A}}\otimes \mathcal{N}_{A \rightarrow B} \right)\ketbra{\Psi}{\Psi} = {\mathsf{C}}_{\mathcal{N}}^{\Psi}$, where ${\mathsf{C}}_{\mathcal{N}}^{\Psi}$ is the Choi state of the quantum channel $\mathcal{N}_{A \rightarrow B}$ with respect to $\ketbra{\Psi}{\Psi}$. Thus, we can write the entropy of the quantum channel $\mathcal{N}_{A \rightarrow B}$ as 
\begin{align}
\label{eq:s-before-sup-chan}
    S\left[\mathcal{N}_{A \rightarrow B}\right] &= - \staD*{{\mathsf{C}}_{\mathcal{N}}^{\Psi}}{\Psi_{\widetilde{A}}\otimes\mathbbm{1}_{B}}= S\left({\mathsf{C}}_{\mathcal{N}}^{\Psi}\right) - S\left(\Psi_{\widetilde{A}}\right). 
\end{align}

Now, consider a superchannel $\Theta:\mathcal{L}\left(A ,B\right)\rightarrow \mathcal{L}\left(C,D\right)$ such that the state that realizes the supremum of the entropy functional $S\left[\Theta\left(\mathcal{N}_{A \rightarrow B}\right)\right]$ is an element $\ketbra{\Phi}{\Phi} \in \mathsf{FRank}\left(\widetilde{C}\otimes{C}\right)\cap \mathcal{D}\left(\widetilde{C}\otimes{C}\right)$. Again via a simple calculation, we can show that $\left(\mathrm{id}_{\widetilde{C}}\otimes {R}_{C \rightarrow D}\right)\ketbra{\Phi}{\Phi} = \Phi_{\widetilde{C}}\otimes\mathbbm{1}_{D}$, where $\Phi_{\widetilde{C}} = \mathrm{tr}_{C}\left(\ketbra{\Phi}{\Phi}\right)$. Since $\Phi_{\widetilde{C}}$ is a full rank operator, we have  $\left(\mathrm{id}_{\widetilde{C}}\otimes \Theta\left(\mathcal{N}_{A \rightarrow B}\right)\right)\ketbra{\Phi}{\Phi}  = \mathsf{C}_{\Theta\left(\mathcal{N}\right)}^{\Phi}$, where $\mathsf{C}_{\Theta\left(\mathcal{N}\right)}^{\Phi}$ is the Choi state for the quantum channel $\Theta \left(\mathcal{N}_{A \rightarrow B}\right)$. Thus, we can write the entropy of quantum channel $\Theta \left(\mathcal{N}_{A \rightarrow B}\right)$ as 
\begin{align}
\label{eq:s-after-sup-chan}
    S\left[\Theta\left(\mathcal{N}_{A \rightarrow B}\right)\right] &= - \staD*{{\mathsf{C}}_{\Theta \left(\mathcal{N}\right)}^{\Phi} }{\Phi_{\widetilde{C}}\otimes\mathbbm{1}_{D}} = S\left({\mathsf{C}}_{\Theta \left(\mathcal{N}\right)}^{\Phi}\right) - S\left(\Phi_{\widetilde{C}}\right).
\end{align}
Now, using Eqs. \eqref{eq:s-before-sup-chan} and \eqref{eq:s-after-sup-chan}, the difference in the entropy of the quantum channel $\mathcal{N}_{A \rightarrow B}$ after and before applying the superchannel $\Theta$ becomes
\begin{align}
     S\left[\Theta\left(\mathcal{N}_{A \rightarrow B}\right)\right] - S\left[\mathcal{N}_{A \rightarrow B}\right]
     =  \Delta \left[\mathsf{C}_{\Theta(\mathcal{N})}^{\Phi},\mathsf{C}_{\mathcal{N}}^{\Psi}\right]  + \Delta' \left[\Psi_{\widetilde{A}},\Phi_{\widetilde{C}}\right],
\end{align}
where $\Delta \left[\mathsf{C}_{\Theta(\mathcal{N})}^{\Phi},\mathsf{C}_{\mathcal{N}}^{\Psi}\right]:=S\left(\mathsf{C}_{\Theta \left(\mathcal{N}\right)}^{\Phi}\right) - S \left(\mathsf{C}_{\mathcal{N}}^{\Psi}\right)$ and $\Delta' \left[\Psi_{\widetilde{A}},\Phi_{\widetilde{C}}\right]:=S\left(\Psi_{\widetilde{A}}\right) - S\left(\Phi_{\widetilde{C}}\right)$.

In the following, we aim to find a lower bound for the entropy gain. One may calculate a lower bound on the functions $\Delta$ and $\Delta'$ separately or together. Let $\mathfrak{T}$ be the representing map for the superchannel $\Theta$. Since the Choi states $\mathsf{C}_{\Theta \left(\mathcal{N}\right)}^{\Phi}$ and $\mathsf{C}_{\mathcal{N}}^{\Psi}$ are connected via the representing map $\mathfrak{T}$ (see Eq. \eqref{equ:new_rep_1} and Remark \ref{rem:choi-state-choi-operator-relation_1}), we can find a lower bound on $\Delta \left[\mathsf{C}^{\Phi}_{\Theta\left(\mathcal{N}\right)},\mathsf{C}_{\mathcal{N}}^{\Psi}\right]$ in terms of the relative entropy between the Choi state $\mathsf{C}_{\mathcal{N}}^{\Psi}$ and the positive operator $\left(\mathsf{C}_{\mathcal{N}}^{\Psi}\right)_{\alpha}$ using Theorem \ref{result_entropy_gain_1}, as follows:
\begin{align}
    \Delta \left[\mathsf{C}_{\Theta(\mathcal{N})}^{\Phi},\mathsf{C}_{\mathcal{N}}^{\Psi}\right] & =   S\left(\mathfrak{T}\left(\mathsf{C}_{\mathcal{N}}^{\Psi}\right)\right) - S \left(\mathsf{C}_{\mathcal{N}}^{\Psi}\right)\geq \staD*{\mathsf{C}_{\mathcal{N}}^{\Psi}}{\left(\mathsf{C}_{\mathcal{N}}^{\Psi}\right)_{\alpha}} ,
\end{align}
where  $\left(\mathsf{C}_{\mathcal{N}}^{\Psi}\right)_{\alpha}= \frac{1}{\alpha^{\alpha}} \left(\mathfrak{T}^{*} \circ \mathfrak{T}\left( \mathsf{C}_{\mathcal{N}}^{\Psi} \right)\right)^{\alpha}= \frac{1}{\alpha^{\alpha}} \left(\mathsf{C}^{\Psi}_{{\Hat{\mathrm{\Theta}}_{\Psi}}^{-1} \circ \mathrm{\Theta}^{*} \circ \Hat{\mathrm{\Theta}}_{\Phi}\circ \mathrm{\Theta}(\mathcal{N})}\right)^{\alpha}$ and $\alpha= \norm{\mathfrak{T}^{*}(\mathbbm{1})}_{\infty}=\norm{\mathsf{C}^{\Psi}_{{\Hat{\mathrm{\Theta}}_{\Psi}}^{-1} \circ \mathrm{\Theta}^{*} \circ \Hat{\mathrm{\Theta}}_{\Phi}\circ \mathrm{\Theta}^{-1}_{\Phi}(\mathcal{R})}}_{\infty}$ with $\mathcal{R}$ being the completely depolarizing map in $\mathcal{L}\left(C,D\right)$. The resulting inequality provides a lower bound of $ \Delta \left[\mathsf{C}_{\Theta(\mathcal{N})}^{\Phi},\mathsf{C}_{\mathcal{N}}^{\Psi}\right]$. With this, we conclude that the following inequality holds
\begin{equation}
    S\left[\Theta\left(\mathcal{N}_{A \rightarrow B}\right)\right] - S\left[\mathcal{N}_{A \rightarrow B}\right]
     \geq  \staD*{\mathsf{C}_{\mathcal{N}}^{\Psi}}{\left(\mathsf{C}_{\mathcal{N}}^{\Psi}\right)_{\alpha}} + \Delta' \left[\Psi_{\widetilde{A}},\Phi_{\widetilde{C}}\right],
\end{equation}
which completes our proof.

\section{Refinement of data processing inequality} \label{detailde_proof_of_recovery}

 Let us consider a pair of quantum channels  $\mathcal
{N}:\mathcal{L}\left({A}\right) \rightarrow \mathcal{L}\left(A'\right)$ and $\mathcal
{M}:\mathcal{L}\left({A}\right) \rightarrow \mathcal{L}\left(A'\right)$ such that the state that realizes the supremum in the relative entropy functional $D\left[\mathcal{N}\Vert \mathcal{M}\right]$ is an element of $\mathsf{FRank}\left(\widetilde{A}\otimes{A}\right)\cap \mathcal{D}\left(\widetilde{A}\otimes{A}\right)$. Then we can write
\begin{align}
  \chD*{\mathcal{N}}{\mathcal{M}} &= \sup_{\rho_{\widetilde{A}A}} \staD*{\mathrm{id}_{\widetilde{A}}\otimes\mathcal{N} \left(\rho_{\widetilde{A}A}\right)}{\mathrm{id}_{\widetilde{A}}\otimes\mathcal{M}\left(\rho_{\widetilde{A}A}\right)}\nonumber\\
    & = \staD*{\left(\mathrm{id}_{\widetilde{A}}\otimes \mathcal{N}\right)\ketbra{\Psi}{\Psi}}{\left(\mathrm{id}_{\widetilde{A}}\otimes \mathcal{M}\right)\ketbra{\Psi}{\Psi}},
\end{align}
where $\ketbra{\Psi}{\Psi} \in \mathsf{FRank}\left(\widetilde{A}\otimes{A}\right)\cap \mathcal{D}\left(\widetilde{A}\otimes{A}\right)$, i.e., $ \rm{rank}( \mathrm{tr}_{A}\left(\ketbra{\Psi}{\Psi}\right)) = \vert \widetilde{A}\vert$. We then have  $\left(\mathrm{id}_{\widetilde{A}}\otimes \mathcal{N} \right)\ketbra{\Psi}{\Psi} = {\mathsf{C}}_{\mathcal{N}}^{\Psi}$ and $\left(\mathrm{id}_{\widetilde{A}}\otimes \mathcal{N} \right)\ketbra{\Psi}{\Psi} = {\mathsf{C}}_{\mathcal{M}}^{\Psi}$, where ${\mathsf{C}}_{\mathcal{N}}^{\Psi}$ and ${\mathsf{C}}_{\mathcal{M}}^{\Psi}$ are Choi states of quantum channels $\mathcal{N}$ and $\mathcal{M}$, respectively, with respect to $\Psi$. Therefore, in this case, we can write the relative entropy between quantum channels $\mathcal{N}$ and $\mathcal{M}$ as 
\begin{align}
   \chD*{\mathcal{N}}{\mathcal{M}} = \staD*{{\mathsf{C}}_{\mathcal{N}}^{\Psi}}{{\mathsf{C}}_{\mathcal{M}}^{\Psi}}. 
\end{align}
Now, consider a superchannel $\Theta:\mathcal{L}\left(A ,A'\right)\rightarrow \mathcal{L}\left(C,C'\right)$ such that the state that realizes the supremum in the  relative entropy functional $\chD*{\Theta(\mathcal{N})}{\Theta(\mathcal{M})}$ is an element of $\mathsf{FRank}\left(\widetilde{C}\otimes{C}\right)\cap \mathcal{D}\left(\widetilde{C}\otimes{C}\right)$. In particular, let $\ket{\Phi}$ be the state that realizes the supremum of the relative entropy $\chD*{\Theta(\mathcal{N})}{\Theta(\mathcal{M})}$. Since $\rm{rank}\left(\mathrm{tr}_{C}\left(\ketbra{\Phi}{\Phi}\right)\right) = \vert \widetilde{C}\vert$, we  have  $\left(\mathrm{id}_{\widetilde{C}}\otimes \Theta\left(\mathcal{N}\right)\right)\ketbra{\Phi}{\Phi}  = \mathsf{C}_{\Theta\left(\mathcal{N}\right)}^{\Phi}$ and $\left(\mathrm{id}_{\widetilde{C}}\otimes \Theta\left(\mathcal{M}\right)\right)\ketbra{\Phi}{\Phi}  = \mathsf{C}_{\Theta\left(\mathcal{M}\right)}^{\Phi}$, where $\mathsf{C}_{\Theta\left(\mathcal{N}\right)}^{\Phi}$ and $\mathsf{C}_{\Theta\left(\mathcal{M}\right)}^{\Phi}$ are Choi states of channels $\Theta \left(\mathcal{N}\right)$ and $\Theta \left(\mathcal{M}\right)$, respectively. Thus, again,  we can write the relative entropy between quantum channels $\Theta(\mathcal{N})$ and $\Theta(\mathcal{M})$ as 
\begin{align}
    \chD*{\Theta(\mathcal{N})}{\Theta(\mathcal{M})} = \staD*{{\mathsf{C}}_{\Theta(\mathcal{N})}^{\Phi}}{{\mathsf{C}}_{\Theta(\mathcal{M})}^{\Phi}}. 
\end{align}
Also, from the data processing inequality, we have $ D\left[\mathcal{N}\Vert \mathcal{M}\right] \geq  D\left[\Theta\left(\mathcal{N}\right)\Vert \Theta \left(\mathcal{M}\right)\right]$, which implies the following inequality
\begin{align}
   \staD*{{\mathsf{C}}_{\mathcal{N}}^{\Psi}}{{\mathsf{C}}_{\mathcal{M}}^{\Psi}} \geq  \staD*{{\mathsf{C}}_{\Theta(\mathcal{N})}^{\Phi}}{{\mathsf{C}}_{\Theta(\mathcal{M})}^{\Phi}}.
\end{align}
Since the Choi states $\mathsf{C}_{\mathcal{N}}^{\Psi}$ and $\mathsf{C}_{\Theta(\mathcal{N})}^{\Phi}$, as well as $\mathsf{C}_{\mathcal{M}}^{\Psi}$  and $\mathsf{C}_{\Theta(\mathcal{N})}^{\Phi}$ are connected via the representing map $\mathfrak{T}$ of the superchannel $\Theta$ as $\mathsf{C}_{\Theta \left(\mathcal{N}\right)}^{\Phi} =  \mathfrak{T} \left(\mathsf{C}_{\mathcal{N}}^{\Psi}\right)$ and $\mathsf{C}_{\Theta \left(\mathcal{M}\right)}^{\Phi} =  \mathfrak{T} \left(\mathsf{C}_{\mathcal{M}}^{\Psi}\right)$, we can rewrite the above entropic inequality as 
\begin{align}
 \staD*{{\mathsf{C}}_{\mathcal{N}}^{\Psi}}{{\mathsf{C}}_{\mathcal{M}}^{\Psi}} \geq  \staD*{\mathfrak{T}\left({\mathsf{C}}_{\mathcal{N}}^{\Psi}\right)}{\mathfrak{T}\left( \mathsf{C}_{\mathcal{M}}^{\Psi}\right)}.
\end{align}
Recall that the representing map $\mathfrak{T}$ of superchannel $\Theta$ is, in general, a CP map that becomes trace-preserving if the adjoint of $\Theta$ is completely depolarising map preserving~(see Lemma~\ref{lemma5}). However, we can replace $\mathfrak{T}$ with the following trace-preserving map: $\mathfrak{T}'(\mathsf{X}):= \mathfrak{T}(\mathsf{X})+\left[\mathrm{tr}(\mathsf{X})-\mathrm{tr}(\mathfrak{T}(\mathsf{X}))\right] \sigma_{0}$, where $\mathsf{X} \in \mathcal{L}(A\otimes A')$ and $\sigma_{0}\in \mathcal{L}(C \otimes C')$ with $\tr(\sigma_0)=1$. For a quantum channel $\mathcal{N}$, the map $\mathfrak{T}'$ satisfies $\mathfrak{T}'(\mathsf{C}^{\Psi}_{\mathcal{N}})=\mathfrak{T}(\mathsf{C}^{\Psi}_{\mathcal{N}})$ (see Remark \ref{rem:choi-state-choi-operator-relation_1}). Since  $\Theta \in \mathsf{SCT}$ by assumption, it is always possible to choose $\sigma_0 \in \mathcal{L}(C \otimes C')$ such that $\mathfrak{T}'$ becomes a CPTP map (see Appendix \ref{quantum_channel_from_TP_map}). Thus, we have 
\begin{align}
    \staD*{{\mathsf{C}}_{\mathcal{N}}^{\Psi}}{{\mathsf{C}}_{\mathcal{M}}^{\Psi}} \geq  \staD*{\mathfrak{T}'\left({\mathsf{C}}_{\mathcal{N}}^{\Psi}\right)}{\mathfrak{T}'\left( \mathsf{C}_{\mathcal{M}}^{\Psi}\right)}.
\end{align}
Since $\mathfrak{T}'$ is a CPTP map and Choi states are quantum states, we can write the refined version of the above inequality using the result~\cite[Theorem 2.1]{Junge_2018} as 
\begin{equation}
    \staD*{{\mathsf{C}}_{\mathcal{N}}^{\Psi}}{{\mathsf{C}}_{\mathcal{M}}^{\Psi}} -  \staD*{\mathfrak{T}\left({\mathsf{C}}_{\mathcal{N}}^{\Psi}\right)}{\mathfrak{T}\left( \mathsf{C}_{\mathcal{M}}^{\Psi}\right)}
      \geq - \log F \left(\mathsf{C}^{\Psi}_{\mathcal{N}},\left(\mathcal{P}^{\rm{R}}\circ \mathfrak{T}'\right)\mathsf{C}^{\Psi}_{\mathcal{N}}\right).
    \end{equation}
Therefore, we have
    \begin{equation}
     \chD*{\mathcal{N}}{\mathcal{M}} -\chD*{\Theta\left(\mathcal{N}\right)}{\Theta \left(\mathcal{M}\right)}  \geq - \log F \left(\mathsf{C}^{\Psi}_{\mathcal{N}},\left(\mathcal{P}^{\rm{R}}\circ \mathfrak{T}'\right)\mathsf{C}^{\Psi}_{\mathcal{N}}\right),
    \end{equation}
    which concludes our proof.   
    
\section{Proof of monotonicity of \texorpdfstring{$\mathbf{D}^{(n)}$}{}}\label{proof_of_monotonicity_of_dn}
We restrict ourselves to the class of $(n+1)^{\text{th}}$ order processes which satisfy the following equation.
\begin{equation}
    \Phi^{(n+1)}\left(\Theta^{(n)}\right)= \Gamma^{(n)}_{\text{post}} \circ \left(\id \otimes ~\Theta^{(n)}  \right)\circ \Lambda^{(n)}_{\text{pre}}.\label{equ:nth_order_rep}
\end{equation}
For this class of $(n+1)^{\text{th}}$ order processes, our definition of generalized divergence between two $n^{\text{th}}$ order processes satisfies the monotonicity relation, Eq.\eqref{EEE_1}. The proof goes as follows. Consider two $n^{\text{th}}$ order processes $\Theta^{(n)}$ and $\Gamma^{(n)}$ such that action of any $(n+1)^{\text{th}}$ order process, say $\Phi^{(n+1)}$, can be written in terms of concatenation of $n^{\text{th}}$ order processes, given by Eq.~\eqref{equ:nth_order_rep}. Now, we show that in this scenario, data processing inequality holds, i.e., $\BnchD*{\Theta^{(n)}}{\Gamma^{(n)}} \geq \BnchD*{\Phi^{(n+1)}\left(\Theta^{(n)}\right)}{\Phi^{(n+1)}\left(\Gamma^{(n)}\right)}$. Using Eq.~\eqref{equ:divergence_of_nth_order_process}, we can write  

\begin{align}
  &\BnchD*{\Phi^{(n+1)}\left(\Theta^{(n)}\right)}{\Phi^{(n+1)}\left(\Gamma^{(n)}\right)}\nonumber \\
  &=\sup_{\Omega^{(n-1)} \in  \left(\mathcal{W}^{(n-1)} \otimes \mathcal{V}^{(n-1)}\right)_{+}  }\BnachD*{\left(\mathrm{id} \otimes\Phi^{(n+1)}\left(\Theta^{(n)}\right) \right)\left(\Omega^{(n-1)}\right)}{\left(\mathrm{id} \otimes\Phi^{(n+1)}\left(\Gamma^{(n)}\right) \right)\left(\Omega^{(n-1)}\right)}. \label{Binod_1}
\end{align}
Using Eq.~\eqref{equ:nth_order_rep}, we can write $\left(\mathrm{id} \otimes \Phi^{(n+1)}\left(\Theta^{(n)}\right) \right)\left(\Omega^{(n-1)}\right) = \left(\mathrm{id} \otimes\Gamma_{\text{post}}^{(n)}\right)\left(\mathrm{id} \otimes\left(\mathrm{id} \otimes\Theta^{(n)}\right)\right) \left(\widetilde{\Omega}^{(n-1)}\right)$, where we have defined $\widetilde{\Omega}^{(n-1)}:=\left(\mathrm{id} \otimes\Lambda_{\text{pre}}^{(n)}\right) \Omega^{(n-1)}$, and similarly we also have $\left(\mathrm{id} \otimes\Phi^{(n+1)}\left(\Gamma^{(n)}\right)\right)\left(\Omega^{(n-1)}\right) = \left(\mathrm{id} \otimes\Gamma_{\text{post}}^{(n)}\right)\left(\mathrm{id}\otimes\left(\mathrm{id} \otimes\Gamma^{(n)}\right)\right) \left(\widetilde{\Omega}^{(n-1)}\right)$.
Now, using these on the right-hand side of Eq.~(\ref{Binod_1}), we obtain

\begin{align}
   &\BnchD*{\Phi^{(n+1)}\left(\Theta^{(n)}\right)}{\Phi^{(n+1)}\left( \Gamma^{(n)}\right)}\nonumber\\
   &\leq \sup_{\widetilde{\Omega}^{(n-1)} } \BnachD*{\left(\mathrm{id} \otimes\Gamma_{\text{post}}^{(n)}\right)\left(\mathrm{id} \otimes\left(\mathrm{id} \otimes\Theta^{(n)}\right)\right) \left(\widetilde{\Omega}^{(n-1)}\right)}{\left(\mathrm{id} \otimes\Gamma_{\text{post}}^{(n)}\right)\left(\mathrm{id}\otimes\left(\mathrm{id} \otimes\Gamma^{(n)}\right)\right) \left(\widetilde{\Omega}^{(n-1)}\right)}. \label{Binod_4}
\end{align}
In the above inequality, $\BnachD*{\cdot}{ \cdot}$ is a generalized divergence by our definition of the divergence functional of $n^{\text{th}}$ order process, thus using the fact that if $\Gamma_{\text{post}}^{(n)}$ is an $n^{\text{th}}$ order physical process then $\mathrm{id} \otimes \Gamma_{\text{post}}^{(n)} $ is also a valid $n^{\text{th}}$  order physical process, we have:
\begin{align}
   & \BnachD*{\left(\mathrm{id} \otimes\Gamma_{\text{post}}^{(n)}\right)\left(\mathrm{id} \otimes\left(\mathrm{id} \otimes\Theta^{(n)}\right)\right) \left(\widetilde{\Omega}^{(n-1)}\right)}{\left(\mathrm{id} \otimes\Gamma_{\text{post}}^{(n)}\right)\left(\mathrm{id}\otimes\left(\mathrm{id} \otimes\Gamma^{(n)}\right)\right) \left(\widetilde{\Omega}^{n-1}\right)} \nonumber\\
    &\leq \BnachD*{\left(\mathrm{id}\otimes\left(\mathrm{id}\otimes\Theta^{(n)} \right)\right) \left(\widetilde{\Omega}^{(n-1)}\right)}{\left(\mathrm{id}\otimes\left(\mathrm{id}\otimes\Gamma^{(n)} \right)\right) \left(\widetilde{\Omega}^{(n-1)}\right)}.
\end{align}
Using this  inequality  in (\ref{Binod_4}), we finally have
\begin{align}
\nonumber
    \BnchD*{\Theta^{(n)}}{\Gamma^{(n)}} \geq \BnchD*{\Phi^{(n+1)}\left(\Theta^{(n)}\right)}{\Phi^{(n+1)}\left(\Gamma^{(n)}\right)}.
\end{align}

\bibliography{ref}

\end{document}